\def\ceil#1{\lceil #1 \rceil}
\def\floor#1{\lfloor #1 \rfloor}
\def\1{\bm{1}}
\def\eps{{\varepsilon}}
\def\vzero{{\bm{0}}}
\def\vvarphi{{\bm{\varphi}}}
\def\vm{{\bm{m}}}
\def\vw{{\bm{w}}}
\def\vz{{\bm{z}}}
\def\gA{{\mathcal{A}}}
\def\gI{{\mathcal{I}}}
\def\gN{{\mathcal{N}}}
\def\gU{{\mathcal{U}}}
\def\sC{{\mathbb{C}}}
\def\sN{{\mathbb{N}}}
\def\sR{{\mathbb{R}}}
\DeclareMathOperator{\im}{Im}
\DeclareMathOperator{\re}{Re}
\theoremstyle{plain}
\newtheorem{thm}{Theorem}[section]
\newtheorem{definition}[thm]{Definition}
\newtheorem{lemma}[thm]{Lemma}
\newtheorem*{remark}{Remark}
\newtheorem{corollary}[thm]{Corollary}
\newtheorem{claim}[thm]{Claim}
\def\Ddots{\mathinner{\mkern1mu\raise\p@
\vbox{\kern7\p@\hbox{.}}\mkern2mu
\raise4\p@\hbox{.}\mkern2mu\raise7\p@\hbox{.}\mkern1mu}}
\newcommand*{\rom}[1]{\expandafter\@slowromancap\romannumeral #1@}
\def\i{{\mathbbmss{i}}}
\def\open#1{#1^{\circ}}
\title{Complex contraction on trees without proof of correlation decay}
\author{
Liang Li\thanks{School of Cyber Science and Technology, Shandong University; email: \texttt{li.liang@sdu.edu.cn}.} \\
\and
Guangzeng Xie\thanks{Academy for Advanced Interdisciplinary Studies, Peking University; email: \texttt{smsxgz@pku.edu.cn}. } \\
% \and
% Zhihua Zhang\thanks{School of Mathematical Sciences, Peking University; email: \texttt{zhzhang@math.pku.edu.cn}. }\\
}
\date{}
\begin{document}

\maketitle 
\iffalse
\begin{abstract}
    We prove complex contraction for zero-free regions of counting weighted set cover problem in which an element can appear in an unbounded number of sets, thus obtaining fully polynomial-time approximation schemes(FPTAS) via Barvinok's algorithmic paradigm\cite{barvinok2016combinatorics}. Relying on the computation tree expansion, our approach does not need a proof of correlation decay in real axis. We directly look in the complex plane for a region which contracts into its interior as the tree recursion procedure goes from leaves to the root. 
    
    For the class of problems under the framework of weighted set covers, we are able to give a general approach for describing the contraction regions and draw a unified algorithmic conclusion. Several previous results, including counting (weighted-)edge covers,counting bipartite independent sets and counting monotone CNFs can be completely or partially covered by our main theorem. In contrast to the correlation decay method which also depends on tree expansions and needs different potential functions for different problems, our approach is more generic in the sense that our contraction region for different problems share common shape in the complex plane.
\end{abstract}

\fi
\begin{abstract}
    We prove complex contractions for zero-free regions of several counting problems whose partition functions can thus be approximated via Barvinok's algorithmic paradigm\cite{barvinok2016combinatorics}. Although our approach relies on the well-known computation tree expansion technique, we do not need a proof of the correlation decay property over the real axis before getting zero-freeness. Alternatively, we directly look for a convex region in the complex plane which contracts into its interior as the tree expansion procedure recursively goes from leaf to root. For various counting problems which have tree expansions, the contraction regions obtained by our approach do not depend on the degree of the constraint graphs, so that we can prove zero-freeness for unbounded degree cases. 
    
    As consequences of our proof and using Barvinok's paradigm, we can design fully polynomial-time approximation schemes(FPTAS) for bounded degree 2-spin systems. Since our contraction region is degree independent, we can also design quasi-polynomial time approximation algorithms for 2-spin systems and generalized set cover problems.
    %(which includes as special cases the bipartite independent sets, edge covers and set covers).
    Our result can cover or partially cover several previous results obtained via correlation decay or spatial mixing techniques. We can also improve previous results based on contraction arguments\cite{shao2019contraction} and obtain new algorithmic results for 2-spin systems with negative weights. In contrast to previous zero-free results based on the correlation decay method which needs different potential functions for different problems, our approach is more generic in the sense that our contraction region for different problems share common shape in the complex plane.
\end{abstract}
\newpage
\section{Introduction}

The study of the zero-set of partition functions is of fundamental significance in statistical physics and theoretical computer science. Given a particle system in statistical physics, or equivalently a constraint satisfaction problem in computer science, its partition function can be viewed as a polynomial over specific parameters. As the value of the parameters changes continuously, the global property of the whole system may experience a non-continuous abrupt change. In physics terminology, this phenomenon is called the phase transition while in theoretical computer science, such phase transitions are highly related to counting complexity dichotomies. The zero-set of the partition function polynomial reveals the true mathematical principles underlying the study of phase transitions and also has implications on counting algorithm design.

In physics, the partition function, viewed as a polynomial, has intimate connections to the free energy, which is a quantity that includes all global information of the physics system. In their seminal work, Lee and Yang\cite{lee1952statistical, yang1952statistical} showed that zero-freeness of partition functions is equivalent to the analyticity of global physics quantities and thus implies the absence of phase transitions. 

In computer science, the partition function has the meaning of the sum of weighted solutions of counting problems. Existing work\cite{weitz2006counting, sinclair2014approximation,sly2010computational, sly2012computational, li2013correlation} already show that phase transition in physics is highly correlated to the counting dichotomy theorems, where the correlation decay is the key concept to understanding how physics properties imply algorithmic results.

The cornerstone of counting algorithm design by zero-freeness is developed by Barvinok\cite{barvinok2016combinatorics} and Patel and Regts\cite{patel2017deterministic}. The main idea of such a paradigm is that, when the partition function is zero-free, its logarithm can be expanded within a complex region by the Taylor series such that we can omit high-order terms to get an approximation value up to an arbitrary precision. The key point along this line of research is to find a region in the complex plane within which the partition function is zero-free. For counting problems which can be solved via correlation decay, Barvinok's paradigm can also be applied to design new approximation algorithms. The complex regions found in these problems are usually a complex strip around a segment in the real axis within which correlation decay property holds.Therefore, for this kind of problems, the zero-freeness highly depend on proving correlation decay first, which means to design an algorithm via one paradigm, we need to analyze some property used in another algorithmic paradigm. 

In this paper, we develop new analysis method to study the zero-free regions for counting problems that can be calculated by the celebrated computation tree expansion method\cite{weitz2006counting, bayati2007simple}. Our main contribution is to obtain zero-freeness directly without proving the correlation decay first. The main approach is to view the tree recursion as a dynamic procedure over the complex plane and look for a complex region which contracts into its interior as the dynamic procedure goes on. The basic idea is to look into one step recursion and decouple the recursive function into basic holomorphic mappings over the complex plane. According to the geometric properties of basic holomorphic mappings, we can reversely shape the contracted region as we expect. As most tree recursion has similar decouplings in terms of basic holomorphic mappings, we can uniformly obtain zero-free regions with similar geometric shapes. Without proving correlation decay of the tree recursion first, we do not need to design ingenious ``potential functions'' for different problems. 
%Moreover, the proof for complex contraction only depends on the geometric properties of basic holomorphic mappings, which is quite intuitive in general idea.

We mainly study two general classes of problems, the 2-spin systems and the generalized set cover problem. A 2-spin system(see Section \ref{sec:2-spin}) is defined on an ordinary graph and has for each edge an interaction function parameterized by $\beta>0, \gamma\geq 0$ and for each vertex an external activity parameterized by $\lambda\in \sR$. We study both the ferromagnetic and the anti-ferromagnetic cases, even with negative activities. The generalized set cover problem(see Section \ref{sec:set-cover}) can be defined on a hypergraph where the hyper-edges correspond to \textit{elements} and the vertices correspond to \textit{sets}. A vertex $v$ is involved in an hyper-edge $e$ if and only if the set $v$ contains the element $e$. Both the elements/hyper-edges and the sets/vertices can be assigned non-negative real values as their weights. After normalization, the weight of a \textit{set}  is  $\eta (\eta\ge 0)$ when it is not selected and 1 otherwise, while the weight of an \textit{element} is $1+ \mu (\mu\ge -1)$ when it is uncovered and 1 otherwise. For these two classes of problems, a telescoping expansion technique\cite{weitz2006counting, lin2014simple, liu2014fptas-cnf, liu2014fptas-edge} can be used to break cycles, so that we can have tree recursions for calculating marginal probabilities of one node. 

%In terms of signature theory in Holant problems\cite{cai2009holant}, the two general problems both have a signature of \textit{order} 2 on at least one side. The 2-spin systems is equivalent to a Holant problem with weighted equality signature  ${{1}\choose{0}}^{\otimes d} +\lambda {{0}\choose{1}}^{\otimes d}$ on one side and an arity-2 signature\\ ${\beta{{1}\choose{0}}^{\otimes 2} +{{1}\choose{0}}\otimes {{0}\choose{1}} + {{0}\choose{1}}\otimes {{1}\choose{0}} + \gamma {{0}\choose{1}}^{\otimes 2}}$ on the other side. The generalized set cover problems is equivalent to a Holant problem with weighted equality signature $\eta{{1}\choose{0}}^{\otimes d} + {{0}\choose{1}}^{\otimes d}$ on one side and an order 2 signature ${{1}\choose{1}}^{\otimes k} +\mu{{1}\choose{0}}^{\otimes k}$ on the other side. It is well-known that 

Several classical counting problems can be viewed as special cases of the two general problems. Examples of the 2-spin system include counting independent sets and counting vertex covers, while examples of the generalized set covers problem include counting monotone CNFs whose dual problem is hypergraph independent sets, counting edge covers, and counting bipartite independent sets. The 2-spin systems and generalized set cover problems overlap at the famous Ising model.

%Several classical counting problems can be viewed as special cases of the two general problems. When $\beta = 1$ and $\gamma = 0$, or $\beta = 0$ and $\gamma = 1$, the 2-spin system becomes the independent set problem or vertex cover problem. When $\mu = -1$, the generalized set cover problem becomes counting monotone CNF problem whose dual problem is hypergraph independent set, with special case $k = 2$ equivalent to the edge cover problem on ordinary graphs. When $\mu = 1$, the generalized set cover problem is equivalent to counting bipartite independent sets($\sharp$BIS). The  2-spin systems and generalized set cover problems overlap at the famous Ising model, with $\beta = \gamma$ in 2-spin systems and $\mu > -1$, $\mu \neq 0$ and $k = 2$ in the generalized set cover problems.

\section{Our Results}
We obtain zero-free regions for both 2-spin systems and generalized set cover problem. 

For 2-spin systems:
\begin{itemize}
    \item[1.] We give zero-free regions for ferromagnetic 2-spin systems up to uniqueness on bounded degree graphs, thus implying an FPTAS  which for $\gamma\le 1$ matches the result in \cite{guo2018uniqueness} and for general $\gamma$ value covers and improves the results in \cite{shao2019contraction};
    \item [2.] We give zero-free regions for anti-ferromagnetic 2-spin systems on bounded degree graphs and the implied FPTAS partially covers the results in \cite{li2013correlation};
    \item [3.] We give zero-free regions for both ferromagnetic and anti-ferromagnetic 2-spin systems on unbounded degree graphs, thus implying quasi-polynomial time approximation algorithms.
    \item [4.] The zero-free region we found also includes negative $\lambda$ values for both bounded and unbounded degree cases. So far as we know, this is first obtained in this paper.
\end{itemize}

For generalized set cover problem, we give a uniform zero-free region:
\begin{itemize}
    \item[1.] For set cover problem, we obtain zero-freeness for any $\mu \in [-1, 0]$ and $\eta \ge 0$. This imply a quasi-polynomial time algorithm for edge cover problem while an FPTAS is proposed in \cite{liu2014fptas-edge} using correlation decay;
    \item[2.] For degree up to 5 set cover problem, we obtain zero-freeness for $\mu = 1$ and $\eta < 1.45399$. This imply a quasi-polynomial time algorithm for bipartite independent set problem while an FPTAS is proposed in \cite{liu2015fptas}.
\end{itemize}

%In the case $\delta = 2$, we have $\mu_2 = -1$.  And Theorem??implies that for any$\mu\in[−1,0]$ and $\lambda\ge 0$, there exists FPTAS for computing $Z(G,\mu,\lambda)$, which covers the result in \cite{liu2014fptas-edge}.In  the  case $\Delta = 5$ and $\mu = 1$,  by  Theorem \ref{},  for$\lambda < \lambda_q(1)\approx 1.45399$,  there  exists  FPTAS  forcomputing $Z(G,\mu,\lambda)$, which covers the result in \cite{liu2015fptas}.

\iffalse
%We mainly study the 2-spin systems which can also be viewed as counting weighted set covers, in which both the \textit{element}s and the \textit{set}s can be assigned non-negative real values as their weights. After normalization, the weight of a \textit{set}  is  $\lambda (\lambda\ge 0)$ when it is selected and 1 otherwise, while the weight of an \textit{element} is $1+ \mu (\mu\ge -1)$ when it is uncovered and 1 otherwise. Given a \textit{collection} which is a subset of all sets, its weight called collection-weight is defined as the product of the set-weights within this collection and all element-weights. The counting version of this problem is to calculate the sum of all collection-weights. 

%In terms of signature theory in Holant problems\cite{cai2009holant}, counting weighted set cover is equivalent to the Holant problem with weighted equality signature ${{1}\choose{0}}^{\otimes d} +\lambda {{0}\choose{1}}^{\otimes d}$ on one side and a second order signature ${{1}\choose{1}}^{\otimes k} +\mu{{1}\choose{0}}^{\otimes k}$ on the other side. This problem can also be viewed as a weighted constraint satisfaction problem defined on a hypergraph where a set is represented by a node and an element is represented as a hyperedge. As the signatures on both sides are of second order, we could use a telescoping tree expansion technique\cite{weitz2006counting, lin2014simple, liu2014fptas-cnf, liu2014fptas-edge} to break cycles in the original graphs and get a self-avoiding walk tree from any node $v$. Although we can  exactly recover the marginal probability of $v$ in the original graph by a recursive calculation from leaf to root on the computation tree, it is not a polynomial-time algorithm since the tree could usually be of exponential-size. Such a recursive procedure converges when correlation decay property holds, so that truncation of the expanded trees into polynomial size can help obtain fully polynomial-time approximation schemes(FPTAS).

%Several important problems can be viewed as special cases of counting weighted set covers. When $\mu = -1$, it is counting monotone CNF problem whose dual problem is hypergraph independent set, with special case $k = 2$ equivalent to the edge cover problem on ordinary graphs.When $\mu > -1$, $\mu \neq 0$ and $k = 2$, it is the famous Ising model in statistical physics. Of special interest is the case When $\mu = -1$, it is counting monotone CNF problem whose dual problem is hypergraph independent set, with special case $k = 2$ equivalent to the edge cover problem on ordinary graphs, as will be shown in Section\ref{sec:prob}. Deterministic FPTAS via correlation decay is known for counting monotone CNF(also edge cover) on unbounded degree graphs \cite{liu2014fptas-cnf} and also for $\sharp$BIS with unbounded degree on one side\cite{liu2015fptas}. The crucial technique in proofs of correlation decay is the search of a potential function to amortize the one-step decay rate of the tree recursion. We remark for the above special cases of counting weighted set cover problem, the potential functions used are completely different and highly problem dependent.

%Another way of designing FPTAS for counting problems utilizes zero-freeness of the partition functions. It can be shown that when there is a region within which the partition function does not vanish on the complex plane, we can truncated its Taylor series expansion and get an FPTAS\cite{barvinok2016combinatorics}. When tree expansion is possible, existing results\cite{liu2019fisher,liu2019approximate} show that the amortized correlation decay over a real interval can be extended to a strip containing this interval on the complex plane, thus implying zero-freeness. In another line of work, \citet*{peters2019conjecture, bencs2018note} employed a class of circular sector region to directly obtain contraction on complex plane  without correlation decay. They extended zero-free region of independent polynomials from neighborhood of real axis to a general domain on complex plane.

%In this work, we propose a new class of complex contraction regions for the proof of zero-freeness. Instead of extending real contraction to complex contraction by first establishing correlation decay over a real interval, we directly look for complex regions which  will always be contracted along the computation tree. We mainly study the case in which an element can appear in different sets for unbounded times. In the Holant language, this is equivalent to say that $k$ is unbounded in the second order signature on one side and in terms of counting CSP, this is equivalent to say that the arity of the hyper-edges is unbounded. The region we found can be universally applied to counting problems with this unbounded condition, which is in contrast to the correlation decay proofs where problem specific potential functions are needed. Thus some of the previous results, which was proved using different potential functions can now be proved as corollaries of our main theorem, including counting edge covers\cite{lin2014simple,liu2014fptas-edge}, counting weighted edge covers and $\sharp$BIS\cite{liu2015fptas}.
\fi
\section{Related Work}

%Existing research show that there is intimate connections between physical phase transitions and computational dichotomy theorems.
The calculation of partition functions has been widely studied in different areas. There are two main paradigms in designing deterministic algorithms for partition functions inspired by statistical physics. One is the correlation decay algorithm\cite{weitz2006counting,bayati2007simple, sinclair2014approximation,li2012approximate,li2013correlation,lin2014simple, liu2014fptas-cnf,liu2014fptas-edge, liu2015fptas} and the other is Barvinok's paradigm based on zero-free properties\cite{barvinok2016combinatorics, patel2017deterministic}.

In the seminal work of Weitz\cite{weitz2006counting} and Bayati, Gamarnik, Katz, Nair and Tetali\cite{bayati2007simple}, computation tree expansion was established for counting independent sets and counting matchings. Following this paradigm, new FPTAS are designed for various new problems, including general 2-state spin systems\cite{li2012approximate, sinclair2014approximation, li2013correlation}, edge covers\cite{lin2014simple, liu2014fptas-edge}, monotone CNFs\cite{liu2014fptas-cnf}, hypergraph matchings\cite{song2019counting} and hypergraph indepent sets\cite{bezakova2019approximation}. 

Barvinok's algorithmic paradigm is based on the seminal work by Lee and Yang\cite{yang1952statistical, lee1952statistical}. 
There are various strategies in proving zero-freeness, examples include Asano contraction\cite{liu2019ising,guo2019zeros, guo2020zeros} and direct analysis of contraction region for the hardcore model\cite{peters2019conjecture, bencs2018note}. There is also results which shows correlation decay implies zero-freeness\cite{liu2019fisher,liu2019approximate}.

\section{Preliminary}
In this paper, for a subset $A$ of the complex plane $\sC$, we denote $\open{A}$ to be interior of the set $A$, $\overline{A}$ to be the closure of the set $A$, and $\partial A$ to be the boundary of the set $A$ which is defined as $\overline{A} \backslash \open{A}$. For a complex number $z = x + \i y$, we denote its real part by $\re(z) = x$, its imaginary part by $\im(z) = y$. Additionally, the modulus of $z$ is defined as $|z| = \sqrt{x^2 + y^2}$ and the argument of $z$ is defined as $\arg(z) = \arcsin\left( y/\sqrt{x^2 + y^2} \right) \in (-\pi, \pi]$ if $z \neq 0$. 
It is clearly that if $\re(z) > 0$ then $\arg(z) = \arctan\left( y/x \right)$.

We choose a branch of the function $\ln(z)$ such that $\ln(z) = \ln(|z|) + \i \arg(z)$ which is holomorphic on $\sC \backslash (-\infty, 0]$.

We denote the extended complex plane by $\hat{\sC} = \sC \cup \{\infty\}$. Note that the Mobius transform $T(z) = \frac{a z + b}{c z + d}$ is a bijection from $\hat\sC$ to $\hat\sC$ for $a d - b c \neq 0$. 

% For two complex number $z_1, z_2 \in \sC$, we denote $[z_1, z_2]$ to be the line segment with end points $z_1, z_2$ which means
% \begin{align*}
%     [z_1, z_2] = \{z \in \sC: z = z_1 + t(z_2 - z_1), t \in [0, 1]\}
% \end{align*}

For a set $U \subseteq \sC$ and $\delta > 0$, define $\gU(U, \delta)$ as
\begin{align*}
    \gU(U, \delta) \triangleq \{z \in \sC: \exists z' \in U \text{ such that } |z - z'| < \delta\}.
\end{align*}

The following lemma is based on invariance of domain. It reveals that an injective and holomorphic function will map the boundary to boundary. 
\begin{lemma}\label{lem:boundary}
    Suppose that a function $\varphi$ is injective and holomorphic on $\overline{U}$ where $U$ is a domain of the complex plane. Then we have
    \begin{align*}
        \partial \varphi(U) = \varphi(\partial U).
    \end{align*}
\end{lemma}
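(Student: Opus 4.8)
The plan is to reduce the identity to the single claim $\varphi(\overline U)=\overline{\varphi(U)}$ and then extract that from compactness, continuity, and injectivity. First I would rewrite both sides. Since $U$ is a domain it is open, so by the paper's definition $\partial U=\overline U\setminus U$. For the right-hand side, an injective holomorphic map on a domain is non-constant, so by invariance of domain (equivalently, by the open mapping theorem for holomorphic functions) $\varphi(U)$ is open; hence $\open{\varphi(U)}=\varphi(U)$ and therefore $\partial\varphi(U)=\overline{\varphi(U)}\setminus\varphi(U)$. Thus the lemma is equivalent to the assertion
\[
\varphi(\overline U\setminus U)=\overline{\varphi(U)}\setminus\varphi(U).
\]

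Next I would prove $\varphi(\overline U)=\overline{\varphi(U)}$. The inclusion ``$\subseteq$'' is pure continuity: any $z\in\overline U$ is the limit of some sequence in $U$, whose $\varphi$-image converges to $\varphi(z)$, placing $\varphi(z)$ in $\overline{\varphi(U)}$. For ``$\supseteq$'' I would use that $\overline U$ is compact --- the contraction regions considered in this paper are bounded, and for unbounded $U$ one passes to $\hat{\sC}$ --- so that $\varphi(\overline U)$ is a compact, hence closed, set containing $\varphi(U)$, and therefore contains $\overline{\varphi(U)}$.

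Finally, I would use that $\varphi$ is injective on the whole of $\overline U$, not merely on $U$: for an injective map the image of a set difference is the difference of the images, so $\varphi(\overline U\setminus U)=\varphi(\overline U)\setminus\varphi(U)$. Combining this with the previous step gives $\varphi(\partial U)=\varphi(\overline U\setminus U)=\overline{\varphi(U)}\setminus\varphi(U)=\partial\varphi(U)$, which is the claim. I expect the only real subtlety to be precisely this last step: injectivity must be available on the closed set $\overline U$ (a boundary point landing on top of an interior image would destroy $\varphi(\partial U)\subseteq\partial\varphi(U)$), and one must know $\overline U$ is compact so that $\varphi(\overline U)$ is closed; everything else is routine manipulation of closures together with the openness of $\varphi$ on $U$.
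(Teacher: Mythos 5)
Your proposal is correct, and it is in fact more of a proof than the paper supplies: the paper only asserts that the lemma "is based on invariance of domain," and its explicit boundary arguments appear only for concrete maps (the appendix lemmas for $\ln(\phi^{-1}(\cdot))$ on $W(k)$ and for $\ln(1+\mu W(k))$), where the authors argue pointwise with sequences: a boundary point is a limit of interior points, continuity places its image in the closure, the open mapping theorem forbids it from being interior, and for the reverse inclusion they extract convergent subsequences using compactness, with extra care at points where the map degenerates. Your argument packages the same ingredients into the global identities $\partial U=\overline U\setminus U$, $\partial\varphi(U)=\overline{\varphi(U)}\setminus\varphi(U)$ (openness of $\varphi(U)$ via the open mapping theorem), $\varphi(\overline U)=\overline{\varphi(U)}$ (continuity plus compactness of $\overline U$), and $\varphi(\overline U\setminus U)=\varphi(\overline U)\setminus\varphi(U)$ (injectivity on all of $\overline U$), which is cleaner and makes transparent exactly where each hypothesis is used. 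Your flag about compactness is also well placed and is the one point the paper glosses over: as literally stated for an arbitrary domain, the lemma can fail when $\overline U$ is not compact (e.g.\ $\varphi(z)=e^z$ on the half-strip $\{\re z<0,\ |\im z|<1\}$ puts $0$ in $\partial\varphi(U)$ but not in $\varphi(\partial U)$), so either boundedness of $U$ or a passage to $\hat{\sC}$ with injectivity and continuity there must be assumed; in all of the paper's applications the regions $U(x_0,x_1,k)$ are compact, so your argument covers every use of the lemma.
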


\subsection{2-Spin Systems}\label{sec:2-spin}
Let $A= \left[\begin{array}{ll}{A_{0,0}} & {A_{0,1}} \\ {A_{1,0}} & {A_{1,1}}\end{array}\right] = \left[\begin{array}{ll}{\beta} & {1} \\ {1} & {\gamma}\end{array}\right]$ and $b_0 = 1, b_1 = \lambda$. 
The partition function of 2-spin systems on graph $G = (V, E)$ is given by
\begin{align*}
    Z_{\tau_{S}}^{\beta, \gamma}(G, \lambda) = \sum_{\substack{\sigma \in \{0, 1\}^V \\ \sigma(v) = \tau_{S}(v), \text{ for } v \in S}} \prod_{(u, v) \in E} A_{\sigma(u), \sigma(v)} \prod_{v \in V} b_{\sigma(v)},
\end{align*}
where $S \subseteq V$ and $\tau_{S} \in \{0,1\}^{S}$. 

The following definition of feasible configurations is introduced in \cite{shao2019contraction}. 
\begin{definition}[Feasible configuration of 2-spin systems]
%of 2-spin systems, Definition 2.2 in \cite{shao2019contraction}]
    \label{def:2-spin:feasible}
    Given a graph $G = (V, E)$ of the 2-spin system specified by $\beta, \gamma, \lambda ~(\lambda \neq 0)$, a configuration $\tau_S$ on some vertices $S \subseteq V$ is feasible if
    \begin{enumerate}
        \item there is no edge $e = (u, v) \subseteq S$ such that $\tau_S(u) = \tau_S(v) = 0$ if $\beta = 0$,
        \item there is no edge $e = (u, v) \subseteq S$ such that $\tau_S(u) = \tau_S(v) = 1$ if $\gamma = 0$,
    \end{enumerate}
\end{definition}
It is clearly that if $\tau_S$ is not feasible then $Z_{\tau_{S}}^{\beta, \gamma}(G, \lambda) = 0$. 

Next we define $\hat{x}_d$ which will be useful in analysis of anti-ferromagnetic 2-spin systems with positive activities $\lambda$. 
For $\sqrt{\beta\gamma} \le \frac{d-1}{d+1}, \beta > 0, \gamma \ge 0$, define 
\begin{align}
    \hat{x}_d \triangleq \begin{cases}
        \frac{-1 - \beta\gamma + d(1 - \beta\gamma) - \sqrt{(-1 - \beta\gamma + d (1 - \beta\gamma))^2 - 4\beta\gamma}}{2 \gamma}, ~~~&\text{ for } \gamma > 0, \\
        \frac{\beta}{d - 1}, ~~~&\text{ for } \gamma = 0,
    \end{cases}
\end{align}
and denote 
$
    \lambda_c(d) \triangleq \hat{x}_d \left(\frac{\hat{x}_d + \beta}{\gamma \hat{x}_d + 1}\right)^d.
$
Following from Lemma \ref{lem:lambda-c}, when $\beta \le 1$, $\lambda_c(d)$ is strictly decreasing on $(\bar{d}, +\infty)$, and when $\beta > 1$, there exists $d_c$ such that $\lambda_c(d)$ is strictly deceasing on $(\bar{d}, d_c)$ and is strictly increasing on $(d_c, +\infty)$, where $\bar{d} = \frac{1 + \sqrt{\beta\gamma}}{1 - \sqrt{\beta\gamma}}$.

Further, for analysis of anti-ferromagnetic 2-spin systems with negative activities $\lambda$, we will need the following definition of $\check{x}_d$. For $\beta\gamma < 1, \beta > 0, \gamma \ge 0$, define
\begin{align}
    \check{x}_d \triangleq \frac{1 - \beta \gamma + d(1 + \beta \gamma) +\sqrt{(1 - \beta \gamma + d(1 + \beta \gamma))^2-4 \beta  \gamma  d^2}}{2 \beta  d}.
\end{align}
Detailed discussions about $\hat{x}_d$ and $\check{x}_d$ can be found in Appendix \ref{app:hat-x} and \ref{app:check-x}.

\subsection{Generalized Set Cover}\label{sec:set-cover}
A generalized 2-spin systems defined on a hypergraph $G = (V, E)$ is specified by hyperedge activities $\varphi_e: \{0, 1\}^{|e|} \to \sC$ for $e \in E$, and a vertex activity $\eta$. Its partition function is defined as 
\begin{align*}
   Z_{\tau_S}(G, \vvarphi, \eta) \triangleq \sum_{\substack{\sigma \in \{0, 1\}^{V} \\ \sigma(v) = \tau_S(v) \text{ for } v \in S}} \prod_{e \in E} \varphi_e(\sigma\vert_{e}) \eta^{|\{v: \sigma(v) = 0\}|},
\end{align*}
where $S \subset V$ and $\tau_S \in \{0, 1\}^S$. 

In this paper, we consider set cover problems with $\varphi_e$ defined as 
\begin{align*}
    \varphi^{\mu}_e(\hat{\sigma}) = 1 + \mu \prod_{v \in e} (1 - \hat{\sigma}(v)), ~~~\mu \ge -1,
\end{align*}
and denote the partition function $Z(G, \vvarphi^{\mu}, \eta)$ as $Z(G, \mu, \eta)$ for simplicity.

Several classical counting problems can be viewed as special cases of generalized set cover problem, e.g., $\sharp$monotone-CNF, $\sharp$BIS and edge covers. See Appendix \ref{app:set-cover-problem} for more details.

We also need definition of feasible configurations for set covers. 
\begin{definition}[Feasible configuration of set cover]\label{def:set-cover:feasible}
    We say a configuration $\tau_S \in \{0, 1\}^{S}$ is feasible if $\tau_S$ does not assign any vertices in an edge in $G$ both to $0$ for $\mu = -1$, that is
    \begin{align*}
        \forall e \in E, ~ \text{ if } e \subseteq S, \text{ then } \exists v \in e, \tau_S(v) = 1, \text{ for } \mu = -1.
    \end{align*}
\end{definition}
It is clearly that if $\tau_S$ is infeasible then $Z_{\tau_{S}}(G, \mu, \eta) = 0$.

\section{2-Spin Systems with Bounded Degree}
In this section, we study zero-free regions for 2-spin systems on bounded degree graphs.

\begin{thm}\label{thm:bounded-2-spin}
    For $\beta > 0, \gamma \ge 0$, suppose that $\beta, \gamma, \lambda_0$ satisfies one of the following condition:
    \begin{enumerate}
        \item $\beta\gamma > 1$ and $0 < \lambda_0 < \left(\frac{\beta}{\gamma}\right)^{\frac{\sqrt{\beta\gamma}}{\sqrt{\beta\gamma} - 1}}\max\{1, \gamma\}^{\frac{2 \sqrt{\beta\gamma} }{\sqrt{\beta\gamma} - 1} - \Delta}$, 
        \item $\beta\gamma > 1, 0 > \lambda_0 > -\max\{1, \gamma\}^{\Delta}$, 
        \item $\sqrt{\beta\gamma} \le \frac{\Delta-2}{\Delta}, \lambda_0 > 0$ and
        \begin{align*}
            \lambda_0 < \begin{cases}
                \lambda_c(d_c), ~~~&\text{ for } \beta > 1, \Delta > d_c + 1, \\
                \lambda_c(\Delta-1), ~~~&\text{ otherwise, }
            \end{cases}
        \end{align*}
        \item $\beta\gamma < 1, \lambda_0 < 0$ and
        \begin{align*}
            \lambda_0 > \begin{cases}
                -\min\left\{1, \frac{\beta - 1}{1 - \gamma}\right\}, ~~~&\text{ for } \beta > 1, \Delta > \frac{1 - \beta\gamma}{(\beta - 1)(1 - \gamma)} + 1, \\
                -\frac{\beta \check{x}_d - 1}{(\check{x}_d - \gamma) \check{x}_d^{d}} ~~ (d = \Delta - 1) ~~~&\text{ otherwise. }
            \end{cases}
        \end{align*}
    \end{enumerate}
    Then there exists $\delta > 0$, which depends on $\Delta, \beta, \gamma, \lambda_0$, such that $Z_{\tau_S}^{\beta, \gamma}(G, \lambda) \neq 0$ for all $\lambda \in \gU([0, \lambda_0], \delta)~ (\lambda \neq 0)$ and all graphs $G$ with maximum degree $\Delta$ and all feasible configurations $\tau_S$ (cf. Definition \ref{def:2-spin:feasible}).
\end{thm}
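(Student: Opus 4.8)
The plan is to reduce zero-freeness of $Z_{\tau_S}^{\beta,\gamma}(G,\lambda)$ to a statement about the one-step tree recursion and then to exhibit, for each of the four parameter regimes, a convex complex region that is mapped strictly into itself by that recursion. First I would use the standard telescoping/self-avoiding-walk tree expansion (as in \cite{weitz2006counting}) to reduce the partition function on an arbitrary bounded-degree $G$ to a product of marginal ratios computed on a tree $T$ of maximum degree $\Delta$; the key point is that $Z_{\tau_S}^{\beta,\gamma}(G,\lambda)\neq 0$ follows once we show all the intermediate ratios $R_v = Z^{\sigma_v=0}/Z^{\sigma_v=1}$ stay in a region $\Omega$ bounded away from the ``bad'' values ($-\infty$ and the points forcing a zero), with the recursion
\[
R_v \;=\; \lambda\cdot F_{\beta,\gamma}\!\left(R_{u_1},\dots,R_{u_k}\right),\qquad F_{\beta,\gamma}(y_1,\dots,y_k)=\prod_{i=1}^{k}\frac{\gamma y_i + 1}{y_i + \beta},
\]
for $k\le \Delta-1$ children (and an analogous bound at the root with $k\le\Delta$). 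So the whole theorem amounts to: for $\lambda$ in a $\delta$-neighborhood of $[0,\lambda_0]$, the map $R\mapsto \lambda F_{\beta,\gamma}(R,\dots,R)$ (and more generally the multivariate version) sends the closure of some convex $\Omega$ into its interior $\open\Omega$, with $0\notin\Omega$.

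Next I would build $\Omega$ by decoupling $F$ into basic holomorphic pieces, following the strategy advertised in the introduction. Writing $\varphi(y)=\frac{\gamma y+1}{y+\beta}$, this is a Möbius transform, so it maps the intended region for a single $R_v$ to a disk or half-plane; raising to the $k$-th power and multiplying by $\lambda$ are again elementary maps (a power map contracts/expands arguments and moduli in a controlled way, multiplication by $\lambda$ rotates and scales). The four cases correspond to four natural choices: in the ferromagnetic case $\beta\gamma>1$ (cases 1–2) the image of $\varphi$ on a suitable half-plane is again a half-plane, so $\Omega$ should be taken as a sector or half-plane $\{\,\re(\cdot)>0\,\}$-type region, and the degree-dependent bounds on $\lambda_0$ (with the $\max\{1,\gamma\}^{\Delta}$ and the $\frac{\sqrt{\beta\gamma}}{\sqrt{\beta\gamma}-1}$ exponents) come from demanding that after taking the $k$-th power the modulus is controlled uniformly in $k\le\Delta$; in the anti-ferromagnetic cases (3–4) the relevant $\Omega$ is a disk centered on the positive real axis passing through the fixed point $\hat x_d$ (resp. $\check x_d$), and the thresholds $\lambda_c(d)$, $\check x_d$ are exactly the critical values at which this disk stops contracting, which is why $\hat x_d,\check x_d$ were defined in the preliminaries. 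The monotonicity facts quoted from Lemma~\ref{lem:lambda-c} (behavior of $\lambda_c(d)$ in $d$, the role of $d_c$) are what let us pass from ``contraction for degree exactly $d$'' to ``contraction for all degrees $\le\Delta$'' by taking the worst $d$, which explains the $\min$/$\max$ over $d_c+1$ versus $\Delta-1$ in the statement.

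Concretely the steps are: (i) fix the target region $\Omega_0$ for a leaf (which must contain $[0,\lambda_0]$-neighborhood images and be convex); (ii) verify $\varphi(\overline{\Omega_0})\subseteq$ some half-plane/disk $H$; (iii) verify that $H^{k}$ for all $k\le\Delta-1$ lies in a fixed region whose $\lambda\cdot(\,\cdot\,)$-image is contained in $\open{\Omega_0}$ — this is where the explicit inequalities on $\lambda_0$ are forced, and I would use Lemma~\ref{lem:boundary} to reduce ``image inside interior'' to checking the boundary $\partial\Omega_0$; (iv) handle the root separately with $k\le\Delta$, and note $0\notin\Omega_0$ so no ratio blows up and hence $Z\neq0$; (v) obtain $\delta>0$ by a compactness/openness argument — the contraction is strict on the compact set $[0,\lambda_0]$, so it survives a small perturbation, uniformly in $G$ since the region is degree-only-dependent. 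The main obstacle I expect is step (iii): controlling $\bigl|\prod_{i=1}^k \varphi(y_i)\bigr|$ and its argument simultaneously when the $y_i$ range independently over $\overline{\Omega_0}$ and $k$ ranges up to $\Delta$, because the multivariate map is not just the diagonal one — one needs the region $\Omega_0$ to be chosen so that the ``worst-case'' product over the boundary still lands inside, and getting the four regimes to all admit such a region with the sharp $\lambda_0$-thresholds is the delicate part. The feasibility condition on $\tau_S$ enters precisely to rule out the degenerate branches ($\beta=0$ or $\gamma=0$) where $\varphi$ has a pole or zero inside the region, so that the decoupling into Möbius maps is legitimate.
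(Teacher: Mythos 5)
Your high-level architecture (tree recursion, a complex region preserved by one step of the recursion, a boundary/invariance argument, and a compactness argument to get the $\delta$-neighborhood of $[0,\lambda_0]$) matches the paper's, but two concrete points break the proposal as written. First, you have the ``bad'' point backwards: with the ratio convention $R_v=Z^{\sigma_v=0}/Z^{\sigma_v=1}$ (or the paper's reciprocal convention), $Z=Z_0+Z_1\neq 0$ is equivalent to $R_v\neq -1$, so the region must \emph{exclude} $-1$ and must \emph{contain} $0$ and $\infty$, because pinned vertices contribute exactly the ratios $0$ and $\infty$ and they are needed as base cases of the induction (cf.\ Definition \ref{def:2-spin:complex-contraction} and Lemma \ref{lem:bounded-2-spin:contraction-zero-free}). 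Your requirement ``$0\notin\Omega_0$'' would make the induction fail at every pinned neighbor, and avoiding $0$ does not by itself prevent $Z$ from vanishing.

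Second, and more fundamentally, the step you yourself flag as the main obstacle --- controlling $\prod_{i=1}^{k}\varphi(y_i)$ when the $y_i$ vary \emph{independently} over the region and $k$ ranges over $0,\dots,\Delta-1$ --- is exactly what your sketch leaves unresolved, and your proposed shapes do not resolve it. The paper's device is to pass to $w=\phi(z)=\ln\frac{\gamma z+1}{z+\beta}$, where the product becomes a sum; then choosing $U$ \emph{convex in the $w$-plane} with $0\in U$ gives $\sum_i w_i=d\bar w$ with $\bar w\in U$, so the whole multivariate, multi-arity problem collapses to the one-parameter family $w\mapsto\phi(\lambda e^{(\Delta-1)w})$, $w\in U$ (Lemma \ref{lem:bounded-2-spin:U}). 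Convexity of a half-plane or of a disk through the fixed point in the \emph{ratio} plane does not give this reduction: such sets are not log-convex, powers of a half-plane fill out the plane for $k\ge 2$ unless the argument is controlled, and the sum-of-regions trick only works after the logarithmic change of variables. Moreover, the sharp thresholds in the statement (the exponent $\frac{2\sqrt{\beta\gamma}}{\sqrt{\beta\gamma}-1}-\Delta$ on $\max\{1,\gamma\}$, the values $\lambda_c(d_c)$ and $-\frac{\beta\check{x}_d-1}{(\check{x}_d-\gamma)\check{x}_d^{\,d}}$) are extracted in the paper from a degenerating family of isosceles triangles $U(x_0,x_1,k)$ as $k\to0^{+}$, via the linearized criterion $H(x,\lambda)=\partial_k G|_{k=0}>0$ (Lemmas \ref{lem:bounded-2-spin:H} and \ref{lem:bounded-2-spin:x0-x1}) together with a careful, case-by-case choice of $x_0$ tied to the real fixed point; a fixed disk or half-plane ansatz comes with no mechanism for producing these exact bounds, and nothing in your sketch indicates how it would. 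So the proposal is a plausible outline of the same paradigm, but it is missing the key idea (log-coordinates plus convexity to reduce to the diagonal, degree-$(\Delta-1)$ case) and the quantitative machinery that actually yields the four stated conditions.
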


Based on Theorem \ref{thm:bounded-2-spin}, it is easy to check that $\lambda^{-m} Z_{\tau_S}^{\beta, \gamma}(G, \lambda) \neq 0$ for $m = |v \in S: \tau(v) = 1|$ and all $\lambda \in \gU([0, \lambda_0], \delta)$. 
Then following from Barvinok’s paradigm\cite{barvinok2016combinatorics, patel2017deterministic}, there exists an FPTAS for computing $Z_{\tau_S}^{\beta, \gamma}(G, \lambda_0)$ on graphs with maximum degree $\Delta$.

When $\beta = 1$ and $\gamma = 0$, Theorem \ref{thm:bounded-2-spin} provides FPTAS for $-\frac{(\Delta - 1)^{\Delta -1}}{\Delta^{\Delta}} < \lambda_0 < \frac{(\Delta - 1)^{\Delta -1}}{(\Delta - 2)^{\Delta}}$, which covers a well known result for the hard-core model. 
% \begin{remark}
Condition 1 with $\gamma \le 1$ implies an FPTAS which covers the result in \cite{guo2018uniqueness} and improves the results in \cite{shao2019contraction} for general $\gamma$. When $\beta \le 1$ or $\Delta \le d_c + 1$, Condition 3 implies an FPTAS that covers the results in \cite{li2013correlation}, and when $\beta > 1$ and $\Delta > d_c + 1$, \cite{li2013correlation} provided an FPTAS for $\lambda_0 < \min\{\lambda_c(\floor{d_c}), \lambda_c(\ceil{d_c})\}$ which is slightly stronger than our result. The results for negative activities $\lambda_0$, cf. Condition 2 and Condition 4, are first obtained in this paper. 
% \end{remark}

\begin{remark}
    Note that $Z^{\beta, \gamma}_{\tau_S}(G, \lambda) = \lambda^{|V|} Z^{\gamma, \beta}_{\tilde\tau_S}(G, 1/\lambda)$ where $\tilde\tau_S(v) = 1 - \tau_S(v)$ for $v \in S$. For $\lambda > 0$, if there exists an FPTAS for computing $Z_{\tau_S}^{\beta, \gamma}(G, \lambda)$ for all $\lambda < \lambda_c(\beta, \gamma)$, then there also exists an FPTAS for computing $Z_{\tau_S}^{\beta, \gamma}(G, \lambda)$ for all $\lambda > 1/\lambda_c(\gamma, \beta)$. Similarly, for $\lambda < 0$, if there exists an FPTAS for computing $Z_{\tau_S}^{\beta, \gamma}(G, \lambda)$ for all $\lambda > \tilde\lambda_c(\beta, \gamma)$, then there also exists an FPTAS for computing $Z_{\tau_S}^{\beta, \gamma}(G, \lambda)$ for all $\lambda < 1/\tilde\lambda_c(\gamma, \beta)$.
\end{remark}

\subsection{Computation Tree}
In this section, we recall the tree recursion of 2-spin systems \cite{weitz2006counting}. 
Given a graph $G = (V, E)$ and a feasible configuration $\tau_S$, we define the marginal ratios as 
\begin{align*}
    R^{\beta, \gamma, \lambda}_{\tau_{S}}(G, v) \triangleq
    \begin{cases}
    \frac{Z^{\beta, \gamma}_{\tau_{S}, \sigma(v)=1}(G, \lambda)}{Z^{\beta, \gamma}_{\tau_{S}, \sigma(v)=0}(G, \lambda)},~~~&\text{ for } v \in V \backslash S, \\
    0, ~~~&\text{ for } \tau_{S}(v) = 0, \\
    \infty, ~~~&\text{ for } \tau_{S}(v) = 1.
    \end{cases}
\end{align*}
\begin{remark}
    We will show in the proof of Lemma \ref{lem:bounded-2-spin:contraction-zero-free} that $R^{\beta, \gamma, \lambda}_{\tau_{S}}(G, v)$ is well-defined, i.e., either $Z^{\beta, \gamma}_{\tau_{S}, \sigma(v)=1}(\lambda) \neq 0$ or $Z^{\beta, \gamma}_{\tau_{S}, \sigma(v)=0}(\lambda) \neq 0$, when the complex contraction holds.  
\end{remark}

% Before building the tree recursion, it worth noting that for $\beta = 0$ and $\tau_S(u) = 0$ we can pin all neighbors of $u$ to $1$ and remove $u$ without changing any values of partition functions and marginal ratios.
% In fact, let $\tilde{S} = S \backslash \{u\}$, $\tilde{V} = \{v \in V \backslash S: v \sim u\}$, and $\sigma_{\tilde{V}}$ such that $\sigma_{\tilde{V}}(v) = 1$ for $v \in \tilde{V}$. 
% Following from $Z^{\beta,\gamma}_{\tau_S, \sigma(v)=0}(G, \lambda) = 0$ for $v \sim u$, we have
% \begin{align*}
%     Z^{\beta,\gamma}_{\tau_S}(G, \lambda) = Z^{\beta,\gamma}_{\tau_S \cup \sigma_{\tilde{V}}}(G, \lambda) = Z^{\beta,\gamma}_{\tau_{\tilde{S}} \cup \sigma_{\tilde{V}}}(G - u, \lambda). 
% \end{align*}
% Similarly, for $\gamma = 0$ and $\tau_S(u) = 1$, we can pin all neighbors of $u$ to $0$ and remove $u$. 

Before building the tree recursion, it worth noting that for $\beta = 0$ and $\tau_S(u) = 0$ we can pin all neighbors of $u$ to $1$ without changing any values of partition functions and marginal ratios.
In fact, let $\tilde{V} = \{v \in V \backslash S: v \sim u\}$, and $\sigma_{\tilde{V}}$ such that $\sigma_{\tilde{V}}(v) = 1$ for $v \in \tilde{V}$. 
Following from $Z^{\beta,\gamma}_{\tau_S, \sigma(v)=0}(G, \lambda) = 0$ for $v \sim u$, we have
$Z^{\beta,\gamma}_{\tau_S}(G, \lambda) = Z^{\beta,\gamma}_{\tau_S \cup \sigma_{\tilde{V}}}(G, \lambda)$.
Similarly, for $\gamma = 0$ and $\tau_S(u) = 1$, we can pin all neighbors of $u$ to $0$. 
\begin{lemma}\label{lem:bounded-2-spin:computation-tree}
    Given a graph $G = (V, E)$ and a feasible configuration $\tau_S$, consider a free vertex $v \in V \backslash S$. Suppose that the neighbors of $v$ are $v_1, \dots, v_d$. Replace $v$ in each edge $(v, v_i)$ with an independent duplicate $\tilde{v}_i$ for $i = 1, \dots, d$, and denote this new graph by $\tilde{G}$. Then we have
    \begin{align*}
        R^{\beta, \gamma, \lambda}_{\tau_{S}}(G, v) = f^{\beta,\gamma}_{\lambda, d}(R^{\beta, \gamma, \lambda}_{\tau_{S}\cup \sigma_1}(G_1, v_1), \dots, R^{\beta, \gamma, \lambda}_{\tau_{S}\cup \sigma_d}(G_d, v_d)),
    \end{align*}
    where 
    \begin{align*}
        f^{\beta,\gamma}_{\lambda, d}(z_1, z_2, \cdots, z_d) \triangleq \lambda \prod_{i=1}^d \left(\frac{\gamma z_i + 1}{z_i + \beta}\right),
    \end{align*}
    and for $i \in [d]$, $G_i = \tilde{G} - \tilde{v}_i$ and $\sigma_i$ satisfies $\sigma_i(\tilde{v}_j) = 0$ if $j < i$, and $\sigma_i(\tilde{v}_j) = 1$ if $j > i$.

\end{lemma}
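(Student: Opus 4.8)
The plan is to carry out the classical one-step tree recursion for 2-spin systems \cite{weitz2006counting}; the only points needing care are that the vertex weight $b_1=\lambda$ is over-counted when $v$ is duplicated, and that some intermediate partition functions may vanish, so that the final identity is ultimately to be read in $\hat{\sC}$. The case $d=0$ is immediate: for an isolated free vertex, $Z^{\beta,\gamma}_{\tau_S,\sigma(v)=c}(G,\lambda)=b_c\,Z^{\beta,\gamma}_{\tau_S}(G - v,\lambda)$, hence $R^{\beta,\gamma,\lambda}_{\tau_S}(G,v)=b_1/b_0=\lambda=f^{\beta,\gamma}_{\lambda,0}$. Assume $d\ge 1$ henceforth.

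First I would relate $G$ to its duplication $\tilde{G}$, in which $v$ is split into the pendant vertices $\tilde{v}_1,\dots,\tilde{v}_d$ with each $\tilde{v}_i$ adjacent only to $v_i$. For $c\in\{0,1\}$ let $\tilde{Z}_c$ denote the partition function of $\tilde{G}$ under $\tau_S$ together with the pinning of every $\tilde{v}_i$ to $c$. Comparing the weighted sums term by term: each edge $(\tilde{v}_i,v_i)$ of $\tilde{G}$ contributes $A_{c,\sigma(v_i)}$, exactly as the edge $(v,v_i)$ of $G$ did, and every other vertex and edge factor is identical, the sole discrepancy being that the single factor $b_c$ carried by $v$ in $G$ is replaced by $b_c^{\,d}$ carried by the $d$ duplicates in $\tilde{G}$. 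Hence $\tilde{Z}_c=b_c^{\,d-1}\,Z^{\beta,\gamma}_{\tau_S,\sigma(v)=c}(G,\lambda)$, and since $b_0=1$, $b_1=\lambda$,
\begin{align*}
 R^{\beta,\gamma,\lambda}_{\tau_S}(G,v)=\frac{Z^{\beta,\gamma}_{\tau_S,\sigma(v)=1}(G,\lambda)}{Z^{\beta,\gamma}_{\tau_S,\sigma(v)=0}(G,\lambda)}=\lambda^{\,1-d}\,\frac{\tilde{Z}_1}{\tilde{Z}_0}.
\end{align*}

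Next I would compute $\tilde{Z}_1/\tilde{Z}_0$ by telescoping through the hybrid pinnings $\rho_0,\rho_1,\dots,\rho_d$ of $\{\tilde{v}_1,\dots,\tilde{v}_d\}$, where $\rho_k$ sends $\tilde{v}_j\mapsto 0$ for $j\le k$ and $\tilde{v}_j\mapsto 1$ for $j>k$; thus $\rho_0$ is the all-$1$ pinning, $\rho_d$ the all-$0$ pinning, and $\rho_{i-1},\rho_i$ differ only at $\tilde{v}_i$. Since $\rho_{i-1}$ and $\rho_i$ both restrict on $\{\tilde{v}_j:j\neq i\}$ to the configuration $\sigma_i$ of the statement,
\begin{align*}
 \frac{\tilde{Z}_1}{\tilde{Z}_0}=\prod_{i=1}^{d}\frac{Z^{\beta,\gamma}_{\tau_S\cup\rho_{i-1}}(\tilde{G},\lambda)}{Z^{\beta,\gamma}_{\tau_S\cup\rho_i}(\tilde{G},\lambda)}=\prod_{i=1}^{d}\frac{Z^{\beta,\gamma}_{\tau_S\cup\sigma_i,\,\sigma(\tilde{v}_i)=1}(\tilde{G},\lambda)}{Z^{\beta,\gamma}_{\tau_S\cup\sigma_i,\,\sigma(\tilde{v}_i)=0}(\tilde{G},\lambda)}.
\end{align*}
For fixed $i$, because $\tilde{v}_i$ is pendant and adjacent only to $v_i$, I would expand first over $\sigma(\tilde{v}_i)$ and then over $\sigma(v_i)$: writing $W_c=Z^{\beta,\gamma}_{\tau_S\cup\sigma_i,\,\sigma(v_i)=c}(G_i,\lambda)$ with $G_i=\tilde{G}-\tilde{v}_i$, the numerator equals $b_1(A_{1,0}W_0+A_{1,1}W_1)=\lambda(W_0+\gamma W_1)$ and the denominator equals $b_0(A_{0,0}W_0+A_{0,1}W_1)=\beta W_0+W_1$, so the $i$-th factor is $\lambda\,\frac{\gamma R_i+1}{R_i+\beta}$ with $R_i=W_1/W_0=R^{\beta,\gamma,\lambda}_{\tau_S\cup\sigma_i}(G_i,v_i)$. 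Multiplying over $i$ and using $\lambda^{\,1-d}\cdot\lambda^{d}=\lambda$ yields $R^{\beta,\gamma,\lambda}_{\tau_S}(G,v)=\lambda\prod_{i=1}^{d}\frac{\gamma R_i+1}{R_i+\beta}=f^{\beta,\gamma}_{\lambda,d}(R_1,\dots,R_d)$.

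The main (minor) obstacle is the degenerate bookkeeping, which I would dispatch as follows. When $v_i\in S$ the quantity $R_i$ is $0$ or $\infty$ by definition and the $i$-th factor is the corresponding limiting value of the Mobius map $z\mapsto\lambda\frac{\gamma z+1}{z+\beta}$ on $\hat{\sC}$; when $\beta=0$ or $\gamma=0$ one first applies the pinning normalization recalled just before the statement, after which all instances that occur are feasible. Finally, if some intermediate $Z^{\beta,\gamma}_{\tau_S\cup\rho_k}(\tilde{G},\lambda)$ vanishes, the telescoping product is to be interpreted in $\hat{\sC}$; the clean way to phrase the conclusion is that the displayed recursion is an identity of rational functions of the parameters---equivalently of the formal arguments $R_1,\dots,R_d$---valid wherever the denominators are nonzero, and in particular on the complex neighbourhood where Lemma~\ref{lem:bounded-2-spin:contraction-zero-free} later establishes non-vanishing. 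Everything else is the routine term-counting indicated above.
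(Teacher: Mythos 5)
Your proposal is correct and takes essentially the same route as the paper's proof: the same $\lambda^{1-d}$ bookkeeping for the duplicated vertex weight, the same telescoping through the hybrid pinnings $\sigma_i$, and the same pendant-vertex expansion of each factor into $\lambda\,\frac{\gamma R_i+1}{R_i+\beta}$, with pinned neighbours $v_i$ handled as the limiting values $R_i\in\{0,\infty\}$ of the Mobius map, exactly as the paper does.
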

We point out that configurations $\{\tau_S \cup \sigma_i\}_{i=1}^d$ are also feasible. For example, when $\gamma = 0$, all neighbors of $u$ with $\tau_S(u) = 1$ are pinned to $0$, so there is no neighbor of a free vertex $v$ pinned to $1$ by $\tau_S$. 

To simplify notations, we will use $f_{\lambda, d}$ for $f^{\beta, \gamma}_{\lambda, d}$ when context is clear.

\subsection{Complex contraction implies absence of zeros}
We here introduce a definition of complex contraction different from the definition in \cite{shao2019contraction}.

\begin{definition}\label{def:2-spin:complex-contraction}
    We say $\lambda \in \sC$ satisfies $\Delta$-complex-contraction property for 2-spin systems with parameters $\beta, \gamma ~(|\beta| + |\gamma| > 0)$ if there is an close region $F_{\lambda} \subseteq \hat{\sC}$ such that 
    \begin{enumerate}
        \item $-1 \notin F_{\lambda}$, $0 \in F_{\lambda}$ if $\beta \neq 0$, and $\infty \in F_{\lambda}$ if $\gamma \neq 0$,
        \item $f_{\lambda, \Delta}(z_1, \dots, z_{\Delta}) \neq -1$ for $z_i \in F_{\lambda}, i = 1, \dots, \Delta$,
        \item $f_{\lambda, d}(z_1, \dots, z_{d}) \in F_{\lambda}$ for any $d = 0, \dots, \Delta-1$ and $z_i \in F_{\lambda}, i = 1, \dots, d$.
    \end{enumerate}
\end{definition}

Similar to \cite{bencs2018note, shao2019contraction}, complex contraction so defined implies zero-freeness of 2-spin systems.
\begin{lemma}\label{lem:bounded-2-spin:contraction-zero-free}
    If $\lambda \neq 0$ satisfies $\Delta$-complex-contraction property for 2-spin systems with parameters $\beta, \gamma$, then $Z^{\beta, \gamma}_{\tau_{S}}(G, \lambda) \neq 0$ for any graphs $G = (V, E)$ with feasible configuration $\tau_S$ and maximum degree of free vertex $v \in V\backslash S$ is at most $\Delta$.
\end{lemma}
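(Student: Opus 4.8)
The plan is to proceed by induction on the number of free vertices $|V \setminus S|$, using the tree recursion of Lemma~\ref{lem:bounded-2-spin:computation-tree} to reduce the zero-freeness of $Z^{\beta,\gamma}_{\tau_S}(G,\lambda)$ to the contraction hypothesis. First I would isolate a free vertex $v$ of degree $d \le \Delta$ in $G$. The standard telescoping identity gives
\begin{align*}
    Z^{\beta,\gamma}_{\tau_S}(G,\lambda) = Z^{\beta,\gamma}_{\tau_S,\sigma(v)=0}(G,\lambda) + Z^{\beta,\gamma}_{\tau_S,\sigma(v)=1}(G,\lambda),
\end{align*}
so it suffices to show that at least one of these two summands is nonzero and that their ratio avoids $-1$. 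I would then peel off $v$ via the duplication construction of Lemma~\ref{lem:bounded-2-spin:computation-tree}, writing $R^{\beta,\gamma,\lambda}_{\tau_S}(G,v) = f_{\lambda,d}(R_1,\dots,R_d)$, where each $R_i = R^{\beta,\gamma,\lambda}_{\tau_S \cup \sigma_i}(G_i,v_i)$ is the marginal ratio of a graph $G_i$ with strictly fewer free vertices and with a feasible configuration.

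The key step is to set up the induction so that the inductive hypothesis delivers not merely $Z \ne 0$ but the stronger statement ``$R^{\beta,\gamma,\lambda}_{\tau_S}(G,v) \in F_\lambda$ for every free vertex $v$''. Concretely, the induction claim should be: for every graph with at most $n$ free vertices, every feasible $\tau_S$, and every free vertex $v$, we have $Z^{\beta,\gamma}_{\tau_S,\sigma(v)=0}(G,\lambda) \ne 0$ or $Z^{\beta,\gamma}_{\tau_S,\sigma(v)=1}(G,\lambda) \ne 0$ (so $R$ is well-defined as an element of $\hat{\sC}$), and $R^{\beta,\gamma,\lambda}_{\tau_S}(G,v) \in F_\lambda$. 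The base case $n = 0$ (no free vertices besides possibly isolated ones) is handled by the boundary conditions $0 \in F_\lambda$ if $\beta \ne 0$ and $\infty \in F_\lambda$ if $\gamma \ne 0$ in part~1 of Definition~\ref{def:2-spin:complex-contraction}, together with the feasibility assumption which rules out the degenerate pinned cases; a $d = 0$ free vertex gives $R = f_{\lambda,0} = \lambda \in F_\lambda$ by part~3 with $d = 0$. For the inductive step, each $R_i \in F_\lambda$ by hypothesis, so part~3 of the definition gives $R^{\beta,\gamma,\lambda}_{\tau_S}(G,v) = f_{\lambda,d}(R_1,\dots,R_d) \in F_\lambda$ when $d \le \Delta - 1$, and part~2 gives $R \ne -1$ when $d = \Delta$; in either case $R \ne -1$, which is exactly the statement that $Z_{\sigma(v)=0} + Z_{\sigma(v)=1} \ne 0$ whenever both are finite and at least one is nonzero. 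Finally, to conclude zero-freeness of $Z^{\beta,\gamma}_{\tau_S}(G,\lambda)$ for a graph with no distinguished root, I would either pick any free vertex and run the same argument, or note $Z_{\tau_S}(G,\lambda) = Z_{\tau_S,\sigma(v)=0}(G,\lambda)\,(1 + R)$ and use that $R \ne -1$ together with $Z_{\tau_S,\sigma(v)=0} \ne 0$ (or the symmetric expansion when that factor vanishes).

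The main obstacle is the bookkeeping around \emph{well-definedness} of the marginal ratios in $\hat{\sC}$: one must verify that the two partition functions $Z_{\sigma(v)=0}$ and $Z_{\sigma(v)=1}$ are never \emph{simultaneously} zero, so that $R$ is a genuine point of $\hat{\sC}$ and the Möbius-type factors $\frac{\gamma z_i + 1}{z_i + \beta}$ appearing in $f_{\lambda,d}$ are evaluated at legitimate arguments (including the endpoints $z_i = \infty$, where the factor is $\gamma$, and $z_i = -\beta$, which must be shown not to occur for $z_i \in F_\lambda$ since $-\beta \notin F_\lambda$ would need checking, or handled via the extended-plane convention). This has to be threaded carefully through the induction in tandem with the $R \in F_\lambda$ claim, since the non-vanishing of one summand at the current level is what licenses dividing by it, and that in turn follows from $-1 \notin F_\lambda$ applied one level up. A secondary subtlety is the special treatment of $\beta = 0$ and $\gamma = 0$: here the pinning reductions described before Lemma~\ref{lem:bounded-2-spin:computation-tree} must be invoked to ensure the reduced graphs $G_i$ still carry feasible configurations and that the recursion $f_{\lambda,d}$ is applied only when it is finite and nonzero, matching clauses 1--2 of Definition~\ref{def:2-spin:complex-contraction} which conditionally include or exclude $0$ and $\infty$ from $F_\lambda$ according to whether $\beta$ or $\gamma$ vanishes.
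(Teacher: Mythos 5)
Your plan is essentially the paper's own proof: induction on the number of free vertices with the strengthened hypothesis that marginal ratios of free vertices lie in $F_\lambda$ (degree at most $\Delta-1$) or merely avoid $-1$ (degree $\Delta$), well-definedness of $R$ supplied by the induction one level down, the pinned-neighbor values $0,\infty$ absorbed via clause~1 of Definition~\ref{def:2-spin:complex-contraction}, and the $\beta=0$/$\gamma=0$ pinning reductions. The only adjustment needed is to state the induction hypothesis with the degree split from the outset, since the blanket claim ``$R\in F_\lambda$ for every free vertex'' is not available for degree-$\Delta$ vertices — your later case analysis already uses the correct weaker form, matching the paper.
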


\subsection{Convert the Product of Regions to the Sum of Regions}
For hard-core models $(\beta = 1, \gamma = 0)$, complex contraction without proving correlation decay could not find a tight zero-free region around the real axis \cite{peters2019conjecture, bencs2018note}. Other than semicircles in \cite{peters2019conjecture} and circular sectors in \cite{bencs2018note}, we choose isosceles triangles, isosceles trapezoids, or rectangles as complex contraction region after employing a holomorphic mapping $\phi$ which allows us to deal with the sum of same regions instead of the product of same regions. For $\phi(z) = \ln\left( \frac{\gamma z + 1}{z + \beta} \right)$, we have $\phi^{-1}(w) = \frac{\beta e^w - 1}{\gamma - e^w}$ and 
\begin{align}
    f^{\phi}_{\lambda, d}(w_1, \dots, w_d)\triangleq \phi(f_{\lambda, d}(\phi^{-1}(w_1), \dots, \phi^{-1}(w_d))) = \ln\left(\frac{\gamma \lambda \exp(\sum_{i=1}^d w_i) + 1}{\lambda \exp(\sum_{i=1}^d w_i) + \beta}\right).    
\end{align}

\begin{remark}
    For real $\beta > 0, \gamma \ge 0$ and $\beta\gamma < 1$, it is easy to check that $\phi$ is holomorphic and bijective from $\hat{\sC} \backslash [-1/\gamma, -\beta]$ (or $\sC \backslash (-\infty, -\beta]$ for $\gamma = 0$) to $\{w \in \sC: |\im w| < \pi\}$ and $\phi^{-1}(w) = \frac{\beta e^w - 1}{\gamma - e^w}$ is indeed the inverse function of $\phi$. Similar results also hold for $\beta\gamma > 1$ with $\hat{\sC} \backslash [-\beta, -1/\gamma]$.
\end{remark}
% Moreover, if $0 \in U$, then $d_1 U \subseteq d_2 U$ holds for $d_1 \le d_2$. 
The form of $f^{\phi}_{\lambda, d}$ inspires us to find a convex region $U_{\lambda} \subseteq \{w \in \sC: |\im{w}| < \pi \}$ for proving complex contraction. As long as $U_{\lambda}$ is convex, we have $\bar{w} = \frac{1}{d} \sum_{i=1}^d w_i \in U_{\lambda}$ for all $w_i \in U_{\lambda}, \ i = 1, \dots, d$, and
\begin{align*}
    f^{\phi}_{\lambda, d} (w_1, \dots, w_d) = \phi(\lambda e^{d \bar{w}}).
\end{align*}
Our strategy is to first find a convex and compact region $U$ such that complex contraction holds for all real $\lambda \in [0, \lambda_0]$ in the region $\phi^{-1}(U)$. Then uniformly continuity of $\phi$ will result in complex contraction for all complex values $\lambda \in \gU([0, \lambda_0], \delta)$ in the same region $\phi^{-1}(U)$. 
\begin{lemma}\label{lem:bounded-2-spin:U}
    Fix $\lambda_0 \neq 0, \lambda_0 \in \sR, \beta > 0, \gamma \ge 0$. 
    % define $g(\lambda, w) \triangleq \ln\left( \frac{\gamma \lambda e^w + 1}{\lambda e^w + \beta} \right)$.
    If there exists a convex and compact region $U \subseteq \{w \in \sC: |\im{w}| < \pi \}$ such that 
    \begin{enumerate}
        \item $-1 \notin \phi^{-1}(U)$, $-\ln(\beta) \in U$, and $\ln(\gamma) \in U$ if $\gamma > 0$,
        \item $\lambda e^{\Delta w} \neq -1$ for $w \in U$ and $\lambda \in [0, \lambda_0]$,
        \item $\lambda e^{d w} \notin (-\infty, \max\{-\beta, -1/\gamma\}]$ for $d = 0, 1, \dots, \Delta - 1$, $w \in U$ and $\lambda \in [0, \lambda_0]$, 
        \item $\phi(\lambda e^{d w}) \in \open{U}$ for $d = 0, 1, \dots, \Delta - 1$, $w \in U$ and $\lambda \in [0, \lambda_0]$,
    \end{enumerate}
    then there exists $\delta > 0$, which depends on $\lambda_0, \beta, \gamma, \Delta$, such that all $\lambda \in \gU([0, \lambda_0], \delta)$ satisfies $\Delta$-complex-contraction property for 2-spin systems with parameters $\beta, \gamma$.
\end{lemma}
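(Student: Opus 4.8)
Throughout I will assume $\beta\gamma\neq1$; the degenerate case $\beta\gamma=1$, in which $\phi$ is constant and $f_{\lambda,d}$ does not depend on its arguments, can be dealt with directly. The plan is to produce one fixed, $\lambda$-independent region witnessing the $\Delta$-complex-contraction property, namely $F:=\phi^{-1}(U)\subseteq\hat{\sC}$, where $\phi^{-1}$ is the conformal bijection from the strip $\{w\in\sC:|\im w|<\pi\}$ onto $\hat{\sC}$ minus the appropriate real segment, as in the remark following the definition of $f^{\phi}_{\lambda,d}$. First I would record that $F$ is a closed region (the image of the compact $U$ under $\phi^{-1}$, continuous into $\hat{\sC}$), that $-1\notin F$ by the first hypothesis, and that $0=\phi^{-1}(-\ln\beta)\in F$ and $\infty=\phi^{-1}(\ln\gamma)\in F$ by the same hypothesis; this is exactly item~1 of Definition~\ref{def:2-spin:complex-contraction}. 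The algebraic step is the identity, valid whenever $z_1,\dots,z_d\in F$ (so that $w_i:=\phi(z_i)\in U$),
\[
  f_{\lambda,d}(z_1,\dots,z_d)=\lambda\prod_{i=1}^{d}\frac{\gamma z_i+1}{z_i+\beta}=\lambda\,e^{\sum_{i}w_i}=\lambda\,e^{d\bar w},\qquad \bar w:=\tfrac1d\textstyle\sum_i w_i\in U,
\]
using $e^{\phi(z)}=(\gamma z+1)/(z+\beta)$ and convexity of $U$ (for $d=0$ one just has $f_{\lambda,0}=\lambda$). Writing $S_d^{(\lambda)}:=\{\lambda e^{dw}:w\in U\}$, it follows that the $\Delta$-complex-contraction property with region $F$ is equivalent to $-1\notin S_\Delta^{(\lambda)}$ (item~2) together with $S_d^{(\lambda)}\subseteq F$ for $0\le d\le\Delta-1$ (item~3).

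Next I would verify these two conditions for real $\lambda\in[0,\lambda_0]$, where they are immediate: hypothesis~2 gives $-1\notin S_\Delta^{(\lambda)}$; and hypotheses~3 and~4 say that each point $\lambda e^{dw}$ ($w\in U$, $d<\Delta$) lies in the domain of $\phi$ and that $\phi(\lambda e^{dw})\in\open{U}$, hence $\lambda e^{dw}=\phi^{-1}(\phi(\lambda e^{dw}))\in\phi^{-1}(\open{U})\subseteq F$, i.e.\ $S_d^{(\lambda)}\subseteq F$. Thus the real segment already enjoys the property with region $F$.

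The remaining, and main, step is to thicken $[0,\lambda_0]$ to $\gU([0,\lambda_0],\delta)$; here the open (``interior'') requirement in hypothesis~4 supplies exactly the slack a compactness argument needs. I would set $C_1:=\max\{|e^{dw}|:w\in U,\ 0\le d\le\Delta\}<\infty$ and form the compact subsets $K_\Delta:=\bigcup_{\lambda'\in[0,\lambda_0]}S_\Delta^{(\lambda')}$ and $K:=\bigcup_{0\le d<\Delta}\bigcup_{\lambda'\in[0,\lambda_0]}S_d^{(\lambda')}$ of $\sC$ (every listed point is finite). By the real case, $-1\notin K_\Delta$ and $K\subseteq V:=\phi^{-1}(\open{U})\cap\sC$, and $V$ is open in $\sC$ because $\phi^{-1}$ is a homeomorphism onto an open subset of $\hat{\sC}$; therefore $\rho:=\min\{\operatorname{dist}(-1,K_\Delta),\ \operatorname{dist}(K,\sC\setminus V)\}>0$ (the second distance being $+\infty$ when $V=\sC$). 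Take $\delta:=\rho/(2C_1)$, which is explicit in $U,\beta,\gamma,\Delta$. For $\lambda\in\gU([0,\lambda_0],\delta)$, choose a real $\lambda'\in[0,\lambda_0]$ with $|\lambda-\lambda'|<\delta$; then $|\lambda e^{dw}-\lambda'e^{dw}|\le|\lambda-\lambda'|\,C_1<\rho/2$ for all $w\in U$ and all $0\le d\le\Delta$. Hence $S_\Delta^{(\lambda)}$ stays within $\rho/2$ of $K_\Delta$ and avoids $-1$, while each $S_d^{(\lambda)}$ with $d<\Delta$ stays within $\rho/2$ of $K\subseteq V\subseteq F$ and so lies in $F$. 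By the equivalence above, $F$ witnesses the $\Delta$-complex-contraction property for every $\lambda\in\gU([0,\lambda_0],\delta)$, which is the claim.

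I expect the genuine difficulty to lie not in any single inequality but in the topological bookkeeping on $\hat{\sC}$: seeing that $F=\phi^{-1}(U)$ is the correct region, recognizing that hypothesis~3 is precisely what keeps the perturbed quantities $\lambda e^{dw}$ off the branch cut of $\phi$ (so that $\phi^{-1}$ is continuous on a full neighborhood of $K$), noting that all these quantities are finite so the distance estimates genuinely take place in $\sC$ even when $\gamma>0$ and $\infty\in F$, and separating out the degenerate case $\beta\gamma=1$.
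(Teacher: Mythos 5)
Your proposal is correct and follows essentially the same route as the paper: take $F=\phi^{-1}(U)$, use convexity of $U$ to reduce $f_{\lambda,d}$ on $F^d$ to $\lambda e^{d\bar w}$ with $\bar w\in U$, verify the contraction conditions for real $\lambda\in[0,\lambda_0]$ directly from the four hypotheses, and thicken to $\gU([0,\lambda_0],\delta)$ by a compactness argument. The only (harmless) difference is that you run the perturbation estimate in the $z$-plane, using openness of $\phi^{-1}(\open{U})$ and the positive distance from the compact set $K$ to its complement, whereas the paper applies $\phi$ and works in the $w$-plane via uniform continuity of $\phi$ on a compact neighborhood of $U_d$ together with $\mathrm{dist}(\phi(U_d),\partial U)>0$; both give the same conclusion, and your explicit exclusion of the degenerate case $\beta\gamma=1$ is consistent with the paper's implicit assumption.
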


% The first condition in Lemma \ref{lem:bounded-2-spin:U} is easy to check. 
% \begin{lemma}
%     For $U \subseteq \{w \in \sC: |\im{w}| < \pi \}$, there holds
% \begin{enumerate}
%     \item $-1 \notin \phi^{-1}(U)$ for $\max\{\beta, \gamma\} \le 1$ or $\min\{\beta, \gamma\} \ge 1$, 
%     \item for $\beta > 1 > \gamma$ or $\beta < 1 < \gamma$, $-1 \notin \phi^{-1}(U)$ is equivalent to $\ln\left(\frac{1 - \gamma}{\beta - 1}\right) \notin U$. 
% \end{enumerate}
% \end{lemma} 

\subsection{Complex Contraction for Real $\lambda$}
Following from Lemma \ref{lem:bounded-2-spin:U}, we just need to verify complex contraction for real $\lambda \in [0, \lambda_0]$. 
We choose the following class of isosceles triangles for finding $U$. Define
\begin{align}\label{def:U}
    U(x_0, x_1, k) = \{z: \re(z) \in [x_1, x_0], |\im(z)| \le k (x_0 - \re(z)) \},
\end{align}
where 
$x_1 \le 0 \le x_0, x_1 < x_0$, and $0 < k < \frac{\pi}{2 \Delta(x_0 - x_1)}$.

\begin{figure}[h]
\centering
\includegraphics[width=\textwidth]{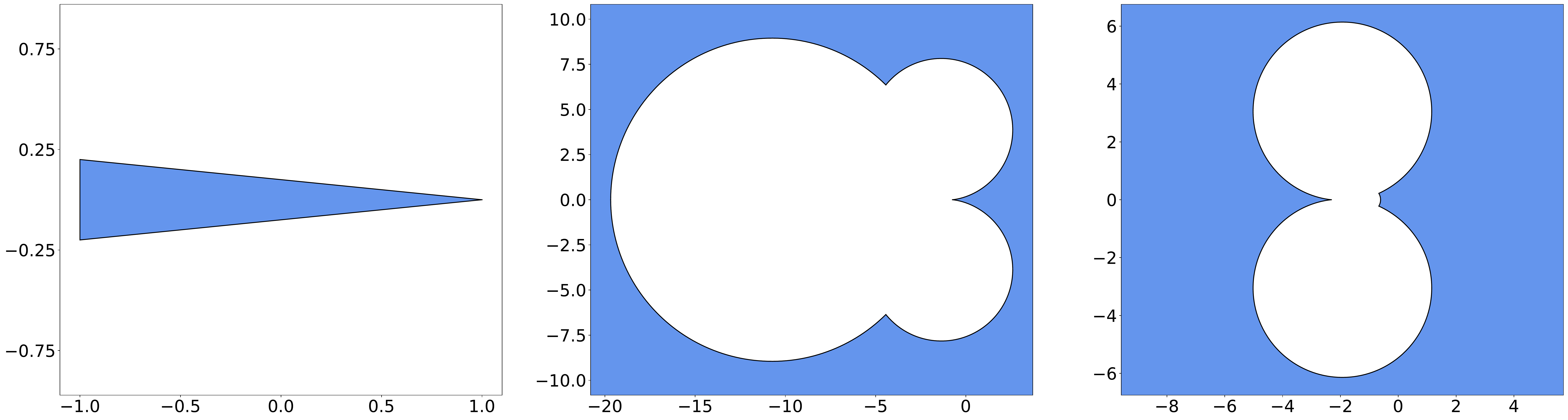}
\caption{Left: the region of $U(1, -1, 0.1)$. Middle: the region of $\phi^{-1}(U(1, -1, 0.1))$ with parameters $\beta = 2, \gamma = 0.3$. Right: the region of $\phi^{-1}(U(1, -1, 0.1))$ with parameters $\beta = 4, \gamma = 0.6$.}
\end{figure}

The key step is to choose $x_0, x_1, k$ so that $\phi(\lambda e^{d w}) \in \open{U(x_0, x_1, k)}$, i.e., 
\begin{align}\label{eq:contain}
    \re\left(\phi(\lambda e^{d w})\right) \in (x_1, x_0) ~~~\text{ and }~~~
    k \left(x_0 - \re\left(\phi(\lambda e^{d w})\right)\right) - \left|\im\left(\phi(\lambda e^{d w})\right)\right| > 0,
\end{align}
where $0 \le d \le \Delta - 1$ and $w \in U(x_0, x_1, k)$. 

The following three observations simplify the difficulty of verifying Equation (\ref{eq:contain}).
\begin{enumerate}
    \item With $0 \in U(x_0, x_1, k)$, we know that $\frac{d}{\Delta - 1} w \in U(x_0, x_1, k)$ for $0 \le d \le \Delta - 1$ and $w \in U(x_0, x_1, k)$. Thus we only need to check the Equation (\ref{eq:contain}) for $d = \Delta - 1$. 
    \item The condition $k < \frac{\pi}{2 \Delta (x_0 - x_1)}$ indicates that $\phi(\lambda e^{d w})$ is an injection with respect to $w$. 
    Then following from boundedness of $\phi(\lambda e^{d w})$ and Lemma \ref{lem:boundary}, 
    which is based on invariance of domain, 
    it is enough to confirm the Equation (\ref{eq:contain}) for $w \in \partial (U(x_0, x_1, k))$, especially for $w \in \{z: \re(z) \in [x_1, x_0], |\im(z)| = k (x_0 - \re(z)) \}$ (cf. Lemma \ref{lem:bounded-2-spin:tilde-G}).
    % For $w = x_1 + \i y$ and sufficiently small $x_1$, the value of $\lambda e^{(\Delta - 1) x_1}$ is close to $0$ which implies that $\phi(\lambda e^{(\Delta - 1) w})$ is close to the point $-\ln(\beta)$ on the real axis. It is easy to verify Equation (\ref{eq:contain}) in this case. 
    \item Define
    \begin{align*}
        G(x, k, \lambda) &\triangleq k \left(x_0 - \re\left(\phi(\lambda e^{(\Delta - 1) w})\right)\right) - \left|\im\left(\phi(\lambda e^{(\Delta - 1) w})\right)\right|,
    \end{align*}
    where $w = x + \i k(x_0 - x)$. 
    Then by uniformly continuity of $G$, with $G(x, 0, \lambda) = 0$, the existence of $k > 0$ such that $G(x, k, \lambda) > 0$ can be guaranteed by $H(x, \lambda) > 0$, where 
    \begin{align*}
        H(x, \lambda) \triangleq \frac{\partial G(x, k, \lambda)}{\partial k} \Bigg\vert_{k = 0}.
    \end{align*}
\end{enumerate}

\begin{lemma}\label{lem:bounded-2-spin:H}
    Suppose that $H(x, \lambda) > 0$
    holds for all $x \in [x_1, x_0]$ and $\lambda \in [0, \lambda_0]$, then there exists $k > 0$ such that $G(x, k, \lambda) > 0$ for all $x \in [x_1, x_0]$ and $\lambda \in [0, \lambda_0]$. 
\end{lemma}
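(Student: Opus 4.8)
The plan is to prove Lemma \ref{lem:bounded-2-spin:H} by a straightforward compactness-plus-continuity argument that upgrades the strict pointwise positivity of the first-order coefficient $H(x,\lambda)$ to a uniform positivity of $G(x,k,\lambda)$ for a single small $k>0$. First I would record the regularity one needs: on the compact parameter set $(x,\lambda)\in[x_1,x_0]\times[0,\lambda_0]$ and for $k$ in a compact neighborhood of $0$ (say $k\in[0,k_{\max}]$ with $k_{\max}<\frac{\pi}{2\Delta(x_0-x_1)}$ so that $w=x+\i k(x_0-x)$ stays in the strip $|\im w|<\pi/(2\Delta)$ where $\lambda e^{(\Delta-1)w}$ avoids the branch cut $(-\infty,\max\{-\beta,-1/\gamma\}]$), the map $(x,k,\lambda)\mapsto \phi(\lambda e^{(\Delta-1)w})$ is jointly continuous; hence $G$ is continuous on this compact box and $\partial G/\partial k$ exists and is continuous there as well (the only subtlety is the absolute value $|\im(\cdot)|$, which I would handle by noting $G(x,0,\lambda)=0$ means at $k=0$ we are differentiating $|{\cdot}|$ at a point where the inner function vanishes linearly in $k$, so the one-sided $k$-derivative is $|\partial_k \im(\cdot)|_{k=0}|$ up to sign — this is exactly what the definition of $H$ encodes, and I would make this explicit so that $H$ is genuinely the right-derivative of $G$ in $k$ at $0$).

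Next, with $G(x,0,\lambda)=0$ for all $(x,\lambda)$ and $\lim_{k\to 0^+}\frac{G(x,k,\lambda)-G(x,0,\lambda)}{k}=H(x,\lambda)$, I would use the hypothesis $H(x,\lambda)>0$ on the compact set $[x_1,x_0]\times[0,\lambda_0]$ to conclude $H$ attains a positive minimum, $m:=\min H>0$. Then a Taylor/Lagrange-type estimate gives $G(x,k,\lambda)=k\,H(x,\lambda)+o(k)$ with the $o(k)$ term uniform in $(x,\lambda)$ by uniform continuity of $\partial_k G$ on the compact box (equivalently, $G(x,k,\lambda)=k\,\partial_k G(x,\xi,\lambda)$ for some $\xi\in(0,k)$ by the mean value theorem, and $\partial_k G(x,\xi,\lambda)\to H(x,\lambda)$ uniformly as $\xi\to 0$). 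Choosing $k>0$ small enough that the uniform remainder is below $m/2$ yields $G(x,k,\lambda)\ge k m/2>0$ for all $x\in[x_1,x_0]$, $\lambda\in[0,\lambda_0]$, which is the claim.

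The one real obstacle I anticipate is the non-smoothness introduced by $|\im(\phi(\lambda e^{(\Delta-1)w}))|$: $G$ need not be $C^1$ in $k$ at points where $\im(\phi(\lambda e^{(\Delta-1)w}))$ changes sign, so a naive ``$\partial_k G$ continuous on the box'' statement is false in general. The clean fix is to observe that at $k=0$ we have $w=x$ real, hence $\lambda e^{(\Delta-1)x}$ is real and nonnegative, so $\phi(\lambda e^{(\Delta-1)x})=\ln\!\big(\frac{\gamma\lambda e^{(\Delta-1)x}+1}{\lambda e^{(\Delta-1)x}+\beta}\big)$ is real, i.e. $\im(\phi(\lambda e^{(\Delta-1)w}))=0$ at $k=0$ for every $(x,\lambda)$; writing $\im(\phi(\lambda e^{(\Delta-1)w}))=k\cdot g(x,k,\lambda)$ with $g$ continuous (this is just a Hadamard-lemma factorization of a real-analytic function vanishing at $k=0$), we get $|\im(\cdots)|=k\,|g(x,k,\lambda)|$ for $k\ge 0$, so $G(x,k,\lambda)=k\big(k'(x_0-\re(\cdots)) /k \cdot(\text{finite}) -|g(x,k,\lambda)|\big)$; more precisely one isolates the factor $k$ cleanly and the bracketed quantity is continuous on the compact box with value $H(x,\lambda)$ at $k=0$. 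I would present this factorization as the technical heart of the proof, after which the compactness argument above goes through verbatim; everything else is routine.
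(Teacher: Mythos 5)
Your proposal is correct and follows essentially the same route as the paper: both arguments use compactness of $[x_1,x_0]\times[0,\lambda_0]$ to extract a positive lower bound on $H$, then uniform continuity of the $k$-derivative (the paper integrates $\partial_k G$ over $[0,k_0]$, you phrase it via a mean-value/factorization estimate $G=k\cdot(\text{bracket})$) to get $G>0$ for one small $k$ uniformly. Your extra care with $|\im(\cdot)|$ is a sound refinement of the same argument; in fact, since $\sin(dk(x_0-x))\ge 0$ for the relevant small $k$, the absolute value only affects a constant factor and $\partial_k G$ is already continuous there, so the paper's shorter version also goes through.
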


% Our strategy for finding $x_0, x_1, k$ can be divided into 3 steps. 
% \begin{enumerate}
%     \item The first step is to take a lower bound of $r(x, y, k, \lambda, \Delta - 1)$ as $x_1$.
%     \item In the second step, we determine $x_0$ such that the minimal value of $H(x, \lambda_0)$ for $x \in [x_1, x_0]$ is positive.
%     \item At last, the existence of $k$ is guaranteed by Lemma \ref{lem:bounded-2-spin:H}. 
% \end{enumerate}
Now our strategy for finding $x_0, x_1, k$ is clear. We first choose $x_1$ to be a lower bound of $\re(\phi(\lambda e^{(\Delta - 1) w}))$. Then we determine $x_0$ so that the minimal value of $H(x, \lambda_0)$ on $[x_1, x_0]$ is positive. We finally obtain the following conclusion.
\begin{lemma}\label{lem:bounded-2-spin:x0-x1}
    For $\beta > 0, \gamma \ge 0$, suppose that $\beta, \gamma, \lambda_0$ satisfies one of the conditions in Theorem \ref{thm:bounded-2-spin}. Then there exists $x_0, x_1, k$ such that $\phi(\lambda e^{(\Delta - 1) w}) \in \open{U(x_0, x_1, k)}$ for all $w \in U(x_0, x_1, k)$. Furthermore, we can construct $U$ based on $U(x_0, x_1, k)$ to satisfy all conditions in Lemma \ref{lem:bounded-2-spin:U}. 
\end{lemma}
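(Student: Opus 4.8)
The plan is to reduce the whole statement to a one‑dimensional problem on the real axis and then invoke Lemmas~\ref{lem:bounded-2-spin:H} and~\ref{lem:bounded-2-spin:U}. Write $d=\Delta-1$ and, for real $x$ and $\lambda$ on the segment $[0,\lambda_0]$, put $\psi_\lambda(x)=\phi(\lambda e^{dx})$. Since $\phi'(z)=\tfrac{\beta\gamma-1}{(\gamma z+1)(z+\beta)}$, the map $x\mapsto\psi_\lambda(x)$ is monotone on any range where $\lambda e^{dx}$ stays off the branch cut of $\phi$, with one‑sided limits $\phi(0)=-\ln\beta$ and $\phi(\infty)=\ln\gamma$. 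By the three observations preceding Lemma~\ref{lem:bounded-2-spin:H}, it is enough to produce reals $x_1\le 0\le x_0$ with $x_1<x_0$ such that, for all $x\in[x_1,x_0]$ and all $\lambda\in[0,\lambda_0]$: (P1) $\psi_\lambda(x)\in(x_1,x_0)$, $\lambda e^{dx}\notin(-\infty,\max\{-\beta,-1/\gamma\}]$ and $\lambda e^{\Delta x}\neq-1$; and (P2) $H(x,\lambda)>0$. Indeed, (P2) together with Lemma~\ref{lem:bounded-2-spin:H} yields $k>0$ with $G(x,k,\lambda)>0$, and then the injectivity of $w\mapsto\psi_\lambda(w)$ on $U(x_0,x_1,k)$ (observation~2, via $k<\tfrac{\pi}{2\Delta(x_0-x_1)}$), Lemma~\ref{lem:boundary}, and $0\in U(x_0,x_1,k)$ give $\phi(\lambda e^{dw})\in\open{U(x_0,x_1,k)}$ for all $w\in U(x_0,x_1,k)$, $0\le d\le\Delta-1$ and $\lambda\in[0,\lambda_0]$; a final soft step enlarges $U(x_0,x_1,k)$ to a convex compact region satisfying all hypotheses of Lemma~\ref{lem:bounded-2-spin:U}.

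For $x_1$ I would just take an explicit lower bound for $\re\psi_\lambda$. In the positive‑activity cases (Conditions 1 and 3) the range of $\psi_\lambda$ is contained in the interval between $-\ln\beta$ and $\ln\gamma$, so $x_1=\min\{0,-\ln\beta,\ln\gamma\}$ works (the $0$ is there only to keep $x_1\le0$), after an arbitrarily small downward shift to make the containment strict. In the negative‑activity cases (Conditions 2 and 4) the operative requirement is that $\lambda e^{dx}$ never cross the branch cut $[\min\{-\beta,-1/\gamma\},\max\{-\beta,-1/\gamma\}]$ and never equal $-1$; this is precisely what forces the bound $-\max\{1,\gamma\}^{\Delta}$ in Condition 2 and the $\check x_d$‑bound in Condition 4, and it simultaneously pins down an admissible $x_1$.

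The heart of the argument is the choice of $x_0$. Expanding $G$ to first order in $k$ as indicated before Lemma~\ref{lem:bounded-2-spin:H} gives the closed form
\begin{align*}
H(x,\lambda)&=\bigl(x_0-\psi_\lambda(x)\bigr)-(x_0-x)\,\bigl|\psi_\lambda'(x)\bigr|,\\
\psi_\lambda'(x)&=\frac{(\beta\gamma-1)\,d\,\lambda e^{dx}}{(\gamma\lambda e^{dx}+1)(\lambda e^{dx}+\beta)},
\end{align*}
so (P2) is the scalar inequality $x_0\bigl(1-|\psi_\lambda'(x)|\bigr)>\psi_\lambda(x)-x\,|\psi_\lambda'(x)|$. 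Substituting $t=\lambda e^{dx}$ turns (P1) and (P2) into statements about the single variable $t$ over an explicit interval (and, as the text notes, it suffices to analyse the slice $\lambda=\lambda_0$, with a similar check on the slice $x=x_1$): one wants $x_0$ just above the supremum over that interval of $\tfrac{\psi_\lambda(x)-x|\psi_\lambda'(x)|}{1-|\psi_\lambda'(x)|}$, and the exact content of the four case hypotheses of Theorem~\ref{thm:bounded-2-spin} is that such an $x_0$ exists and also satisfies the forward‑invariance half of (P1). That supremum is attained either at an endpoint of the $t$‑interval — giving the explicit thresholds $\bigl(\tfrac{\beta}{\gamma}\bigr)^{\sqrt{\beta\gamma}/(\sqrt{\beta\gamma}-1)}\max\{1,\gamma\}^{2\sqrt{\beta\gamma}/(\sqrt{\beta\gamma}-1)-\Delta}$, $\lambda_c(\Delta-1)$ and the $\check x_d$‑bound — or at an interior critical point, where the quantities $\hat x_d$ and $\lambda_c(\cdot)$ enter (the branch on $\Delta>d_c+1$ in Condition 3 reflecting the non‑monotonicity of $\lambda_c(\cdot)$ when $\beta>1$, and the worst‑case size of $|\psi_\lambda'|$ being governed by the balance point around $t=\sqrt{\beta/\gamma}$). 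Carrying out this case analysis in all four regimes, while keeping track of the branch cut of $\phi$ in the negative‑activity ones, is the real computation and the main obstacle; everything around it is routine.

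Finally, I would build $U$ from $U(x_0,x_1,k)$. Since $\psi_0\equiv-\ln\beta$, the $\lambda=0$ instance of (P1) already gives $-\ln\beta\in(x_1,x_0)\subseteq U(x_0,x_1,k)$; if $\gamma>0$ one additionally needs $\ln\gamma\in U$, which is automatic when $\gamma\le1$ (then $\ln\gamma\le0\le x_0$) and otherwise is built into the choice of $x_0\ge\ln\gamma$ — the case hypotheses leave enough room for this while keeping $[x_1,x_0]$ below $\phi(-1)=\ln\tfrac{1-\gamma}{\beta-1}$ when the latter is a finite real — after which I keep the triangular shape so that $\ln\gamma\in U$. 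The remaining conditions of Lemma~\ref{lem:bounded-2-spin:U} are then immediate: for $w\in U$ one has $|\im(\Delta w)|\le\Delta k(x_0-x_1)<\pi/2$, so $\lambda e^{\Delta w}$ and $\lambda e^{dw}$ ($d\le\Delta-1$) are non‑real unless $w$ is real, in which case (P1) controls them — this yields conditions 2 and 3 — while $-1\notin\phi^{-1}(U)$ holds because either $-1$ lies on the branch cut of $\phi$ (hence is not in the range of $\phi^{-1}$) or $\phi(-1)$ is a real number avoided by $[x_1,x_0]$. Combined with the reduction of the first paragraph and Lemma~\ref{lem:bounded-2-spin:U}, this proves the statement.
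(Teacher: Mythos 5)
Your framing of the reduction is faithful to the paper's setup: the isosceles triangle $U(x_0,x_1,k)$, the restriction to the slanted boundary via injectivity and Lemma \ref{lem:boundary}, the first-order-in-$k$ criterion together with Lemma \ref{lem:bounded-2-spin:H}, the closed form $H(x,\lambda)=(x_0-\psi_\lambda(x))-(x_0-x)\,|\psi_\lambda'(x)|$, and the final appeal to Lemma \ref{lem:bounded-2-spin:U} are all exactly the scaffolding the paper erects before the lemma. But the statement being proved \emph{is} the deferred part: exhibiting concrete $x_0,x_1$ and verifying $H(x,\lambda)>0$ (plus the range and branch-cut conditions) under precisely the four hypotheses of Theorem \ref{thm:bounded-2-spin}. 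You explicitly label this "the real computation and the main obstacle" and do not carry it out, whereas the paper's Appendix \ref{app:2-spin:x0-x1} does it case by case: $x_0=\max\{0,\ln\gamma\}$ in Conditions 1--2, $x_0=\max\{\varphi(\bar x),0\}$ with $\bar x$ the fixed point of $\lambda_0\bigl(\tfrac{\gamma x+1}{x+\beta}\bigr)^d$ in Condition 3, $x_0=\max\{\ln\check x_d,0\}$ in Condition 4, with the positivity of $H$ proved through the auxiliary monotone/convex functions $\varphi$, $\Phi$, $\Psi$ and the comparisons with $\hat x_{d_c}$, $\lambda_c(d_c)$, $\check x_d$ (Lemmas \ref{lem:ferro-positive:H}, \ref{lem:ferro-negative:H}, \ref{lem:anti-ferro-positive:H}, \ref{lem:anti-ferro-negative:H} and Corollaries \ref{coro:ferro-positive}--\ref{coro:anti-ferro-negative}). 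A restatement of the strategy plus "the hypotheses are exactly what makes it work" is not a proof of this lemma.

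Two of the concrete shortcuts you do offer would moreover fail as stated. First, taking $x_0$ "just above the supremum of $\tfrac{\psi_\lambda(x)-x|\psi_\lambda'(x)|}{1-|\psi_\lambda'(x)|}$" presupposes $|\psi_\lambda'|<1$ on the relevant range; in Condition 3 the hypothesis $\sqrt{\beta\gamma}\le\tfrac{\Delta-2}{\Delta}$ forces $\max_{t>0}\tfrac{d(1-\beta\gamma)t}{(\gamma t+1)(t+\beta)}=\tfrac{d(1-\sqrt{\beta\gamma})}{1+\sqrt{\beta\gamma}}\ge 1$, so on part of the $t$-range the inequality $x_0(1-|\psi'|)>\psi-x|\psi'|$ reverses and the supremum characterization is simply not the right criterion; this is why the paper instead locates the interior minimizer $\tilde x_\lambda$ of $H$ and proves $\lambda e^{d\tilde x_\lambda}<\bar x$ (resp.\ $<\hat x_{d_c}$) via monotonicity of $\Phi$ and $\varphi$, and uses a separate convexity argument for the negative-$\lambda$ cases. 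Second, your condition (P1) only constrains the real slice $y=0$, so it does not control the image of the left vertical edge $\{\re w=x_1\}$ of the triangle (where $w$ is genuinely complex); the paper needs Lemma \ref{lem:bounded-2-spin:tilde-G} (monotonicity of the margin in $y$) together with the complex-$w$ range bounds of Lemmas \ref{lem:ferro-positive:range-r}, \ref{lem:ferro-negative:range-r}, \ref{lem:anti-ferro-positive:range-r}, \ref{lem:anti-ferro-negative:range-r} to dispose of that edge, and in Conditions 3--4 it must also truncate the triangle (e.g.\ at $\re z=-\ln\beta$) to exclude $\ln\tfrac{1-\gamma}{\beta-1}$, a point your final paragraph glosses over.
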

The values of $x_0, x_1$ in each case can be found in Appendix \ref{app:2-spin:x0-x1}.

\subsection{A Supplementary Result}
For $\frac{\Delta - 2}{\Delta} < \sqrt{\beta\gamma} < \frac{\Delta}{\Delta -2}$, with rectangles instead of isosceles triangles, we can prove following result. 

\begin{thm}\label{thm:bounded-2-spin:supp}
    Suppose that $\frac{\Delta - 2}{\Delta} < \sqrt{\beta\gamma} < \frac{\Delta}{\Delta -2}$, $\beta\gamma \neq 1$ and $\lambda_0 > 0$. 
    Then there exists $\delta > 0$, which depends on $\Delta, \beta, \gamma, \lambda_0$, such that $Z_{\tau_S}^{\beta, \gamma}(G, \lambda) \neq 0$ for all $\lambda \in \gU([0, \lambda_0], \delta)~ (\lambda \neq 0)$ with graphs G of maximum degree at most $\Delta$.
\end{thm}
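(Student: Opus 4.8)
The plan is to follow the same pipeline used for Theorem~\ref{thm:bounded-2-spin}, but to replace the isosceles triangle family $U(x_0,x_1,k)$ with a family of axis-aligned rectangles. Concretely, I would set $U = U(x_0,x_1,k) \triangleq \{z : \re(z) \in [x_1,x_0], |\im(z)| \le k\}$ for suitable $x_1 \le 0 \le x_0$ and $0 < k < \frac{\pi}{2\Delta}$ (the last bound ensures $\phi(\lambda e^{dw})$ stays in the strip $|\im w| < \pi$ and is injective in $w$ for $d \le \Delta$, so Lemma~\ref{lem:boundary} applies). The reason a rectangle is the right shape here is the hypothesis $\frac{\Delta-2}{\Delta} < \sqrt{\beta\gamma} < \frac{\Delta}{\Delta-2}$: this is precisely the regime where $\sqrt{\beta\gamma}$ is close to $1$, so the map $\phi^{-1}$ applied to the strip is nearly a ``parallel-strip-to-strip'' map and the width (imaginary extent) of $\phi(\lambda e^{dw})$ over $w \in U$ stays controlled, whereas a triangle would be the wrong shape because there is no single point of $U$ toward which the image contracts in both coordinates.

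The key steps, in order: (1) Reduce, exactly as in the proof of Lemma~\ref{lem:bounded-2-spin:U} and Lemma~\ref{lem:bounded-2-spin:x0-x1}, to verifying the containment $\phi(\lambda e^{dw}) \in \open{U}$ for all $d \in \{0,\dots,\Delta-1\}$, $w \in U$, $\lambda \in [0,\lambda_0]$, plus the $-1 \notin \phi^{-1}(U)$ and non-degeneracy conditions; convexity of the rectangle again gives $f^\phi_{\lambda,d}(w_1,\dots,w_d) = \phi(\lambda e^{d\bar w})$ with $\bar w \in U$. (2) Observe that since $0 \in U$ and $U$ is convex, $\frac{d}{\Delta-1} w \in U$, so it suffices to treat $d = \Delta-1$. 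Now there are two separate inequalities to check: a \emph{real-part} inclusion $\re(\phi(\lambda e^{(\Delta-1)w})) \in (x_1,x_0)$ and an \emph{imaginary-part} bound $|\im(\phi(\lambda e^{(\Delta-1)w}))| < k$. (3) For the imaginary part: write $w = x + \i t$ with $|t| \le k$, so $\lambda e^{(\Delta-1)w}$ has argument $(\Delta-1)t \in [-(\Delta-1)k, (\Delta-1)k]$ and modulus $\lambda e^{(\Delta-1)x}$ ranging over a compact positive interval; then $\im \phi(u) = \arg(\gamma u + 1) - \arg(u + \beta)$, and one bounds this by a first-order (in $k$) estimate analogous to the $H(x,\lambda)$ device, showing that for $k$ small enough the image imaginary extent is strictly smaller than $k$ — this is where the condition $\sqrt{\beta\gamma}$ near $1$ enters, guaranteeing the ``contraction rate'' in the imaginary direction is $<1$. (4) For the real part: choose $x_1$ to be a strict lower bound and $x_0$ a strict upper bound on $\re\phi(\lambda e^{(\Delta-1)w})$ over the relevant range — here one uses that $\re\phi(u) = \tfrac12\ln\frac{|\gamma u+1|^2}{|u+\beta|^2}$ is bounded because $u$ avoids neighborhoods of $-1/\gamma$ and $-\beta$ — and then slightly enlarge/adjust $U$ (as in the final sentence of Lemma~\ref{lem:bounded-2-spin:x0-x1}) so that the images land in the open interior and the points $-\ln\beta$, $\ln\gamma$ are included. (5) Invoke Lemma~\ref{lem:bounded-2-spin:U} to pass from real $\lambda \in [0,\lambda_0]$ to a complex neighborhood $\gU([0,\lambda_0],\delta)$, and then Lemma~\ref{lem:bounded-2-spin:contraction-zero-free} to conclude $Z^{\beta,\gamma}_{\tau_S}(G,\lambda) \ne 0$.

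The main obstacle I expect is step (3): quantitatively controlling the imaginary extent of the image rectangle and proving it is a strict contraction. In the triangle case the analogous difficulty was packaged into showing $H(x,\lambda) > 0$; here the geometry is different because the rectangle does not taper, so one cannot ``absorb'' a growing imaginary part into a shrinking one near a vertex — instead one genuinely needs $\big|\tfrac{\partial}{\partial t}\im\phi(\lambda e^{(\Delta-1)(x+\i t)})\big| < 1$ (uniformly, at $t$ near $0$ and then by continuity for small $k$), and verifying this uniform Lipschitz-type bound is exactly what forces the hypothesis on $\sqrt{\beta\gamma}$. A secondary subtlety is handling the two boundary regimes $\beta\gamma < 1$ versus $\beta\gamma > 1$ separately, since $\phi$ maps $\hat\sC$ minus the segment $[-1/\gamma,-\beta]$ in one case and $[-\beta,-1/\gamma]$ in the other, but the estimates are symmetric under the $\beta \leftrightarrow \gamma$, $\lambda \leftrightarrow 1/\lambda$ duality noted in the Remark, so one case reduces to the other. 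Everything else — the reductions via convexity, injectivity, Lemma~\ref{lem:boundary}, and the passage to a complex neighborhood — is routine given the machinery already established for Theorem~\ref{thm:bounded-2-spin}.
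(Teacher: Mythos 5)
Your proposal matches the paper's proof in essence: the paper also uses axis-aligned rectangles, gets the real-part containment for free from the uniform bounds $\re\,\phi(\lambda e^{dw})$ strictly between $-\ln\beta$ and $\ln\gamma$ (Lemmas \ref{lem:ferro-positive:range-r} and \ref{lem:anti-ferro-positive:range-r}), and handles the imaginary-part contraction exactly by your first-order-in-$k$ criterion, computing $\hat H(x,\lambda)=1-\frac{(\Delta-1)|\beta\gamma-1|\lambda e^{(\Delta-1)x}}{(\beta+\lambda e^{(\Delta-1)x})(1+\gamma\lambda e^{(\Delta-1)x})}\ge 1-\frac{(\Delta-1)|\sqrt{\beta\gamma}-1|}{1+\sqrt{\beta\gamma}}>0$, which is precisely where the hypothesis $\frac{\Delta-2}{\Delta}<\sqrt{\beta\gamma}<\frac{\Delta}{\Delta-2}$ enters, before invoking Lemmas \ref{lem:bounded-2-spin:H}, \ref{lem:bounded-2-spin:U} and \ref{lem:bounded-2-spin:contraction-zero-free} as you describe. (The only cosmetic difference is that the paper treats $\beta\gamma>1$ and $\beta\gamma<1$ by the same computation via $|\beta\gamma-1|$ rather than by the duality you mention, which in any case would not literally preserve the $\lambda$-neighborhood.)
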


% Here we choose rectangles for finding $U$ to satisfy the conditions in Lemma \ref{lem:bounded-2-spin:U}. Define
% \begin{align}\label{def:U}
%     R(x_0, x_1, y_0) = \{z: \re(z) \in [x_1, x_0], |\im(z)| \le y_0 \},
% \end{align}
% where $x_1 \le 0 \le x_0, x_1 < x_0 ~~ 0 < y_0 < \frac{\pi}{2 \Delta}$.
% Following similar strategy for proving Lemma \ref{lem:bounded-2-spin:x0-x1}, we can prove following results.

% \begin{lemma}\label{lem:bounded-2-spin:rectangle}
%     Suppose that $\frac{d - 1}{d + 1} < \sqrt{\beta\gamma} < \frac{d + 1}{d - 1}$, $\beta\gamma \neq 1$ and $\lambda_0 > 0$. Then there exists $x_0, x_1, y_0$ such that $\phi(\lambda e^{(\Delta - 1) w}) \in \open{R(x_0, x_1, y_0)}$ for all $w \in R(x_0, x_1, y_0)$. Furthermore, $R(x_0, x_1, y_0)$ satisfies all conditions in Lemma \ref{lem:bounded-2-spin:U}. 
% \end{lemma}

\section{2-Spin Systems with unbounded degree}
In Theorem \ref{thm:bounded-2-spin}, for $\beta\gamma > 1$ and $\gamma \le 1$, the zero-freeness holds for $\lambda_0 \in \left(-1, \left(\frac{\beta}{\gamma}\right)^{\frac{\sqrt{\beta\gamma}}{\sqrt{\beta\gamma} - 1}}\right)$, which is independent of $d$. A natural question arises: whether there exist $\delta$, which only depends on $\beta, \gamma, \lambda_0$, such that $Z^{\beta, \gamma}_{\tau_S}(G, \lambda) \neq 0$ for arbitrary graph $G$. 

\begin{thm}\label{thm:unbounded-2-spin}
    For $\beta > 0, \gamma \ge 0, \lambda_0 \neq 0$, suppose that $\beta, \gamma, \lambda_0$ satisfies one of the following condition:
    \begin{enumerate}
        \item $\beta\gamma > 1$, $\gamma \le 1$ and $-1 < \lambda_0 < \left(\frac{\beta}{\gamma}\right)^{\frac{\sqrt{\beta\gamma}}{\sqrt{\beta\gamma} - 1}}$, 
        \item $\beta\gamma < 1$, $\beta > 1$ and $\max\left\{-1, -\frac{\beta - 1}{1 - \gamma}\right\} < \lambda_0 < \lambda_c(d_c)$,
    \end{enumerate}
    Then there exists $\delta > 0$, which depends on $\beta, \gamma, \lambda_0$, such that $Z_{\tau_S}^{\beta, \gamma}(G, \lambda) \neq 0$ for all $\lambda \in \gU([0, \lambda_0], \delta)~ (\lambda \neq 0)$ with arbitrary graph $G$ and feasible configuration $\tau_S$.
\end{thm}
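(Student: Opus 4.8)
# Proof Proposal for Theorem \ref{thm:unbounded-2-spin}

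The plan is to adapt the machinery developed for the bounded-degree case (Lemmas \ref{lem:bounded-2-spin:contraction-zero-free}, \ref{lem:bounded-2-spin:U}) to the setting where $\Delta$ is unbounded, which forces us to find a convex contraction region $U$ whose defining parameters do not depend on $\Delta$. The essential observation that makes this possible is already hinted at in the transition to Theorem \ref{thm:unbounded-2-spin}: in the $\phi$-coordinates, the $d$-ary recursion reduces to $w \mapsto \phi(\lambda e^{dw})$, and if we can locate $U$ so that $\phi(\lambda e^{tw}) \in \open{U}$ for \emph{all} real $t \geq 0$ (not merely $t \le \Delta - 1$) and $\lambda \in [0,\lambda_0]$, then the same $U$ certifies complex contraction at every degree. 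The key structural fact is that $\phi(z) = \ln\bigl(\tfrac{\gamma z + 1}{z + \beta}\bigr)$ is a bounded holomorphic map on $\sC \setminus [-1/\gamma, -\beta]$ (resp. $\sC \setminus (-\infty,-\beta]$) whose image lies in the horizontal strip $\{|\im w| < \pi\}$; more importantly its image under the relevant half-plane is a \emph{bounded} region, so the iterates $e^{tw}$ for large $t$ are pushed deep into a region where $\phi$ is strongly contracting toward its limiting value $\ln(\gamma)$ (or $-\infty$ when $\gamma = 0$).

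First I would set up the $\phi$-conjugated recursion exactly as in Section~4, and state a degree-free analogue of Lemma \ref{lem:bounded-2-spin:U}: it suffices to exhibit a convex compact $U \subseteq \{|\im w| < \pi\}$ with $-\ln\beta \in U$, $\ln\gamma \in U$ (if $\gamma>0$), $-1 \notin \phi^{-1}(U)$, and $\phi(\lambda e^{dw}) \in \open{U}$ for all integers $d \geq 0$, all $w \in U$, and all $\lambda \in [0,\lambda_0]$; then uniform continuity of $\phi$ upgrades real $\lambda$ to a complex neighborhood $\gU([0,\lambda_0],\delta)$, and Lemma \ref{lem:bounded-2-spin:contraction-zero-free} (which is stated for each fixed $\Delta$ but whose hypothesis is now met simultaneously for all $\Delta$) gives zero-freeness on arbitrary graphs. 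Second, for Condition~1 ($\beta\gamma > 1$, $\gamma \le 1$), I would choose $U$ to be a sufficiently thin isosceles triangle $U(x_0, x_1, k)$ — thin enough that $\phi^{-1}(U)$ avoids $-1$ and that $\phi$ restricted near it remains injective — with $x_0$ taken to be a value that dominates $\re(\phi(\lambda e^{dw}))$ for every $d$ (this is where $\lambda_0 < (\beta/\gamma)^{\sqrt{\beta\gamma}/(\sqrt{\beta\gamma}-1)}$ enters: it guarantees the real part of the image stays bounded above by something we can absorb) and $x_1$ a uniform lower bound; the constraint $\lambda_0 > -1$ is exactly what keeps $\phi^{-1}(U)$ off of $-1$ for the negative part of the interval. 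Third, for Condition~2 ($\beta\gamma < 1$, $\beta > 1$), I would instead use the $\check{x}_d$ / $\hat{x}_d$-type analysis (Appendix \ref{app:hat-x}, \ref{app:check-x}) and take $U$ centered so that the contraction is governed by the $d_c$-th recursion, with $\lambda_0 < \lambda_c(d_c)$ being the degree-worst-case uniqueness threshold and $\lambda_0 > \max\{-1, -(\beta-1)/(1-\gamma)\}$ ensuring $\phi^{-1}(U)$ stays in the domain of $\phi$ and away from $-1$.

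The technical heart — and the main obstacle — is verifying the containment $\phi(\lambda e^{dw}) \in \open U$ \emph{uniformly in $d$}. In the bounded case one only checks $d = \Delta - 1$ (Observation 1: convexity plus $0 \in U$ lets smaller $d$ be rescaled away), but here $d$ ranges over all of $\sN$, so one has to understand the orbit $\{\phi(\lambda e^{dw}) : d \ge 0\}$ as a whole. I expect the right way to handle this is a dichotomy on $|w|$ or on $d$: for bounded $d$ (say $d \le d_c$ or $d$ below some explicit threshold) the finite check carries over directly from the bounded-degree arguments; for large $d$, $|\lambda e^{dw}|$ is either very large or very small depending on the sign of $\re(w)$, and in both regimes $\phi(\lambda e^{dw})$ is $O(e^{-c d})$-close to one of the two fixed endpoints $\ln\gamma$ or $-\ln\beta$ of $U$'s ``spine'', so it lands in $\open U$ provided $U$ contains small neighborhoods of those endpoints — which we arrange by construction. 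Making this precise requires (i) bounding $\re(w)$ on $U$ away from $0$ except near the apex, which an isosceles triangle with apex at $x_0$ and base near $x_1 \le 0$ provides, and (ii) a quantitative estimate $|\phi(z) - \ln\gamma| \lesssim 1/|z|$ as $|z| \to \infty$ and $|\phi(z) + \ln\beta| \lesssim |z|$ as $|z| \to 0$, both elementary from the explicit form of $\phi$. The remaining conditions ($-1 \notin \phi^{-1}(U)$, $\lambda e^{dw} \notin (-\infty, \max\{-\beta,-1/\gamma\}]$) follow from thinness of $U$ and the sign constraints on $\lambda_0$ as in Lemma \ref{lem:bounded-2-spin:x0-x1}; I would defer the explicit choice of $(x_0,x_1,k)$ to an appendix mirroring Appendix \ref{app:2-spin:x0-x1}.
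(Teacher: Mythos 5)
There is a genuine gap at what you yourself call the technical heart: the uniform-in-$d$ containment. Your dichotomy (``for large $d$, $|\lambda e^{dw}|$ is either very large or very small depending on the sign of $\re(w)$'') silently excludes the points of $U$ with $\re(w)=0$ or with $|\im(w)|$ large compared to $|\re(w)|$, and it is exactly there that the argument fails. If the apex $x_0$ of your triangle is positive, then $U$ contains points $w$ with $\re(w)=0$ and $\im(w)$ up to $k x_0\neq 0$; for such $w$ the orbit $\lambda e^{dw}$ has constant modulus $\lambda$ while its argument $d\,\im(w)$ winds freely. Under Condition 1 one may have $\lambda\in[1/\gamma,\beta]$ (the allowed $\lambda_0$ exceeds $\beta$), so for suitable $d$ the point $\lambda e^{dw}$ approaches the branch cut $[-\beta,-1/\gamma]$ and $\phi(\lambda e^{dw})$ acquires imaginary part near $\pi$ --- far outside any thin triangle. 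So the sufficient condition you propose to verify is actually false for every triangle with $x_0>0$, and your quantitative estimates ($|\phi(z)-\ln\gamma|\lesssim 1/|z|$, $|\phi(z)+\ln\beta|\lesssim|z|$) never engage, because near $\re(w)=0$ the modulus drifts arbitrarily slowly relative to the winding of the argument. The missing idea is that the apex must sit exactly at the origin: the paper takes $U=U(0,x_1,k)$, so that all sums $\sum_i w_i$ (equivalently all $dU$) lie in the fixed cone $\tilde U(k)=\{z:\re(z)\le 0,\ |\im(z)|\le -k\re(z)\}$ (Lemma \ref{lem:unbounded-2-spin:U} and the display after it), and the cone constraint $|\im|\le k|\re|$ ties the imaginary part to the real part: on the bounded slab $\re\in[x_2,0]$ the imaginary part is uniformly small and the bounded-degree $G/H$ analysis applies verbatim, while for $\re<x_2$ the modulus of $\lambda e^{w}$ is tiny no matter how the argument winds, and one shows $h<k|r|$ with the image pinned near $-\ln\beta$ (Lemma \ref{lem:unbounded-2-spin:x1-k}). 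In particular the split is on $\re$ of the aggregated point in the cone, not on $d$ with $w\in U$ fixed, and the endpoint $\ln\gamma$ plays no role because positive real parts never occur.

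This is also why the extra hypotheses of Theorem \ref{thm:unbounded-2-spin} are what they are, which your proposal does not explain: $\gamma\le 1$ in Condition 1 and $\beta>1$, $\lambda_0<\lambda_c(d_c)$ in Condition 2 are precisely what make the bounded-degree apex value ($x_0=\max\{0,\ln\gamma\}$, resp. $x_0=\max\{\varphi(\bar x),0\}$) collapse to $0$, so that the degree-independent cone construction is available; the paper notes explicitly that $x_0=0$ in each case. Your attribution of the $\lambda_0$-bound to ``the real part of the image stays bounded above'' and your plan for Condition 2 (``centered so that the contraction is governed by the $d_c$-th recursion'') do not supply this mechanism. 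To repair the proposal you would need to fix $x_0=0$, replace the per-degree check by the cone containment $\phi(\lambda e^{w})\in\open{U(0,x_1,k)}$ for all $w\in\tilde U(k)$, and then carry out the two-regime estimate described above.
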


% Based on Theorem \ref{thm:unbounded-2-spin}, it is easily to check that $\lambda^{-m} Z_{\tau_S}^{\beta, \gamma}(G, \lambda) \neq 0$ for $m = |v \in S: \tau(v) = 1|$ and all $\lambda \in \gU([0, \lambda_0], \delta)$. 
Based on Barvinok’s paradigm \cite{barvinok2016combinatorics} and Theorem \ref{thm:unbounded-2-spin}, there exists a quasi-polynomial algorithm for computing $Z_{\tau_S}^{\beta, \gamma}(G, \lambda_0)$ with arbitrary graphs. 

Complex contraction also can be employed to prove Theorem \ref{thm:unbounded-2-spin}. 
\begin{definition}\label{def:unbounded-2-spin:complex-contraction}
    We say $\lambda \in \sC$ satisfies complex-contraction property for 2-spin systems with unbounded degree and parameters $\beta, \gamma ~(|\beta| + |\gamma| > 0)$ if there is an close region $F_{\lambda} \subseteq \hat{\sC}$ such that 
    \begin{enumerate}
        \item $-1 \notin F_{\lambda}$, $0 \in F_{\lambda}$ if $\beta \neq 0$, and $\infty \in F_{\lambda}$ if $\gamma \neq 0$,
        \item $f_{\lambda, d}(z_1, \dots, z_{d}) \in F_{\lambda}$ for any $d \ge 0$ and $z_i \in F_{\lambda}, i = 1, \dots, d$.
    \end{enumerate}
\end{definition}

Similar to Lemma \ref{lem:bounded-2-spin:contraction-zero-free}, complex contraction also implies zero-freeness of 2-spin systems with unbounded degree.
\begin{lemma}\label{lem:unbounded-2-spin:contraction-zero-free}
    If $\lambda \neq 0$ satisfies complex-contraction property for 2-spin systems with unbounded degree and parameters $\beta, \gamma$, then $Z^{\beta, \gamma}_{\tau_{S}}(G, \lambda) \neq 0$ for any graphs $G$ with feasible configuration $\tau_S$. 
\end{lemma}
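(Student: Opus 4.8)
The plan is to follow the same template used for the bounded-degree case in Lemma~\ref{lem:bounded-2-spin:contraction-zero-free}, but now verifying the stronger Definition~\ref{def:unbounded-2-spin:complex-contraction}, in which item~2 must hold for \emph{all} $d \ge 0$ simultaneously. First I would set up the induction on the structure of the computation tree: given a graph $G$, a feasible configuration $\tau_S$, and a free vertex $v$, I apply the telescoping/cycle-breaking expansion and Lemma~\ref{lem:bounded-2-spin:computation-tree} to write $R^{\beta,\gamma,\lambda}_{\tau_S}(G, v) = f_{\lambda, d}(R_1, \dots, R_d)$, where each $R_i$ is a marginal ratio on a strictly smaller instance and $d$ is the degree of $v$. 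By the induction hypothesis each $R_i \in F_\lambda$ (the base cases being the pinned vertices: $0 \in F_\lambda$ when $\beta \neq 0$ and $\infty \in F_\lambda$ when $\gamma \neq 0$, which are exactly the leaves, and these lie in $F_\lambda$ by item~1). Then item~2 of Definition~\ref{def:unbounded-2-spin:complex-contraction} gives $R^{\beta,\gamma,\lambda}_{\tau_S}(G,v) \in F_\lambda$, and in particular $R \neq -1$ since $-1 \notin F_\lambda$; this is what makes the ratio well-defined, i.e.\ shows that the two partition functions $Z^{\beta,\gamma}_{\tau_S,\sigma(v)=1}$ and $Z^{\beta,\gamma}_{\tau_S,\sigma(v)=0}$ cannot both vanish and that their ratio never equals $-1$.

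Next I would convert this statement about ratios into a statement about the partition function itself. The standard argument: for any free vertex $v$, $Z^{\beta,\gamma}_{\tau_S}(G,\lambda) = Z^{\beta,\gamma}_{\tau_S, \sigma(v)=0}(G,\lambda)\cdot(1 + b_1 \cdot \text{(contribution)})$ — more precisely one peels off vertices one at a time, writing $Z_{\tau_S}(G,\lambda) = Z_{\tau_{S\cup\{v\}}, \sigma(v)=0}(G,\lambda)\,(1 + \lambda\,R')$ for an appropriate ratio $R'$ expressible through the $f_{\lambda,d}$ recursion, and the factor $1 + \lambda R'$ is nonzero exactly because the relevant combination stays away from $-1$ inside $F_\lambda$. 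Iterating this peeling down to the empty graph (whose partition function is $1 \neq 0$), and using that at every step the multiplicative factor is nonzero, yields $Z^{\beta,\gamma}_{\tau_S}(G,\lambda) \neq 0$. One must also handle the cycle-breaking step: the telescoping expansion replaces $G$ by a tree-structured instance with possibly more pinned vertices, and one checks that feasibility is preserved (as already noted in the excerpt after Lemma~\ref{lem:bounded-2-spin:computation-tree}) and that the partition function of the original equals that of the expanded instance up to nonzero factors.

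The main obstacle — and the part that genuinely differs from the bounded-degree case — is producing the region $F_\lambda$ itself, i.e.\ verifying that under Condition~1 or Condition~2 of Theorem~\ref{thm:unbounded-2-spin} there is a closed $F_\lambda \subseteq \hat\sC$ with $f_{\lambda,d}(F_\lambda^{\,d}) \subseteq F_\lambda$ for \emph{every} $d \ge 0$, not just $d \le \Delta$. This is where one passes to the logarithmic coordinate $\phi(z) = \ln\!\left(\frac{\gamma z + 1}{z+\beta}\right)$ and works with $f^\phi_{\lambda,d}(w_1,\dots,w_d) = \ln\!\left(\frac{\gamma\lambda\exp(\sum w_i)+1}{\lambda\exp(\sum w_i)+\beta}\right)$. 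The degree independence of the target bounds $-1 < \lambda_0 < (\beta/\gamma)^{\sqrt{\beta\gamma}/(\sqrt{\beta\gamma}-1)}$ (resp.\ the $\lambda_c(d_c)$ bound) is precisely what one exploits: one looks for a convex region $U$ in the strip $\{|\im w| < \pi\}$ that is invariant under $w \mapsto \phi(\lambda e^{d w})$ for all $d \ge 0$, where convexity again lets one replace $\sum_{i=1}^d w_i$ by $d\bar w$ with $\bar w \in U$. The key analytic fact is that as $d \to \infty$ the map $w \mapsto \phi(\lambda e^{dw})$ drives $\re(w)$ (or the relevant modulus of $z$) toward a fixed point governed by the uniqueness threshold; for $\lambda_0$ strictly below the stated degree-free critical value this fixed point lies in the interior of a suitably chosen $U$, and one needs a uniform (in $d$) contraction-type estimate near it. I would establish this by analyzing $\phi(\lambda e^{dw})$ separately for small $d$ (a finite check, as in the bounded case via the function $H$) and for large $d$ (an asymptotic argument showing the image collapses into a small neighbourhood of the fixed point, using $|\lambda_0| < 1$ or the $\lambda_c$ bound to control $|\lambda e^{dw}|$), then patching. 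Finally, as in Lemma~\ref{lem:bounded-2-spin:U}, uniform continuity of $\phi$ upgrades real-$\lambda$ contraction to contraction for all $\lambda \in \gU([0,\lambda_0],\delta)$ with $\delta$ depending only on $\beta,\gamma,\lambda_0$, which is exactly the degree-independence asserted in the theorem.
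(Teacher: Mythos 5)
Your first two paragraphs are essentially the paper's own proof: the paper establishes this lemma by the same induction on the number of free vertices $t = |V| - |S|$ as in Lemma~\ref{lem:bounded-2-spin:contraction-zero-free}, proving simultaneously that $R^{\beta,\gamma,\lambda}_{\tau_S}(G,v) \in F_\lambda$ for every free vertex and that $Z^{\beta,\gamma}_{\tau_S}(G,\lambda) \neq 0$, the only change being that item~2 of Definition~\ref{def:unbounded-2-spin:complex-contraction} now applies to every arity $d \ge 0$, so no separate treatment of maximum-degree vertices is needed. One small sharpening: the well-definedness of the ratio comes from the inductive claim $Z \neq 0$ applied to the instance with $v$ pinned (one fewer free vertex), not from $R \neq -1$ itself; since your induction carries $Z \neq 0$ along, this is covered. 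Your third paragraph, however, is out of scope for this lemma: the existence of a closed region $F_\lambda$ invariant under $f_{\lambda,d}$ for all $d \ge 0$ is precisely the hypothesis (that $\lambda$ satisfies the complex-contraction property), not something to be produced here. Constructing such a region under the conditions of Theorem~\ref{thm:unbounded-2-spin} is the content of Lemma~\ref{lem:unbounded-2-spin:U} and Lemma~\ref{lem:unbounded-2-spin:x1-k}, which the paper handles separately; folding that construction into this lemma's proof conflates the purely conditional statement being proved with the theorem it ultimately supports.
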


Furthermore, similar to Lemma \ref{lem:bounded-2-spin:U}, we can prove the following result.
\begin{lemma}\label{lem:unbounded-2-spin:U}
    Fix $\lambda_0 \neq 0, \lambda_0 \in \sR, \beta > 0, \gamma \ge 0$. 
    % define $g(\lambda, w) \triangleq \ln\left( \frac{\gamma \lambda e^w + 1}{\lambda e^w + \beta} \right)$.
    If there exists a convex and compact region $U \subseteq \{w \in \sC: |\im{w}| < \pi \}$ such that 
    \begin{enumerate}
        \item $-1 \notin \phi^{-1}(U)$, $0 \in U$, $-\ln(\beta) \in U$, and $\ln(\gamma) \in U$ if $\gamma > 0$, 
        \item there exists $\tilde{U}$ such that $d U \subseteq \tilde{U}$ for all $d \ge 0$,
        \item $\lambda e^{w} \notin (-\infty, \max\{-\beta, -1/\gamma, -1\}]$ for $w \in \tilde{U}$ and $\lambda \in [0, \lambda_0]$, 
        \item $\phi(\lambda e^{w}) \in \open{U}$ for $w \in \tilde{U}$ and $\lambda \in [0, \lambda_0]$,
    \end{enumerate}
    then there exists $\delta > 0$, which depends on $\lambda_0, \beta, \gamma$, such that all $\lambda \in \gU([0, \lambda_0], \delta)$ satisfies complex-contraction property for 2-spin systems with unbounded degree and parameters $\beta, \gamma$.
\end{lemma}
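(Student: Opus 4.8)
The plan is to mirror the proof of Lemma~\ref{lem:bounded-2-spin:U} as closely as possible, with the crucial change that all the finitely-many ``$d = 0, 1, \dots, \Delta-1$'' conditions get replaced by conditions ranging over a single auxiliary set $\tilde U$ that swallows every dilate $dU$. First I would fix $\lambda_0, \beta, \gamma$ and take $U$ to be a convex, compact region as in the hypotheses. Note that since $0 \in U$, for any $d \ge 0$ and $w_1, \dots, w_d \in U$ we have $\bar w = \frac1d\sum_{i=1}^d w_i \in U$ by convexity, hence $\sum_{i=1}^d w_i = d\bar w \in dU \subseteq \tilde U$; this is exactly why the single set $\tilde U$ is enough to handle the unbounded-degree recursion $f^\phi_{\lambda,d}(w_1,\dots,w_d) = \phi(\lambda e^{\,d\bar w})$ from the identity displayed just before Lemma~\ref{lem:bounded-2-spin:U}. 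The well-definedness caveat (either $Z_{\tau_S,\sigma(v)=1} \ne 0$ or $Z_{\tau_S,\sigma(v)=0} \ne 0$) is handled exactly as in Lemma~\ref{lem:bounded-2-spin:contraction-zero-free} via condition~2 of Definition~\ref{def:unbounded-2-spin:complex-contraction} applied with $d = \deg(v)$, using that $-1 \notin F_\lambda$.

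Next I would run the continuity/perturbation argument. For real $\lambda \in [0,\lambda_0]$, conditions~3 and~4 guarantee that $\phi(\lambda e^{w})$ is well-defined (the map $\phi$ is holomorphic away from the cut $[-1/\gamma,-\beta]$ resp.\ $(-\infty,-\beta]$, which condition~3 keeps us off of) and lands in $\open{U}$ for every $w \in \tilde U$; in particular $\phi(\lambda e^{\,d\bar w}) \in \open{U} \subseteq U$ for every $d \ge 0$, so $F_\lambda := \phi^{-1}(U)$ satisfies condition~2 of Definition~\ref{def:unbounded-2-spin:complex-contraction} for all real $\lambda \in [0,\lambda_0]$; condition~1 of that definition follows from condition~1 here ($-1 \notin \phi^{-1}(U)$, and $0 = \phi^{-1}(-\ln\beta) \in \phi^{-1}(U)$ when $\beta \ne 0$ since $\phi(0) = \ln(1/\beta) = -\ln\beta$, similarly $\infty = \phi^{-1}(\ln\gamma)$ when $\gamma \ne 0$). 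To upgrade to complex $\lambda$: the set $K = \{\lambda e^{w} : \lambda \in [0,\lambda_0],\ w \in \tilde U\}$ is compact and, by conditions~3--4, its image under $\phi$ is a compact subset of $\open{U}$, hence lies at positive distance $\rho$ from $\partial U$; moreover $K$ itself stays at positive distance from the branch cut. By uniform continuity of $(\lambda, w) \mapsto \phi(\lambda e^{w})$ on a compact neighborhood, there is $\delta > 0$ so that for all $\lambda \in \gU([0,\lambda_0],\delta)$ we still have $\phi(\lambda e^{w}) \in \open{U}$ for all $w \in \tilde U$ (and the cut is still avoided), which gives $F_\lambda = \phi^{-1}(U)$ the complex-contraction property. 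Since closedness of $U$ and $\hat\sC$-compactness are what Definition~\ref{def:unbounded-2-spin:complex-contraction} wants of $F_\lambda$, we are done; combined with Lemma~\ref{lem:unbounded-2-spin:contraction-zero-free} this yields Theorem~\ref{thm:unbounded-2-spin} once Lemma~\ref{lem:unbounded-2-spin:x0-x1}-type constructions of $U$ (the analogue of Lemma~\ref{lem:bounded-2-spin:x0-x1}) verify conditions~1--4 in the two regimes $\beta\gamma>1,\gamma\le1$ and $\beta\gamma<1,\beta>1$.

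The main obstacle is conceptual rather than computational: in the bounded-degree case the dilates $dU$ for $d \le \Delta-1$ all sit inside $U$ itself (by $0 \in U$ and convexity, $\frac{d}{\Delta-1}U \subseteq U$), so one only needed to check the contraction inclusion for the single largest dilation $d = \Delta-1$; here $dU$ grows without bound, so no single $U$ can contain all its own dilates and we genuinely need the separate ``absorbing'' set $\tilde U$. The real content is therefore to identify, in each of the two parameter regimes, a convex compact $U$ together with an explicit $\tilde U \supseteq \bigcup_{d \ge 0} dU$ (e.g.\ a cone or a half-strip $\tilde U = \{w : \re w \le 0\} \cap \{|\im w| < \pi\}$ after arranging $U$ to lie in the left half-plane, or a suitable wedge) for which $\phi(\lambda e^{w})$ still maps $\tilde U$ into $\open U$ for all $\lambda \in [0,\lambda_0]$ — this is where the degree-independent thresholds $\left(\frac{\beta}{\gamma}\right)^{\sqrt{\beta\gamma}/(\sqrt{\beta\gamma}-1)}$ and $\lambda_c(d_c)$, and the assumptions $\gamma \le 1$, $\beta > 1$, enter decisively, since they are exactly the conditions under which $\re\phi(\lambda e^{w})$ stays bounded and the imaginary part stays controlled uniformly over the unbounded set $\tilde U$. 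Everything else (the boundary-to-boundary reduction via Lemma~\ref{lem:boundary}, the $G$/$H$ linearization) carries over verbatim from the bounded-degree sections, now applied on $\tilde U$ in place of the finite family of dilates.
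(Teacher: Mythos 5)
Your proposal is correct and follows essentially the same route as the paper: set $F_\lambda = \phi^{-1}(U)$, use convexity together with $0 \in U$ to place $\sum_i w_i = d\bar w \in dU \subseteq \tilde U$, invoke conditions 3--4 for real $\lambda \in [0,\lambda_0]$, and then upgrade to a $\delta$-neighborhood of $[0,\lambda_0]$ by a compactness/uniform-continuity perturbation of $\lambda \mapsto \lambda e^{w}$ (the paper bounds $|\lambda e^{w}| \le |\lambda_0| e^{a}$ on $\tilde U$ and reuses the argument of Lemma \ref{lem:bounded-2-spin:U} verbatim, exactly as you describe). The only caveat, shared equally by the paper's own write-up, is that compactness of $\{\lambda e^{w}:\lambda\in[0,\lambda_0],\,w\in\tilde U\}$ is asserted rather than proved (only boundedness is immediate since $\re w$ is bounded above on $\tilde U$), but this is harmless for the distance-to-boundary argument and for the concrete $\tilde U(k)$ used later.
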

We take $U(0, x_1, k)$ as candidates for Lemma \ref{lem:unbounded-2-spin:U}. 
A key observation is that 
\begin{align*}
    \sum_{i=1}^d z_i \in \tilde{U}(k) \triangleq \{z: \re(z) \le 0, |\im(z)| \le -k \re(z) \},
\end{align*}
for all $z_i \in U(0, x_1, k), i =1, \dots, d$ and $d \ge 0$.

%Observe that the regions $U(x_0, x_1, k)$, which we applied to prove zero-free results for $\beta, \gamma, \lambda_0$ in the Theorem \ref{thm:unbounded-2-spin} with bounded degree in the Theorem \ref{thm:bounded-2-spin}, all satisfy $x_0 = 0$. 
%The key point is that $\sum_{i=1}^d z_i \in \tilde{U}(k)$ for all $z_i \in U(0, x_1, k), i =1, \dots, d$ and $d \ge 1$, where
%\begin{align}
%    \tilde{U}(k) = \{z: \re(z) \le 0, |\im(z)| \le -k \re(z) \}.
%\end{align}
To ensure that $\phi(\lambda e^w) \in \open{U(x)}$, we divide $\tilde{U}(k)$ into two parts. For $z \in \tilde{U}(k)$ with $\re(z) \in [x_2, 0]$, the proof is completely same as bounded degree cases. 
On the other hand, for sufficiently small $x_2$ and $z \in \tilde{U}(k)$ with $\re(z) < x_2$, $e^{w}$ is very close to $0$ intuitively which results in that $\re(\phi(\lambda e^{w})) \approx -\ln(\beta)$ and $\im(\phi(\lambda e^{w})) \approx 0$. The formal results are as follows.
% We divide $\tilde{U}(k)$ into two parts to prove that $\phi(\lambda e^{w}) \in U(0, x_1, k)$. The first case is that $\tilde{U}(k)_1 = \{z: \re(z) \in [x_2, 0], |\im(z)| \le -k \re(z) \}$, and we can deal with this part like we did for bounded degree graphs. The second case is that $\tilde{U}(k)_2 = \{z: \re(z) \le x_2, |\im(z)| \le -k \re(z) \}$, where $\re(\phi(\lambda e^{w})) \approx -\ln(\beta)$ and $\im(\phi(\lambda e^{w})) \approx 0$. That implies that $\phi(\lambda e^{w}) \in U(0, x_1, k)$ for $w \in \tilde{U}(k)_2$ apparently. 

% Similarly, for $w = x + \i y \in \tilde{U}(k)$, we have
% \begin{align*}
%     \re \left(\phi(\lambda e^{w})\right) &= \tilde{r}(x, y, \lambda) \triangleq \frac{1}{2} \ln \left(\frac{1 + \gamma^2 \lambda^2 e^{2x} + 2 \gamma \lambda e^{x} \cos(y)}{\beta^2 + \lambda^2 e^{2 x} + 2 \beta \lambda e^{x} \cos(y)}\right), \\
%     |\im\left(\phi(\lambda e^{w})\right)| &= \tilde{h}(x, y, \lambda) \triangleq \arctan\left( \frac{|(\beta \gamma - 1) \lambda e^{x} \sin(y)|}{\beta + \gamma \lambda^2 e^{2 x} + (\beta \gamma + 1)\lambda e^{x} \cos(y)} \right).
% \end{align*}

% In order to prove that $\phi(\lambda e^{w}) \in U(0, x_1, k)$ for $w \in \tilde{U}(k)$, it is enough to show that
% \begin{align*}
%     \tilde{r}(x, y, \lambda) \in (x_1, 0), ~~~ -k \tilde{r}(x, y, \lambda) - \tilde{h}(x, y, \lambda) > 0,
% \end{align*}
% for $x \in (-\infty, 0]$, $y \in [0, -k x]$ and $\lambda \in [0, \lambda_0]$. 
% We can prove the following lemma by similar methods. 
\begin{lemma}\label{lem:unbounded-2-spin:x1-k}
    For $\beta > 0, \gamma \ge 0$, suppose that $\beta, \gamma, \lambda_0$ satisfies one of the conditions in Theorem \ref{thm:unbounded-2-spin}. Then there exists $x_1, k$ such that $\phi(\lambda e^{w}) \in \open{U(0, x_1, k)}$ for all $w \in \tilde{U}(k)$. Furthermore, $U(0, x_1, k)$ satisfies all conditions in Lemma \ref{lem:unbounded-2-spin:U}. 
\end{lemma}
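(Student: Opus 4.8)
The plan is to reduce the lemma to two regimes according to the size of $\re(w)$, exactly along the lines sketched before the statement. First I would set $w = u + \i t$ with $u \le 0$ and $|t| \le -ku$, and write $\phi(\lambda e^w)$ in terms of the map $g_\lambda(w) \triangleq \phi(\lambda e^w) = \ln\!\left(\frac{\gamma\lambda e^w + 1}{\lambda e^w + \beta}\right)$. The key elementary facts are: (a) $g_\lambda$ is holomorphic on a neighborhood of $\tilde U(k)$ provided $\lambda e^w \notin (-\infty, \max\{-\beta, -1/\gamma, -1\}]$ there, which I verify first (for $\lambda \in [0,\lambda_0]$ with $\lambda_0$ in the allowed range, $\re(\lambda e^w)$ stays off the bad ray because the ranges in Theorem \ref{thm:unbounded-2-spin} are chosen exactly so that $-1,-\beta,-1/\gamma$ cannot be hit); and (b) $g_\lambda(0) = \phi(\lambda)$ lies in the \emph{open} target triangle $\open{U(0,x_1,k)}$ — this pins down how $x_1,k$ must be chosen near the ``center'' of the recursion.

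For the \textbf{bounded part} $\{w \in \tilde U(k): \re(w) \in [x_2, 0]\}$ with $x_2$ a fixed negative constant to be chosen, I would invoke precisely the machinery of the bounded-degree analysis: the function $G(x,k,\lambda)$ and its $k$-derivative $H(x,\lambda)$ from Lemma \ref{lem:bounded-2-spin:H}, together with the injectivity/boundary argument (Lemma \ref{lem:boundary}) that lets one check the containment $\phi(\lambda e^w)\in\open{U}$ only on the two slanted edges $|t| = -ku$. Because $\tilde U(k)$ on this slab is compact and the relevant estimates in Lemma \ref{lem:bounded-2-spin:x0-x1} are stated with $x_0 = 0$ and depend only on $\beta,\gamma,\lambda_0$ (not on $\Delta$), the same choice of $x_1$ and a sufficiently small $k$ works verbatim; one just needs $H(x,\lambda_0) > 0$ on $[x_1, 0]$, which is where Conditions 1–2 of Theorem \ref{thm:unbounded-2-spin} enter (the bounds $\lambda_0 < (\beta/\gamma)^{\sqrt{\beta\gamma}/(\sqrt{\beta\gamma}-1)}$ resp. $\lambda_0 < \lambda_c(d_c)$, and the lower bounds $-1 < \lambda_0$ resp. $-\tfrac{\beta-1}{1-\gamma}<\lambda_0$, are exactly the degree-independent limits of the bounded-degree thresholds).

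For the \textbf{tail part} $\{w \in \tilde U(k): \re(w) < x_2\}$, the idea is a perturbation argument around $\lambda e^w = 0$: since $|\lambda e^w| \le \lambda_0 e^{x_2}$ there, a first-order Taylor expansion gives $\phi(\lambda e^w) = -\ln\beta + \left(\tfrac{\gamma}{1} - \tfrac{1}{\beta}\right)\lambda e^w + O\!\big((\lambda e^w)^2\big)$ (with the $\gamma = 0$ case handled separately but identically), so both $\big|\re(\phi(\lambda e^w)) - (-\ln\beta)\big|$ and $\big|\im(\phi(\lambda e^w))\big|$ are $O(e^{x_2})$ uniformly. Since $-\ln\beta$ is an \emph{interior} point of $U(0,x_1,k)$ by Condition 1, for $x_2$ negative enough the whole tail image lands in $\open{U(0,x_1,k)}$. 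Choosing $x_2$ after $x_1,k$ (but the constants in the $O(\cdot)$ depend only on $\beta,\gamma,\lambda_0$, so there is no circularity) completes the containment on all of $\tilde U(k)$, and hence $\phi(\lambda e^w)\in\open{U(0,x_1,k)}$ for every $w\in\tilde U(k)$. Finally I check the four hypotheses of Lemma \ref{lem:unbounded-2-spin:U}: item 2 holds with $\tilde U = \tilde U(k)$ by the displayed observation $\sum_i z_i \in \tilde U(k)$; items 3–4 are what was just proved; and item 1 (that $-1\notin\phi^{-1}(U)$, and $0,-\ln\beta,\ln\gamma\in U$) is arranged by the choice of $x_1$ together with the convexity of $U(0,x_1,k)$.

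The main obstacle I anticipate is \textbf{making the bounded-part estimate genuinely degree-free}: in the bounded-degree proof the slanted edge is $|t| = k(x_0 - x)$ with $x_0 > 0$ and the constraint $k < \pi/(2\Delta(x_0-x_1))$ gives room; here $x_0 = 0$, so the triangle degenerates toward the negative real axis and one must check that $H(x,\lambda_0) > 0$ still holds at $x = 0$ itself (the apex), i.e. that the strict inequality in the threshold on $\lambda_0$ is not lost in the $\Delta\to\infty$ limit. Concretely this means verifying that $H(0,\lambda_0) > 0$ is equivalent to the strict form of Condition 1/2, which is a computation with the explicit formula for $\phi$ and its derivative at $\lambda_0$; the fixed-point quantities $\hat x_d, d_c$ (hence $\lambda_c(d_c)$) appear here as the $d\to\infty$ or optimal-$d$ limiting values, and one needs the monotonicity facts recorded after the definition of $\lambda_c$ and in Appendices \ref{app:hat-x}, \ref{app:check-x}. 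Everything else — the boundary-reduction via Lemma \ref{lem:boundary}, the continuity bridge from $H>0$ to $G>0$ via Lemma \ref{lem:bounded-2-spin:H}, and the tail perturbation — is routine once that threshold computation is in hand.
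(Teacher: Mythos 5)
Your overall strategy coincides with the paper's: split $\tilde{U}(k)$ at some negative abscissa, run the bounded-degree machinery with $x_0=0$ on the compact slab near the apex (using that the $H>0$ estimates are degree-free), and treat the far tail by noting that $\lambda e^{w}\approx 0$ there so the image clusters near $-\ln\beta$. The gap is in your tail step and in the order of the choices. First, your perturbation argument needs $-\ln\beta$ to be an \emph{interior} point of $U(0,x_1,k)$, which condition 1 of Lemma \ref{lem:unbounded-2-spin:U} does not supply; with the ferromagnetic choice inherited from Corollary \ref{coro:ferro-positive}, namely $x_1=-\ln\beta$, the point $-\ln\beta$ lies on $\partial U(0,x_1,k)$, so an $O(e^{x_2})$ perturbation can exit the triangle on the left. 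This is fixable by taking $x_1$ strictly below $-\ln\beta$ (there is room, since $\ln\frac{1-\gamma}{\beta-1}<-\ln\beta$ exactly when $\beta\gamma>1$), but it must be said. Second, and more seriously, your tail bound is only $\left|\phi(\lambda e^{w})-(-\ln\beta)\right|=O(e^{x_2})$, whereas the distance from $-\ln\beta$ to the slanted edges of $U(0,x_1,k)$ is of order $k\ln\beta$; hence $x_2$ must be chosen \emph{after}, and depending on, $k$. But the bounded-part choice of $k$ via Lemma \ref{lem:bounded-2-spin:H} is made on the compact slab $[x_2,0]$ and a priori depends on that slab, so the quantifier order ``choose $x_1,k$, then $x_2$'' is precisely where circularity can enter; your parenthetical remark (that the $O$-constants depend only on $\beta,\gamma,\lambda_0$) does not resolve it, because the quantity you must beat, $\mathrm{dist}\left(-\ln\beta,\partial U(0,x_1,k)\right)$, does depend on $k$.

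The paper avoids this by proving a sharper tail estimate: for $\re(w)<\tilde{x}$ it shows directly $|\im(\phi(\lambda e^{w}))|<k\,|\re(\phi(\lambda e^{w}))|$, extracting a factor of $k$ on the left-hand side from $|\im(w)|\le k|\re(w)|$ (via $\sin(k|x|)$ on $[-\tfrac{\pi}{2k},\tilde{x}]$, and via $\lim_{k\to0^{+}}k e^{\pi/(2k)}=+\infty$ for $\re(w)<-\tfrac{\pi}{2k}$), together with the real-part bounds of the range-$r$ lemmas. Consequently the tail threshold $\tilde{x}$ depends only on $\beta,\gamma,\lambda_0$, the compact slab $[\tilde{x},0]$ is fixed \emph{before} $k$ is chosen, and no circularity arises. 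Your proposal becomes correct once you either adopt this sharper tail bound or establish a version of the $H\Rightarrow G$ step uniform in $x\in(-\infty,0]$. Finally, the obstacle you flag at the apex $x=0$ is not actually an issue: Lemmas \ref{lem:ferro-positive:H}, \ref{lem:ferro-negative:H}, \ref{lem:anti-ferro-positive:H} and \ref{lem:anti-ferro-negative:H} already give $H>0$ on all of $(-\infty,x_0]$ with $x_0=0$ under exactly the degree-independent thresholds of Theorem \ref{thm:unbounded-2-spin}, so no new computation at the apex is required.
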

\section{Generalized Set Covers}
In this section, we prove zero-free regions for generalized set covers. 
\begin{thm}\label{thm:set-cover}
    Denote $\mu_1 = \frac{e^{1 + \frac{1}{\Delta - 1}}}{\Delta - 1}, \mu_2 = -\frac{e^{1 - \frac{1}{\Delta - 1}}}{\Delta - 1}$ and $\psi(y) = \frac{\ln(1 + y)}{y}$.

    For $\eta_0 > 0$, if one of following condition holds
    \begin{enumerate}
        \item $\mu \in \left[ \mu_2, \mu_1 \right]$, 
        \item $\mu > \mu_1$, and $\eta_0 < \eta_1(\Delta, \mu)$, where
        \[\frac{1}{\eta_1(\Delta, \mu)} = \psi^{-1}\left(\frac{x_1}{1 + \mu e^{-x_1}}\right) \left(1 + \mu e^{-x_1}\right)^{\Delta - 1},\]
        and $x_1$ is the smallest solution to the equation $\frac{\mu x}{\mu + e^x} = \frac{\ln((\Delta - 1) \mu)}{\Delta}$,
        \item $\mu < \mu_2$, and $\eta_0 < \eta_2(\Delta, \mu)$, where 
        \[\frac{1}{\eta_2(\Delta, \mu)} = \psi^{-1}\left(2 - \frac{x_2}{1 + \mu e^{-x_2}}\right) \left(1 + \mu e^{-x_2}\right)^{\Delta - 1},\]
        and $x_2$ is the largest solution to the equation: 
        \[\ln\left(\frac{- (\Delta - 1) x \mu e^{-x}}{2 - x + 2 \mu e^{-x}}\right) \frac{1 + \mu e^{-x}}{x - 2 - 2 \mu e^{-x} - (\Delta - 1) x \mu e^{-x}} = 1,\]
    \end{enumerate}
    then there exists $\delta > 0$, which depends on $\mu, \eta_0, \Delta$, such that $Z_{\tau_S}(G, \mu, \eta) \neq 0$ for for all $\eta \in \gU([0, \eta_0], \delta) \ (\eta \neq 0)$ with hypergraphs of maximum degree at most $\Delta$ and feasible configuration $\tau_S$.
\end{thm}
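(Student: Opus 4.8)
The plan is to reuse, for the alternating vertex/hyperedge recursion of generalized set cover, the whole machinery developed above for 2-spin systems: computation tree, a complex-contraction notion, a linearising holomorphic map that turns products into sums, an isosceles-triangle contraction region, and a final scalar inequality — with the unbounded hyperedge rank handled by the cone trick of the unbounded-degree section. \emph{Step 1.} First I derive the tree recursion for the marginal ratios $R_{\tau_S}(G,v)=Z_{\tau_S,\sigma(v)=1}(G,\mu,\eta)/Z_{\tau_S,\sigma(v)=0}(G,\mu,\eta)$ via the telescoping (cycle-breaking) expansion of \cite{weitz2006counting, liu2014fptas-cnf, liu2015fptas}. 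For a free vertex $v$ lying in hyperedges $e_1,\dots,e_d$ (with $d\le\Delta$, and $d\le\Delta-1$ once $v$ has a parent hyperedge in the tree), whose remaining vertices $u$ carry ratios $R_u$, the same partition-function bookkeeping that gave Lemma~\ref{lem:bounded-2-spin:computation-tree} yields
\begin{align*}
    R_v \;=\; \frac{1}{\eta}\,\prod_{i=1}^{d}\frac{1}{\,1+\mu\prod_{u\in e_i}\frac{1}{1+R_u}\,},
\end{align*}
since selecting $v$ collapses every $\varphi_{e_i}$-factor of the numerator to $1$, while the denominator keeps the $\eta$-weight of $v$ and, within each $e_i$, the sum $\prod_u(Z^0_u+Z^1_u)+\mu\prod_u Z^0_u$ (with $Z^j_u$ the partition function of $u$'s subtree pinned to $\sigma(u)=j$); feasibility of the pinnings produced along the tree is checked as for $\gamma=0$ in the 2-spin case.

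\emph{Step 2.} Introduce a contraction notion mirroring Definition~\ref{def:2-spin:complex-contraction}: a closed $F_\eta\subseteq\hat\sC$ containing $\infty$ and the relevant boundary values, with $-1\notin F_\eta$ and $1+\mu\prod_u(1+R_u)^{-1}\neq0$ whenever the $R_u\in F_\eta$, and such that the recursion maps $F_\eta$-inputs into $F_\eta$ for every admissible $d$. As in Lemma~\ref{lem:bounded-2-spin:contraction-zero-free}, an induction from the leaves of the computation tree shows that at least one of the two pinned partition functions is nonzero at each node and the ratio lies in $F_\eta$, hence $Z_{\tau_S}(G,\mu,\eta)\neq0$. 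Next, composing with $\phi(z)=\ln(1+z)$ (so $\phi^{-1}(w)=e^w-1$) turns the inner product into $\prod_u(1+R_u)^{-1}=\exp(-\sum_u w_u)$ and the outer product into a sum, so in the coordinate $w=\phi(R)$ the one-step map becomes
\begin{align*}
    (w_u)_u\;\longmapsto\;\ln\!\Big(1+\tfrac{1}{\eta}\exp\big(\textstyle\sum_{i=1}^{d}g(s_i)\big)\Big),
    \qquad s_i=\sum_{u\in e_i}w_u,\quad g(s)=-\ln(1+\mu e^{-s}).
\end{align*}

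\emph{Step 3.} I look for a convex compact $U\subseteq\{w:|\im w|<\pi\}$ together with an enlarged cone $\tilde U$ satisfying $dU\subseteq\tilde U$ for all $d\ge0$, exactly as in Lemma~\ref{lem:unbounded-2-spin:U}; the cone absorbs the inner sums $s_i$, whose number of summands (the hyperedge rank) is \emph{not} bounded. Convexity and $0\in U$ reduce $w_v\in\open{U}$, as for 2-spin, to: (i) the extremal count $d=\Delta-1$; (ii) $w$ on the slanted boundary of $U$ (injectivity of the one-step map plus Lemma~\ref{lem:boundary}); (iii) a scalar inequality $H(x,\eta)>0$ from differentiating the boundary-defect in the aperture $k$ at $k=0$, the analogue of Lemma~\ref{lem:bounded-2-spin:H}. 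Now take $U=U(x_0,x_1,k)$: I choose $x_1$ as a lower bound for $\re\phi(\mathrm{output})$ over $\tilde U$ — its turning point is governed by $g'(s)=\mu e^{-s}/(1+\mu e^{-s})$, which is the source of the equation $\frac{\mu x}{\mu+e^{x}}=\frac{\ln((\Delta-1)\mu)}{\Delta}$ for $x_1$ (and of the $x_2$-equation when $\mu<\mu_2$) — and choose $x_0$ so that $\min_{x\in[x_1,x_0]}H(x,\eta_0)>0$. Pushing this bound through $\phi$ and the $(\Delta-1)$-fold composition, and using $\psi(y)=\ln(1+y)/y$ to invert $\ln(1+R_v)=$(critical real value), produces exactly $1/\eta<\psi^{-1}\!\big(\frac{x_1}{1+\mu e^{-x_1}}\big)(1+\mu e^{-x_1})^{\Delta-1}$ in case 2 and the symmetric bound in case 3, while for $\mu\in[\mu_2,\mu_1]$ the defect $H$ stays positive for every $\eta_0>0$, giving the unconditional conclusion. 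Finally, uniform continuity of $\phi$ on a neighbourhood of the relevant compact set upgrades contraction for real $\eta\in[0,\eta_0]$ to all $\eta\in\gU([0,\eta_0],\delta)$, as in Lemma~\ref{lem:bounded-2-spin:U}.

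\emph{Main obstacle.} The crux is Steps 2--3 in the presence of unbounded hyperedge rank: one must choose $\tilde U$ and the aperture $k$ so that $g$ maps $\tilde U$ into a set whose $(\Delta-1)$-fold sum, after $\tfrac1\eta\exp(\cdot)$ and $\ln(1+\cdot)$, lands inside $\open{U}$ — an estimate that genuinely couples the vertex and hyperedge levels of the recursion and is what forces the transcendental conditions defining $x_1,x_2$ and the $\psi^{-1}$-formulas for $\eta_1,\eta_2$. Moreover the boundary-reduction and injectivity steps that were immediate in the 2-spin case must be re-established because the inner map now acts on an unbounded family of variables, so boundedness and univalence of the one-step map on $\tilde U$ need a separate argument; and the middle range $\mu\in[\mu_2,\mu_1]$, where no bound on $\eta_0$ is imposed, requires checking that $H$ never changes sign however large $\eta_0$ is.
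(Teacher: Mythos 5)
Your plan is, up to a change of coordinates, the paper's own strategy: your $w=\ln(1+R)$ with $R=Z_1/Z_0$ is the negative logarithm of the paper's Mobius coordinate $R'/(R'+1)$ with $R'=Z_0/Z_1$, your cone $\tilde U$ is the paper's $\tilde U(k)$ (equivalently the multiplicative closedness of $W(k)=\exp(\tilde U(k))\cup\{0\}$, which is how the paper absorbs unbounded hyperedge rank), and your ``differentiate the boundary defect in the aperture $k$ at $k=0$'' criterion is exactly the paper's passage from Lemma \ref{lemma:compare} to the function $H(x)$. So there is no genuinely different route to compare.

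The difficulty is that everything you defer under ``main obstacle'' \emph{is} the proof: the theorem's content is precisely the thresholds $\mu_1,\mu_2,\eta_1,\eta_2$, and none of the steps producing them is carried out. Concretely: (i) the reduction of the $(\Delta-1)$ outer factors to $(\Delta-1)$ times a single boundary point requires convexity of (the hull of) the image region, which in the paper is the concavity of $|h_{k,2}(r_{k,2}^{-1}(\cdot))|$ established by an explicit second-derivative computation (Lemma \ref{lemma:region2} together with Lemma \ref{lemma:positive}); your appeal to ``injectivity of the one-step map plus Lemma \ref{lem:boundary}'' does not substitute for this, since the one-step map is multivariate and the argument genuinely goes through the convex hull, not through univalence. (ii) Your region bookkeeping is inconsistent: in your convention a vertex pinned to $1$, or a free vertex when $\eta\to 0$, gives $w$ with arbitrarily large real part, so the vertex-level region cannot be a compact triangle; it must be the unbounded cone itself, and then the analogue of Lemma \ref{lem:bounded-2-spin:H} cannot be invoked directly --- the far end of the cone needs separate estimates (the paper's treatment of $\theta/k$ small for $\mu=-1$, where $\ln(1+\mu w)$ has unbounded real part and the relevant target boundary is the flat portion $|\im z|=\pi$ from Lemma \ref{lemma:region1}, and of $\theta/k$ large) before one can restrict to a compact window where the $k\to 0^{+}$ derivative criterion applies. (iii) The claim that pushing the bound through ``produces exactly'' the stated conditions replaces the entire real-variable analysis of $H(x)=\frac{\hat x_1(x)e^{\hat x_1(x)}}{e^{\hat x_1(x)}-1}-\frac{(\Delta-1)|\mu| x}{e^{x}+\mu}$: the monotonicity of the auxiliary function $F$, the location of its zero $\check x$, the chain of equivalences converting $H(\check x)>0$ into the transcendental equations for $x_1,x_2$ and the $\psi^{-1}$ formulas, and the short but essential argument (via $\mu e^{-\hat x}=\hat x-1$ and $\ln(1+y)>y/(1+y)$) showing that for $\mu\in[\mu_2,\mu_1]$ positivity holds for every $\eta_0>0$. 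As written, the proposal sets up the correct framework but does not prove the theorem.
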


Based on Theorem \ref{thm:set-cover} and Barvinok's paradigm \cite{barvinok2016combinatorics}, 
there exists a quasi-polynomial algorithm for computing $Z_{\tau_S}(G, \mu, \eta_0)$. 
Moreover, if the edge size is bounded, then there exists an FPTAS for computing $Z_{\tau_S}(G, \mu, \eta_0)$ according to the algorithm in \cite{liu2019fisher} for computing the low-degree coefficients of partition functions of hypergraphs.

% \subsection{Computation Tree}\label{sec:tree}

\subsection{Complex Contraction Implies Absence of zeros}
Similar to 2-spin systems, there also exists tree recursion for generalized set covers (see Appendix \ref{app:set-cover:tree}). The recursive function can be written as
\begin{align*}
    f(\vz; \vm, d, \mu, \eta) = \eta \prod_{i=1}^d \left( 1 + \mu \prod_{j=1}^{m_i} \left(\frac{z_{i,j}}{1 + z_{i,j}}\right) \right).
\end{align*}

\begin{definition}
    We say $\eta \in \sC$ satisfies $\Delta$-complex-contraction property for weighted set covers with parameter $\mu \ge -1$ if there exists a close region $F_{\eta} \subseteq \hat{\sC}$ such that
\begin{enumerate}
    \item $-1 \notin F_{\eta}$, $\{0, \infty\} \subseteq F_{\eta}$, 
    \item $f(\vz; \vm, \Delta -1, \mu, \eta) \in F_{\eta}$ and $f(\vz; \vm, \Delta, \mu, \eta) \neq -1$ for $\vz \in F_{\eta}^{\sum_{i=1}^\Delta m_i}$, $\vm \in \sN^{\Delta}$.
\end{enumerate}
\end{definition}

\begin{lemma}\label{lem:complex-contraction-zero-free}
    If $\eta \in \sC$ satisfies $\Delta$-complex-contraction property for weighted set covers with parameter $\mu$, then $Z(G, \mu, \eta) \neq 0$ for any hypergraph $G$ with max degree $\Delta$.
\end{lemma}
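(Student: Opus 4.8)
The plan is to mimic the standard argument (as in \cite{bencs2018note, shao2019contraction} and Lemma \ref{lem:bounded-2-spin:contraction-zero-free}) that a complex-contraction region on the computation tree rules out zeros of the partition function, adapting the bookkeeping to the hyperedge/two-level structure of the set-cover recursion. First I would recall the telescoping / self-avoiding-walk tree expansion for generalized set covers (referenced as Appendix \ref{app:set-cover:tree}): for a hypergraph $G$ of maximum degree $\Delta$ and a feasible configuration $\tau_S$, pick a free vertex $v$, and write the marginal ratio $R(G,v) \triangleq Z_{\tau_S,\sigma(v)=0}(G,\mu,\eta)/Z_{\tau_S,\sigma(v)=1}(G,\mu,\eta)$ (or its reciprocal—I will fix whichever convention makes $0,\infty$ the pinned values consistent with the definition) in terms of the ratios computed in the subtrees hanging off $v$ through its incident hyperedges. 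Unrolling $v$ into independent copies $\tilde v_i$, one per incident hyperedge $e_i$, and within each $e_i$ unrolling the other vertices $u_{i,j}$ into independent copies with the remaining members of $e_i$ pinned in a fixed telescoping order, gives exactly $R(G,v) = f(\vz;\vm,d,\mu,\eta)$ with $d = \deg(v) \le \Delta$, $m_i = |e_i| - 1$, and each $z_{i,j}$ a marginal ratio in a strictly smaller instance with a feasible pinning.

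The induction would be on the size of $G$ (number of free vertices, or edges). The inductive hypothesis I would carry is twofold: (a) $Z_{\tau_S}(G,\mu,\eta) \neq 0$ for every feasible $\tau_S$, and (b) for every free vertex $v$, the marginal ratio $R(G,v)$ is well-defined (at least one of the two pinned partition functions is nonzero, as flagged in the remark after Lemma \ref{lem:bounded-2-spin:computation-tree}) and lies in $F_\eta$. For the base case, an empty or edgeless instance has $R \in \{0,\infty\} \subseteq F_\eta$ by condition~1 of the complex-contraction definition, and $Z$ is a nonempty product of positive vertex weights $\eta^{(\cdot)}$ and $1$'s, hence nonzero for $\eta$ near the positive segment. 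For the inductive step on the ratio: the subinstances $G_i, G_{i,j}$ are smaller and the induced pinnings are feasible (one must check that the telescoping pinnings never create an infeasible $\mu=-1$ configuration—this is the analogue of the sentence "configurations $\{\tau_S \cup \sigma_i\}$ are also feasible" after Lemma \ref{lem:bounded-2-spin:computation-tree}), so by (b) all $z_{i,j} \in F_\eta$; then since $\deg(v)\le\Delta$, either $\deg(v) \le \Delta-1$ and condition~2 gives $f(\vz;\vm,d,\mu,\eta) \in F_\eta$, or $\deg(v) = \Delta$ and we only need this node as the root, where condition~2 gives $f(\vz;\vm,\Delta,\mu,\eta) \neq -1$, i.e. $R(G,v) \neq -1$. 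To close $Z \neq 0$: fix any free vertex $v$ and write $Z_{\tau_S}(G,\mu,\eta) = Z_{\tau_S,\sigma(v)=1}(G,\mu,\eta) + Z_{\tau_S,\sigma(v)=0}(G,\mu,\eta)$; pulling out whichever summand is nonzero by the induction, this equals that summand times $(1 + R(G,v))$ or $(1 + 1/R(G,v))$; since $R(G,v) \neq -1$ (and $R(G,v)\neq 0$ forces the other branch nonzero too, with $1+1/R \neq 0 \iff R \neq -1$), the product is nonzero. Applying this at the original root vertex—or, if $S=V$, observing $Z_{\tau_S}$ is directly a monomial in $\eta$—finishes the proof. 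One subtlety is handling $\infty$: I would track $R(G,v)$ as an element of $\hat\sC$ and interpret $1+R$ and $1+1/R$ via the Möbius/continuity conventions, exactly noting that $\infty \in F_\eta$ is allowed and that $f$ extends continuously, so the membership $R \in F_\eta$ and the non-equality $R \neq -1$ are the only facts used.

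The main obstacle I expect is purely combinatorial rather than analytic: setting up the tree recursion cleanly so that (i) every vertex copy introduced by the telescoping expansion indeed yields a \emph{strictly smaller} instance (so the induction is well-founded—this needs care because a hyperedge of size $m_i+1$ spawns $m_i$ sub-instances and one must make sure the bookkeeping of "removed vertex / pinned vertices" strictly decreases a potential), and (ii) the induced pinnings are feasible in the sense of Definition \ref{def:set-cover:feasible}, so that the pinned partition functions appearing as denominators are not forced to be zero \emph{for the wrong reason}. Once the recursion $R(G,v) = f(\vz;\vm,\deg(v),\mu,\eta)$ with $z_{i,j}\in F_\eta$ is in hand, the rest is a routine induction identical in structure to the 2-spin case, and the only place the specific form of $f$ or the region $F_\eta$ enters is through the two black-box conditions in the definition of $\Delta$-complex-contraction, which I would invoke verbatim. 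I would also remark that the degree bound is used only to know $\deg(v)\in\{0,\dots,\Delta-1\}$ for internal nodes and $\deg(v)\le\Delta$ at the root, so condition~2's split between the "$\in F_\eta$" requirement at degree $\le\Delta-1$ and the weaker "$\neq -1$" requirement at degree $\Delta$ is exactly what the argument consumes.
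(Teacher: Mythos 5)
Your overall route is the same as the paper's: induct on the number of free vertices, carry the stronger statement that ratios of free vertices of degree at most $\Delta-1$ lie in $F_\eta$ while a degree-$\Delta$ vertex only needs $R\neq -1$, feed the recursion of Appendix \ref{app:set-cover:tree} into the two conditions of the contraction definition, and close with $Z_{\tau_S}=Z_{\tau_S,\sigma(v)=1}\,(1+R)\neq 0$. Two wording issues are minor but worth fixing: your inductive hypothesis (b) claims $R\in F_\eta$ for \emph{every} free vertex, which is not provable (nor needed) for degree-$\Delta$ vertices and must be restricted to degree at most $\Delta-1$, as you in effect do later; and well-definedness of $R$ does not need to be carried separately, since pinning $v$ to $1$ always yields a feasible configuration with fewer free vertices, so $Z_{\tau_S,\sigma(v)=1}\neq 0$ by the induction itself.

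The genuine gap is precisely the point you flag as obstacle (ii) and then leave unresolved, and the hoped-for statement is false as you phrase it: for $\mu=-1$ the telescoping pinnings \emph{can} create trouble. The expansion pins the copies $\tilde v_s$ (for $s>i$) and the vertices $v_{i,1},\dots,v_{i,j-1}$ to $0$; a free vertex all of whose co-members in some edge are already pinned to $0$ becomes ``blocked,'' and when the telescoping later pins such a blocked vertex to $0$, the resulting configuration is infeasible, the corresponding intermediate partition function vanishes, and the product representation of $R$ (a chain of ratios $R_{i,j}/(1+R_{i,j})$ with nonzero denominators) is no longer justified; the same phenomenon applied to $v$ itself can make $\tau_S\cup\sigma_i$ infeasible in $\tilde G$. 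The paper resolves this with a separate device you do not supply: blocked vertices are pinned to $1$ \emph{before} recursing, which is shown to change neither partition functions nor marginal ratios (Lemma \ref{lemma:block}), keeps every configuration $\tau_S\cup\sigma_i\cup\hat\sigma_j$ feasible in the sense of Definition \ref{def:set-cover:feasible}, and is also what justifies dropping from the product the edges containing a $1$-pinned or blocked vertex (the count $D'$ in Equation (\ref{eq:recursion})). For $\mu>-1$ every configuration is feasible and your argument is complete, but the $\mu=-1$ case (hypergraph independent sets, monotone CNF) is one the lemma must cover, so without this preprocessing step or an equivalent argument the induction does not go through.
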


\subsection{Complex Contraction}
% Recall that 
% \begin{align}\label{def:f}
%     f(\vz; \vd, D, \mu, \eta) = \eta \prod_{i=1}^D \left( 1 + \mu \prod_{j=1}^{d_i} \left(\frac{z_{i,j}}{1 + z_{i,j}}\right) \right).
% \end{align}

Applying Mobius transformation $\phi(w) = \frac{w}{w + 1}$, we have
\[f^{\phi}(\vz; \vm, \Delta -1, \mu, \eta) \triangleq \phi \circ f \circ \phi^{-1} = \frac{\eta \prod_{i=1}^{\Delta - 1} \left( 1 + \mu \prod_{j=1}^{m_i} w_{i,j} \right)}{\eta \prod_{i=1}^{\Delta - 1} \left( 1 + \mu \prod_{j=1}^{m_i} w_{i,j} \right) + 1}.\] 
Define 
\[W(k) \triangleq \exp(\tilde{U}(k)) \cup \{0\} = \{ w : |w| \le e^{-|\arg(w)|/k} \}.\]
Then for $w_1, \dots, w_m \in W(k)$, it holds $\prod_{j=1}^m w_j \in W(k)$. 
Similar to Lemma \ref{lem:unbounded-2-spin:U}, we intend to prove the following result.
\begin{lemma}\label{lem:set-cover:W-k}
    For $\eta_0 > 0$, if there exists $k > 0$ such that 
    \begin{align}
        \eta \prod_{i=1}^{\Delta-1} \left( 1 + \mu w_i \right) \in \phi^{-1}\left((W(k))^{\circ}\right)
    \end{align}
    for all $\eta \in [0, \eta_0]$ and $\vw \in W(k)^{\Delta - 1}$, 
    then there exists $\delta > 0$, which depends on $\mu, \eta_0, \Delta$, such that all
    $\eta \in \gU([0, \eta_0], \delta)$ satisfies $\Delta$-complex-contraction property for weighted set covers with parameter $\mu$. 
\end{lemma}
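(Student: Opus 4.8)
The plan is to mirror the structure of the argument used for Lemma \ref{lem:unbounded-2-spin:U}, but adapted to the multiplicative setting of set covers. Recall that $W(k) = \exp(\tilde U(k)) \cup \{0\}$ and that, by construction, $W(k)$ is closed under finite products; the Mobius transform $\phi(w) = \frac{w}{w+1}$ conjugates the set-cover recursion $f$ into $f^\phi$. The hypothesis gives us, for each fixed real $\eta \in [0, \eta_0]$, that $\eta \prod_{i=1}^{\Delta-1}(1 + \mu w_i)$ lands in $\phi^{-1}\big((W(k))^{\circ}\big)$ whenever each $w_i \in W(k)$; equivalently $f^\phi(\vz; \vm, \Delta-1, \mu, \eta) \in (W(k))^{\circ}$ for all $\vz \in W(k)^{\sum m_i}$, once we note that $\prod_{j=1}^{m_i} w_{i,j} \in W(k)$ so that each inner factor $1 + \mu \prod_j w_{i,j}$ is of the form $1 + \mu w_i$ with $w_i \in W(k)$.

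The key steps, in order, are as follows. First I would fix the region $F = \phi^{-1}(\overline{W(k)})$ (taking $W(k)$ closed, as its definition via $|w| \le e^{-|\arg w|/k}$ already is) and check the trivial membership conditions in the definition of $\Delta$-complex-contraction: $0 \in W(k)$ by definition, $\infty$ corresponds to $w = 1 \in W(k)$ since $|1| \le e^0$, and $-1 \notin \phi^{-1}(W(k))$ because $\phi(-1) = \infty$ which would force $w = 1$ but $\phi^{-1}$ of the relevant point is $\infty \notin W(k)$ — more carefully, $-1 \notin \phi^{-1}(W(k))$ amounts to $\phi(-1)=\infty \notin W(k)$, and $\infty \notin W(k)$ since $W(k) \subseteq \{|w|\le 1\}$. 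Second, I would handle the degree-$\Delta$ non-vanishing condition $f(\vz;\vm,\Delta,\mu,\eta) \neq -1$: since $\phi(-1) = \infty$, this is equivalent to $f^\phi(\vz;\vm,\Delta,\mu,\eta) \neq \infty$, i.e. to $\eta \prod_{i=1}^\Delta (1 + \mu w_i) \neq -1$ for $w_i \in W(k)$; a compactness and continuity argument (the left side ranges over a compact set bounded away from $-1$ when the hypothesis holds with margin, for real $\eta$, then extended) gives this for $\eta$ in a small complex neighborhood. Third — the heart of the perturbation — I would invoke uniform continuity: the map $(\eta, \vz) \mapsto f^\phi(\vz;\vm,\Delta-1,\mu,\eta)$ is jointly continuous, and for real $\eta \in [0,\eta_0]$ it maps the compact set $W(k)^{\sum m_i}$ strictly into the open set $(W(k))^\circ$; hence there is a uniform gap, and by continuity this strict containment persists for all complex $\eta$ within distance $\delta$ of $[0,\eta_0]$, uniformly over all $\vm \in \sN^\Delta$ (the uniformity over $\vm$ is free because the inner products $\prod_j w_{i,j}$ always live in the same fixed compact set $W(k)$ regardless of $m_i$). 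Finally I would assemble these facts to conclude that $F_\eta = \phi^{-1}(\overline{W(k)})$ witnesses the $\Delta$-complex-contraction property for every such $\eta$.

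The main obstacle I expect is making the uniformity over all hyperedge-size vectors $\vm \in \sN^\Delta$ genuinely rigorous while keeping a single $\delta$: one must argue that passing to products $\prod_{j=1}^{m_i} w_{i,j}$ does not expand the relevant domain — precisely the closure property $\prod_j w_j \in W(k)$ — so that the recursion $f^\phi$ restricted to $W(k)$-inputs is, for the purposes of the contraction, the \emph{same} finite family of maps $(w_1,\dots,w_{\Delta-1}) \mapsto \phi\big(\eta\prod_i(1+\mu w_i)\big)$ indexed only by $\eta$, with the inner structure irrelevant. Once that reduction is in place, everything reduces to a compact-set-into-open-set perturbation exactly as in Lemma \ref{lem:unbounded-2-spin:U}, and the secondary nuisance is just bookkeeping the $\eta = 0$ boundary case (where $f^\phi \equiv \phi(0) = 0 \in (W(k))^\circ$, which is fine) and ruling out the value $-1$ at degree $\Delta$ under complex perturbation, again by compactness.
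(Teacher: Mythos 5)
Your proposal follows the paper's proof essentially step for step: set $F=\phi^{-1}(W(k))$, deduce $\{0,\infty\}\subseteq F$ from $\{0,1\}\subseteq W(k)$ and $-1\notin F$ from boundedness of $W(k)$, use the closure of $W(k)$ under products to reduce the recursion to the single family of maps $(w_1,\dots,w_{\Delta-1})\mapsto\phi\bigl(\eta\prod_{i}(1+\mu w_i)\bigr)$, and then transfer the strict containment in $\open{W(k)}$ from real $\eta\in[0,\eta_0]$ to a complex $\delta$-neighborhood by the same compactness/uniform-continuity argument as in Lemma \ref{lem:bounded-2-spin:U}. Your handling of the degree-$\Delta$ requirement $f(\vz;\vm,\Delta,\mu,\eta)\neq-1$ is only asserted rather than derived (it does not follow from the $(\Delta-1)$-factor hypothesis alone but needs the geometry of $W(k)$, e.g.\ $k$ small relative to $\mu$), yet the paper's own proof omits this point entirely, so your write-up is at least as complete as the original.
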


We first try to picture the region $\phi^{-1}(\open{(W(k))})$. 
\begin{lemma}\label{lemma:region1}
    Define 
    \begin{align*}
        r_{k, 1}(\theta) &\triangleq \re\left(\ln \left(\frac{e^{-\theta/k + \i\theta}}{1 - e^{-\theta/k + \i\theta}}\right) \right) = -\frac{1}{2} \ln \left( e^{2\theta/k} + 1 - 2 e^{\theta/k} \cos(\theta) \right), \\
        h_{k, 1}(\theta) &\triangleq \im\left(\ln \left(\frac{e^{-\theta/k + \i\theta}}{1 - e^{-\theta/k + \i\theta}}\right) \right) = \arcsin\left(\frac{\sin(\theta)}{\sqrt{e^{-2\theta/k} + 1 - 2 e^{-\theta/k} \cos(\theta)}} \right).
    \end{align*}
    Then $r_{k, 1}(\theta)$ is strictly decreasing on $(0, \pi)$, $h_{k, 1}(\theta)$ is strictly increasing on $(0, \pi)$ and 
    \begin{align*}
        \left\{\ln(\phi^{-1}(w)): w \in \open{W(k)} \backslash (-e^{-\pi/k}, 0]\right\} = \{z: |\im(z)| < p_k(\re(z))) \},
    \end{align*}
    where 
    \begin{align*}
        p_k(x) = \begin{cases}
            h_{k, 1}(r_{k, 1}^{-1}(x)), ~~~&\text{if } x > - \ln (1 + e^{\pi/k}), \\
            \pi, ~~~&\text{if } x \le - \ln (1 + e^{\pi/k}).
        \end{cases}
    \end{align*}
\end{lemma}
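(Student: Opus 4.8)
The plan is to compute the real and imaginary parts of $\ln(\phi^{-1}(w))$ on the boundary curve of $W(k)$ and show that these parametrize the claimed region. First I would recall that $\phi^{-1}(w) = \frac{w}{1-w}$, so $\ln(\phi^{-1}(w)) = \ln\left(\frac{w}{1-w}\right)$. The boundary of $\open{W(k)}$ (away from the negative-real segment) is traced by $w = e^{-\theta/k + \i\theta}$ for $\theta \in (0,\pi)$ together with its conjugate; by symmetry it suffices to treat $\theta \in (0,\pi)$ and then reflect. Plugging in, $r_{k,1}(\theta)$ and $h_{k,1}(\theta)$ are exactly $\re$ and $\im$ of $\ln\left(\frac{e^{-\theta/k+\i\theta}}{1-e^{-\theta/k+\i\theta}}\right)$. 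The closed-form expressions for $r_{k,1}$ and $h_{k,1}$ follow from writing $|1 - e^{-\theta/k+\i\theta}|^2 = 1 + e^{-2\theta/k} - 2e^{-\theta/k}\cos\theta$, so $\re = \ln|w| - \tfrac12\ln|1-w|^2 = -\theta/k - \tfrac12\ln(1 + e^{-2\theta/k} - 2e^{-\theta/k}\cos\theta)$; multiplying inside the log by $e^{2\theta/k}$ and using $-\theta/k = -\tfrac12\ln e^{2\theta/k}$ gives the stated form $-\tfrac12\ln(e^{2\theta/k} + 1 - 2e^{\theta/k}\cos\theta)$. For the imaginary part, $\arg\left(\frac{w}{1-w}\right) = \arg(w) - \arg(1-w)$; a direct trigonometric manipulation (or observing that $\frac{w}{1-w}$ has imaginary part $\frac{\im w}{|1-w|^2} = \frac{e^{-\theta/k}\sin\theta}{1+e^{-2\theta/k}-2e^{-\theta/k}\cos\theta}$ and matching modulus) yields the $\arcsin$ formula.

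Next I would establish monotonicity. For $r_{k,1}$, differentiate $-\tfrac12\ln(e^{2\theta/k} + 1 - 2e^{\theta/k}\cos\theta)$: the numerator of the derivative is $-\tfrac12 \cdot \frac{(2/k)e^{2\theta/k} - (2/k)e^{\theta/k}\cos\theta + 2e^{\theta/k}\sin\theta}{e^{2\theta/k} + 1 - 2e^{\theta/k}\cos\theta}$. Since the denominator is $|e^{\theta/k} - e^{\i\theta}|^2 > 0$ and the bracket $\tfrac1k e^{\theta/k}(e^{\theta/k} - \cos\theta) + e^{\theta/k}\sin\theta$ is strictly positive for $\theta\in(0,\pi)$ (because $e^{\theta/k} > 1 \ge \cos\theta$ and $\sin\theta > 0$), $r_{k,1}$ is strictly decreasing. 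For $h_{k,1}$, it is cleaner to argue geometrically: as $\theta$ increases from $0$ to $\pi$, the point $\phi^{-1}(e^{-\theta/k+\i\theta})$ moves along a curve in the upper half-plane, and one checks its argument is increasing — equivalently, $\frac{d}{d\theta}\arg\left(\frac{w(\theta)}{1-w(\theta)}\right) > 0$, which reduces after simplification to a positivity statement of the same flavor. I would compute this derivative explicitly and reduce it to showing a trigonometric-exponential expression stays positive on $(0,\pi)$.

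Finally, with both functions monotone, $r_{k,1}$ is a bijection from $(0,\pi)$ onto its image, which is $(-\tfrac12\ln(e^{2\pi/k}+1-2e^{\pi/k}\cos\pi),\, +\infty) = (-\ln(1+e^{\pi/k}), +\infty)$ as $\theta \to \pi$ gives $-\tfrac12\ln((e^{\pi/k}+1)^2) = -\ln(1+e^{\pi/k})$ and $\theta \to 0^+$ gives $+\infty$. So for $x > -\ln(1+e^{\pi/k})$ the boundary value of $|\im|$ is exactly $h_{k,1}(r_{k,1}^{-1}(x))$, and for $x \le -\ln(1+e^{\pi/k})$ the horizontal slice of $\open{W(k)}$ under $\ln\circ\phi^{-1}$ reaches all the way to $\pm\pi$ (this is where the strip $\{|\im z| < \pi\}$ itself is the binding constraint, matching the fact that $\phi^{-1}$ maps $\open{W(k)}\setminus(-e^{-\pi/k},0]$ into $\hat\sC \setminus (-\infty,0]$ and $\ln$ of that is the open strip). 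To convert the boundary description into the region description $\{z : |\im z| < p_k(\re z)\}$, I would invoke Lemma \ref{lem:boundary}: $\ln\circ\phi^{-1}$ is injective and holomorphic on a neighborhood of $\overline{W(k)}\setminus(-e^{-\pi/k},0]$, so it sends boundary to boundary, and since the image is an open simply-connected set symmetric about the real axis and "vertically convex" (each vertical slice is a single interval, a consequence of the monotonicity of $h_{k,1}$ and $r_{k,1}$), the image is exactly the sub-graph region of $p_k$. The main obstacle I expect is the monotonicity of $h_{k,1}$: unlike $r_{k,1}$, its derivative does not factor transparently, and showing the relevant trigonometric-exponential quantity is positive on all of $(0,\pi)$ for every $k > 0$ will require a careful case split (e.g. small $\theta$, $\theta$ near $\pi$, and an interior argument), or a slicker reformulation such as tracking $\arg w - \arg(1-w)$ as a sum of two monotone pieces.
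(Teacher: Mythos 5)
Your overall route is the same as the paper's (parametrize $\partial W(k)$ by $\theta\mapsto e^{-\theta/k+\i\theta}$, compute the closed forms of $r_{k,1},h_{k,1}$, prove monotonicity, then convert the boundary description into the sub-graph region $\{|\im z|<p_k(\re z)\}$), but two steps are not actually carried out correctly. First, you leave the monotonicity of $h_{k,1}$ as an open obstacle and predict it needs a case split; in fact it is a short derivative computation of exactly the same flavor as the one you did for $r_{k,1}$: one finds
\begin{align*}
h_{k,1}'(\theta)=\frac{e^{\theta/k}\left(k e^{\theta/k}-\sin\theta-k\cos\theta\right)}{k\left(e^{2\theta/k}-2e^{\theta/k}\cos\theta+1\right)},
\end{align*}
and the bracket $\tilde\psi(\theta)=k e^{\theta/k}-\sin\theta-k\cos\theta$ satisfies $\tilde\psi(0)=0$ and $\tilde\psi'(\theta)=e^{\theta/k}-\cos\theta+k\sin\theta>0$ on $(0,\pi)$, so $\tilde\psi>0$ there. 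Since this is the key analytic ingredient of the lemma, leaving it unproved is a genuine gap in the proposal, even though the computation you anticipate is the right one.

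Second, the boundary-to-boundary step cannot be done by citing Lemma \ref{lem:boundary} for $g=\ln\circ\phi^{-1}$ on (a neighborhood of) $\overline{W(k)}\setminus(-e^{-\pi/k},0]$: that set still contains $1$, where $\phi^{-1}(w)=w/(1-w)$ has a pole, its closure contains $0$, where $g$ is unbounded, and $g$ is discontinuous across the slit $(-e^{-\pi/k},0]$ because $\phi^{-1}$ maps it into the branch cut of $\ln$. The slit is not a removable nuisance; it is precisely what produces the horizontal pieces $\{\re z\le-\ln(1+e^{\pi/k}),\ |\im z|=\pi\}$ of $\partial\big(g(\tilde W)\big)$, which your appeal to the strip $\{|\im z|<\pi\}$ only gestures at. The paper handles this with a dedicated argument (its Lemma \ref{lemma:boundary2}): it shows via the open mapping theorem and a sequence argument that $\partial(g(\tilde W))$ is contained in the image of $\partial W(k)\setminus\{1,-e^{-\pi/k}\}$ together with those two horizontal rays, ruling out the exceptional points $0$ and $1$ by noting $|g(w)|\ge|\ln(|w|/|1-w|)|$ is unbounded along sequences tending to them, and computing the two one-sided limits along the slit to get imaginary parts $\pm\pi$. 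Once the boundary is identified, your closing argument (monotonicity gives the bijection $r_{k,1}:(0,\pi)\to(-\ln(1+e^{\pi/k}),+\infty)$, symmetry and vertical convexity give the sub-graph description) matches the paper's final assembly, which uses that $p_k$ is continuous and decreasing and that $0\in g(\tilde W)$.
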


% For a more intuitive understanding, we point out that for sufficiently small $k > 0$
% \[p_k(x) \approx \min\{ k(1 + e^x) \ln(1 + e^{-x}), \pi \}. \] 

Our next goal is to find the convex hull of the region $\ln(1 + \mu W(k))$. Denote this convex hull by $\hat{W}(k)$. 
By convexity of $\hat{W}(k)$, we have $\sum_{i=1}^{\Delta -1} \ln(1 + \mu w_i) \in \{(\Delta -1) z: z \in \hat{W}(k)\}$ for $\vw \in W(k)^{\Delta -1}$.

\begin{lemma}\label{lemma:region2}
    Define 
    \begin{align*}
        r_{k, 2}(\theta) &\triangleq \re\left(\ln \left(1 + \mu e^{-\theta/k + \i\theta}\right) \right) = \frac{1}{2} \ln \left( 1 + \mu^2 e^{-2\theta / k} + 2\mu e^{-\theta / k} \cos(\theta) \right), \\
        h_{k, 2}(\theta) &\triangleq \im\left(\ln \left(1 + \mu e^{-\theta/k + \i\theta}\right) \right) = \arctan \left( \frac{\mu e^{-\theta/k} \sin(\theta)}{1 + \mu e^{-\theta/k} \cos(\theta)} \right).
    \end{align*}
    Then there exists $\theta_0 \in (0, \pi)$ such that $r_{k, 2}(\theta)$ is strictly monotone on $(0, \theta_0)$ and $(\theta_0, \pi)$ respectively. 
    Denote $g_{k, 2}^{-1}$ to be the inverse function of $r_{k, 1}$ such that $r_{k, 2}^{-1}(r_{k, 2}(\theta)) = \theta$ for $\theta \in (0, \theta_0)$, it holds that
    \begin{align*}
        \ln(1 + \open{(\mu W(k))}) \subseteq \hat{W}(k) \triangleq
        \begin{cases}
        \{z: \re(z) \in (r_{k, 2}(\theta_0), r_{k, 2}(0)), |\im(z)| < h_{k, 2}(r_{k, 2}^{-1}(\re(z))) \}, ~~\text{ if } \mu > 0,  \\
        \{z: \re(z) \in (r_{k, 2}(0), r_{k, 2}(\theta_0)), |\im(z)| < -h_{k, 2}(r_{k, 2}^{-1}(\re(z))) \}, ~~\text{ if } \mu \in [-1, 0),
        \end{cases}
    \end{align*}
    where the function $|h_{k, 2}(r_{k, 2}^{-1}(x))|$ is concave, which implies that $\hat{W}(k)$ is convex.
\end{lemma}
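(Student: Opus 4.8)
The plan is to reduce the whole statement to the behaviour of the single analytic arc bounding $1+\mu\,\open{W(k)}$, and then extract monotonicity and a curvature sign from two one-line differentiations. Since $W(k)=\{w:|w|\le e^{-|\arg w|/k}\}\cup\{0\}$ is conjugation-symmetric and $0$ is interior to it, the non-zero part of $\partial W(k)$ is the union of the logarithmic spirals $\theta\mapsto e^{c\theta}$ and $\theta\mapsto e^{\overline c\theta}$, $\theta\in[0,\pi]$, with $c\triangleq-\tfrac1k+\i$. Complex conjugation commutes with $w\mapsto1+\mu w$ and with the chosen branch of $\ln$, so it suffices to study the curve $G(\theta)\triangleq\ln(1+\mu e^{c\theta})=r_{k,2}(\theta)+\i h_{k,2}(\theta)$ on $[0,\pi]$. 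First I would check that $1+\mu\,\open{W(k)}$ omits $(-\infty,0]$, so that $\ln$ is holomorphic and injective there: for $\mu\in[-1,0]$ one has $\re(1+\mu w)\ge1+\mu\ge0$ on $W(k)$ and $1+\mu w=0$ only at $\mu=-1$, $w=1\in\partial W(k)$; for $\mu>0$ the real values of $1+\mu W(k)$ are $[1-\mu e^{-\pi/k},\,1+\mu]$, so $(-\infty,0]$ is missed as soon as $\mu e^{-\pi/k}<1$, which is the condition the $k$ produced in Lemma~\ref{lem:set-cover:W-k} satisfies. Then $w\mapsto1+\mu w$ and $\ln$ being injective and holomorphic, the invariance-of-domain argument behind Lemma~\ref{lem:boundary} gives $\partial\bigl(\ln(1+\mu\,\open{W(k)})\bigr)=\{G(\theta),\overline{G(\theta)}:\theta\in[0,\pi]\}$; since $0$ lies in both $\ln(1+\mu\,\open{W(k)})$ and (as one reads off the formulas) $\hat W(k)$, the inclusion $\ln(1+\mu\,\open{W(k)})\subseteq\hat W(k)$ reduces to placing the curve $G$ inside $\overline{\hat W(k)}$. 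At the degenerate value $\mu=-1$, where $0\in\partial(1-W(k))$, one runs the same reasoning directly on $\open{W(k)}$.

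Second, the monotonicity of $r_{k,2}$. Writing $u(\theta)\triangleq1+\mu e^{c\theta}$ one has $u'=c(u-1)$, hence $G'=c(u-1)/u$ and, after taking real parts, $r_{k,2}'(\theta)=\mu e^{-\theta/k}\,q(\theta)/|u(\theta)|^2$ with
\[q(\theta)\triangleq-\tfrac1k\bigl(\mu e^{-\theta/k}+\sqrt{1+k^2}\,\cos(\theta-\alpha)\bigr),\qquad\alpha\triangleq\arctan k\in(0,\tfrac\pi2),\]
so $\sign r_{k,2}'(\theta)=\sign(\mu)\,\sign q(\theta)$. The function $q$ can vanish only where $\cos(\theta-\alpha)$ has the sign opposite to $\mu$, which confines the zeros to one subinterval $I$ of $(0,\pi)$; on $I$, taking logarithms turns $q(\theta)=0$ into an affine function of $\theta$ equal to a strictly concave function of $\theta$, so their difference is strictly convex, has at most two zeros, and the boundary values of this convex function (using $\mu\ge-1$ at one end and $\mu e^{-\pi/k}<1$ at the other) leave exactly one zero $\theta_0$. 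Outside $I$, $q$ keeps the constant sign that makes $\sign r_{k,2}'$ flip exactly once, at $\theta_0$. Hence $r_{k,2}$ is strictly decreasing on $(0,\theta_0)$ and strictly increasing on $(\theta_0,\pi)$ when $\mu>0$, and strictly increasing then strictly decreasing when $\mu\in[-1,0)$; evaluating at $0,\theta_0,\pi$ shows the range of $r_{k,2}$ on $[0,\pi]$ equals the $\re$-interval appearing in $\hat W(k)$ and makes $r_{k,2}^{-1}$ on $(0,\theta_0)$ well defined.

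Third, the convexity of $\hat W(k)$, i.e.\ concavity of $x\mapsto|h_{k,2}(r_{k,2}^{-1}(x))|$. Differentiating once more, $u''=c^2(u-1)$ gives $G''=c^2(u-1)/u^2$, so $\overline{G'}G''=\dfrac{|c|^2|u-1|^2}{|u|^4}\,c\overline u$; since $\im(\overline{G'}G'')=r_{k,2}'h_{k,2}''-r_{k,2}''h_{k,2}'$, expanding $\im(c\overline u)$ yields
\[\sign\bigl(r_{k,2}'(\theta)\,h_{k,2}''(\theta)-r_{k,2}''(\theta)\,h_{k,2}'(\theta)\bigr)=\sign\Bigl(1+\mu e^{-\theta/k}\bigl(\tfrac1k\sin\theta+\cos\theta\bigr)\Bigr).\]
By the inverse-function and chain rules, $(|h_{k,2}(r_{k,2}^{-1}(x))|)''$ has sign $\pm\sign(r_{k,2}')\,\sign(r_{k,2}'h_{k,2}''-r_{k,2}''h_{k,2}')$, and in both the $\mu>0$ and $\mu\in[-1,0)$ cases the bookkeeping reduces concavity to the single inequality $1+\mu e^{-\theta/k}(\tfrac1k\sin\theta+\cos\theta)\ge0$ on $(0,\theta_0)$. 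The key simplification is
\[\frac{d}{d\theta}\Bigl[e^{-\theta/k}\bigl(\tfrac1k\sin\theta+\cos\theta\bigr)\Bigr]=-\frac{1+k^2}{k^2}\,e^{-\theta/k}\sin\theta\le0\qquad(\theta\in[0,\pi]),\]
so $e^{-\theta/k}(\tfrac1k\sin\theta+\cos\theta)$ decreases from $1$ to $-e^{-\pi/k}$; hence the bracketed quantity is $\ge1+\mu\ge0$ when $\mu<0$ and $\ge1-\mu e^{-\pi/k}\ge0$ when $\mu>0$, exactly under the conditions already in force. Concavity follows, and $\hat W(k)$, being the region strictly between a concave graph and its reflection across the real axis, is convex.

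Finally, the containment. For $\theta\in(0,\theta_0)$ the point $G(\theta)$ lies by construction on the graph $x\mapsto\pm|h_{k,2}(r_{k,2}^{-1}(x))|$ bounding $\hat W(k)$. For $\theta\in(\theta_0,\pi)$, $r_{k,2}(\theta)$ is still inside the $\re$-interval of $\hat W(k)$ by the monotonicity above, and $|h_{k,2}(\theta)|\le|h_{k,2}(r_{k,2}^{-1}(r_{k,2}(\theta)))|$ follows from a short comparison using $h_{k,2}(\pi)=0$ together with the monotonicity of $|h_{k,2}|$ past $\theta_0$ (read off from $h_{k,2}'=\im G'=\im\bigl(c(u-1)/u\bigr)$); with the interior-point check and Lemma~\ref{lem:boundary} this gives $\ln(1+\mu\,\open{W(k)})\subseteq\hat W(k)$. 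The step I expect to be the main obstacle is the curvature-sign inequality of the third part together with making all estimates uniform over the admissible $(\mu,k)$; but the scalar identity for $\tfrac{d}{d\theta}[e^{-\theta/k}(\tfrac1k\sin\theta+\cos\theta)]$ collapses it onto the same inequality $\mu e^{-\pi/k}<1$ needed in the monotonicity step, so what remains is essentially the branch bookkeeping and the degenerate case $\mu=-1$.
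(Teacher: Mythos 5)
Much of your outline parallels the paper's proof: the boundary identification via injectivity of $w\mapsto 1+\mu w$ and $\ln$, the sign analysis of $r_{k,2}'$ (your $q$ is, up to a positive factor, the paper's $\psi(\theta)=ke^{\theta/k}\sin\theta+e^{\theta/k}\cos\theta+\mu$, whose unique zero the paper gets more directly from $\psi'\propto\cos\theta$), and the curvature computation; indeed your identity $\im(\overline{G'}G'')=\frac{|c|^2|u-1|^2}{|u|^4}\im(c\bar u)$ together with the monotonicity of $e^{-\theta/k}(\tfrac1k\sin\theta+\cos\theta)$ is a cleaner route to the paper's positivity lemma (part (ii)) than its case analysis.

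The genuine gap is the final containment step. For $\theta\in(\theta_0,\pi)$ you must show $|h_{k,2}(\theta)|\le \varphi_1(r_{k,2}(\theta))$, where $\varphi_1(x)=|h_{k,2}(r_{k,2}^{-1}(x))|$ is the graph of the first branch, over the common range between $r_{k,2}(\theta_0)$ and $r_{k,2}(\pi)$. The two facts you invoke, $h_{k,2}(\pi)=0$ and monotonicity of $|h_{k,2}|$ on $(\theta_0,\pi)$, give only $|h_{k,2}(\theta)|\le|h_{k,2}(\theta_0)|$, which is strictly weaker: $\varphi_1(x)$ falls strictly below $|h_{k,2}(\theta_0)|$ as soon as $x\ne r_{k,2}(\theta_0)$, and since $r_{k,2}'(\theta_0)=0$ both branches leave the common point $(r_{k,2}(\theta_0),|h_{k,2}(\theta_0)|)$ with vertical tangent, so even the local comparison near $\theta_0$ is a second-order matter that monotonicity cannot decide. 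The paper closes exactly this step by observing that on $(\theta_0,\pi)$ the same curvature identity (now with $\psi^3<0$, equivalently with $r_{k,2}'$ of the opposite sign) makes the second branch $\varphi_2$ \emph{convex} in $x$, and then comparing both branches against the straight chords through $(r_{k,2}(\theta_0),|h_{k,2}(\theta_0)|)$ that vanish at $r_{k,2}(\pi)$ and at $r_{k,2}(0)$ respectively: $\varphi_2$ lies below the shorter chord, which lies below the longer chord, which lies below the concave $\varphi_1$. Your identity is valid on all of $(0,\pi)$, so this repair fits inside your framework, but as written the containment is unproven. Two smaller points: for $\mu=-1$ your convex-difference argument for uniqueness of $\theta_0$ starts from boundary value exactly $0$, so you need e.g. the negative one-sided derivative $-\tfrac1k-k$ there; and the reduction from ``boundary curve inside $\overline{\hat W(k)}$'' to ``region inside $\hat W(k)$'' should be justified by convexity of $\hat W(k)$ (a separating-line/ray argument, or the paper's explicit description of the image), not merely by the two sets sharing the point $0$.
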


% For sufficiently small $k > 0$, we have
% \[h_{k, 2}(g_{k, 2}^{-1}(x)) \approx k (1 - e^{-r}) \ln \left( \frac{\mu}{e^r - 1} \right). \] 

For real $\eta > 0$, by convexity of $\hat{W}(k)$, 
% for $w_i \in W(k)$, we have
% \begin{align*}
%     \ln(\eta) + \sum_{i=1}^{\Delta-1} \ln(1 + \mu w_i) \in \bigg\{z: & \re(z) \in [\ln(\eta) + (\Delta - 1) r_{k, 2}(\theta_0), \ln(\eta) + (\Delta - 1) r_{k, 2}(0)], \\ 
%     & |\im(z)| \le (\Delta - 1) \left\vert h_{k, 2}\left(g_{k, 2}^{-1}\left(\frac{\re(z) - \ln(\eta)}{\Delta - 1}\right)\right)\right\vert \bigg\}.
% \end{align*}
to ensure that $\ln(\eta) + \sum_{i=1}^{\Delta-1} \ln(1 + \mu w_i) \in \ln(T(W(k)^{\circ}))$, we need to prove that for $\theta \in [0, \theta_0]$
$(\Delta - 1) |h_{k, 2}\left(\theta\right)| < p_k\big( \ln(\eta) + (\Delta - 1) g_{k, 2}(\theta) \big)$.

% \begin{remark}
    For $\mu = -1$, notice that $\lim_{\theta \to 0} |h_{k, 2}\left(\theta\right)| = \arctan k$, $\lim_{\theta \to 0} g_{k, 2}\left(\theta\right) = -\infty$ and $\lim_{x \to -\infty} p_k\left(x\right) = \pi$, the above inequality is still meaningful.
% \end{remark}

\begin{lemma}\label{lemma:compare}
    For $\eta_0 > 0, \mu \ge -1$, if one of three condition defined in Theorem \ref{thm:set-cover} holds, then there exists $k > 0$ such that for all $\eta \in [0, \eta_0]$ and $\theta \in [0, \theta_0]$, we have $(\Delta - 1) |h_{k, 2}\left(\theta\right)| < p_k\big( \ln(\eta) + (\Delta - 1) g_{k, 2}(\theta) \big)$.
\end{lemma}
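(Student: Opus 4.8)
The plan is to reduce the target inequality $(\Delta - 1)|h_{k,2}(\theta)| < p_k\big(\ln(\eta) + (\Delta-1)g_{k,2}(\theta)\big)$ to a statement that survives in the limit $k \to 0^+$, and then choose $k$ small enough by a continuity/compactness argument. First I would record the small-$k$ asymptotics of the three building blocks from Lemmas \ref{lemma:region1} and \ref{lemma:region2}. As $k \to 0$, the imaginary-part function $h_{k,2}(\theta)$ is $O(k)$ uniformly in $\theta$ (since $\mu e^{-\theta/k}\sin\theta \to 0$ for $\theta$ bounded away from $0$, while near $\theta = 0$ one has $h_{k,2}(\theta)\approx \mu\theta/k \cdot (1 + O(\theta))$ and the relevant scale is $\theta \sim k$). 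Simultaneously $g_{k,2}(\theta) = r_{k,2}(\theta)$ stays order $1$ because $\re\ln(1+\mu e^{-\theta/k+\i\theta})$ is governed by $\mu e^{-\theta/k}\cos\theta$, which for $\theta = \Theta(k)$ is a genuine order-one quantity $\mu e^{-c}$ with $c = \theta/k$. The natural change of variable is therefore $c = \theta/k \in [0, \theta_0/k]$; in this variable $h_{k,2}$ scales like $k$ times a fixed profile and $g_{k,2}$ becomes (to leading order) $r_2(c) \triangleq \tfrac12\ln(1 + \mu^2 e^{-2c} + 2\mu e^{-c})= \ln|1 + \mu e^{-c}|$.

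With this rescaling, the left side $(\Delta-1)|h_{k,2}(\theta)|$ is $O(k)$, so the inequality will hold automatically on any compact $c$-range where the right side $p_k(\ln\eta + (\Delta-1)r_2(c))$ is bounded below by a positive constant. The only place the right side can degenerate is where its argument $\ln\eta + (\Delta-1)r_2(c)$ approaches the right endpoint $r_{k,1}(0^+)$ of the domain of $p_k$ — i.e., where $\phi^{-1}$ of the image point approaches the boundary of $W(k)$ from inside. So the heart of the matter is the \emph{endpoint inequality}: one needs, for the extremal value of $c$ (equivalently, the extremal $\theta$, namely $\theta_0$ for $\mu>0$ and $\theta\to 0$ for $\mu = -1$), that $\ln\eta_0 + (\Delta-1)r_2(c) < r_{k,1}(0^+)$ with a $k$-independent gap, together with the strengthening that near that endpoint the blow-up of $p_k$ (which behaves like $-\tfrac12\ln(\text{distance})$, cf. the formula for $r_{k,1}$) still dominates the linear-in-$k$ left side. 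Unpacking $r_{k,1}(0^+)$: as $\theta \to 0$, $r_{k,1}(\theta) = -\tfrac12\ln(e^{2\theta/k}+1-2e^{\theta/k}\cos\theta)$, and setting $\theta = k\varepsilon$ gives $r_{k,1} \to -\tfrac12\ln(\varepsilon^2 + k^2\varepsilon^2/\ldots)$, so $r_{k,1}$ ranges over all of $\R$ and its local structure matches $h_{k,1}$, whose limit at the endpoint is $\pi$ minus something — this is exactly the regime covered by the three explicit conditions on $(\mu,\eta_0,\Delta)$ in Theorem \ref{thm:set-cover}. I would verify that Condition~1 ($\mu\in[\mu_2,\mu_1]$) makes the endpoint slack hold for \emph{all} $\eta_0$ (because $(\Delta-1)r_2$ at the extremal $c$ is $\le 0$, or more precisely bounded by the quantity that keeps the argument of $p_k$ to the left of the singularity), while Conditions~2 and~3 are precisely the thresholds $\eta_1,\eta_2$ obtained by solving $\ln\eta_0 + (\Delta-1)r_2(c^\star) = r_{k,1}(0^+)$ in the $k\to 0$ limit; the defining equations for $x_1$ and $x_2$ in the theorem are the stationarity/endpoint conditions of exactly this computation (with $x = c$, $\mu e^{-x}$ appearing everywhere, and the factor $\psi^{-1}$ encoding the inverse of $r_2$-type maps).

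The execution order I propose: (i) fix the substitution $c = \theta/k$ and write $h_{k,2},g_{k,2},p_k$ as $k\to 0$ limits plus controlled errors, uniformly on $c\in[0,C]$ for each fixed $C$; (ii) handle the ``bulk'' $\theta$ bounded away from both $0$ and $\theta_0$, where $h_{k,2} = O(e^{-\theta/k})$ is exponentially small and $p_k$ of the argument is bounded below trivially; (iii) handle the critical endpoint by the quantitative estimate above, invoking the monotonicity of $r_{k,1},h_{k,1}$ (Lemma~\ref{lemma:region1}) and of $r_{k,2}$ on $(0,\theta_0)$ (Lemma~\ref{lemma:region2}), so that ``the worst case is the endpoint'' is rigorous; (iv) feed in whichever of the three conditions holds to conclude the endpoint slack, then pick $k$ small enough for the error terms from step (i) to be absorbed, and for the $\eta\in[0,\eta_0]$ range use monotonicity in $\eta$ (the argument of $p_k$ is increasing in $\ln\eta$ for $\mu>0$, and the $\mu=-1$ degenerate case is handled by the stated limits $g_{k,2}\to-\infty$, $p_k\to\pi$, $h_{k,2}\to\arctan k$). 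The main obstacle is step (iii): controlling the competition between the logarithmic blow-up of $p_k$ at the boundary of $W(k)$ and the $O(k)$ left-hand side, \emph{uniformly} as $k\to 0$ — in particular making sure the ``distance to singularity'' in the $p_k$ argument does not itself shrink with $k$ faster than the left side, which is why the conditions are stated as strict inequalities $\eta_0 < \eta_1(\Delta,\mu)$ rather than $\le$. The rest is continuity and the monotonicity lemmas already proved.
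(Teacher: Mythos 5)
Your proposal correctly identifies the rescaling $x=\theta/k$ and the fact that the left side is $O(k)$, and your treatment of the bulk regime ($\theta/k$ large, left side exponentially small) and of the degenerate $\mu=-1$, $\theta/k$ small regime is in the same spirit as the paper. But the key step (your (iii)) rests on a misreading of how the right side behaves, and this is a genuine gap. The function $p_k$ never blows up: it is bounded by $\pi$, and for \emph{every} fixed, order-one argument $x$ one has $r_{k,1}^{-1}(x)\approx k\,\ln(1+e^{-x})$ and hence
\begin{align*}
p_k(x)\;=\;h_{k,1}\bigl(r_{k,1}^{-1}(x)\bigr)\;\approx\; k\,\frac{\hat{x}_1\,e^{\hat{x}_1}}{e^{\hat{x}_1}-1},\qquad \hat{x}_1=\ln\bigl(1+e^{-x}\bigr),
\end{align*}
so the right side is itself $\Theta(k)$ throughout the critical window $\theta=\Theta(k)$, not only ``near an endpoint'' (note also $r_{k,1}(0^+)=+\infty$, so the endpoint equation you propose to solve is vacuous, and there is no $-\tfrac12\ln(\mathrm{distance})$ blow-up of $p_k$ to exploit). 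Consequently the inequality is a competition between two quantities of the same order $k$ over the whole compact range $x\in[0,M]$, and one must compare their first-order coefficients: this is exactly why the paper forms $S(k,x)$, notes $\lim_{k\to 0^+}S(k,x)=0$, and reduces the lemma (via the analogue of Lemma \ref{lem:bounded-2-spin:H}) to showing
\begin{align*}
H(x)\;=\;\frac{\hat{x}_1(x)\,e^{\hat{x}_1(x)}}{e^{\hat{x}_1(x)}-1}\;-\;\frac{(\Delta-1)\,|\mu|\,x}{e^{x}+\mu}\;>\;0
\end{align*}
for all $x$ in the window, with $\hat{x}_1(x)=\ln\bigl(1+\eta^{-1}(\mu e^{-x}+1)^{1-\Delta}\bigr)$.

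Proving $H(x)>0$ under each of the three hypotheses is the bulk of the paper's argument and is entirely absent from your plan: it requires locating the critical point $\check{x}$ of $H$ via the auxiliary functions $\psi(y)=\ln(1+y)/y$ and $\tilde{G}_{\mu}(x)=\mu x/(\mu+e^{x})$, showing $H$ is automatically positive when $|\mu|$ is below the thresholds $\mu_1,\mu_2$ (Condition 1), and otherwise translating $H(\check{x})>0$ into the explicit bounds $\eta_0<\eta_1(\Delta,\mu)$ and $\eta_0<\eta_2(\Delta,\mu)$, which is where the defining equations for $x_1$ and $x_2$ in Theorem \ref{thm:set-cover} actually come from. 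Your assertion that these conditions arise from ``solving $\ln\eta_0+(\Delta-1)r_2(c^\star)=r_{k,1}(0^+)$ in the $k\to0$ limit'' is not correct, and the endpoint heuristic cannot recover them. A smaller but related imprecision: in your bulk step the right side is not ``bounded below trivially'' by a constant; the correct comparison there is $|h_{k,2}|\le\arctan(k)/(\Delta-1)$ against the uniform lower bound $p_k>\arctan k$ coming from $h_{k,1}(\theta)>\arctan k$ (Lemma \ref{lem:h-k-1}), which again pits two $O(k)$ quantities against each other.
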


\section{Conclusion and Discussion}

We give complex contraction proof of the zero-free region for counting problems which have tree expansions. We focus on obtaining the zero-freeness other than proving correlation decays. Our region can deal with unbounded degree cases and for 2-spin systems we can also deal with unbounded degree case. As corollary of our proof, we can have FPTAS for counting problems on bounded degree graphs and quasi-polynomial time algorithms for problems on unbounded degree graphs.

We expect our technique can be applied to more problems. For example, those problems which can be solved by the Asano contraction techniques.

For some of the counting problems, there is the uniqueness condition for justifying the tightness of the correlation decay property. An ambitious goal for our approach is: what is the tight condition for a contraction region out of which the zero-freeness does not hold?

\appendix
\section{Discussion on $\hat{x}_d$ and $\lambda_c(d)$}
\label{app:hat-x}
Recall the definition of $\hat{x}_d$ and $\lambda_c(d)$:
\begin{align}
    \hat{x}_d &\triangleq \begin{cases}
        \frac{-1 - \beta\gamma + d(1 - \beta\gamma) - \sqrt{(-1 - \beta\gamma + d (1 - \beta\gamma))^2 - 4\beta\gamma}}{2 \gamma}, ~~~&\text{ for } \gamma > 0, \\
        \frac{\beta}{d - 1}, ~~~&\text{ for } \gamma = 0,
    \end{cases} \\
    \lambda_c(d) &\triangleq \hat{x}_d \left(\frac{\hat{x}_d + \beta}{\gamma \hat{x}_d + 1}\right)^d,
\end{align}
where $\sqrt{\beta\gamma} \le \frac{d-1}{d+1}, \beta > 0, \gamma \ge 0$. 

Denote $\bar{d} \triangleq \frac{1 + \sqrt{\beta\gamma}}{1 - \sqrt{\beta\gamma}}$. Note that $\sqrt{\beta\gamma} \le \frac{d-1}{d+1}$ holds if and only if $d \ge \bar{d}$. 

We provide some properties of $\hat{x}_d$ and $\lambda_c(d)$ in following lemmas.
\begin{lemma}
    For $d \ge \bar{d}$, we have $(-1 - \beta\gamma + d (1 - \beta\gamma))^2 - 4\beta\gamma \ge 0$. Further if $\gamma > 0$, there also holds $\hat{x}_d \le \sqrt{\frac{\beta}{\gamma}}$. 
\end{lemma}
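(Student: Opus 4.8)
The plan is to introduce the abbreviation $t \triangleq \sqrt{\beta\gamma} \in [0,1)$ together with
\[
B \triangleq -1-\beta\gamma + d(1-\beta\gamma),
\]
and to reduce \emph{both} assertions of the lemma to the single elementary inequality $B \ge 2\sqrt{\beta\gamma}$, which I expect to be the only substantive step.

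\textbf{The discriminant.} First I would note that the hypothesis $d \ge \bar d = \frac{1+t}{1-t}$, equivalently $\sqrt{\beta\gamma}\le\frac{d-1}{d+1}$ (which in particular forces $\beta\gamma<1$, since otherwise $\bar d$ is not finite and positive), is the same as $d(1-t)\ge 1+t$. Multiplying this by $1+t>0$ gives $d(1-t^2)\ge(1+t)^2$, hence
\[
B = d(1-t^2)-(1+t^2)\;\ge\;(1+t)^2-(1+t^2)=2t\ge 0 .
\]
Therefore $(-1-\beta\gamma+d(1-\beta\gamma))^2-4\beta\gamma = B^2-4t^2=(B-2t)(B+2t)\ge 0$, with equality exactly at $d=\bar d$. (When $\gamma=0$ this is simply $(d-1)^2\ge 0$.)

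\textbf{The bound on $\hat x_d$.} Now suppose $\gamma>0$, so $\beta\gamma>0$ and, by the previous step, $B^2-4\beta\gamma\ge 0$. I would observe that $\hat x_d = \frac{B-\sqrt{B^2-4\beta\gamma}}{2\gamma}$ is precisely the smaller root of the quadratic $\gamma x^2 - Bx + \beta$; call its two (real) roots $x_-=\hat x_d\le x_+$. Since $B\ge 2t>0$ and $B>\sqrt{B^2-4\beta\gamma}$ strictly (because $\beta\gamma>0$), we get $x_-=\hat x_d>0$, so $0<x_-\le x_+$. As the product of the roots is $x_-x_+=\beta/\gamma$, this yields $\hat x_d^2=x_-^2\le x_-x_+=\beta/\gamma$, i.e. $\hat x_d\le\sqrt{\beta/\gamma}$.

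\textbf{On the difficulty.} There is no genuine obstacle: the lemma is a one-line reduction to $B\ge 2\sqrt{\beta\gamma}$ followed by a product-of-roots argument. The only points needing care are the sign bookkeeping (that $B\ge 0$ so the radicand is nonnegative, and that $\hat x_d>0$ so that deducing $\hat x_d\le\sqrt{\beta/\gamma}$ from $\hat x_d^2\le\beta/\gamma$ is valid) and the harmless split between the $\gamma=0$ case, where only the discriminant claim is made, and $\gamma>0$. Alternatively, the second part can be finished by squaring the equivalent inequality $B-2\sqrt{\beta\gamma}\le\sqrt{B^2-4\beta\gamma}$ directly, which again collapses to $B\ge 2\sqrt{\beta\gamma}$.
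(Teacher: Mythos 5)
Your proposal is correct, and its substance coincides with the paper's: both arguments hinge on the single inequality $-1-\beta\gamma+d(1-\beta\gamma)\ge 2\sqrt{\beta\gamma}$, which you derive by multiplying $d(1-t)\ge 1+t$ through by $1+t$ and the paper derives by evaluating the left-hand side at $d=\bar d$ — the same computation up to rearrangement. The only real divergence is in the second assertion: the paper rewrites $\hat{x}_d\le\sqrt{\beta/\gamma}$ as a chain of equivalences, isolating the square root and squaring to reduce it to the same inequality, whereas you identify $\hat{x}_d$ as the smaller root of $\gamma x^2-Bx+\beta$ and use positivity of that root together with the product of roots $x_-x_+=\beta/\gamma$ to get $\hat{x}_d^2\le\beta/\gamma$. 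The Vieta route is marginally cleaner (no sign bookkeeping when squaring), while the paper's route makes explicit that equality structure collapses back to $B\ge 2\sqrt{\beta\gamma}$; you note this alternative yourself, and all sign checks you flag ($B\ge 2t>0$ for $\gamma>0$, hence $\hat{x}_d>0$) are handled correctly.
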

\begin{proof}
    Note that for $d \ge \bar{d}$, we have $\beta\gamma < 1$ and
    \begin{align}
    \notag -1 - \beta\gamma + d (1 - \beta\gamma) &\ge -1 - \beta\gamma + \bar{d} (1 - \beta\gamma) = -1 - \beta\gamma + \frac{(1 + \sqrt{\beta\gamma})(1 - \beta\gamma)}{1 - \sqrt{\beta\gamma}} \\
    \label{eq:hat-x-d} &= -1 - \beta\gamma + (1 + \sqrt{\beta\gamma})^2 = 2\sqrt{\beta\gamma}.
    \end{align}
    
    Moreover, observe that
    \begin{align*}
        &\qquad\qquad \hat{x}_d \le \sqrt{\beta/\gamma}, \\
        &\iff -1 - \beta\gamma + d(1 - \beta\gamma) - \sqrt{(-1 - \beta\gamma + d (1 - \beta\gamma))^2 - 4\beta\gamma} \le 2\sqrt{\beta\gamma}, \\
        &\iff -1 - \beta\gamma + d(1 - \beta\gamma) - 2\sqrt{\beta\gamma} \le \sqrt{(-1 - \beta\gamma + d (1 - \beta\gamma))^2 - 4\beta\gamma}, \\
        &\iff 2\sqrt{\beta\gamma} \le -1 - \beta\gamma + d(1 - \beta\gamma),
    \end{align*}
    which has been proved in Equation (\ref{eq:hat-x-d}).
\end{proof}

\begin{lemma}
    $\hat{x}_d$ is strictly decreasing with respect to $d$.
\end{lemma}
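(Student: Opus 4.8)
The plan is to treat the two branches of the definition of $\hat{x}_d$ separately, with essentially all of the work lying in the case $\gamma > 0$. When $\gamma = 0$ the claim is immediate: $\hat{x}_d = \beta/(d-1)$ with $\beta > 0$, and the constraint $\sqrt{\beta\gamma} \le \frac{d-1}{d+1}$ forces $d > 1$, so $d \mapsto \beta/(d-1)$ is strictly decreasing. For $\gamma > 0$ I would first invoke the preceding lemma, which tells us that on the relevant range $d \ge \bar{d}$ we have $\beta\gamma < 1$ and
\[
 t := -1 - \beta\gamma + d(1 - \beta\gamma) \ge 2\sqrt{\beta\gamma} > 0,
\]
so in particular $t^2 - 4\beta\gamma \ge 0$ and the square root in the definition is real.

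The key step is to rationalize the numerator. Since $\bigl(t - \sqrt{t^2 - 4\beta\gamma}\bigr)\bigl(t + \sqrt{t^2 - 4\beta\gamma}\bigr) = 4\beta\gamma$, we obtain
\[
 \hat{x}_d = \frac{t - \sqrt{t^2 - 4\beta\gamma}}{2\gamma} = \frac{2\beta}{\,t + \sqrt{t^2 - 4\beta\gamma}\,}.
\]
Now I would note that $d \mapsto t = d(1 - \beta\gamma) - 1 - \beta\gamma$ is strictly increasing, because $1 - \beta\gamma > 0$; that $s \mapsto \sqrt{s^2 - 4\beta\gamma}$ is nondecreasing on $[2\sqrt{\beta\gamma}, \infty)$; and hence that the denominator $t + \sqrt{t^2 - 4\beta\gamma}$ is a strictly increasing, strictly positive function of $d$. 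Since the numerator $2\beta$ is a positive constant, $\hat{x}_d$ is strictly decreasing in $d$, which is the claim.

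I do not anticipate a genuine obstacle here. One can equally well argue by direct differentiation: with $u = 1 - \beta\gamma > 0$ one computes
\[
 \frac{\mathrm{d}\hat{x}_d}{\mathrm{d}d} = \frac{u}{2\gamma}\left(1 - \frac{t}{\sqrt{t^2 - 4\beta\gamma}}\right) < 0,
\]
since $0 \le \sqrt{t^2 - 4\beta\gamma} < t$ whenever $\beta\gamma > 0$. The only mildly delicate point is the endpoint $d = \bar{d}$, where $t^2 - 4\beta\gamma = 0$ and this derivative blows up; there one simply uses the rationalized form above, which is manifestly well behaved and strictly monotone up to and including $d = \bar{d}$ (where it gives $\hat{x}_{\bar{d}} = \sqrt{\beta/\gamma}$), or appeals to continuity. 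The single external input is that $\beta\gamma < 1$ on the range $d \ge \bar{d}$, which is exactly what the previous lemma provides.
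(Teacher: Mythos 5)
Your proposal is correct and follows essentially the same route as the paper: the paper's proof is exactly the rationalization $\hat{x}_d = \frac{2\beta}{-1-\beta\gamma + d(1-\beta\gamma) + \sqrt{(-1-\beta\gamma+d(1-\beta\gamma))^2 - 4\beta\gamma}}$ (the paper's displayed $2\beta\gamma$ under the root is a typo for $4\beta\gamma$, as you have it), with monotonicity of the denominator in $d$ left implicit. Your additional remarks on the $\gamma=0$ branch, the endpoint $d=\bar{d}$, and the alternative derivative computation are fine but not needed beyond what the paper records.
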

\begin{proof}
    Just note that
    \begin{align*}
        \hat{x}_d = \frac{2 \beta}{-1- \beta\gamma + d(1 - \beta\gamma) + \sqrt{(-1- \beta\gamma + d(1 - \beta\gamma))^2 - 2 \beta\gamma}}.
    \end{align*}
\end{proof}

\begin{lemma}\label{lem:lambda-c}
    For $\beta \le 1$, $\lambda_c(d)$ is strictly decreasing when $d > \bar{d}$. 
    For $\beta > 1$, $\lambda_c(d)$ is strictly decreasing on $(\bar{d}, d_c)$ and is strictly increasing on $(d_c, +\infty)$.
\end{lemma}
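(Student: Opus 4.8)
The plan is to reparametrize everything by $x=\hat{x}_d$. Squaring out the radical in the definition of $\hat{x}_d$ shows that, for $\gamma>0$, it is the root in $(0,\sqrt{\beta/\gamma}\,]$ of $\gamma x^2+(1+\beta\gamma)x+\beta=d(1-\beta\gamma)x$, i.e.\ it inverts the map
\[
d(x)\triangleq\frac{(\gamma x+1)(x+\beta)}{(1-\beta\gamma)x}=\frac{1}{1-\beta\gamma}\Bigl(\gamma x+1+\beta\gamma+\frac{\beta}{x}\Bigr).
\]
A one-line computation gives $d'(x)=(\gamma x^2-\beta)/\bigl((1-\beta\gamma)x^2\bigr)$, so $d(\cdot)$ is a strictly decreasing bijection from $(0,\sqrt{\beta/\gamma}\,]$ onto $[\bar{d},\infty)$ (for $\gamma=0$, with the convention $\sqrt{\beta/\gamma}=\infty$, from $(0,\infty)$ onto $(\bar{d},\infty)=(1,\infty)$), which is consistent with the two preceding lemmas and pins down the correct branch via $\hat{x}_{\bar d}=\sqrt{\beta/\gamma}$. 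Hence $d\mapsto\hat{x}_d$ is a strictly decreasing bijection from $(\bar{d},\infty)$ onto $(0,\sqrt{\beta/\gamma})$, and $\lambda_c(d)=\Lambda(\hat{x}_d)$ where $\Lambda(x)\triangleq x\bigl(\tfrac{x+\beta}{\gamma x+1}\bigr)^{d(x)}$; it therefore suffices to determine the monotonicity of $\Lambda$ on $(0,\sqrt{\beta/\gamma})$ and then compose.

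Next I differentiate $\ln\Lambda(x)=\ln x+d(x)\ell(x)$ with $\ell(x)\triangleq\ln\tfrac{x+\beta}{\gamma x+1}$. The key cancellation is $\ell'(x)=\tfrac{1-\beta\gamma}{(x+\beta)(\gamma x+1)}$, so $d(x)\ell'(x)\equiv\tfrac1x$, and the remaining terms collapse to
\[
x\,\frac{d}{dx}\ln\Lambda(x)=2-\frac{\beta-\gamma x^2}{(1-\beta\gamma)x}\,\ell(x),
\]
whose sign on $(0,\sqrt{\beta/\gamma})$ (where the prefactor is positive) agrees with the sign of $\Phi(x)\triangleq\tfrac{2(1-\beta\gamma)x}{\beta-\gamma x^2}-\ell(x)$. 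Differentiating once more, $\Phi'(x)=(1-\beta\gamma)\bigl[\tfrac{2(\beta+\gamma x^2)}{(\beta-\gamma x^2)^2}-\tfrac{1}{(x+\beta)(\gamma x+1)}\bigr]$, and on $(0,\sqrt{\beta/\gamma})$ one has $(\beta-\gamma x^2)^2\le\beta^2$ while $2(\beta+\gamma x^2)(x+\beta)(\gamma x+1)\ge2\beta^2>\beta^2$, so $\Phi'>0$ throughout. The boundary values are $\Phi(x)\to-\ln\beta$ as $x\to0^+$ and $\Phi(x)\to+\infty$ as $x\to\sqrt{\beta/\gamma}^-$ (the first term blows up, $\ell$ stays bounded; the $\gamma=0$ case $\Phi(x)=\tfrac{2x}{\beta}-\ln(x+\beta)\to+\infty$ is analogous).

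With this in hand the conclusion is immediate. If $\beta\le1$, then $\Phi>0$ on all of $(0,\sqrt{\beta/\gamma})$ (it is increasing with $\lim_{x\to0^+}\Phi\ge0$, strictly so for $x>0$ even when $\beta=1$), hence $\Lambda$ is strictly increasing; composing with the strictly decreasing $d\mapsto\hat{x}_d$ shows $\lambda_c$ is strictly decreasing on $(\bar{d},\infty)$. If $\beta>1$, then $\Phi$ has a unique zero $x_c\in(0,\sqrt{\beta/\gamma})$, negative to its left and positive to its right, so $\Lambda$ is strictly decreasing on $(0,x_c)$ and strictly increasing on $(x_c,\sqrt{\beta/\gamma})$; setting $d_c\triangleq d(x_c)$ and using that $\hat{x}_d$ maps $(\bar{d},d_c)$ onto $(x_c,\sqrt{\beta/\gamma})$ and $(d_c,\infty)$ onto $(0,x_c)$, decreasingly, gives that $\lambda_c$ is strictly decreasing on $(\bar{d},d_c)$ and strictly increasing on $(d_c,\infty)$.

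The only genuinely delicate point is spotting the reparametrization and the identity $d(x)\ell'(x)\equiv1/x$, which is what renders $\tfrac{d}{dx}\ln\Lambda$ tractable; after that the argument reduces to the elementary inequality $2(\beta+\gamma x^2)(x+\beta)(\gamma x+1)>(\beta-\gamma x^2)^2$ together with careful bookkeeping of the two boundary limits and of the $\gamma=0$ and $\beta=1$ edge cases.
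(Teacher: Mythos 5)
Your proof is correct and follows essentially the same route as the paper's: after your reparametrization $x=\hat{x}_d$, your criterion function $\Phi$ coincides exactly with the paper's $\psi(x)=\frac{2(1-\beta\gamma)x}{\beta-\gamma x^2}+\ln\frac{\gamma x+1}{x+\beta}$, and both arguments rest on $\psi'>0$, the endpoint values $-\ln\beta$ and $+\infty$ on $(0,\sqrt{\beta/\gamma})$, and the strict decrease of $d\mapsto\hat{x}_d$ with $\hat{x}_{\bar d}=\sqrt{\beta/\gamma}$. The only cosmetic differences are that the paper differentiates $\lambda_c$ directly in $d$ (so it never needs the explicit inverse $d(x)$) and verifies $\psi'>0$ via an exact factorization of the numerator, whereas you use the cruder but sufficient bound $2(\beta+\gamma x^2)(x+\beta)(\gamma x+1)\ge 2\beta^2>(\beta-\gamma x^2)^2$.
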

\begin{proof}
    We first suppose that $\gamma > 0$. Note that for $d > \bar{d}$, we have $\hat{x}_d \in (0, \sqrt{\beta/\gamma})$ and
    \begin{align*}
        \lambda_c'(d) &= \left(\frac{\hat{x}_d + \beta}{\gamma \hat{x}_d + 1}\right)^d \left(\tilde{x}'(d) + \hat{x}_d \left(\ln\left(\frac{\hat{x}_d + \beta}{\gamma \hat{x}_d + 1}\right) + \frac{d(1 - \beta\gamma)\tilde{x}'(d)}{(\beta + \hat{x}_d)(1 + \gamma \hat{x}_d)}\right)\right) \\
        &= \left(\frac{\hat{x}_d + \beta}{\gamma \hat{x}_d + 1}\right)^d \left( 2 \tilde{x}'(d) + \hat{x}_d \ln\left(\frac{\hat{x}_d + \beta}{\gamma \hat{x}_d + 1}\right) \right) \\
        &= \left(\frac{\hat{x}_d + \beta}{\gamma \hat{x}_d + 1}\right)^d \left( -\frac{2 (1 - \beta\gamma) \hat{x}_d}{\sqrt{(-1-\beta\gamma + d(1 - \beta\gamma))^2 - 4 \beta\gamma}}  + \hat{x}_d \ln\left(\frac{\hat{x}_d + \beta}{\gamma \hat{x}_d + 1}\right) \right) \\
        &= \hat{x}_d \left(\frac{\hat{x}_d + \beta}{\gamma \hat{x}_d + 1}\right)^d \left( -\frac{2 (1 - \beta\gamma) \hat{x}_d}{\beta - \gamma \hat{x}_d^2} + \ln\left(\frac{\hat{x}_d + \beta}{\gamma \hat{x}_d + 1}\right) \right)
    \end{align*}
    
    Now consider the function 
    \begin{align*}
        \psi(x) \triangleq \frac{2 (1 - \beta\gamma) x}{\beta - \gamma x^2}  +  \ln\left(\frac{\gamma x + 1}{x + \beta}\right).
    \end{align*}
    Note that
    \begin{align*}
        \psi'(x) = \frac{(1 - \beta  \gamma) (\beta + 2 \beta\gamma x  + \gamma^2 x) (\beta + 2 x + \gamma  x^2)}{(\beta +x) (\gamma  x+1) \left(\beta -\gamma  x^2\right)^2} > 0.
    \end{align*}
     
    When $\beta \le 1$, we have $\psi(x) > \psi(0) = -\ln(\beta) \ge 0$ for $x \in (0, \sqrt{\beta/\gamma})$, and consequently $\lambda_c'(d) < 0$ for $d > \bar{d}$. 
     
    On the other hand, when $\beta > 1$, observe that $\psi(x) \to + \infty$ for $x \to \sqrt{\beta/\gamma}$ and $\psi(0) = -\ln(\beta) < 0$ which implies that there exists $x_c \in (0, \sqrt{\beta/\gamma})$ such that $\psi(x) > 0$ on $(0, x_c)$ and $\psi(x) < 0$ on $(x_c, \sqrt{\beta/\gamma})$. 
    
    Together with $\hat{x}_d$ is strictly decreasing with respect to $d$ and $\hat{x}_{\bar{d}} = \sqrt{\beta/\gamma}$, $\lim_{d \to +\infty} \hat{x}_d = 0$, there exists $d_c > \bar{d}$ such that $\hat{x}_{d_c} = x_c$. Then it is clearly that $\lambda_c'(d) < 0$ on $(\bar{d}, d_c)$ and $\lambda_c'(d) > 0$ on $(d_c, +\infty)$. 
     
    At last, we remark that with replacing $\sqrt{\beta/\gamma}$ to $+\infty$, all above discussion is still true for $\gamma = 0$. 
\end{proof}

\begin{lemma}\label{lem:hat-x:bar-x}
    For $\sqrt{\beta\gamma} \le \frac{d-1}{d+1}$ and $\lambda \in (0, \lambda_c(d))$, let $\bar{x} \in (0, +\infty)$ be the fix point of $f(x) = \lambda \left(\frac{\gamma x + 1}{x + \beta}\right)^d$, then we have $\bar{x} < \hat{x}_d$ and
    \begin{align*}
        \frac{d(1 - \beta\gamma) \bar{x}}{(\gamma \bar{x} + 1)(\bar{x} + \beta)} < 1.
    \end{align*}
\end{lemma}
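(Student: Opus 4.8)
The plan is to study the function $f(x) = \lambda\left(\frac{\gamma x + 1}{x + \beta}\right)^d$ on $(0, +\infty)$ and compare its fixpoint $\bar x$ with $\hat x_d$. First I would observe the basic monotonicity: since $\beta > 0$, $\gamma \ge 0$ and $\beta\gamma < 1$ (which follows from $\sqrt{\beta\gamma} \le \frac{d-1}{d+1} < 1$), the map $x \mapsto \frac{\gamma x + 1}{x + \beta}$ is strictly decreasing on $(0,+\infty)$, so $f$ is strictly decreasing there; hence $f$ has a unique fixpoint $\bar x$, and $x < \bar x \iff f(x) > x$. Thus to prove $\bar x < \hat x_d$ it suffices to show $f(\hat x_d) < \hat x_d$. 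By definition $\lambda_c(d) = \hat x_d\left(\frac{\hat x_d + \beta}{\gamma \hat x_d + 1}\right)^d$, so $f(\hat x_d) = \lambda\left(\frac{\gamma \hat x_d + 1}{\hat x_d + \beta}\right)^d = \hat x_d \cdot \frac{\lambda}{\lambda_c(d)}$, and since $\lambda \in (0, \lambda_c(d))$ this is strictly less than $\hat x_d$. This immediately gives $\bar x < \hat x_d$. (One should also note $\lambda_c(d) > 0$, so the ratio makes sense, and handle the $\gamma = 0$ case where $\hat x_d = \frac{\beta}{d-1}$ in the same way.)

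Next I would establish the inequality $\frac{d(1-\beta\gamma)\bar x}{(\gamma\bar x + 1)(\bar x + \beta)} < 1$. The natural route is to show the function $g(x) \triangleq \frac{d(1-\beta\gamma)x}{(\gamma x + 1)(x + \beta)}$ is increasing on $(0, \hat x_d)$ and that $g(\hat x_d) \le 1$; combined with $\bar x < \hat x_d$ this finishes the proof. For monotonicity, differentiate: $g'(x)$ has the sign of the numerator of $\frac{d}{dx}\frac{x}{(\gamma x+1)(x+\beta)} = \frac{d}{dx}\frac{x}{\gamma x^2 + (1+\beta\gamma)x + \beta}$, whose numerator works out to $\beta - \gamma x^2$; since the previous lemma gives $\hat x_d \le \sqrt{\beta/\gamma}$, we have $\beta - \gamma x^2 > 0$ on $(0,\hat x_d)$, so $g$ is strictly increasing there. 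It then remains to check $g(\hat x_d) \le 1$, i.e. $d(1-\beta\gamma)\hat x_d \le (\gamma \hat x_d + 1)(\hat x_d + \beta)$. Expanding, this is $\gamma \hat x_d^2 + (1 + \beta\gamma - d(1-\beta\gamma))\hat x_d + \beta \ge 0$. But $\hat x_d$ is precisely the smaller root of the quadratic $\gamma t^2 - (-1-\beta\gamma + d(1-\beta\gamma))t + \beta = \gamma t^2 + (1+\beta\gamma - d(1-\beta\gamma))t + \beta$, so in fact $g(\hat x_d) = 1$ exactly (the quadratic vanishes at $\hat x_d$), which is even cleaner — for $\gamma = 0$ one checks $g(\hat x_d) = \frac{d \cdot \frac{\beta}{d-1}}{\frac{\beta}{d-1}+\beta} = \frac{d}{1 + (d-1)} = 1$ likewise. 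Therefore $g(\bar x) < g(\hat x_d) = 1$, giving the strict inequality.

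The main obstacle is the sign analysis of $g'(x)$ together with the need for the bound $\hat x_d \le \sqrt{\beta/\gamma}$ — one must invoke the earlier lemma carefully, and in the $\gamma = 0$ degenerate case reinterpret $\sqrt{\beta/\gamma}$ as $+\infty$ (as the paper does elsewhere) so that $\beta - \gamma x^2 \equiv \beta > 0$ trivially. A secondary subtlety is merely bookkeeping: confirming that $\bar x$ exists and is unique in $(0,+\infty)$, which follows from $f$ being continuous, strictly decreasing, with $f(0^+) = \lambda \gamma^{-d}\cdot$(finite) $> 0$ when $\gamma>0$ (and $f(0) = \lambda\beta^{-d} > 0$ when $\gamma = 0$) and $f(x) \to \lambda \gamma^d < \lambda < \bar x$ as $x \to \infty$ when $\gamma > 0$ (resp. $f(x)\to 0$ when $\gamma=0$), so a fixpoint exists by the intermediate value theorem and is unique by strict monotonicity. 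Once these structural facts are in place, both conclusions follow from the single observation that at $x = \hat x_d$ the fixpoint equation is "critically satisfied" at $\lambda = \lambda_c(d)$ and $g$ equals exactly $1$.
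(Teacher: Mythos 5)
Your proposal is correct and takes essentially the same route as the paper: $\bar{x} < \hat{x}_d$ follows from monotonicity of $f$ together with $f(\hat{x}_d) = \hat{x}_d\,\lambda/\lambda_c(d) < \hat{x}_d$, and the second inequality from monotonicity of $\psi(x) = \frac{d(1-\beta\gamma)x}{(\gamma x + 1)(x+\beta)}$ on $(0,\sqrt{\beta/\gamma})$ (all of $(0,+\infty)$ when $\gamma = 0$) combined with $\psi(\hat{x}_d) = 1$, which holds because $\hat{x}_d$ is a root of the associated quadratic. The only blemish is the incidental claim $\lambda\gamma^d < \lambda$, which can fail when $\gamma > 1$; it is harmless, since existence of $\bar{x}$ is assumed in the statement and in any case follows from boundedness of $f$ as $x \to +\infty$.
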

\begin{proof}
    By $\beta\gamma < 1$, the function $f(x)$ is decreasing on $[0, +\infty)$. If $\bar{x} \ge \hat{x}_d$, then we have
    \begin{align*}
        f(\bar{x}) \le f(\hat{x}_d) < \lambda_c(d) \left(\frac{\gamma \hat{x}_d + 1}{\hat{x}_d + \beta}\right)^d = \hat{x}_d \le \bar{x} = f(\bar{x}).
    \end{align*}
    Therefore, we have $\bar{x} < \hat{x}_d$. 
    
    Denote \[\psi(x) = \frac{d (1 - \beta\gamma) x}{(\gamma x + 1)(x + \beta)}.\]
    
    For $\gamma = 0$, the function $\psi(x) = \frac{d x}{x + \beta}$ is strictly increasing on $[0, +\infty)$ and therefore
    \[
    \psi(\bar{x}) < \psi(\hat{x}(d)) = \frac{d \hat{x}_d}{\hat{x}_d + \beta} = 1.
    \]
    
    On the other hand, for $\gamma > 0$, the function $\psi(x)$ is strictly increasing when $x < \sqrt{\beta/\gamma}$ and strictly decreasing when $x > \sqrt{\beta/\gamma}$.
    Consequently, together with $\bar{x} < \hat{x}_d \le \sqrt{\beta/\gamma}$, we have
    \begin{align*}
        \psi(\bar{x}) < \psi(\hat{x}_d) = 1.  
    \end{align*}
\end{proof}

\section{Discussion on $\check{x}_d$}
\label{app:check-x}
Recall the definition of $\check{x}_d$:
\begin{align*}
    \check{x}_d \triangleq \frac{1 - \beta \gamma + d(1 + \beta \gamma) +\sqrt{(1 - \beta \gamma + d(1 + \beta \gamma))^2-4 \beta  \gamma  d^2}}{2 \beta  d},
\end{align*}
which is the largest root of the equation 
\begin{align*}
    \psi(x) \triangleq \beta d x^2 - (1 - \beta \gamma + d(1 + \beta \gamma)) x + \gamma d = 0.
\end{align*}

We provide some properties of $\check{x}_d$ in following lemmas.
\begin{lemma}\label{lem:check-x-beta}
    For $\beta \gamma < 1$ and $\beta > 0$, we have $\check{x}_d > \frac{1}{\beta}$.
\end{lemma}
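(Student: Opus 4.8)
The plan is to prove the inequality $\check{x}_d > \frac{1}{\beta}$ by evaluating the quadratic $\psi(x) = \beta d x^2 - (1 - \beta\gamma + d(1+\beta\gamma))x + \gamma d$ at $x = 1/\beta$ and using the sign pattern of $\psi$ together with the fact that $\check{x}_d$ is the \emph{largest} root. Since the leading coefficient $\beta d > 0$, the parabola opens upward, so $\psi$ is negative strictly between its two roots and positive outside. Hence it suffices to show two things: first, that $\psi$ actually has two real roots (which follows from the discriminant expression under the square root in the definition of $\check{x}_d$ being nonnegative — this should be checked, or may have been established already in the surrounding text), and second, that $\psi(1/\beta) < 0$, which would place $1/\beta$ strictly between the two roots and therefore strictly below the larger one, $\check{x}_d$.

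\textbf{Key steps.} First I would compute
\[
    \psi(1/\beta) = \beta d \cdot \frac{1}{\beta^2} - \bigl(1 - \beta\gamma + d(1+\beta\gamma)\bigr)\cdot\frac{1}{\beta} + \gamma d = \frac{d}{\beta} - \frac{1 - \beta\gamma + d(1+\beta\gamma)}{\beta} + \gamma d.
\]
Multiplying through by $\beta > 0$, the sign of $\psi(1/\beta)$ matches the sign of
\[
    d - 1 + \beta\gamma - d - d\beta\gamma + \beta\gamma d = -1 + \beta\gamma = \beta\gamma - 1.
\]
Under the hypothesis $\beta\gamma < 1$ this is strictly negative, so $\psi(1/\beta) < 0$. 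Next I would note that $1/\beta > 0$ while the smaller root of $\psi$ — call it $x_-$ — satisfies $x_- x_+ = \gamma d/(\beta d) = \gamma/\beta \ge 0$ and $x_- + x_+ = (1 - \beta\gamma + d(1+\beta\gamma))/(\beta d) > 0$, so both roots are nonnegative (indeed $x_- \ge 0$). Combining $\psi(1/\beta) < 0$ with the upward-opening parabola gives $x_- < 1/\beta < x_+ = \check{x}_d$, which is the claim.

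\textbf{Main obstacle.} There is essentially no hard analytic obstacle here; the computation is a one-line evaluation and a sign check. The only point requiring mild care is confirming that the discriminant $(1 - \beta\gamma + d(1+\beta\gamma))^2 - 4\beta\gamma d^2$ is nonnegative so that $\check{x}_d$ is a genuine real root — but this is automatic, since $\psi(1/\beta) < 0$ already witnesses that $\psi$ takes a negative value and hence (being an upward parabola) must have two distinct real roots straddling $1/\beta$. So the discriminant positivity comes for free from the same computation, and the proof is complete once these observations are assembled.
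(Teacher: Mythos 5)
Your proof is correct and follows essentially the same route as the paper: evaluate $\psi(1/\beta)$, observe it equals $-(1-\beta\gamma)/\beta<0$, and conclude from the upward-opening parabola that $1/\beta$ lies strictly below the largest root $\check{x}_d$. Your extra remark that $\psi(1/\beta)<0$ already certifies two distinct real roots is a nice touch but not needed beyond what the paper's ``property of quadratic functions'' invokes.
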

\begin{proof}
    Note that
    \begin{align*}
        \psi(1/\beta) = \frac{d - (1 - \beta\gamma + d(1 + \beta\gamma)) + d\beta\gamma}{\beta} = - \frac{1 - \beta\gamma}{\beta} < 0,
    \end{align*}
    which results in the desired result together with the property of quadratic functions. 
\end{proof}

\begin{lemma}\label{lem:check-x-1}
    For $\beta \gamma < 1$, $\check{x}_d < 1$ holds if and only if $\beta > 1$ and $d > \frac{1 - \beta\gamma}{(\beta - 1)(1 - \gamma)}$.
\end{lemma}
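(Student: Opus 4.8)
The plan is to read off the inequality $\check{x}_d<1$ from the sign of the quadratic $\psi(x)=\beta d x^{2}-(1-\beta\gamma+d(1+\beta\gamma))x+\gamma d$ at the point $x=1$ together with the location of its vertex, exactly as one analyzes any upward‑opening quadratic. First I would record two basic facts: $\psi$ opens upward because $\beta d>0$, and its discriminant equals $(1-\beta\gamma)^{2}(1+d^{2})+2d(1-\beta\gamma)(1+\beta\gamma)$ (expand $(1-\beta\gamma+d(1+\beta\gamma))^{2}-4\beta\gamma d^{2}$ and use $(1+\beta\gamma)^{2}-4\beta\gamma=(1-\beta\gamma)^{2}$), which is strictly positive for $\beta\gamma<1$ and $d\ge 1$; hence $\psi$ has two distinct real roots and $\check{x}_d$ is the larger one. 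For such a quadratic, an elementary argument gives the criterion: $\check{x}_d<1$ iff $\psi(1)>0$ and the vertex abscissa $x^{\ast}:=\frac{1-\beta\gamma+d(1+\beta\gamma)}{2\beta d}$ satisfies $x^{\ast}<1$ (if $\psi(1)>0$ but $x^{\ast}\ge 1$ then $1$ lies to the left of the smaller root, so $\check{x}_d>1$).

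Second, I would compute $\psi(1)=d(\beta-1)(1-\gamma)-(1-\beta\gamma)$, using $\beta-1-\beta\gamma+\gamma=(\beta-1)(1-\gamma)$. Since $1-\beta\gamma>0$, the requirement $\psi(1)>0$ already forces $(\beta-1)(1-\gamma)>0$, i.e.\ either ($\beta>1$, $\gamma<1$) or ($\beta<1$, $\gamma>1$); and when it holds it is equivalent to $d>\frac{1-\beta\gamma}{(\beta-1)(1-\gamma)}$. Note also that under $\beta\gamma<1$ the case $\beta>1$ automatically entails $\gamma<1$, so the fraction in the statement is well defined and positive.

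Third, I would dispose of the spurious branch $\beta<1,\gamma>1$: there $2\beta-1-\beta\gamma=\beta(2-\gamma)-1<0$, so the vertex condition $x^{\ast}<1$, equivalently $1-\beta\gamma<d(2\beta-1-\beta\gamma)$, must fail (its left side is positive, its right side negative); hence $\check{x}_d\ge 1$ in this branch, and by the criterion of Step~1 the assumption $\check{x}_d<1$ forces $\beta>1$. Conversely, for $\beta>1$ (hence $\gamma<1$) one has $2\beta-1-\beta\gamma=\beta(2-\gamma)-1>\beta-1>0$, so $x^{\ast}<1\iff d>\frac{1-\beta\gamma}{2\beta-1-\beta\gamma}$; and the identity $(2\beta-1-\beta\gamma)-(\beta-1)(1-\gamma)=\beta-\gamma>0$ shows $\frac{1-\beta\gamma}{(\beta-1)(1-\gamma)}>\frac{1-\beta\gamma}{2\beta-1-\beta\gamma}$, so $\psi(1)>0$ already implies the vertex condition. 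Therefore, for $\beta>1$, $\check{x}_d<1\iff\psi(1)>0\iff d>\frac{1-\beta\gamma}{(\beta-1)(1-\gamma)}$, which together with Step~3 is exactly the claim.

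I expect no real obstacle here; the only point needing a little care is the bookkeeping in Step~3—ruling out the branch $\beta<1,\gamma>1$ and checking that the vertex condition is \emph{implied by} $\psi(1)>0$ rather than being an extra constraint once $\beta>1$—and this is precisely where the identity $(2\beta-1-\beta\gamma)-(\beta-1)(1-\gamma)=\beta-\gamma$ does the work. (The conclusion is consistent with Lemma~\ref{lem:check-x-beta}: for $\beta\le 1$ that lemma already gives $\check{x}_d>1/\beta\ge 1$.)
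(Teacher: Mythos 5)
Your proposal is correct and takes essentially the same route as the paper: both arguments reduce the claim to the sign of the upward-opening quadratic $\psi$ at $x=1$, using $\psi(1)=d(\beta-1)(1-\gamma)-(1-\beta\gamma)$, together with the location of the axis of symmetry, so that $\check{x}_d<1$ iff $\psi(1)>0$ once the vertex lies left of $1$. The only cosmetic difference is that the paper dispatches $\beta\le 1$ by citing Lemma \ref{lem:check-x-beta} and checks the vertex condition via the uniform bound $\frac{1-\beta\gamma+d(1+\beta\gamma)}{2\beta d}\le\frac{2d}{2\beta d}=\frac{1}{\beta}<1$ (valid for $d\ge 1$), whereas you rule out the $\beta<1<\gamma$ branch and verify the vertex condition directly by comparing thresholds through the identity $(2\beta-1-\beta\gamma)-(\beta-1)(1-\gamma)=\beta-\gamma$.
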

\begin{proof}
    If $\beta \le 1$, then by Lemma \ref{lem:check-x-beta}, there holds $\check{x}_d > 1/\beta \ge 1$. 
    On the other hand, for $\beta > 1$, note that the axis of symmetry of the graph of $\psi(x)$ is
    \begin{align*}
        \frac{1 - \beta \gamma + d(1 + \beta\gamma)}{2 \beta d} \le \frac{2 d}{2 \beta d} < 1.
    \end{align*}
    Consequently, by the property of quadratic functions, $\check{x}_d < 1$ holds if and only if $\psi(1) > 0$, i.e., 
    \begin{align*}
        \psi(1) &= d(\beta + \gamma) - (1 - \beta\gamma + d(1 + \beta\gamma)) \\
        &= d(\beta - 1)(1 - \gamma) - (1 - \beta\gamma) > 0,
    \end{align*}
    which implies the desired result. 
\end{proof}

\begin{lemma}\label{lem:check-x-2}
    For $\beta \gamma < 1$ and $\beta > 1$, $\check{x}_d < \frac{1 - \gamma}{\beta - 1}$ holds if and only if $d > \frac{(\beta - 1)(1 - \gamma)}{1 - \beta\gamma}$.
\end{lemma}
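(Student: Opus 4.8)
Lemma \ref{lem:check-x-2} asks for an if-and-only-if characterization of when the largest root $\check{x}_d$ of the quadratic $\psi(x) = \beta d x^2 - (1 - \beta\gamma + d(1+\beta\gamma))x + \gamma d$ satisfies $\check{x}_d < \frac{1-\gamma}{\beta-1}$, under the standing hypotheses $\beta\gamma < 1$ and $\beta > 1$. The plan is to mimic exactly the proof of Lemma \ref{lem:check-x-1}: since the leading coefficient $\beta d$ is positive, the largest root lies strictly below a value $t$ precisely when $\psi(t) > 0$ \emph{and} $t$ lies to the right of the axis of symmetry. So first I would evaluate $\psi$ at $t = \frac{1-\gamma}{\beta-1}$, simplify, and check that the resulting inequality is equivalent to $d > \frac{(\beta-1)(1-\gamma)}{1-\beta\gamma}$; then I would separately confirm that $\frac{1-\gamma}{\beta-1}$ exceeds the axis of symmetry $\frac{1-\beta\gamma+d(1+\beta\gamma)}{2\beta d}$, so that the sign condition on $\psi(t)$ really does characterize ``largest root $< t$''.

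Concretely, the first step is the substitution. Writing $t = \frac{1-\gamma}{\beta-1}$ and clearing the denominator $(\beta-1)^2$, one gets
\begin{align*}
    (\beta-1)^2 \psi(t) = \beta d (1-\gamma)^2 - (1-\beta\gamma+d(1+\beta\gamma))(1-\gamma)(\beta-1) + \gamma d (\beta-1)^2.
\end{align*}
Grouping the terms with a factor of $d$ and those without, the $d$-free part is $-(1-\beta\gamma)(1-\gamma)(\beta-1)$, and the coefficient of $d$ should collapse to $(1-\gamma)^2 \cdot$(something nice); I expect the whole expression to factor as $(1-\gamma)\big(d(1-\beta\gamma) \cdot c - (1-\beta\gamma)(\beta-1)\big)$ for a positive constant $c$ (plausibly $c = 1-\gamma$ or $1$), so that $\psi(t) > 0 \iff d > \frac{(\beta-1)(1-\gamma)}{1-\beta\gamma}$ after dividing by the positive quantities. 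This is the only genuine computation, and it is routine polynomial algebra — the main thing to be careful about is sign bookkeeping, since $1-\beta\gamma > 0$, $\beta - 1 > 0$, and $1 - \gamma > 0$ (the last because $\beta\gamma < 1 < \beta$ forces $\gamma < 1$) are all needed to divide through without flipping the inequality.

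The second step handles the axis-of-symmetry side condition, and I would dispose of it exactly as in Lemma \ref{lem:check-x-1}: the axis of symmetry is $\frac{1-\beta\gamma+d(1+\beta\gamma)}{2\beta d}$, which is bounded above by $\frac{2d}{2\beta d} = \frac{1}{\beta} < 1 \le \frac{1-\gamma}{\beta-1}$ whenever $\beta - 1 \le 1 - \gamma$; more carefully, since Lemma \ref{lem:check-x-beta} already gives $\check{x}_d > \frac{1}{\beta}$ and the axis of symmetry sits strictly between the two roots, it suffices to know $\frac{1-\gamma}{\beta-1} > \frac{1}{\beta}$, i.e. $\beta(1-\gamma) > \beta - 1$, i.e. $1 > \beta\gamma$, which is our hypothesis. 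Hence the location of $\check{x}_d$ relative to $\frac{1-\gamma}{\beta-1}$ is governed purely by the sign of $\psi\big(\frac{1-\gamma}{\beta-1}\big)$, and combining with Step 1 finishes both directions of the equivalence. I do not anticipate any real obstacle here; the only ``hard part'' is making sure the factorization in Step 1 comes out with the clean constant claimed, and if it does not factor as hoped one can always just expand both sides of the target inequality $\beta d(1-\gamma)^2 + \gamma d(\beta-1)^2 > (1-\beta\gamma+d(1+\beta\gamma))(1-\gamma)(\beta-1)$ directly and compare coefficients of $d$.
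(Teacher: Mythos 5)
Your proposal is correct and follows essentially the same route as the paper: both arguments show the axis of symmetry $\frac{1-\beta\gamma+d(1+\beta\gamma)}{2\beta d}\le\frac{1}{\beta}<\frac{1-\gamma}{\beta-1}$ (the last inequality being exactly $\beta\gamma<1$), so that $\check{x}_d<\frac{1-\gamma}{\beta-1}$ holds if and only if $\psi\bigl(\frac{1-\gamma}{\beta-1}\bigr)>0$, and then reduce that sign condition to the stated bound on $d$. The algebra you defer does come out cleanly, namely $(\beta-1)^2\,\psi\bigl(\frac{1-\gamma}{\beta-1}\bigr)=(1-\beta\gamma)\bigl(d(1-\beta\gamma)-(\beta-1)(1-\gamma)\bigr)$ (so the coefficient of $d$ is $(1-\beta\gamma)^2$, not the $(1-\gamma)$-type factor you guessed), which is exactly the paper's displayed identity and yields the claimed threshold.
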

\begin{proof}
    The axis of symmetry of the graph of $\psi(x)$ is
    \begin{align*}
        \frac{1 - \beta \gamma + d(1 + \beta\gamma)}{2 \beta d} \le \frac{2 d}{2 \beta d} < \frac{1 - \gamma}{\beta - 1}.
    \end{align*}
    Consequently, by the property of quadratic functions, $\check{x}_d < \frac{1 - \gamma}{\beta - 1}$ holds if and only if $\psi(1) > 0$, i.e., 
    \begin{align*}
        d \beta (1 - \gamma)^2 - (1 - \beta\gamma + d(1 + \beta\gamma))(1 - \gamma)(\beta - 1) + d \gamma (\beta - 1)^2 
        = (1 - \beta\gamma) \left(d(1 - \beta\gamma) - (\beta - 1)(1 - \gamma)\right) > 0,
    \end{align*}
    which implies the desired result. 
\end{proof}

\section{Proof of Lemma \ref{lem:bounded-2-spin:computation-tree}}
We write $Z_{\tau_{S}}(G, \lambda)$ for $Z_{\tau_{S}}^{\beta,\gamma}(G, \lambda)$ and $R_{\tau_S}(G, v)$ for $R_{\tau_S}^{\beta,\gamma,\lambda}(G, v)$ in this section.
\begin{proof}
    Firstly, it is clearly that 
    \begin{align*}
        R_{\tau_S}(G, v) = \frac{Z_{\tau_{S}, \sigma(v)=1}(G, \lambda)}{Z_{\tau_{S}, \sigma(v)=0}(G, \lambda)}
        = \frac{1}{\lambda^{d-1}} \frac{Z_{\tau_{S} \cup \sigma_0}(\tilde{G}, \lambda)}{Z_{\tau_{S} \cup \sigma_{d+1}}(\tilde{G}, \lambda)} 
        = \frac{1}{\lambda^{d-1}} \prod_{i=1}^d \frac{Z_{\tau_{S} \cup \sigma_i, \sigma(\tilde{v}_i)=1}(\tilde{G}, \lambda)}{Z_{\tau_{S} \cup \sigma_i, \sigma(\tilde{v}_i)=0}(\tilde{G}, \lambda)},
    \end{align*}
    where $\sigma_0(\tilde{v}_j) = 1, \sigma_{d+1}(\tilde{v}_j) = 0$ for $j \in [d]$. 
    
    Note that in graph $\tilde{G}$, $\tilde{v}_i$ only occurs in the edge $(\tilde{v}_i, v_i)$, hence for free vertex $v_i$ we have 
    \begin{align*}
        Z_{\tau_{S} \cup \sigma_i, \sigma(\tilde{v}_i)=1}(\tilde{G}, \lambda) &= \lambda (Z_{\tau_{S} \cup \sigma_i, \sigma(v_i)=0}(\tilde{G} - \tilde{v}_i, \lambda) + \gamma Z_{\tau_{S} \cup \sigma_i, \sigma(v_i)=1}(\tilde{G} - \tilde{v}_i, \lambda)), \\
        Z_{\tau_{S} \cup \sigma_i, \sigma(\tilde{v}_i)=0}(\tilde{G}, \lambda) &= \beta Z_{\tau_{S} \cup \sigma_i, \sigma(v_i)=0}(\tilde{G} - \tilde{v}_i, \lambda) + Z_{\tau_{S} \cup \sigma_i, \sigma(v_i)=1}(\tilde{G} - \tilde{v}_i, \lambda),
    \end{align*}
    and consequently
    \begin{align}
        \frac{Z_{\tau_{S} \cup \sigma_i, \sigma(\tilde{v}_i)=1}(\tilde{G}, \lambda)}{Z_{\tau_{S} \cup \sigma_i, \sigma(\tilde{v}_i)=0}(\tilde{G}, \lambda)} = \lambda\frac{\gamma R_{\tau_S \cup \sigma_i}(G_i, v_i) + 1}{R_{\tau_S \cup \sigma_i}(G_i, v_i) + \beta}.
    \end{align}
    If $v_i$ is pinned to $1$ by $\tau_S$, then there holds
    \begin{align*}
        \frac{Z_{\tau_{S} \cup \sigma_i, \sigma(\tilde{v}_i)=1}(\tilde{G}, \lambda)}{Z_{\tau_{S} \cup \sigma_i, \sigma(\tilde{v}_i)=0}(\tilde{G}, \lambda)} = \frac{\lambda \gamma Z_{\tau_{S} \cup \sigma_i}(G_i, \lambda)}{Z_{\tau_{S} \cup \sigma_i}(G_i, \lambda)} = \lambda \gamma = \lambda\frac{\gamma R_{\tau_S \cup \sigma_i}(G_i, v_i) + 1}{R_{\tau_S \cup \sigma_i}(G_i, v_i) + \beta},
    \end{align*}
    where we have recalled that $R_{\tau_S \cup \sigma_i}(G_i, v_i) = \infty$ for $v_i \in S$ and $\tau_S(v_i) = 1$. 
    
    Similarly, if $v_i \in S$ and $\tau_S(v_i) = 0$, then it holds that
    \begin{align*}
        \frac{Z_{\tau_{S} \cup \sigma_i, \sigma(\tilde{v}_i)=1}(\tilde{G}, \lambda)}{Z_{\tau_{S} \cup \sigma_i, \sigma(\tilde{v}_i)=0}(\tilde{G}, \lambda)} = \frac{\lambda}{\beta} = \lambda\frac{\gamma R_{\tau_S \cup \sigma_i}(G_i, v_i) + 1}{R_{\tau_S \cup \sigma_i}(G_i, v_i) + \beta}.
    \end{align*}
    
    Therefore, we can conclude that 
    \begin{align*}
        R_{\tau_S}(G, v) = \lambda \prod_{i=1}^d \frac{\gamma R_{\tau_S \cup \sigma_i}(G_i, v_i) + 1}{R_{\tau_S \cup \sigma_i}(G_i, v_i) + \beta},
    \end{align*}
    which is the desired result.
\end{proof}

\section{Proof of Lemma \ref{lem:bounded-2-spin:contraction-zero-free}}
We write $Z_{\tau_{S}}(G, \lambda)$ for $Z_{\tau_{S}}^{\beta,\gamma}(G, \lambda)$ and $R_{\tau_S}^{\lambda}(G, v)$ for $R_{\tau_S}^{\beta,\gamma,\lambda}(G, v)$ in this section.
\begin{proof}
    % First observe that we can assume that all vertices in $S$ have degree $1$. Otherwise, for $u \in S$ and $\deg(u) = D > 1$, we replace $u$ in each edge $e_i = (u, u_i)$ adjacent to $u$ with an independent duplicate $\tilde{u}_i$ for $i = 1, \dots, D$ and denote this new graph by $\tilde{G}$, that is $\tilde{G} = \{\tilde{V}, \tilde{E}\}$ where $\tilde{V} = V \backslash \{u\} \cup \{\tilde{u}_i\}_{i=1}^D$, $\tilde{E} = E \backslash \{e_i\}_{i=1}^D \cup \{(\tilde{u}_i, u_i)\}_{i=1}^D$. With $\tilde{S} = S \backslash \{u\} \cup \{\tilde{u}_i\}_{i=1}^D$ and \begin{align*}
    %     \tilde{\tau}_{\tilde{S}}(v) = \tau_S(v) ~~~&\text{ for } v \in S \backslash \{u\}, \\
    %     \tilde{\tau}_{\tilde{S}}(\tilde{u}_i) = \tau_S(u) ~~~&\text{ for } i = 1, \dots, D,
    % \end{align*}
    % it is apparently that 
    % \begin{align*}
    %     Z_{\tilde{\tau}_{\tilde{S}}}(\tilde{G}, \lambda) = \begin{cases}
    %     Z_{\tau_S}(G, \lambda), ~~~&\text{ for } \tau_S(u) = 0, \\
    %     \lambda^{D-1} Z_{\tau_S}(G, \lambda), ~~&\text{ for } \tau_S(u) = 1.
    %     \end{cases}
    % \end{align*}
    % We just need to take $Z_{\tilde{\tau}_{\tilde{S}}}(\tilde{G}, \lambda)$ into consideration.
    
    % Next, we can assume that $G$ is connected according to $Z_{\tau_S}(G, \lambda) = \prod_{i=1}^m Z_{\tau_{S_i}}(G_i, \lambda)$ where $\{G_i\}_{i=1}^m$ are connected components of $G$ and $S_i = S \cap V_i$. 
    Before proving the results, recall that for $\beta = 0$, we can pin all neighbors of $u$ such that $\tau_S(u) = 0$ to $1$ without changing any values of marginal ratios. Similarly, for $\gamma = 0$ and $\tau_S(u) = 1$, we also can pin all neighbors of $u$ to $0$. 
    % \vskip 5pt
    
    Let $F_{\lambda}$ be the region satisfies the conditions in Definition \ref{def:2-spin:complex-contraction}.
    We will prove a stronger result:
    \begin{enumerate}
        % \item for a free vertex, either $Z_{\tau_S, \sigma(v) = 1}(G, \lambda) \neq 0$ or $Z_{\tau_S, \sigma(v) = 0}(G, \lambda) \neq 0$, 
        \item $R_{\tau_S}^{\lambda}(G, v) \in F_{\lambda}$ holds for free vertex $v$ with degree $d < \Delta$,
        \item $R_{\tau_S}^{\lambda}(G, v) \neq -1$ holds for free vertex $v$ with degree $d = \Delta$, 
        \item $Z_{\tau_S}(G, \lambda) \neq 0$.
    \end{enumerate}
    We use induction on the number of free vertices, i.e. $t = |V| - |S|$, to prove this result.
    
    If $|V| = |S|$, that is $S = V$. Note that
    \begin{align*}
        Z_{\tau_S}(G, \lambda) = \beta^{n_1} \gamma^{n_2} \lambda^{m},
    \end{align*}
    where $n_1 = |\{e = (u, v) \in E: \tau_S(v) = \tau_S(u) = 0\}|$, $n_2 = |\{e = (u, v) \in E: \tau_S(v) = \tau_S(u) = 1\}|$ and $m = |\{v \in V: \tau_S(v) = 1\}|$.
    Recalling that $\lambda \neq 0$ and $n_1 = 0$ if $\beta = 0$, $n_2 = 0$ if $\gamma = 0$, we know that $Z_{\tau_S}(G, \lambda) \neq 0$.
    
    Now assume that the desired result holds for $|V| - |S| \le t$, let consider the case of $|V| - |S| = t + 1$. 
    Choose a free vertex $v \in V$. By induction hypothesis, we know that $Z_{\tau_S, \sigma(v) = 1}(G, \lambda) \neq 0$ for $\gamma \neq 0$ and $Z_{\tau_S, \sigma(v) = 0}(G, \lambda) \neq 0$ for $\beta \neq 0$. Therefore $R_{\tau_S}^{\lambda}(G, v)$ is well-defined.

    Following from Lemma \ref{lem:bounded-2-spin:computation-tree}, we know that 
    \begin{align*}
        R_{\tau_S}^{\lambda} (G, v) = f_{\lambda, d} (R_{\tau_S \cup \sigma_1}^{\lambda}(G_1, v_1), \dots, R_{\tau_S \cup \sigma_d}^{\lambda}(G_d, v_d)).
    \end{align*}
    Observe that the number of free vertex in $G_i$ with configuration $\tau_S \cup \sigma_i$ is at most $t$. 
    Moreover, if $v_i$ is a free vertex in $G$, then $v_i$ is also a free vertex in $G_i$. 
    Following from $\deg_{G_i}(v_i) = \deg_{\tilde{G}}(v_i) - 1 = \deg_{G}(v_i) - 1 \le \Delta - 1$ and induction hypothesis, we know that $R_{\tau_S \cup \sigma_d}^{\lambda}(G_i, v_i) \in F_{\lambda}$ for free vertex $v_i$. 
    For $v_i \in S$, if $\beta \neq 0, \gamma \neq 0$, we know that $\{0, \infty\} \subseteq F_{\lambda}$ and
    $R_{\tau_S \cup \sigma_d}^{\lambda}(G_i, v_i) \in \{0, \infty\} \subseteq F_{\lambda}$. 
    And if $\beta = 0$, then $\gamma \neq 0$ and $\infty \in F_{\lambda}$. 
    Moreover, if $\tau_S(v_i) = 0$, recall that we have pinned all neighbors of $v_i$, especially $v$, to $1$, and $v$ cannot be a free vertex. 
    Hence $\tau_S(v_i) = 1$ and $R_{\tau_S \cup \sigma_d}^{\lambda}(G_i, v_i) = \infty \in F_{\lambda}$. Similarly if $\gamma = 0$, then $R_{\tau_S \cup \sigma_d}^{\lambda}(G_i, v_i) = 0 \in F_{\lambda}$.
    
    Therefore, we conclude that 
    \[R_{\tau_S \cup \sigma_d}^{\lambda}(G_i, v_i) \in F_{\lambda}, ~~~ i = 1, \dots, d.\]
    
    If $d \le \Delta - 1$, then by $f_{\lambda, d}(z_1, \dots, z_d) \in F_{\lambda}$ for any $d = 0, \dots, \Delta-1$ and $z_i \in F_{\lambda}, i = 1, \dots, d$, we know that $R_{\tau_S}^{\lambda} (G, v) \in F_{\lambda}$. And according to $-1 \notin F_{\lambda}$, we have $R_{\tau_S}^{\lambda} (G, v) \neq -1$.
    
    If $d = \Delta$, then by $f_{\lambda, \Delta}(z_1, \dots, z_{\Delta}) \neq -1$ for $z_i \in F_{\lambda}, i = 1, \dots, \Delta$, it also holds that $R_{\tau_S}^{\lambda} (G, v) \neq -1$.
    
    Finally, $Z_{\tau_S}(G, \lambda) \neq 0$ can be deduced from $Z_{\tau_S}(G, \lambda) = Z_{\tau_S, \sigma(v) = 0}(G, \lambda) + Z_{\tau_S, \sigma(v) = 1}(G, \lambda)$ and $R_{\tau_S}^{\lambda} (G, v) = \frac{Z_{\tau_S, \sigma(v) = 0}(G, \lambda)}{Z_{\tau_S, \sigma(v) = 1}(G, \lambda)} \neq -1$. 
\end{proof}

\section{Proof of Lemma \ref{lem:bounded-2-spin:U}}
\begin{proof}
    Set $F = \phi^{-1}(U)$.  
    By $U \subseteq \{z: |\im{z}| < \pi \}$, $\phi^{-1}$ is a bijection from $U$ to $F$. 
    Then $-1 \notin F$, $\phi^{-1}(-\ln(\beta)) = 0 \in F$, and $\phi^{-1}(\ln(\gamma)) = \infty \in F$ if $\gamma > 0$.
    
    We first prove that there exists $\delta > 0$ such that for all $\lambda \in \gU([0, \lambda_0], \delta)$ and $w \in U$ we have
    \begin{enumerate}
        \item $\lambda e^{\Delta w} \neq -1$,
        \item $\lambda e^{d w} \notin (-\infty, \max\{-\beta, -1/\gamma\}]$ for $d = 0, 1, \dots, \Delta -1$, 
        \item $\phi(\lambda e^{d w}) \in U$ for $d = 0, 1, \dots, \Delta - 1$.
    \end{enumerate}
    
    Let $\gI \triangleq (-\infty, \max\{-\beta, -1/\gamma\}]$.
    According to $U$ is compact, we know that
    \begin{align*}
        U_d \triangleq \{\lambda e^{d w}: \lambda \in [0, \lambda_0], w \in U\}
    \end{align*}
    is also compact. 
    Together with $U_d \cap \gI = \emptyset$ and $-1 \notin U_{\Delta}$, there exists $\eps_1 > 0$ such that $\mathrm{dist}(U_d, \gI) \ge \eps_1$ for $d = 0, 1, \dots, \Delta - 1$, and $\mathrm{dist}(-1, U_{\Delta}) \ge \eps_1$. 
    It is apparent that there exists $\delta_1 > 0$ such that for $|\lambda_1 e^{d w} - \lambda_2 e^{d w}| < \eps_1/2$ for all $|\lambda_1 - \lambda_2| < \delta_1$ and $w \in U$, $d = 0, 1, \dots, \Delta$. Hence, for all $\lambda \in \gU([0, \lambda_0], \delta_1)$ and $w \in U$, we have $\lambda e^{d w} \in \gU(U_d, \eps_1/2)$, and consequently $\lambda e^{\Delta w} \neq -1$ and $\lambda e^{d w} \notin \gI$ for $d = 0, 1, \dots, \Delta -1$.
    
    Based on $\overline{\gU(U_d, \eps_1/2)} \cap \gI = \emptyset$, it is easily to check that $\phi$ is bijiective and holomorphic on $\overline{\gU(U_d, \eps_1/2)}$.
    
    Furthermore, for $d = 0, 1, \dots, \Delta - 1$, we have $\phi(U_d) \subseteq \open{U}$. Thus there exist $\eps_2 > 0$ such that $\mathrm{dist}(\phi(U_d), \partial U) \ge \eps_2$. 
    Then with recalling that $\phi$ is continuous on the compact set $\overline{\gU(U_d, \eps_1/2)}$, there exists $\tilde{\eps} > 0$ such that $|\phi(z_1) - \phi(z_2)| < \eps_2/2$ for all $z_1, z_2 \in \overline{\gU(U_d, \eps_1/2)}$ with $|z_1 - z_2| < \tilde{\eps}$. Moreover, there exists $\delta \in (0, \delta_1)$ such that for all $|\lambda_1 - \lambda_2| < \delta$, there holds $|\lambda_1 e^{d w} - \lambda_2 e^{d w}| < \tilde{\eps}$, which implies that $|\phi(\lambda_1 e^{d w}) - \phi(\lambda_2 e^{d w})| < \eps_2/2$. 
    Therefore, for $\lambda \in \gU([0, \lambda_0], \delta)$, we have $\phi(\lambda e^{d w}) \in \gU(\phi(U_d), \eps_2/2) \subseteq U$ for $w \in U$ and $d = 0,1, \dots, \Delta-1$.
    
    \vskip 8pt
    Next, we prove that all $\lambda \in \gU([0, \lambda_0], \delta)$ satisfies $\Delta$-complex-contraction property with region $F_{\lambda} = F$. 
    
    Note that for $z_i \in F, i \in 1, \dots, \Delta$, there exists $w_i \in U$ such that $z_i = \phi^{-1}(w_i)$.
    Then we have
    \begin{align*}
        f_{\lambda, d}(z_1, \dots, z_{\Delta}) = f_{\lambda, \Delta}(\phi^{-1}(w_1), \dots, \phi^{-1}(w_{\Delta})) = \lambda \exp\left(\sum_{i=1}^{\Delta} w_i\right).
    \end{align*}
    By convexity of $U$ we know that $\frac{1}{\Delta} \sum_{i=1}^{\Delta} w_i \in U$. 
    Hence $f_{\lambda, \Delta}(z_1, \dots, z_{\Delta}) \neq -1$ according $\lambda e^{\Delta w} \neq -1$ for $w \in U$. 
    
    And for $0 \le d \le \Delta-1$, observe that 
    \begin{align*}
        \phi(f_{\lambda, d}(z_1, \dots, z_{d})) = \phi(\lambda e^{d \bar{w}}) \subseteq U = \phi(F),
    \end{align*}
    where $\bar{w} = \frac{1}{d} \sum_{i=1}^d w_i \in U$. Therefore following from $\phi$ is a injection on $\sC \backslash \gI$ and $f_{\lambda, d}(z_1, \dots, z_{d}) = \lambda e^{d \bar{w}} \notin \gI$, we can conclude that $f_{\lambda, d}(z_1, \dots, z_d) \in F$.
\end{proof}

% \section{Proof of Lemma \ref{lem:bounded-2-spin:r-h}}

\section{Proof of Lemma \ref{lem:bounded-2-spin:H}}
\begin{proof}
    By (uniformly) continuity of $H$, there exists $\delta > 0$ such that $H(x, \lambda) > \delta$ holds for all $x \in [x_1, x_0]$ and $\lambda \in [0, \lambda_0]$. Similarly, by (uniformly) continuity of $\frac{\partial G(x, k, \lambda)}{\partial k}$, there exists $k_0$ such that 
    $
        \frac{\partial G(x, k, \lambda)}{\partial k} > \frac{\delta}{2}
    $
    for $x \in [x_1, x_0]$, $\lambda \in [0, \lambda_0]$ and $k \in [0, k_0]$. 
    
    Therefore, we have
    \begin{align*}
        G(x, k_0, \lambda) = \int_{0}^{k_0} \frac{\partial G(x, k, \lambda)}{\partial k} d k > \frac{k_0 \delta}{2} > 0
    \end{align*}
    for all $x \in [x_1, x_0]$ and $\lambda \in [0, \lambda_0]$.
\end{proof}

\section{Proof of Lemma \ref{lem:bounded-2-spin:x0-x1}}
\label{app:2-spin:x0-x1}
In this section, we always assume that $\beta > 0$, $\gamma \ge 0$, $\beta\gamma \neq 1$ and $d = \Delta - 1$. 

We first provide the close form of $\phi(\lambda e^{d w})$ for $w \in U(x_0, x_1, k)$. 
\begin{lemma}\label{lem:bounded-2-spin:r-h}
    For $\lambda \in \sR$ and $\lambda \neq 0$, suppose that $1 + \gamma \lambda e^{d x_0} > 0$ and $\beta + \lambda e^{d x_0} > 0$, then for $w = x + \i y$ with $x \le x_0$ and $|y| < \frac{\pi}{2}$, we have
    \begin{align*}
        \re \left(\phi(\lambda e^{w})\right) &= r(x, y, \lambda) \triangleq \frac{1}{2} \ln \left(\frac{1 + \gamma^2 \lambda^2 e^{2 x} + 2 \gamma \lambda e^{x} \cos(y)}{\beta^2 + \lambda^2 e^{2 x} + 2 \beta \lambda e^{x} \cos(y)}\right), \\
        |\im\left(\phi(\lambda e^{w})\right)| &= h(x, y, \lambda) \triangleq \arctan\left( \frac{|(\beta \gamma - 1) \lambda e^{x} \sin(y)|}{\beta + \gamma \lambda^2 e^{2 x} + (\beta \gamma + 1)\lambda e^{x} \cos(y)} \right).
    \end{align*}
\end{lemma}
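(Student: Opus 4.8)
The plan is a direct computation of $\phi(\lambda e^w) = \ln\big(\tfrac{\gamma\lambda e^w + 1}{\lambda e^w + \beta}\big)$, separating modulus and argument. Writing $w = x + \i y$ and abbreviating $t = \lambda e^x$, so that $\lambda e^w = t\cos y + \i\, t\sin y$, I would first record the two moduli
\[
|\gamma\lambda e^w + 1|^2 = 1 + \gamma^2\lambda^2 e^{2x} + 2\gamma\lambda e^x\cos y,\qquad |\lambda e^w + \beta|^2 = \beta^2 + \lambda^2 e^{2x} + 2\beta\lambda e^x\cos y .
\]
Granting (see the last paragraph) that the quotient $\tfrac{\gamma\lambda e^w+1}{\lambda e^w+\beta}$ is defined and lies in the domain $\sC\setminus(-\infty,0]$ of the chosen branch of $\ln$, the real part $\re(\phi(\lambda e^w)) = \tfrac12\ln\big(|\gamma\lambda e^w+1|^2/|\lambda e^w+\beta|^2\big)$ is then exactly $r(x,y,\lambda)$.

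For the imaginary part I would multiply through by $\overline{\lambda e^w + \beta}$ and collect terms, which gives
\[
\frac{\gamma\lambda e^w + 1}{\lambda e^w + \beta} = \frac{\big(\beta + \gamma\lambda^2 e^{2x} + (\beta\gamma+1)\lambda e^x\cos y\big) + \i(\beta\gamma-1)\lambda e^x\sin y}{|\lambda e^w + \beta|^2}.
\]
Because $\arctan$ is odd and the denominator here is a positive real, $\im(\phi(\lambda e^w)) = \arg\big(\tfrac{\gamma\lambda e^w+1}{\lambda e^w+\beta}\big)$ equals $\arctan$ of the ratio of the two bracketed real numbers \emph{provided} the numerator's real part $\beta + \gamma\lambda^2 e^{2x} + (\beta\gamma+1)\lambda e^x\cos y$ is strictly positive; taking absolute values then yields $h(x,y,\lambda)$.

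So the whole lemma reduces to the single inequality $\beta + \gamma\lambda^2 e^{2x} + (\beta\gamma+1)\lambda e^x\cos y > 0$ on the stated range of $x$ with $|y|<\tfrac{\pi}{2}$. For $\lambda > 0$ this is immediate, since every summand is nonnegative and $\beta>0$; the case $\lambda<0$ is the only real obstacle. Here I would use, with $t = \lambda e^x$, the identity
\[
\beta + \gamma t^2 + (\beta\gamma+1)t\cos y = \cos y\,(\gamma t + 1)(t+\beta) + (1-\cos y)(\gamma t^2 + \beta),
\]
in which $\gamma t^2 + \beta \ge \beta > 0$ and $\cos y \in (0,1]$, so positivity follows once $(\gamma t+1)(t+\beta)>0$. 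Since $\lambda<0$ makes $x\mapsto \gamma\lambda e^x+1$ and $x\mapsto \lambda e^x+\beta$ decreasing, the hypotheses $1+\gamma\lambda e^{dx_0}>0$ and $\beta+\lambda e^{dx_0}>0$ force both factors to be positive on the relevant interval, closing the argument. The same facts also handle the well-definedness issue deferred above: when $\sin y\neq 0$ the quotient is non-real, and when $y=0$ both $\gamma\lambda e^x+1$ and $\lambda e^x+\beta$ are positive, so the quotient avoids $(-\infty,0]$ and lies in the domain of the chosen logarithm. I expect the $\lambda<0$ positivity step to be the only place requiring care; the rest is routine separation of real and imaginary parts.
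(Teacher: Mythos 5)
Your proposal is correct and follows essentially the same route as the paper: separate the modulus and the argument of $\frac{\gamma\lambda e^{w}+1}{\lambda e^{w}+\beta}$ and reduce everything to the positivity of $\beta+\gamma\lambda^{2}e^{2x}+(\beta\gamma+1)\lambda e^{x}\cos(y)$, which both arguments derive from $1+\gamma\lambda e^{x}>0$ and $\beta+\lambda e^{x}>0$ (via the hypotheses and monotonicity in $x$ for $\lambda<0$). The only cosmetic difference is that you certify this positivity through the identity $\beta+\gamma t^{2}+(\beta\gamma+1)t\cos y=\cos y\,(\gamma t+1)(t+\beta)+(1-\cos y)(\gamma t^{2}+\beta)$, whereas the paper observes it equals $\re(z_{1})\re(z_{2})+\im(z_{1})\im(z_{2})$ with both real parts positive and the imaginary parts of equal sign.
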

\begin{proof}
    Recall that 
    \begin{align*}
        \phi(\lambda e^{w}) = \ln\left(\frac{1 + \gamma \lambda e^{w}}{\beta + \lambda e^{w}}\right). 
    \end{align*}
    
    Observe that
    % Then for $1 + \gamma \lambda e^{d x_0} > 0$, there holds
    \begin{align*}
        \re(1 + \gamma \lambda e^{w}) = 1 + \gamma \lambda e^{x} \cos(y).
    \end{align*}
    If $\lambda > 0$, then $1 + \gamma \lambda e^{x} \cos(y) > 0$ holds apparently. 
    Otherwise, for $\lambda < 0$, we have
    \begin{align*}
        1 + \gamma \lambda e^{x} \cos(y) \ge 1 + \gamma \lambda e^{x_0} > 0.
    \end{align*}
    Therefore there holds $\re(1 + \gamma \lambda e^{w}) > 0$ and similarly $\re(\beta + \lambda e^{w}) > 0$. 
    Further, we point out that
    \begin{align*}
        \im(1 + \gamma \lambda e^{w}) \im(\beta + \lambda e^{w}) = \gamma \lambda^2 e^{2 x} \sin^2(y) \ge 0.
    \end{align*}
    
    Now let $z_1 = x_1 + \i y_1$ and $z_2 = x_2 + \i y_2$ with $x_1, x_2 > 0$ and $y_1 y_2 \ge 0$. It holds that
    \begin{align*}
        \frac{z_1}{z_2} = \frac{(x_1 + \i y_1)(x_2 - \i y_2)}{x_2^2 + y_2^2} = \frac{x_1 x_2 + y_1 y_2}{x_2^2 + y_2^2} + \i \frac{-x_1 y_2 + y_1 x_2}{x_2^2 + y_2^2}.
    \end{align*}
    Hence we have $\re(z_1/z_2) > 0$ and 
    \begin{align*}
        \arg(z_1/z_2) = \arctan \left( \frac{-x_1 y_2 + y_1 x_2}{x_1 x_2 + y_1 y_2} \right).
    \end{align*}
    Consequently, there holds
    \begin{align*}
        \ln(z_1/z_2) = \frac{1}{2} \ln\left(\frac{x_1^2 + y_1^2}{x_2^2 + y_2^2}\right) + \i \arctan \left( \frac{-x_1 y_2 + y_1 x_2}{x_1 x_2 + y_1 y_2} \right).
    \end{align*}
    Substitute $z_1 = 1 + \gamma \lambda e^{w}$ and $z_2 = \beta + \lambda e^{w}$ into above equation to obtain the desired result.
\end{proof}

Before giving the specific value of $x_0, x_1$, we prove the following result. 
\begin{lemma}\label{lem:bounded-2-spin:tilde-G}
    Suppose that $k < \min\left\{\frac{\pi}{4 \Delta (x_0 - x_1)}, 1\right\}$, and $1 + \gamma \lambda e^{d x_0} > 0, \beta + \lambda e^{d x_0} > 0$. In addition, if $\beta\gamma < 1$, we assume that $k < \frac{\sqrt{2(\beta^2\gamma^2 + 1)} - (\beta\gamma + 1) }{1 - \beta\gamma}$. 
    Then for $d = \Delta - 1$, let
    \begin{align}
        \tilde{G}(x, y, k, \lambda) \triangleq k(x_0 - r(d x, d y, \lambda)) - h(d x, d y, \lambda),
    \end{align}
    where $x \in [x_1, x_0]$ and $y \in [0, k(x_0 - x)]$. 
    Then we have
    \begin{align*}
        \tilde{G}(x, y, k, \lambda) \ge \tilde{G}(d x, d k(x_0 - x), k, \lambda) = G(x, k, \lambda).
    \end{align*}
\end{lemma}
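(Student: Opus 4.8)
The plan is to fix the real part of $w$ and establish a one-variable monotonicity in the imaginary direction. Since $h$ is even in its second argument we may assume $y\ge 0$. So fix $x$, put $t:=\lambda e^{d x}\in\sR\setminus\{0\}$, and regard $\tilde G$ as a function of $y\in[0,k(x_0-x)]$ alone; I claim it is non-increasing there. Granting this, its minimum over the admissible $y$ is attained at the slanted side $y=k(x_0-x)$, where by Lemma~\ref{lem:bounded-2-spin:r-h} the value of $\tilde G$ is exactly $G(x,k,\lambda)$, which is the asserted inequality. The hypotheses $1+\gamma\lambda e^{d x_0}>0$ and $\beta+\lambda e^{d x_0}>0$, together with $x\le x_0$, give $1+\gamma t>0$ and $\beta+t>0$; this is precisely the positivity required for Lemma~\ref{lem:bounded-2-spin:r-h} to apply, and when $\lambda<0$ it forces $0<|t|<\min\{\beta,1/\gamma\}$, a bound used below.

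For the monotonicity, set $\psi(w):=\phi(\lambda e^{w})=\ln\frac{1+\gamma\lambda e^{w}}{\beta+\lambda e^{w}}$ and $y':=d y$; a short computation gives $\psi'(w)=\dfrac{(\beta\gamma-1)\lambda e^{w}}{(1+\gamma\lambda e^{w})(\beta+\lambda e^{w})}$, and differentiating $r=\re\psi$ and $h=|\im\psi|$ along the vertical line $\re w=d x$ (equivalently, reading off the closed forms of Lemma~\ref{lem:bounded-2-spin:r-h}) yields, with $A:=|\beta+t e^{\i y'}|^2>0$ and $B:=|1+\gamma t e^{\i y'}|^2>0$,
\begin{align*}
\partial_{y'}r=\frac{(1-\beta\gamma)(\beta-\gamma t^2)\,t\sin y'}{AB},\qquad \partial_{y'}h=\frac{|1-\beta\gamma|\,|t|\bigl((\beta+\gamma t^2)\cos y'+(\beta\gamma+1)t\bigr)}{AB}.
\end{align*}
Hence $\partial_{y}\tilde G=-\dfrac{d\,|1-\beta\gamma|\,|t|}{AB}\,\Phi(y')$, so $\partial_{y}\tilde G\le 0$ is equivalent to $\Phi(y')\ge 0$, where
\[
\Phi(y'):=k\,\sign\!\bigl((1-\beta\gamma)t\bigr)(\beta-\gamma t^2)\sin y'+(\beta+\gamma t^2)\cos y'+(\beta\gamma+1)t.
\]
Two facts drive the verification of $\Phi\ge 0$: first, $\Phi(0)=(1+\gamma t)(\beta+t)>0$; second, $y'=d y\le d k(x_0-x)\le d k(x_0-x_1)<\tfrac{(\Delta-1)\pi}{4\Delta}<\tfrac{\pi}{4}$ (using $k<\tfrac{\pi}{4\Delta(x_0-x_1)}$), so $\cos y'>0$, $\tan y'<1$, and $k\tan y'<1$ (using $k<1$).

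It remains to prove $\Phi(y')\ge 0$, by cases. If $\lambda>0$ (so $t>0$), the only term of $\Phi$ that can be negative is $k\,\sign(1-\beta\gamma)(\beta-\gamma t^2)\sin y'$, and only when $\sign(1-\beta\gamma)$ and $\beta-\gamma t^2$ have opposite signs; in that case one regroups $\Phi$ so that this term is absorbed into the $\gamma t^2\cos y'$ (resp.\ $\beta\cos y'$) part via $k\tan y'<1$, leaving a strictly positive remainder such as $\beta\cos y'+(\beta\gamma+1)t$. If $\lambda<0$, write $t=-m$ with $0<m<\min\{\beta,1/\gamma\}\le\sqrt{\beta/\gamma}$, so $\beta-\gamma m^2>0$; then $\Phi=(\beta+\gamma m^2)\cos y'\mp k(\beta-\gamma m^2)\sin y'-(\beta\gamma+1)m$, and, writing $m=|\lambda|e^{d x_0}e^{-d(x_0-x)}$ with $y'\le d k(x_0-x)$ and using $\cos y'\ge 1-(y')^2/2$, $\sin y'\le y'$, the inequality $\Phi\ge 0$ follows from
\[
(\beta-m)(1-\gamma m)\ \ge\ (\beta+\gamma m^2)\tfrac{(y')^2}{2}+k(\beta-\gamma m^2)\,y'
\]
(when $\beta\gamma>1$ the middle term of $\Phi$ is favourable and the $k$-term on the right may be dropped). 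The left side increases as $x$ decreases, while the right side is $O(k^2)$ over the admissible range, and the estimate is closed by $k<\tfrac{\pi}{4\Delta(x_0-x_1)}$ together with, when $\beta\gamma<1$, the extra hypothesis $k<\tfrac{\sqrt{2(\beta^2\gamma^2+1)}-(\beta\gamma+1)}{1-\beta\gamma}$, whose right-hand side is exactly the threshold making the relevant quadratic in $m$ (discriminant $2(\beta^2\gamma^2+1)$) absorb the right-hand side uniformly in $m$ and $y'$.

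The main obstacle is the case $\lambda<0$: there $\partial_{y'}r$ pushes $\partial_{y'}\tilde G$ in the unfavourable direction and must be overcome by $\partial_{y'}h$, which forces one to use three ingredients at once --- the upper bound on $|t|$ coming from the positivity hypotheses at $x_0$, the smallness of the admissible imaginary part coming from $k<\pi/(4\Delta(x_0-x_1))$, and the extra ceiling on $k$ --- while tracking how $m=|t|$ and the permitted $y'$ are coupled through $x$. The remaining pieces (checking $\Phi(0)>0$, the $\lambda>0$ regrouping, and the reduction to a single fixed real part) are routine.
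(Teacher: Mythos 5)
Your reduction coincides with the paper's: fixing $x$, computing $\partial_{y'}r$ and $\partial_{y'}h$ (your formulas are correct, and your $\Phi$ is exactly the paper's $\Psi$ when $(1-\beta\gamma)t>0$ and $\tilde\Psi$ when $(1-\beta\gamma)t<0$), and concluding via one-variable monotonicity that the minimum sits on the slanted edge where $\tilde G=G$. Your handling of $\lambda>0$ (absorbing the possibly negative term using $k\tan y'<1$) is fine. The gap is in the $\lambda<0$ cases, which are the actual content of the lemma, and there your argument does not go through.

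For $\beta\gamma>1$, $\lambda<0$ you claim the $k$-term of $\Phi$ "may be dropped", i.e.\ that $(\beta+\gamma m^2)\cos y'\ge(1+\beta\gamma)m$ on the admissible range; this is false. Take $\gamma=1$, $\beta=1.001$, $\Delta=100$ (so $d=99$), $k=0.99$, $x_0-x_1=0.00792$ (then $k<\pi/(4\Delta(x_0-x_1))\approx0.9917$), and $\lambda<0$ with $|\lambda|e^{dx_0}=0.9999$, so both positivity hypotheses hold. At $x$ with $d(x_0-x)=0.5$ and $y'=dk(x_0-x)=0.495$ one has $m=|\lambda|e^{dx}\approx0.6065$ and $(\beta+\gamma m^2)\cos y'-(1+\beta\gamma)m\approx1.2045-1.2136<0$; positivity of $\Phi$ there is rescued only by the term $k(\beta-\gamma m^2)\sin y'\approx0.298$ that you discarded, so your sufficient inequality (and a fortiori its Taylor-weakened form) fails at admissible points. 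For $\beta\gamma<1$, $\lambda<0$ the closure is only asserted: the claim that the right-hand side is $O(k^2)$ over the admissible range is wrong, since $s=d(x_0-x)$ can be as large as $d(x_0-x_1)$, which is of order $\pi/(4k)$, so $(\beta+\gamma m^2)k^2s^2/2=\Theta(1)$; the inequality can only close because the loss terms are coupled to the growth of $(\beta-m)(1-\gamma m)$ through $m=|\lambda|e^{dx_0}e^{-s}$, and you never carry out this coupled, global-in-$s$ estimate. Moreover the threshold $k<\bigl(\sqrt{2(\beta^2\gamma^2+1)}-(\beta\gamma+1)\bigr)/(1-\beta\gamma)$ does not arise from any "quadratic in $m$": in the paper it is the root of the quadratic in $k$, $(1-\beta\gamma)k^2+2(1+\beta\gamma)k-(1-\beta\gamma)<0$, which guarantees $\tan(dk(x_0-x))\le1<\frac{(1-k^2)(1-\beta\gamma)}{2k(1+\beta\gamma)}$. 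The paper closes both $\lambda<0$ cases by keeping the exact trigonometric expressions, showing $\Psi$ (resp.\ $\tilde\Psi$) is concave (resp.\ monotone) in $y$, evaluating at $y=k(x_0-x)$, factoring out $\lambda e^{dx}$, and proving the remaining function $\psi$ (resp.\ $\tilde\psi$) of $x$ is increasing with negative value at $x_0$; some argument of this type, exploiting the coupling between $m$ and the admissible $y'$, is exactly what your proposal is missing.
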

\begin{proof}
    Note that
    \begin{align*}
        \tilde{G}(x, y, k, \lambda) = k \left(x_0 - \frac{1}{2} \ln \left(\frac{1 + \gamma^2 \lambda^2 e^{2 d x} + 2 \gamma \lambda e^{d x} \cos(d y)}{\beta^2 + \lambda^2 e^{2 d x} + 2 \beta \lambda e^{d x} \cos(d y)}\right) \right) \\
        \quad - \arctan\left( \frac{|(\beta \gamma - 1) \lambda| e^{d x} \sin(d y)}{\beta + \gamma \lambda^2 e^{2 d x} + (\beta \gamma + 1)\lambda e^{d x} \cos(d y)} \right).
    \end{align*}
    
    And it is clearly that $\gamma |\lambda| e^{d x} < 1$, $|\lambda| e^{d x} < \beta$ and
    \begin{align*}
        \beta - \gamma \lambda^2 e^{2 d x} > \begin{cases}
            \beta - \frac{1}{\gamma} > 0, ~~~&\text{ for } \beta\gamma > 1, \\
            \beta - \gamma \beta^2 > 0, ~~~&\text{ for } \beta\gamma < 1.
        \end{cases}
    \end{align*}
    
    Further, we also have $d y \le d k (x_0 - x) < \frac{\pi}{4}$ and $\cos(d y) - k \sin(d y) > 0$. 
    
    \vskip 10pt
    \textbf{(1)} If $\lambda = 0$, then we have
    \begin{align*}
        \tilde{G}(x, y, k, 0) = k(x_0 - \ln(\beta)) = G(x, k, 0).
    \end{align*}
    
    \textbf{(2)} For $\lambda (\beta\gamma - 1) < 0$, we have
    {\small\begin{align*}
        \frac{\tilde{G}(x, y, k, \lambda)}{\partial y} 
        = \frac{d \lambda  (\beta  \gamma -1) e^{d x} \left(\gamma  \lambda ^2 e^{2 d x} (\cos (d y)-k \sin (d y))+\beta  (k \sin (d y)+\cos (d y))+\lambda  (\beta  \gamma +1) e^{d x}\right)}{\left(\beta ^2+\lambda ^2 e^{2 d x}+2 \beta  \lambda  e^{d x} \cos (d y)\right) \left(\gamma ^2 \lambda ^2 e^{2 d x}+2 \gamma  \lambda  e^{d x} \cos (d y)+1\right)}.
    \end{align*}}
    
    We intend to prove that
    \begin{align*}
        \Psi(x, y) \triangleq \gamma \lambda ^2 e^{2 d x} (\cos (y)-k \sin (y))+\beta  (k \sin (y)+\cos (y))+\lambda  (\beta  \gamma +1) e^{d x} > 0,
    \end{align*}
    for $x \in [x_1, x_0]$ and $y \in [0, d k(x_0 - x)] \subseteq [0, \pi/4]$. 
    
    For $\beta \gamma < 1$ and $\lambda > 0$, $\Psi(x, y) > 0$ holds apparently. 
    
    For $\beta \gamma > 1$ and $\lambda < 0$, observe that 
    \begin{align*}
        \frac{\partial^2 \Psi(x, y)}{\partial y^2} = \gamma  \lambda ^2 e^{2 d x} (k \sin (y)-\cos (y))- \beta  (k \sin (y)+\cos (y)) < 0,
    \end{align*}
    which implies that $\Psi(x, y) \ge \min\{\Psi(x, 0), \Psi(x, d k(x_0 - x))\}$.
    Together with $\Psi(x, 0) = (\beta + \lambda e^{d x})(1 + \gamma \lambda e^{d x}) > 0$ and 
    \begin{align*}
        \psi(x) &\triangleq \gamma  \lambda  e^{d x} (\cos (d k (x_0-x))-k \sin (d k (x_0-x))) \\
        &\quad + \beta\lambda^{-1}  e^{-d x}  (k \sin (d k (x_0-x))+\cos (d k (x_0-x))) + (\beta  \gamma +1), \\
        \psi'(x) &= -d \left(k^2+1\right) \lambda^{-1} e^{-d x} \cos (d k (x_0-x)) \left(\beta - \gamma  \lambda ^2 e^{2 d x} \right) > 0, \\
        \psi(x) &\le \psi(x_0) = \lambda^{-1} e^{-d x_0} (\beta + \lambda e^{d x_0})(1 + \gamma \lambda e^{d x_0}) < 0, \\
        \Psi(x, d k(x_0 - x)) &= \lambda e^{d x} \psi(x) > 0.
    \end{align*}
    Therefore, we conclude that $\frac{\tilde{G}(x, y, k, \lambda)}{\partial y} > 0$ for $y \in [0, k (x_0- x)]$.
    
    \textbf{(3)} For $\lambda (\beta\gamma - 1) > 0$, we have
    {\small\begin{align*}
        \frac{\tilde{G}(x, y, k, \lambda)}{\partial y} 
        = -\frac{d \lambda  (\beta  \gamma - 1) e^{d x} \left(\gamma  \lambda ^2 e^{2 d x} (\cos (d y)+k \sin (d y))+\beta  (\cos (d y) - k \sin (d y))+\lambda  (\beta  \gamma +1) e^{d x}\right)}{\left(\beta ^2+\lambda ^2 e^{2 d x}+2 \beta  \lambda  e^{d x} \cos (d y)\right) \left(\gamma ^2 \lambda ^2 e^{2 d x}+2 \gamma  \lambda  e^{d x} \cos (d y)+1\right)}.
    \end{align*}}
    
    We need to prove that
    \begin{align*}
        \tilde\Psi(x, y) \triangleq \gamma \lambda ^2 e^{2 d x} (\cos (y)+k \sin (y))+\beta  (\cos (y) - k \sin (y))+\lambda  (\beta  \gamma +1) e^{d x} > 0,
    \end{align*}
    for $x \in [x_1, x_0]$ and $y \in [0, d k(x_0 - x)] \subseteq [0, \pi/4]$. 
    
    For $\beta \gamma > 1$ and $\lambda > 0$, $\Psi(x, y) > 0$ holds apparently. 
    
    For $\beta \gamma < 1$ and $\lambda < 0$, observe that 
    \begin{align*}
        \frac{\partial \tilde\Psi(x, y)}{\partial y} = -k \cos (y) \left(\beta -\gamma  \lambda ^2 e^{2 d x}\right)-\sin (y) \left(\beta +\gamma  \lambda ^2 e^{2 d x}\right) < 0, 
    \end{align*}
    which implies that $\Psi(x, y) \ge \Psi(x, d k (x_0 - x))$.
    Further, we have
    \begin{align*}
        \tilde\psi(x) &\triangleq \gamma  \lambda  e^{d x} (\cos (d k (x_0-x))+k \sin (d k (x_0-x))) \\
        &\quad + \beta\lambda^{-1}  e^{-d x}  (\cos (d k (x_0-x)) - k \sin (d k (x_0-x))) + (\beta  \gamma +1), \\
        \tilde\psi'(x) &= d \lambda^{-1} e^{-d x} \left(\left(k^2-1\right) \cos (d k (x_0-x)) \left(\beta -\gamma  \lambda ^2 e^{2 d x}\right)+2 k \sin (d k (x_0-x)) \left(\beta +\gamma  \lambda ^2 e^{2 d x}\right)\right), \\
        \hat\psi(x) &\triangleq \tan (d k (x_0-x)) - \frac{(1 - k^2) \left(\beta -\gamma  \lambda ^2 e^{2 d x}\right)}{2 k \left(\beta +\gamma  \lambda ^2 e^{2 d x}\right)} 
        \le 1 - \frac{(1 - k^2) (1 -\beta \gamma)}{2k (\beta\gamma + 1)} < 0,
    \end{align*}
    where the last inequality is according to $|\lambda| e^{d x} < \beta$ and $k < \frac{\sqrt{2(\beta^2\gamma^2 + 1)} - (\beta\gamma + 1) }{1 - \beta\gamma}$. 
    And thus we have $\tilde\psi'(x) > 0$ and $\tilde\Psi(x, d k(x_0 - x)) = \lambda e^{d x}\tilde\psi(x) \le \lambda e^{d x} \tilde\psi(x_0) > 0$. 
\end{proof}

Lemma \ref{lem:bounded-2-spin:tilde-G} eliminates the need to verify that $\phi(\lambda e^{d w}) \in U(x_0, x_1, k)$ for $w \in \{z: \re(z) = x_1, |\im(z)| \le k(x_0 - x_1)\}$, so that we can concentrate on proving $G(x, k, \lambda) > 0$. 
However, the proof of Lemma \ref{lem:bounded-2-spin:tilde-G} is too technique to reveal the key point that $g(w) \triangleq \phi(\lambda e^{d w})$ maps the boundary of $U(x_0, x_1, k)$ to the boundary of $g(U(x_0, x_1, k))$, which is intuitive and can be obtained directly based on the invariance of domain (cf. Lemma \ref{lem:boundary}). 

Following lemma is useful for verifying that $-1 \notin \phi^{-1}(U)$.
\begin{lemma}\label{lem:bounded-2-spin:-1}
    For $U \subseteq \{w \in \sC: |\im{w}| < \pi \}$, there holds
\begin{enumerate}
    \item $-1 \notin \phi^{-1}(U)$ for $\max\{\beta, \gamma\} \le 1$ or $\min\{\beta, \gamma\} \ge 1$, 
    \item for $\beta > 1 > \gamma$ or $\beta < 1 < \gamma$, $-1 \notin \phi^{-1}(U)$ is equivalent to $\ln\left(\frac{1 - \gamma}{\beta - 1}\right) \notin U$. 
\end{enumerate}
\end{lemma}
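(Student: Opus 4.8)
The plan is to reduce the membership $-1 \in \phi^{-1}(U)$ to a single scalar equation and then dispatch both parts by a sign analysis. Since $\phi^{-1}(w) = \frac{\beta e^{w} - 1}{\gamma - e^{w}}$, and since $\phi^{-1}(w) = \infty \neq -1$ whenever $e^{w} = \gamma$, the equation $\phi^{-1}(w) = -1$ is equivalent to $\beta e^{w} - 1 = -(\gamma - e^{w})$, i.e.\ to $(\beta - 1)\, e^{w} = 1 - \gamma$. Hence $-1 \in \phi^{-1}(U)$ if and only if there is some $w \in U$ with $(\beta - 1)\, e^{w} = 1 - \gamma$. The only structural fact I will use about the strip is that for $w$ with $|\im(w)| < \pi$ the value $e^{w}$ never lies in $(-\infty, 0]$: writing $e^{w} = e^{\re(w)}\bigl(\cos(\im(w)) + \i \sin(\im(w))\bigr)$, a non-positive real value would force $\im(w) \in \{\pm\pi, \pm 3\pi, \dots\}$, which is impossible, and $e^{w} = 0$ never happens.

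For part 1, I would first note that we may assume $\beta\gamma \neq 1$ throughout (the only overlap of the degenerate case $\beta\gamma = 1$ with either regime is $\beta = \gamma = 1$, where $\phi$ is not a chart). When $\beta = 1$ the equation $(\beta - 1)\, e^{w} = 1 - \gamma$ reads $0 = 1 - \gamma$, which has no solution since $\gamma \neq 1$. When $\beta \neq 1$ the equation forces $e^{w} = \tfrac{1-\gamma}{\beta-1}$; but $\max\{\beta,\gamma\} \le 1$ gives $1-\gamma \ge 0$ and $\beta - 1 < 0$, while $\min\{\beta,\gamma\} \ge 1$ gives $1-\gamma \le 0$ and $\beta - 1 > 0$, so in either regime $\tfrac{1-\gamma}{\beta-1} \in (-\infty,0]$, a value $e^{w}$ never attains for $w$ in the strip. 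Hence $(\beta - 1)\, e^{w} = 1 - \gamma$ has no solution with $|\im(w)| < \pi$, and therefore $-1 \notin \phi^{-1}(U)$ for every $U \subseteq \{w \in \sC : |\im(w)| < \pi\}$.

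For part 2, in the regimes $\beta > 1 > \gamma$ or $\beta < 1 < \gamma$ the scalar $t \triangleq \tfrac{1-\gamma}{\beta-1}$ is a strictly positive real number. The preimage of $t$ under $\exp$ is $\{\ln t + 2\pi\i n : n \in \sZ\}$, and exactly one of these points, namely the real number $\ln t = \ln\bigl(\tfrac{1-\gamma}{\beta-1}\bigr)$, lies in the open strip $|\im(w)| < \pi$. Consequently $-1 \in \phi^{-1}(U)$ if and only if $\ln\bigl(\tfrac{1-\gamma}{\beta-1}\bigr) \in U$, which is precisely the asserted equivalence $-1 \notin \phi^{-1}(U) \iff \ln\bigl(\tfrac{1-\gamma}{\beta-1}\bigr) \notin U$.

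I do not expect a genuine obstacle here: once the equation $(\beta - 1)\, e^{w} = 1 - \gamma$ is isolated, everything follows from the elementary description of $\exp$ on the strip $|\im(w)| < \pi$. The only point demanding care is making the four parameter regions and their boundary cases ($\beta = 1$, $\gamma = 1$) exhaustive and consistent with the standing assumption $\beta\gamma \neq 1$; so the ``hard part'' is purely the bookkeeping of the signs of $1-\gamma$ and $\beta - 1$.
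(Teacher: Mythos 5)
Your proof is correct and follows essentially the same route as the paper's: reduce $\phi^{-1}(w) = -1$ to $e^{w} = \frac{1-\gamma}{\beta-1}$, use that $e^{w} \notin (-\infty,0]$ on the strip $|\im w| < \pi$ for part 1, and that $\ln\bigl(\frac{1-\gamma}{\beta-1}\bigr)$ is the unique strip-preimage for part 2. Your extra bookkeeping (the $\beta=1$ case and the vanishing denominator $e^{w}=\gamma$, both harmless under the standing assumption $\beta\gamma \neq 1$) only makes explicit what the paper treats via the convention $\frac{1-\gamma}{\beta-1} \in (-\infty,0] \cup \{\infty\}$.
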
 
\begin{proof}
    If $\phi^{-1}(w) = -1$, then $e^{w} = \frac{1 - \gamma}{\beta - 1}$. 
    Thus for $\beta > 1 > \gamma$ or $\beta < 1 < \gamma$, $-1 \notin \phi^{-1}(U)$ is equivalent to $\ln\left(\frac{1 - \gamma}{\beta - 1}\right) \notin U$.
    
    Moreover, with $e^{w} \in \sC \backslash (-\infty, 0]$, for $\max\{\beta, \gamma\} \le 1$ or $\min\{\beta, \gamma\} \ge 1$, $\frac{1 - \gamma}{\beta - 1} \in (-\infty, 0] \cup \{\infty\}$ and hence $\phi^{-1}(w) \neq -1$. 
    
\end{proof}

Condition 2 and Condition 3 in Lemma \ref{lem:bounded-2-spin:U} can be simplified as follows.
\begin{lemma}\label{lem:bounded-2-spin:lambda-e-d-x}
    For $\beta > 0, \gamma \ge 0$ and $k < \frac{\pi}{2 \Delta (x_0 - x_1)}$, suppose that 
    \begin{align*}
        1 + \gamma \lambda_0 e^{(\Delta - 1) x_0} > 0, ~~~ \beta + \lambda_0 e^{(\Delta - 1) x_0} > 0, ~~~ 1 + \lambda_0 e^{\Delta x_0} > 0.
    \end{align*}
    Then for all $w \in U(x_0, x_1, k)$, $0 \le d \le \Delta - 1$ and $\lambda \in [0, \lambda_0]$, we have 
    \begin{align*}
        \re(1 + \gamma \lambda e^{d w}) > 0, ~~~ \re(\beta + \lambda e^{d w}) > 0, ~~~ \re(1 + \lambda e^{\Delta w}) > 0.
    \end{align*}
\end{lemma}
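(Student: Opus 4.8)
The plan is to write $w = x + \i y$ with $x \in [x_1, x_0]$ and $|y| \le k(x_0 - x)$ — these are exactly the defining inequalities of $U(x_0, x_1, k)$, together with $x_1 \le 0 \le x_0$ — and to use $\re(1 + \gamma\lambda e^{d w}) = 1 + \gamma\lambda e^{dx}\cos(d y)$, $\re(\beta + \lambda e^{d w}) = \beta + \lambda e^{dx}\cos(d y)$, and $\re(1 + \lambda e^{\Delta w}) = 1 + \lambda e^{\Delta x}\cos(\Delta y)$. First I would bound the arguments: since $|y| \le k(x_0 - x) \le k(x_0 - x_1)$ and $k < \frac{\pi}{2\Delta(x_0 - x_1)}$, we get $|d y| \le (\Delta - 1)k(x_0-x_1) < \frac{\pi}{2}$ for $0 \le d \le \Delta - 1$ and likewise $|\Delta y| < \frac{\pi}{2}$, so $\cos(d y) > 0$ and $\cos(\Delta y) > 0$ throughout $U(x_0, x_1, k)$.

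If $\lambda$ and $\lambda_0$ are nonnegative (i.e. $\lambda \in [0, \lambda_0]$ with $\lambda_0 \ge 0$), then each of the three quantities above is a strictly positive constant ($1$, $\beta$, $1$) plus a term that is a product of the nonnegative numbers $\gamma, \lambda, e^{dx}$ (resp. $\lambda, e^{dx}$; resp. $\lambda, e^{\Delta x}$) and the positive number $\cos(d y)$ (resp. $\cos(\Delta y)$), hence strictly positive; here the three hypotheses hold automatically and there is nothing to prove.

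The substantive case is $\lambda_0 < 0$, so that $\lambda$ ranges over the segment joining $0$ and $\lambda_0$, meaning $\lambda_0 \le \lambda \le 0$. Here I would reduce to the worst case by monotonicity. Since $\cos(d y) \in (0, 1]$ and $\gamma\lambda \le 0$, we have $\gamma\lambda e^{dx}\cos(d y) \ge \gamma\lambda e^{dx}$; since $x \le x_0$, $d \le \Delta - 1$ and $x_0 \ge 0$ we have $dx \le (\Delta - 1)x_0$, hence $e^{dx} \le e^{(\Delta-1)x_0}$, and multiplying by $\gamma\lambda \le 0$ and then using $\lambda \ge \lambda_0$ gives $\gamma\lambda e^{dx} \ge \gamma\lambda e^{(\Delta-1)x_0} \ge \gamma\lambda_0 e^{(\Delta-1)x_0}$. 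Therefore $\re(1 + \gamma\lambda e^{d w}) \ge 1 + \gamma\lambda_0 e^{(\Delta-1)x_0} > 0$ by the first hypothesis. The bound on $\re(\beta + \lambda e^{d w})$ is obtained identically (drop the factor $\gamma$, replace the constant $1$ by $\beta$), giving $\re(\beta + \lambda e^{d w}) \ge \beta + \lambda_0 e^{(\Delta-1)x_0} > 0$ by the second hypothesis. For the last one, $\lambda e^{\Delta x}\cos(\Delta y) \ge \lambda e^{\Delta x} \ge \lambda e^{\Delta x_0} \ge \lambda_0 e^{\Delta x_0}$ (using $\cos(\Delta y) \le 1$, $x \le x_0$, $\lambda \le 0$, $\lambda \ge \lambda_0$), so $\re(1 + \lambda e^{\Delta w}) \ge 1 + \lambda_0 e^{\Delta x_0} > 0$ by the third hypothesis.

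There is no genuinely hard step: the lemma is a sign-tracking argument whose only delicate points are (i) deriving $\cos(d y), \cos(\Delta y) > 0$ from the bound on $k$ and the triangular shape of $U(x_0, x_1, k)$, and (ii) keeping the directions of the inequalities straight when multiplying through by the negative quantity $\lambda$, which is where $x_1 \le 0 \le x_0$ (so that $e^{dx} \le e^{(\Delta-1)x_0}$ uniformly in $0 \le d \le \Delta - 1$) and $\lambda \ge \lambda_0$ enter. One bookkeeping remark worth including in the write-up is that the interval ``$[0, \lambda_0]$'' is read as the segment joining $0$ and $\lambda_0$, so $\lambda$ always has the same sign as $\lambda_0$; this is what makes the two cases above exhaustive.
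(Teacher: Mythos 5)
Your proposal is correct and follows essentially the same route as the paper: compute $\re(1+\gamma\lambda e^{dw}) = 1+\gamma\lambda e^{dx}\cos(dy)$ (and analogues), note $|dy|,|\Delta y| < \pi/2$ from the bound on $k$, dispose of $\lambda \ge 0$ trivially, and for negative $\lambda$ bound the term below by $\gamma\lambda_0 e^{(\Delta-1)x_0}$ (resp.\ $\lambda_0 e^{(\Delta-1)x_0}$, $\lambda_0 e^{\Delta x_0}$) using $x_0 \ge 0$. Your write-up merely spells out the monotonicity chain that the paper compresses into a single inequality.
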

\begin{proof}
    For $w = x + i y \in U(x_0, x_1, k)$, observe that $|d y| \le d k |x_0 - x_1| < \frac{\pi}{2}$ and
    \begin{align*}
        \re(1 + \gamma \lambda e^{d w}) = 1 + \gamma \lambda e^{d x} \cos(d y).
    \end{align*}
    For $\lambda \ge 0$, it is trivial that $1 + \gamma \lambda e^{d x} \cos(d y) > 0$. 
    And for $\lambda < 0$, we have
    \begin{align*}
        1 + \gamma \lambda e^{d x} \cos(d y) \ge 1 + \gamma \lambda_0 e^{(\Delta - 1) x_0} > 0,
    \end{align*}
    where we have recalled that $x_0 \ge 0$. 
    
    Further, $\re(\beta + \lambda e^{d w}) > 0$ and $\re(1 + \lambda e^{\Delta w}) > 0$ hold for the same reason.
    
\end{proof}

Therefore, combining Lemma \ref{lem:bounded-2-spin:-1}, Lemma \ref{lem:bounded-2-spin:lambda-e-d-x}, Lemma \ref{lem:bounded-2-spin:tilde-G} and Lemma \ref{lem:bounded-2-spin:H}  to check $U = \{z \in U(x_0, x_1, k): \re(z) \in [x_2, x_3]\}$ satisfying all conditions in Lemma \ref{lem:bounded-2-spin:U}, we just need to verify that
\begin{enumerate}
    \item $-\ln(\beta) \in U$ and $\ln(\gamma) \in U$ if $\gamma > 0$,
    \item $\ln\left( \frac{1 - \gamma}{\beta - 1} \right) \notin U$ if $(1 - \gamma)(\beta - 1) > 0$,
    \item $1 + \gamma \lambda_0 e^{(\Delta - 1) x_0} > 0, \beta + \lambda_0 e^{(\Delta - 1) x_0} > 0, 1 + \lambda_0 e^{\Delta x_0} > 0$, 
    \item $r(d x, d y, \lambda) \in (x_2, x_3)$ for $d = \Delta - 1$, $x \in [x_1, x_0], y \in [0, k (x_0 - x)]$ and $\lambda \in [0, \lambda_0]$,
    \item $H(x, \lambda) > 0$ for $x \in [x_1, x_0]$ and $\lambda \in [0, \lambda_0]$. 
\end{enumerate}

\subsection{The case $\beta\gamma > 1$ and $\lambda_0 > 0$}
Let $d = \Delta -1$. 
\begin{lemma}\label{lem:ferro-positive:range-r}
    For $\beta\gamma > 1$ and $\lambda > 0$, there holds $r(d x, d y, \lambda) \in (-\ln(\beta), \ln(\gamma))$ for $x + \i y \in U(x_0, x_1, k)$.
\end{lemma}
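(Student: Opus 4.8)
The plan is to reduce the claim to two elementary algebraic inequalities via the closed form in Lemma~\ref{lem:bounded-2-spin:r-h}. Write $d=\Delta-1$ as in the surrounding text. First I would check that this closed form applies here: since $\lambda>0$, $\beta>0$ and $\gamma>0$ (the latter because $\beta\gamma>1$ forces $\gamma>0$), the hypotheses $1+\gamma\lambda e^{d x_0}>0$ and $\beta+\lambda e^{d x_0}>0$ of Lemma~\ref{lem:bounded-2-spin:r-h} hold trivially; and for $w=x+\i y\in U(x_0,x_1,k)$ one has $|d y|\le d k(x_0-x)\le d k(x_0-x_1)<\pi/2$ by the standing restriction $k<\frac{\pi}{2\Delta(x_0-x_1)}$. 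Hence Lemma~\ref{lem:bounded-2-spin:r-h}, applied to $\lambda e^{d w}$, yields
\[
r(d x,d y,\lambda)=\tfrac12\ln\!\left(\frac{1+\gamma^2 t^2+2\gamma t c}{\beta^2+t^2+2\beta t c}\right),
\]
where I abbreviate $t:=\lambda e^{d x}>0$ and $c:=\cos(d y)\ge 0$, and both denominators are strictly positive.

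Next I would prove the two one-sided bounds separately. For $r(d x,d y,\lambda)>-\ln\beta$: exponentiating and clearing the positive denominator, this is equivalent to $\beta^2(1+\gamma^2 t^2+2\gamma t c)>\beta^2+t^2+2\beta t c$, i.e.\ after cancellation to $(\beta^2\gamma^2-1)t^2+2\beta(\beta\gamma-1)t c>0$; since $\beta\gamma>1$ both coefficients are nonnegative and the first term is strictly positive because $t>0$. For $r(d x,d y,\lambda)<\ln\gamma$: likewise this reduces to $1+\gamma^2 t^2+2\gamma t c<\gamma^2(\beta^2+t^2+2\beta t c)$, i.e.\ to $(\beta^2\gamma^2-1)+2\gamma t c(\beta\gamma-1)>0$, which is again immediate from $\beta\gamma>1$. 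Combining the two bounds gives $r(d x,d y,\lambda)\in(-\ln\beta,\ln\gamma)$.

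There is essentially no obstacle here; this is a short computation. The only points that call for care are (i) confirming that $|d y|<\pi/2$ on $U(x_0,x_1,k)$, so that Lemma~\ref{lem:bounded-2-spin:r-h} genuinely applies and $\cos(d y)$ is nonnegative, and (ii) keeping the inequalities strict, which is guaranteed by $t>0$. As a sanity check, the target interval $(-\ln\beta,\ln\gamma)$ is nonempty precisely when $\ln(\beta\gamma)>0$, i.e.\ when $\beta\gamma>1$, which is the hypothesis of the lemma.
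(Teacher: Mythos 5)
Your proposal is correct and follows essentially the same route as the paper: apply the closed form of Lemma \ref{lem:bounded-2-spin:r-h}, note $|d y|<\pi/2$ so $\cos(dy)>0$, and verify the two bounds by elementary algebra (the paper writes the ratio minus $\gamma^2$ and minus $1/\beta^2$ directly, which after clearing denominators is the same computation as yours). No gaps.
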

\begin{proof}
    By $k < \frac{\pi}{2 d (x_0 - x_1)}$, we know that $|d y| \le d k (x_0 - x_1) < \frac{\pi}{2}, \cos(d y) > 0$. Hence, there holds
    \begin{align*}
        \frac{1 + \gamma^2 \lambda^2 e^{2 d x} + 2 \gamma \lambda e^{d x} \cos(d y)}{\beta^2 + \lambda^2 e^{2 d x} + 2 \beta \lambda e^{d x} \cos(d y)} - \gamma^2 
        &= -\frac{(\beta  \gamma -1) \left(\beta  \gamma + 1 +2 \gamma  \lambda  e^{d x} \cos (d y)\right)}{\beta ^2+\lambda ^2 e^{2 d x}+2 \beta  \lambda  e^{d x} \cos (d y)} < 0, \\
        \frac{1 + \gamma^2 \lambda^2 e^{2 d x} + 2 \gamma \lambda e^{d x} \cos(d y)}{\beta^2 + \lambda^2 e^{2 d x} + 2 \beta \lambda e^{d x} \cos(d y)} - \frac{1}{\beta^2}
        &= \frac{ (\beta  \gamma -1) \lambda e^{d x} \left( (\beta  \gamma +1) \lambda e^{d x}+2 \beta  \cos (d y)\right)}{\beta ^2 \left(\beta ^2+\lambda ^2 e^{2 d x}+2 \beta  \lambda  e^{d x} \cos (d y)\right)} > 0.
    \end{align*}
\end{proof}

\begin{lemma}\label{lem:ferro-positive:H}
    For $\beta\gamma > 1$ and $0 < \lambda_0 < \left(\frac{\beta} {\gamma}\right)^{\frac{\sqrt{\beta\gamma}}{\sqrt{\beta\gamma} - 1}} \max\{1, \gamma\}^{\frac{\sqrt{\beta \gamma} + 1}{\sqrt{\beta\gamma} - 1} - d}$, with $x_0 = \max\{0, \ln(\gamma)\}$, there holds $H(x, \lambda) > 0$ for $x \in (-\infty, x_0]$ and $\lambda \in [0, \lambda_0]$. 
\end{lemma}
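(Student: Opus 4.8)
The plan is to compute $H(x,\lambda)$ in closed form, reduce the two-parameter statement to a one-variable monotonicity problem, and then locate the minimum of the resulting function; the stated bound on $\lambda_0$ will turn out to be exactly what "that minimum is positive" requires. Throughout write $d=\Delta-1$. For $\lambda\ge 0$ we have $1+\gamma\lambda e^{dx}>0$ and $\beta+\lambda e^{dx}>0$, so Lemma \ref{lem:bounded-2-spin:r-h} applies. Differentiating $G(x,k,\lambda)=k\big(x_0-r(dx,dk(x_0-x),\lambda)\big)-h(dx,dk(x_0-x),\lambda)$ at $k=0^{+}$, the term $k\,\partial_k r$ vanishes, $r(dx,0,\lambda)=\ln\frac{1+\gamma\lambda e^{dx}}{\beta+\lambda e^{dx}}$, and from the closed form of $h$ (using $\beta\gamma>1$, $\lambda\ge0$) one reads off $\partial_y h|_{y=0^{+}}=\frac{(\beta\gamma-1)\lambda e^{dx}}{(\beta+\lambda e^{dx})(1+\gamma\lambda e^{dx})}$. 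Hence, writing $t:=\lambda e^{dx}$ and $q(t):=\frac{(\beta\gamma-1)t}{(\beta+t)(1+\gamma t)}$,
\[
H(x,\lambda)=x_0-\ln\frac{1+\gamma t}{\beta+t}-q(t)\,d(x_0-x).
\]
Put $T_0:=\lambda_0 e^{dx_0}$ and $\Phi(t):=x_0-\ln\frac{1+\gamma t}{\beta+t}-q(t)\ln(T_0/t)$. Since $d(x_0-x)=\ln(T_0/t)-\ln(\lambda_0/\lambda)$ and $q(t)\ge0$ here, for $\lambda\in(0,\lambda_0]$ we get $H(x,\lambda)=\Phi(t)+q(t)\ln(\lambda_0/\lambda)\ge\Phi(t)$ with $t\in(0,T_0]$; and for $\lambda=0$, $H(x,0)=x_0+\ln\beta=\ln(\beta\max\{1,\gamma\})>0$ directly. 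So it is enough to prove $\Phi>0$ on $(0,T_0]$.

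The key observation is a cancellation. Because $\frac{d}{dt}\ln\frac{1+\gamma t}{\beta+t}=\frac{\beta\gamma-1}{(1+\gamma t)(\beta+t)}=q(t)/t$, the two terms of type $q(t)/t$ in $\Phi'$ cancel and
\[
\Phi'(t)=-q'(t)\ln(T_0/t),\qquad q'(t)=\frac{(\beta\gamma-1)(\beta-\gamma t^{2})}{\big[(\beta+t)(1+\gamma t)\big]^{2}}.
\]
On $(0,T_0)$ we have $\ln(T_0/t)>0$, so $\Phi'$ has the sign of $-(\beta-\gamma t^{2})$: $\Phi$ is decreasing on $(0,\min\{\sqrt{\beta/\gamma},T_0\})$ and, if $\sqrt{\beta/\gamma}<T_0$, increasing on $(\sqrt{\beta/\gamma},T_0)$. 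Together with $\lim_{t\to0^{+}}\Phi(t)=x_0+\ln\beta>0$, this shows the minimum of $\Phi$ on $(0,T_0]$ is attained at $t^{\star}:=\min\{\sqrt{\beta/\gamma},T_0\}$.

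It remains to check $\Phi(t^{\star})>0$ in the two cases. If $T_0\le\sqrt{\beta/\gamma}$, then $t^{\star}=T_0$ and $\Phi(T_0)=x_0-\ln\frac{1+\gamma T_0}{\beta+T_0}>0$, because $t\mapsto\frac{1+\gamma t}{\beta+t}$ is increasing with supremum $\gamma\le\max\{1,\gamma\}=e^{x_0}$. If $T_0>\sqrt{\beta/\gamma}$, then $t^{\star}=\sqrt{\beta/\gamma}$, and using the identities $\frac{1+\gamma\sqrt{\beta/\gamma}}{\beta+\sqrt{\beta/\gamma}}=\sqrt{\gamma/\beta}$, $q\big(\sqrt{\beta/\gamma}\big)=\frac{\sqrt{\beta\gamma}-1}{\sqrt{\beta\gamma}+1}$, $\ln\big(T_0/\sqrt{\beta/\gamma}\big)=\ln\lambda_0+dx_0-\tfrac12\ln(\beta/\gamma)$, and $x_0=\ln\max\{1,\gamma\}$, the inequality $\Phi(\sqrt{\beta/\gamma})>0$ becomes, after multiplying by the positive factor $\frac{\sqrt{\beta\gamma}+1}{\sqrt{\beta\gamma}-1}$ and exponentiating, exactly $\lambda_0<\big(\frac{\beta}{\gamma}\big)^{\sqrt{\beta\gamma}/(\sqrt{\beta\gamma}-1)}\max\{1,\gamma\}^{(\sqrt{\beta\gamma}+1)/(\sqrt{\beta\gamma}-1)-d}$, which is the hypothesis. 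The only nonroutine step is spotting the cancellation $\Phi'=-q'\ln(T_0/t)$: once it is in hand, the a priori two-variable optimization collapses to a one-line monotonicity argument, and the stated threshold on $\lambda_0$ appears precisely as the condition $\Phi(\sqrt{\beta/\gamma})\ge0$.
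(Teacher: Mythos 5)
Your proof is correct and follows essentially the same route as the paper: the same closed form of $H$, the same derivative cancellation (your $\Phi'(t)=-q'(t)\ln(T_0/t)$ is exactly the paper's $\partial H/\partial x$ computation after the substitution $t=\lambda e^{dx}$), the same sign change at $t=\sqrt{\beta/\gamma}$, and the same two-case evaluation (endpoint via $\frac{1+\gamma t}{\beta+t}<\gamma\le e^{x_0}$, critical point reproducing exactly the stated threshold on $\lambda_0$). The only difference is cosmetic: you absorb the $\lambda$-dependence up front via $H\ge\Phi$ and a single variable $t\in(0,T_0]$, whereas the paper fixes $\lambda$, minimizes over $x$, and bounds $\ln\lambda\le\ln\lambda_0$ at the end.
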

\begin{proof}
    Recalling the definition of $H(x, \lambda)$, for $\beta\gamma > 1$ and $\lambda \ge 0$, we have
    \begin{align*}
        H(x, \lambda) = x_0 - \ln \left(\frac{\gamma \lambda e^{d x} + 1}{\beta + \lambda  e^{d x}}\right) - \frac{d \lambda  (\beta  \gamma -1) e^{d x} (x_0-x)}{(\gamma \lambda e^{d x} + 1)(\beta + \lambda  e^{d x})}.
    \end{align*}
    
    Note that 
    \begin{align*}
        \frac{\partial H(x, \lambda)}{\partial x} = -\frac{d^2 (\beta \gamma -1) \lambda e^{d x} (x_0 - x) \left(\beta -\gamma  \lambda ^2 e^{2 d x} \right)}{\left(\gamma \lambda e^{d x} + 1\right)^2 \left(\beta +\lambda e^{d x} \right)^2 }.
    \end{align*}
    
    If $\lambda e^{d x_0} \le \sqrt{\frac{\beta}{\gamma}}$, then $\frac{\partial H(x, \lambda)}{\partial x} \le 0$ holds for $x \in [x_1, x_0]$ and consequently we have
    \begin{align*}
        H(x, \lambda) \ge H(x_0, \lambda) = x_0 - \ln \left(\frac{\gamma \lambda e^{d x_0} + 1}{\beta + \lambda  e^{d x_0}}\right) > 0
    \end{align*}
    where we have recalled that $r(d x_0, 0, \lambda) < \ln(\gamma)$ referred to Lemma \ref{lem:ferro-positive:range-r}. 
    
    On the other hand, if $\lambda e^{d x_0} > \sqrt{\frac{\beta}{\gamma}}$, 
    % then we know that 
    % \begin{align*}
    %     \lambda_0 \beta^{-d} &< \left(\frac{\beta} {\gamma}\right)^{\frac{\sqrt{\beta\gamma}}{\sqrt{\beta\gamma} - 1}} \max\{1, \gamma\}^{\frac{\sqrt{\beta \gamma} + 1}{\sqrt{\beta\gamma} - 1} - d} \beta^{-d} \\
    %     &< \left(\beta \max\{1, \gamma\}\right)^{\frac{\sqrt{\beta \gamma} + 1}{\sqrt{\beta\gamma} - 1} - d} \sqrt{\frac{\beta}{\gamma}} \le \sqrt{\frac{\beta}{\gamma}},
    % \end{align*}
    % where we have noticed that $1/\gamma < \beta$ and $\beta \max\{1, \gamma\} > 1$ according to $\beta\gamma > 1$, and $\frac{\sqrt{\beta \gamma} + 1}{\sqrt{\beta\gamma} - 1} \le d$. 
    then there exists $\tilde{x}_{\lambda} \in (-\infty, x_0)$ such that $\lambda e^{d \tilde{x}_{\lambda}} = \sqrt{\frac{\beta}{\gamma}}$, and $H(x, \lambda)$ is decreasing on $(-\infty, \tilde{x}_{\lambda}]$ and is increasing on $[\tilde{x}_{\lambda}, x_0]$. Therefore, we have
    \begin{align*}
        H(x, \lambda) &\ge H(\tilde{x}_{\lambda}, \lambda) \\
        &= x_0 + \frac{1}{2}\ln(\beta/\gamma) - \frac{d(\beta\gamma -1) (x_0 - \tilde{x}_{\lambda})}{(\sqrt{\beta\gamma} + 1)^2} \\
        &= \frac{\sqrt{\beta\gamma}}{\sqrt{\beta\gamma} + 1} \ln(\beta/\gamma) -  \frac{\sqrt{\beta\gamma} - 1}{\sqrt{\beta\gamma} + 1} \ln(\lambda) - \left( \frac{d(\sqrt{\beta\gamma} - 1)}{\sqrt{\beta\gamma} + 1} - 1 \right) x_0 \\
        &\ge \frac{\sqrt{\beta\gamma} - 1}{\sqrt{\beta\gamma} + 1}\left( \frac{\sqrt{\beta\gamma}}{\sqrt{\beta\gamma} - 1} \ln(\beta/\gamma) + \left(\frac{\sqrt{\beta\gamma} + 1}{\sqrt{\beta\gamma} - 1} - d \right)\max\{0, \ln(\gamma)\} - \ln(\lambda_0) \right) > 0.
    \end{align*}
\end{proof}

\begin{corollary}\label{coro:ferro-positive}
    For $\sqrt{\beta\gamma} > 1$ and $0 < \lambda_0 < \left(\frac{\beta} {\gamma}\right)^{\frac{\sqrt{\beta\gamma}}{\sqrt{\beta\gamma} - 1}} \max\{1, \gamma\}^{\frac{\sqrt{\beta \gamma} + 1}{\sqrt{\beta\gamma} - 1} - d}$, with setting $x_0 = \max\{0, \ln(\gamma)\}, x_1 = \min\{0, - \ln(\beta)\}$, there exists $k > 0$ such that 
    \begin{align*}
        \phi(\lambda e^{d w}) \in \open{U(x_0, x_1, k)}, ~~~\forall w \in U(x_0, x_1, k).
    \end{align*}
    
    Furthermore, $U = U(x_0, x_1, k)$ satisfies all conditions in Lemma \ref{lem:bounded-2-spin:U}.
\end{corollary}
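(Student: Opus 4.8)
The plan is to assemble the ingredients already established and verify, one by one, the five reduced conditions collected just before this subsection (the bullet list following Lemma~\ref{lem:bounded-2-spin:lambda-e-d-x}); by Lemma~\ref{lem:bounded-2-spin:tilde-G} and Lemma~\ref{lem:bounded-2-spin:H} these are precisely what is needed for $U = U(x_0,x_1,k)$ to satisfy every hypothesis of Lemma~\ref{lem:bounded-2-spin:U}. Throughout $d = \Delta - 1$. First note that $\beta\gamma > 1$ forces $\gamma > 0$, so $x_0 = \max\{0,\ln\gamma\}$, $x_1 = \min\{0,-\ln\beta\}$ are well defined, satisfy $x_1 \le 0 \le x_0$, and obey $x_0 - x_1 = \max\{0,\ln\gamma\} + \max\{0,\ln\beta\} > 0$ (the only surviving sign patterns of $\beta,\gamma$ relative to $1$ all give a positive sum, using $\beta\gamma > 1$), so the triangle in~\eqref{def:U} is nondegenerate and, for $k$ small, contained in $\{\,|\im w| < \pi\,\}$.

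Next I would dispatch the estimates that do not involve $k$. Lemma~\ref{lem:ferro-positive:H} applies verbatim with this $x_0$ --- its hypothesis on $\lambda_0$ is exactly the one in the statement --- and gives $H(x,\lambda) > 0$ for all $x \in (-\infty,x_0] \supseteq [x_1,x_0]$ and $\lambda \in [0,\lambda_0]$, which is the fifth condition. Lemma~\ref{lem:ferro-positive:range-r} gives $r(dx,dy,\lambda) \in (-\ln\beta,\ln\gamma)$ for every $x+\i y \in U(x_0,x_1,k)$ and $\lambda > 0$, and since $x_1 \le -\ln\beta$ and $\ln\gamma \le x_0$ this is the fourth condition with $x_2 = x_1$, $x_3 = x_0$ (the degenerate value $\lambda = 0$ yields $r = -\ln\beta$ on the boundary of $U$, but $\lambda = 0$ is excluded from the theorem and trivial anyway). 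The positivity requirements $1 + \gamma\lambda_0 e^{d x_0} > 0$, $\beta + \lambda_0 e^{d x_0} > 0$, $1 + \lambda_0 e^{\Delta x_0} > 0$ are immediate from $\lambda_0 > 0$, $x_0 \ge 0$, $\beta,\gamma > 0$. For the membership conditions $-\ln\beta,\ln\gamma \in U$ one only checks $-\ln\beta,\ln\gamma \in [x_1,x_0]$, and $-\ln\beta \le \ln\gamma$ is literally $\beta\gamma \ge 1$; and when $(1-\gamma)(\beta-1) > 0$, the case $\beta > 1 > \gamma$ gives $\tfrac{1-\gamma}{\beta-1} < \tfrac1\beta = e^{x_1}$ (again by $\beta\gamma > 1$), so $\ln\tfrac{1-\gamma}{\beta-1} < x_1$ lies outside $U$, while $\beta < 1 < \gamma$ gives symmetrically $\tfrac{1-\gamma}{\beta-1} > \gamma = e^{x_0}$, so $\ln\tfrac{1-\gamma}{\beta-1} > x_0$; the other sign patterns are handled by the first clause of Lemma~\ref{lem:bounded-2-spin:-1}.

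The only step calling for care is the choice of $k$, which must be small enough simultaneously for several purposes. From $H > 0$ on $[x_1,x_0]\times[0,\lambda_0]$, Lemma~\ref{lem:bounded-2-spin:H} produces $k_1 > 0$ with $G(x,k,\lambda) > 0$ for all $x\in[x_1,x_0]$, $\lambda\in[0,\lambda_0]$ (its proof in fact gives this for every $0 < k \le k_1$, via the integral $G = \int_0^k \partial_k G$). I would then take any $k$ with $0 < k < \min\{k_1,\ \tfrac{\pi}{4\Delta(x_0-x_1)},\ 1\}$; this keeps $\phi(\lambda e^{d w})$ injective and bounded on $\overline{U(x_0,x_1,k)}$, meets the smallness hypothesis of Lemma~\ref{lem:bounded-2-spin:tilde-G} (whose extra $\beta\gamma<1$ clause is vacuous here), and preserves $G > 0$. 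Lemma~\ref{lem:bounded-2-spin:tilde-G} then upgrades $G > 0$ on the slanted edge of $U$ to $\tilde G(x,y,k,\lambda) \ge G(x,k,\lambda) > 0$ over the whole upper half of $U$, i.e.\ $|\im \phi(\lambda e^{dw})| < k\,(x_0 - \re \phi(\lambda e^{dw}))$; conjugate symmetry $\phi(\lambda e^{d\bar w}) = \overline{\phi(\lambda e^{dw})}$ for real $\lambda$ covers the lower half, and together with $\re \phi(\lambda e^{dw}) = r(dx,dy,\lambda) \in (x_1,x_0)$ this is exactly $\phi(\lambda e^{dw}) \in \open{U(x_0,x_1,k)}$ for all $w \in U(x_0,x_1,k)$, the first assertion. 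The ``furthermore'' then follows: the five conditions just verified, plus Lemma~\ref{lem:bounded-2-spin:-1} for $-1 \notin \phi^{-1}(U)$ and Lemma~\ref{lem:bounded-2-spin:lambda-e-d-x} for $\lambda e^{dw}\notin(-\infty,-1/\gamma]$ and $\lambda e^{\Delta w}\neq -1$, are all the hypotheses of Lemma~\ref{lem:bounded-2-spin:U}. The substantial work --- the lower bound on $H$ in Lemma~\ref{lem:ferro-positive:H} and the monotonicity-in-$y$ computation behind Lemma~\ref{lem:bounded-2-spin:tilde-G} --- is already done; here the only real obstacle is keeping track of which of the many smallness constraints on $k$ are in force at each step.
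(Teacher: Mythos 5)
Your proposal is correct and follows essentially the same route as the paper: it verifies the same five reduced conditions (membership of $-\ln\beta,\ln\gamma$, exclusion of $\ln\frac{1-\gamma}{\beta-1}$, the trivial positivity bounds, the range of $r$ via Lemma~\ref{lem:ferro-positive:range-r}, and $H>0$ via Lemma~\ref{lem:ferro-positive:H}), with the same algebraic checks for the $-1$ exclusion. The extra detail on choosing $k$ and passing from $G>0$ on the slanted edge to interior containment is exactly what the paper delegates to the framework preceding Section~7.1 (Lemmas~\ref{lem:bounded-2-spin:tilde-G} and~\ref{lem:bounded-2-spin:H}), so no new idea is needed or missing.
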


\begin{proof}
    If $\beta > 1 > \gamma$, then we have
    $
        \ln\left(\frac{1 - \gamma}{\beta - 1}\right) < -\ln(\beta) = x_1, 
    $
    and thus $\ln\left(\frac{1 - \gamma}{\beta - 1}\right) \notin U$. 
    
    If $\beta < 1 < \gamma$, then there holds
    $
        \ln\left(\frac{\gamma - 1}{1 - \beta}\right) > \ln(\gamma) = x_0.
    $
    and hence $\ln\left(\frac{1 - \gamma}{\beta - 1}\right) \notin U$. 
    
    And it is trivial that $1 + \gamma \lambda_0 e^{d x_0} > 0, \beta + \lambda_0 e^{d x_0} > 0, 1 + \lambda_0 e^{(d + 1) x_0} > 0$.
    
    Then with Lemma \ref{lem:ferro-positive:range-r} and Lemma \ref{lem:ferro-positive:H}, we know that $U$ satisfies all conditions in Lemma \ref{lem:bounded-2-spin:U}. 
    
\end{proof}

\begin{figure}[H]
\centering
\includegraphics[width=\textwidth]{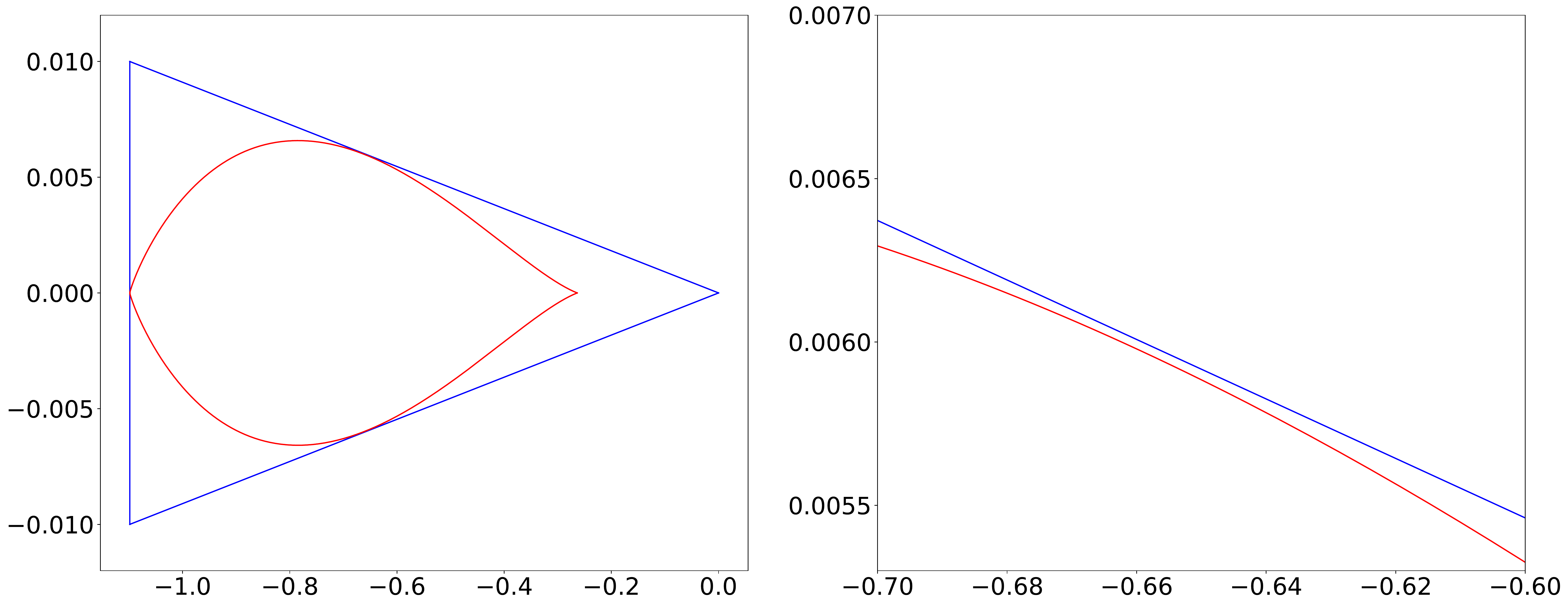}
\caption{Blue lines: the boundary of $U(0, -\ln(\beta), 0.01)$. Red curve: the boundary of $\{\phi(\lambda e^{d w}): w \in U(0, -\ln(\beta), 0.01)\}$ where $\beta = 3, \gamma = 0.8, d = 10$ and $\lambda = 41 < \lambda_c \approx 41.6$. The figure on the right enlarges part of the figure on the left to clarify the containment relationship between the two regions clearly. }
\end{figure}

\begin{figure}[H]
\centering
\includegraphics[width=\textwidth]{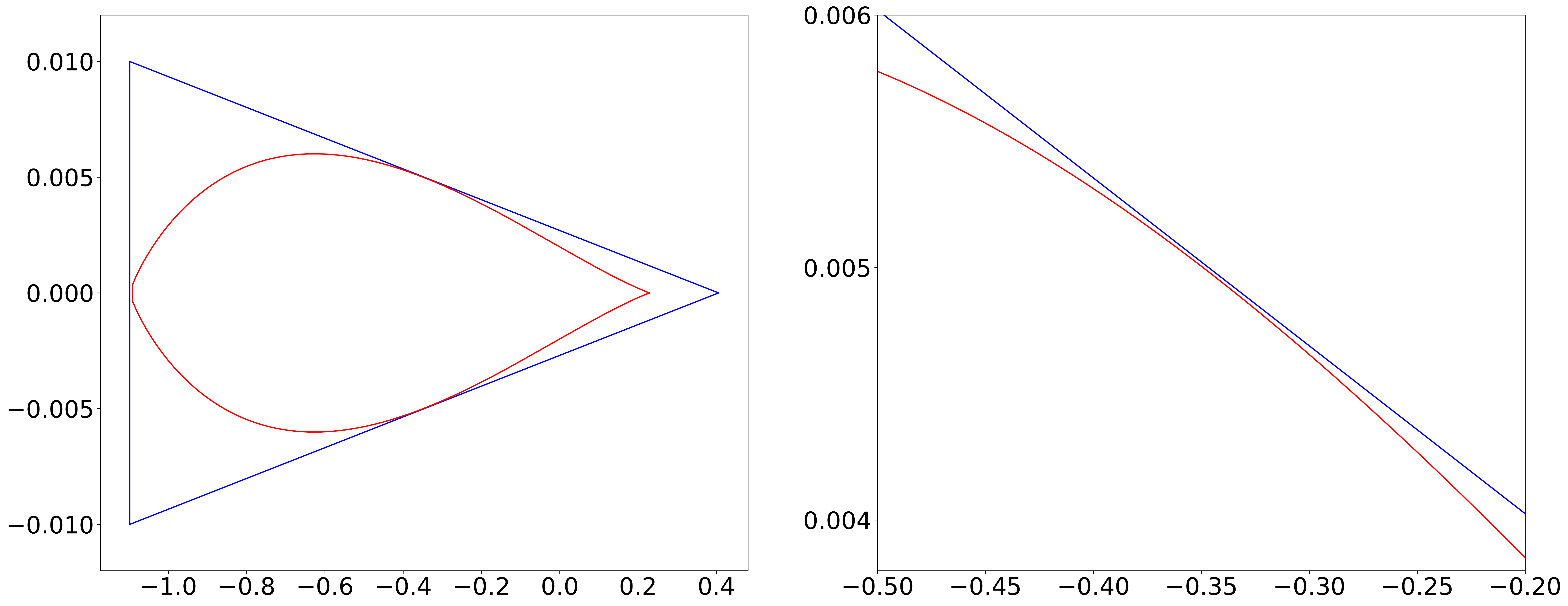}
\caption{Blue lines: the boundary of $U(\ln(\gamma), -\ln(\beta), 0.01)$. Red curve: the boundary of $\{\phi(\lambda e^{d w}): w \in U(\ln(\gamma), -\ln(\beta), 0.01)\}$ where $\beta = 3, \gamma = 1.5, d = 5$ and $\lambda = 1.5 < \lambda_c \approx 1.51$. The figure on the right enlarges part of the figure on the left to clarify the containment relationship between the two regions clearly. }
\end{figure}

\subsection{The case $\beta\gamma > 1$ and $\lambda_0 < 0$}
\begin{lemma}\label{lem:ferro-negative:range-r}
    For $\beta\gamma > 1$ and $-\max\{1, \gamma\}^{-(d + 1)} < \lambda_0 < 0$,
    with setting $x_0 = \max\{0, \ln(\gamma)\}$, there holds $r(d x, d y, \lambda) \in \left[\ln\left(\frac{1 + \gamma \lambda_0 e^{d x_0}}{\beta + \lambda_0 e^{d x_0}}\right), \ln(\gamma)\right)$ for $x + \i y \in U(x_0, x_1, k)$ and $\lambda \in [0, \lambda_0]$.
\end{lemma}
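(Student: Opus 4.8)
The plan is to pass to the closed form of Lemma~\ref{lem:bounded-2-spin:r-h} and reparametrise, turning the claim into two one-variable monotonicity statements. Write $d = \Delta - 1$, and for $w = x + \i y \in U(x_0, x_1, k)$ set $t \triangleq \lambda e^{dx}$ and $c \triangleq \cos(dy)$. Since $k < \frac{\pi}{2\Delta(x_0 - x_1)}$ forces $d|y| < \frac{\pi}{2}$ we have $c \in (0,1]$, and since $\lambda$ lies between $0$ and $\lambda_0 < 0$ while $x \le x_0$, the quantity $t$ varies over $[\lambda_0 e^{dx_0}, 0]$; in this notation
\[
  r(dx, dy, \lambda) = \tfrac12 \ln g(t,c), \qquad g(t,c) \triangleq \frac{1 + \gamma^2 t^2 + 2\gamma t c}{\beta^2 + t^2 + 2\beta t c}.
\]
The first step is to record the sign facts forced by the hypothesis. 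From $e^{dx_0} = \max\{1,\gamma\}^d$ and $\lambda_0 > -\max\{1,\gamma\}^{-(d+1)}$ one gets $|t| \le |\lambda_0| e^{dx_0} < \max\{1,\gamma\}^{-1} \le 1/\gamma < \beta$, the last step using $\beta\gamma > 1$; hence $\gamma t^2 < \beta$, $\beta + t > 0$, and $1 + \gamma t \ge 1 + \gamma\lambda_0 e^{dx_0} > 0$. In particular both the numerator and the denominator of $g$ are strictly positive throughout the region (so $\ln g$ makes sense and Lemma~\ref{lem:bounded-2-spin:r-h} applies), and the right endpoint $\frac{1 + \gamma\lambda_0 e^{dx_0}}{\beta + \lambda_0 e^{dx_0}}$ of the claimed interval is a well-defined positive number.

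For the upper bound I would reuse the identity from the proof of Lemma~\ref{lem:ferro-positive:range-r}:
\[
  g(t,c) - \gamma^2 = \frac{(1 - \beta\gamma)\bigl(1 + \beta\gamma + 2\gamma t c\bigr)}{\beta^2 + t^2 + 2\beta t c}.
\]
As $1 - \beta\gamma < 0$, it is enough to check $1 + \beta\gamma + 2\gamma t c > 0$. Now $tc = e^{dx}\,\lambda\cos(dy)$ with $\lambda\cos(dy) \in [\lambda_0, 0]$ and $e^{dx} \le e^{dx_0}$, so $tc \ge \lambda_0 e^{dx_0}$ and therefore $1 + \beta\gamma + 2\gamma t c \ge 1 + \beta\gamma + 2\gamma\lambda_0 e^{dx_0} > \beta\gamma - 1 > 0$, using $\gamma\lambda_0 e^{dx_0} > -1$ from the first step. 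Thus $g(t,c) < \gamma^2$, i.e.\ $r(dx,dy,\lambda) < \ln\gamma$.

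For the lower bound I would minimise $g$ in $c$ and then in $t$. Differentiating gives $\partial_c g(t,c) = \frac{2t(\beta\gamma - 1)(\beta - \gamma t^2)}{(\beta^2 + t^2 + 2\beta t c)^2} \le 0$ on $t \in [\lambda_0 e^{dx_0},0]$ (since $t \le 0$, $\beta\gamma > 1$, and $\beta - \gamma t^2 > 0$), so $g(t,c) \ge g(t,1) = \bigl(\frac{1+\gamma t}{\beta + t}\bigr)^2$. With $h(t) \triangleq \frac{1+\gamma t}{\beta + t}$ we have $h > 0$ on $[\lambda_0 e^{dx_0},0]$ and $h'(t) = \frac{\beta\gamma - 1}{(\beta+t)^2} > 0$, so $h$ is increasing there and $h(t)^2 \ge h(\lambda_0 e^{dx_0})^2$; the case $\lambda = 0$ (so $t = 0$) is included since then $g \equiv 1/\beta^2 = h(0)^2 \ge h(\lambda_0 e^{dx_0})^2$. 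Combining the two monotonicities, $g(t,c) \ge \bigl(\frac{1 + \gamma\lambda_0 e^{dx_0}}{\beta + \lambda_0 e^{dx_0}}\bigr)^2$, and $\tfrac12\ln$ gives $r(dx,dy,\lambda) \ge \ln\frac{1+\gamma\lambda_0 e^{dx_0}}{\beta + \lambda_0 e^{dx_0}}$, with equality at $\lambda = \lambda_0$, $x = x_0$, $y = 0$, which accounts for the closed bracket.

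None of the calculations is individually hard, and the upper bound essentially transcribes Lemma~\ref{lem:ferro-positive:range-r}; the only genuine bookkeeping — and the one place the bound on $\lambda_0$ is really used — is the chain $|\lambda_0| e^{dx_0} < \max\{1,\gamma\}^{-1} \le 1/\gamma < \beta$, which simultaneously supplies $\beta - \gamma t^2 > 0$ (fixing the sign of $\partial_c g$), the positivity of $\beta + \lambda_0 e^{dx_0}$ and of $1 + \gamma\lambda_0 e^{dx_0}$, and the positivity of the denominators of $g$; one should double-check that it holds uniformly in the two regimes $\gamma \ge 1$ and $\gamma < 1$, which is automatic from $\gamma \le \max\{1,\gamma\}$ and the fact that $\beta\gamma > 1$ forces $1/\gamma < \beta$.
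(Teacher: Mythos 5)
Your proposal is correct and follows essentially the same route as the paper: the upper bound is the same algebraic identity showing $g(t,c)-\gamma^2<0$ via $1+\beta\gamma+2\gamma tc>\beta\gamma-1>0$, and the lower bound likewise reduces to the worst case $\cos(dy)=1$ and then uses monotonicity of $s\mapsto\frac{1+\gamma s}{\beta+s}$ together with $|\lambda_0|e^{dx_0}<1/\gamma<\beta$. The only cosmetic difference is that you justify the reduction to $c=1$ by the sign of $\partial_c g$, whereas the paper uses the decomposition $\frac{N}{D}=\frac{\gamma}{\beta}-\frac{(\beta\gamma-1)(\beta-\gamma t^2)}{\beta D}$ and bounds $D\ge(\beta+t)^2$; these are equivalent.
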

\begin{proof}
    First note that 
    \begin{align*}
        |\lambda| e^{d x} \le |\lambda_0| e^{d x_0} < \begin{cases}
            1 \le \frac{1}{\gamma}, ~~~&\text{ for } \gamma \le 1, \\
            \frac{1}{\gamma^{d+1}} \gamma^d = \frac{1}{\gamma}, ~~~&\text{ for } \gamma > 1.
        \end{cases}
    \end{align*}
    By $k < \frac{\pi}{2 d (x_0 - x_1)}$, we know that $|d y| \le d k (x_0 - x_1) < \frac{\pi}{2}, \cos(d y) > 0$. \\
    Hence, there holds $\beta\gamma + 1 + 2\gamma \lambda e^{d x} \cos(d y) > \beta\gamma - 1 > 0$ and
    \begin{align*}
        \frac{1 + \gamma^2 \lambda^2 e^{2 d x} + 2 \gamma \lambda e^{d x} \cos(d y)}{\beta^2 + \lambda^2 e^{2 d x} + 2 \beta \lambda e^{d x} \cos(d y)} - \gamma^2 
        &= -\frac{(\beta  \gamma -1) \left(\beta  \gamma + 1 +2 \gamma  \lambda  e^{d x} \cos (d y)\right)}{\beta ^2+\lambda ^2 e^{2 d x}+2 \beta  \lambda  e^{d x} \cos (d y)} < 0.
    \end{align*}
    On the other hand, we have $\beta - \gamma \lambda^2 e^{2d x} > \beta - 1/\gamma > 0$ and
    \begin{align*}
        &\quad \frac{1 + \gamma^2 \lambda^2 e^{2 d x} + 2 \gamma \lambda e^{d x} \cos(d y)}{\beta^2 + \lambda^2 e^{2 d x} + 2 \beta \lambda e^{d x} \cos(d y)} \\
        &= \frac{\beta}{\gamma} -\frac{(\beta  \gamma -1) \left(\beta -\gamma  \lambda ^2 e^{2 d x}\right)}{\beta  \left(\beta ^2+\lambda ^2 e^{2 d x}+2 \beta  \lambda  e^{d x} \cos (d y)\right)} \\
        &\ge \frac{\beta}{\gamma} -\frac{(\beta  \gamma -1) \left(\beta -\gamma  \lambda ^2 e^{2 d x}\right)}{\beta  \left(\beta ^2+\lambda ^2 e^{2 d x} + 2 \beta  \lambda  e^{d x} \right)} \\
        &= \left(\frac{1 + \gamma \lambda e^{d x}}{\beta + \lambda e^{d x}}\right)^2 
        \ge \left(\frac{1 + \gamma \lambda_0 e^{d x_0}}{\beta + \lambda_0 e^{d x_0}}\right)^2,
    \end{align*}
    where we have used that the function $\psi(s) = \frac{1 + \gamma s}{\beta + s}$ is strictly increasing on $(-1/\gamma, +\infty)$. 
\end{proof}

\begin{lemma}\label{lem:ferro-negative:H}
    For $\beta\gamma > 1$ and $-\max\{1, \gamma\}^{-(d + 1)} < \lambda_0 < 0$, with $x_0 = \max\{0, \ln(\gamma)\}$, there holds $H(x, \lambda) > 0$ for $x \in (-\infty, x_0]$ and $\lambda \in [\lambda_0, 0]$.
\end{lemma}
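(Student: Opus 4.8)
The plan is to follow the template of Lemma~\ref{lem:ferro-positive:H}, adjusting the signs forced by $\lambda<0$. Write $d=\Delta-1$ and $u=\lambda e^{dx}$. First I would observe that for $\lambda\in[\lambda_0,0]$ and $x\le x_0$ we have $|u|\le|\lambda_0|e^{dx_0}<1/\gamma\le\beta$: when $\gamma\le 1$ (so $x_0=0$) this reads $|\lambda_0|<1\le 1/\gamma$, when $\gamma>1$ (so $x_0=\ln\gamma$) it reads $|\lambda_0|\gamma^{d}<\gamma^{-(d+1)}\gamma^{d}=1/\gamma$, and $1/\gamma<\beta$ since $\beta\gamma>1$. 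Consequently $1+\gamma u>0$, $\beta+u>0$ and $\beta-\gamma u^{2}>\beta-1/\gamma>0$, so Lemma~\ref{lem:bounded-2-spin:r-h} applies (with $w$ replaced by $dw$). Differentiating $G(x,k,\lambda)$ in $k$ at $k=0$, and using that on the curve $w=x+\i k(x_0-x)$ the imaginary part stays $\ge0$ so the one-sided $k$-derivative of the modulus of the imaginary part is well defined, I get
\[
H(x,\lambda)=x_0-\ln\!\left(\frac{1+\gamma u}{\beta+u}\right)+\frac{d(\beta\gamma-1)\,u\,(x_0-x)}{(1+\gamma u)(\beta+u)},\qquad u=\lambda e^{dx}.
\]
This is the $\lambda>0$ formula except for the sign of the correction term (an effect of the absolute value in Lemma~\ref{lem:bounded-2-spin:r-h}); here the correction is $\le0$ because $(\beta\gamma-1)u<0$.

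It then suffices to verify $H>0$ at $x=x_0$, in the limit $x\to-\infty$, and at every interior critical point of $x\mapsto H(x,\lambda)$ (a continuous function on $(-\infty,x_0]$ with a positive limit at $-\infty$ attains its infimum, and if that infimum is $\le0$ it is attained at an interior point where the derivative vanishes). At $\lambda=0$, $H(x,0)=x_0+\ln\beta\ge\ln(\beta\gamma)>0$. For $\lambda<0$: as $x\to-\infty$, $u(x_0-x)\to0$, so $H\to x_0+\ln\beta>0$; at $x=x_0$ the correction vanishes and $H=x_0-\ln\!\left(\tfrac{1+\gamma u}{\beta+u}\right)>x_0-\ln\gamma\ge0$ because $\tfrac{1+\gamma u}{\beta+u}<\gamma$ whenever $\beta\gamma>1$. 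A routine differentiation gives
\[
\frac{\partial H}{\partial x}=\frac{d(\beta\gamma-1)\,u}{\left[(1+\gamma u)(\beta+u)\right]^{2}}\left(d(\beta-\gamma u^{2})(x_0-x)-2(1+\gamma u)(\beta+u)\right),
\]
so at an interior critical point $x^{\ast}$ the bracket vanishes, i.e. $d(x_0-x^{\ast})=\tfrac{2(1+\gamma u^{\ast})(\beta+u^{\ast})}{\beta-\gamma(u^{\ast})^{2}}$ with $u^{\ast}=\lambda e^{dx^{\ast}}\in(-1/\gamma,0)$; substituting this into the closed form cancels the $(x_0-x)$ factor, leaving
\[
H(x^{\ast},\lambda)=x_0-\ln\!\left(\frac{1+\gamma u^{\ast}}{\beta+u^{\ast}}\right)+\frac{2(\beta\gamma-1)\,u^{\ast}}{\beta-\gamma(u^{\ast})^{2}}=:g(u^{\ast}).
\]
Thus everything reduces to the one-variable claim $g(u)>0$ for all $u\in(-1/\gamma,0)$, which involves only $\beta,\gamma$.

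For that I would substitute $t=-\gamma u\in(0,1)$, put $B=\beta\gamma>1$ and $r=\tfrac{B-1}{1-t}>0$; using $1+r=\tfrac{B-t}{1-t}$ and $B-t^{2}=(1-t)(1+r+t)$, a short manipulation rewrites
\[
g(u)=\max\{0,-\ln\gamma\}+\ln\!\left(\frac{B-t}{1-t}\right)-\frac{2(B-1)t}{B-t^{2}}\ \ge\ \ln(1+r)-\frac{2rt}{1+r+t}.
\]
On $(0,1)$ the map $t\mapsto\tfrac{2rt}{1+r+t}$ is increasing (its $t$-derivative is $\tfrac{2r(1+r)}{(1+r+t)^{2}}>0$), hence bounded above by its value $\tfrac{2r}{2+r}$ at $t=1$, so the right-hand side is $\ge q(r):=\ln(1+r)-\tfrac{2r}{2+r}$; and $q(0)=0$ with $q'(r)=\tfrac{1}{1+r}-\tfrac{4}{(2+r)^{2}}=\tfrac{r^{2}}{(1+r)(2+r)^{2}}>0$, whence $q(r)>0$ for $r>0$. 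Therefore $g(u)\ge q(r)>0$, which completes the argument.

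The one genuinely non-mechanical point is the passage through the interior critical point in the second step. Unlike the $\lambda>0$ case, where the critical point sits at the explicit location $\lambda e^{dx}=\sqrt{\beta/\gamma}$, here it solves a transcendental equation; what rescues the argument is that the critical-point condition is affine in $(x_0-x)$, so plugging it back into $H$ eliminates that factor and leaves the clean rational term $\tfrac{2(\beta\gamma-1)u^{\ast}}{\beta-\gamma(u^{\ast})^{2}}$, after which the substitution $r=\tfrac{\beta\gamma-1}{1+\gamma u}$ collapses positivity to the elementary estimate $\ln(1+r)>\tfrac{2r}{2+r}$. The hypothesis $|\lambda_0|<\max\{1,\gamma\}^{-(d+1)}$ is needed only to keep $u$ inside $(-1/\gamma,0)$, where all the formulas above are valid; positivity of $g$ in fact holds on the whole interval $(-1/\gamma,0)$.
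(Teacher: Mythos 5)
Your proposal is correct, and its skeleton is the same as the paper's: the closed form of $H(x,\lambda)$ for $\lambda\le 0$, the expression for $\partial H/\partial x$, the observation that the critical-point condition is affine in $(x_0-x)$ so that substituting it back leaves $x_0-\ln\bigl(\tfrac{1+\gamma u}{\beta+u}\bigr)+\tfrac{2(\beta\gamma-1)u}{\beta-\gamma u^2}$, and the a priori bound $|\lambda|e^{dx}\le|\lambda_0|e^{dx_0}<1/\gamma<\beta$ guaranteeing all denominators stay positive — all of this matches the paper's proof of Lemma \ref{lem:ferro-negative:H} step for step. Where you genuinely diverge is in two places, both to your advantage in simplicity. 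First, the paper reduces to the critical point by proving that $\Psi(x)=d(x_0-x)(\beta-\gamma\lambda^2e^{2dx})-2(1+\gamma\lambda e^{dx})(\beta+\lambda e^{dx})$ is strictly convex (via $\Psi''>0$, which itself uses $|\lambda_0|e^{dx_0}<1/\gamma$), has a unique zero $\tilde x_\lambda$, and hence that $H$ is unimodal with $H\ge H(\tilde x_\lambda)$; you instead use the generic argument that if the infimum were $\le 0$ it would be attained at an interior critical point (since $H(x_0,\lambda)>0$ and $H\to x_0+\ln\beta>0$ as $x\to-\infty$), and you verify positivity at every critical point, so you never need uniqueness or convexity — your phrasing "attains its infimum" is only literally needed when the infimum is $\le 0$, which is exactly the case you invoke, so the logic is sound. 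Second, for the one-variable positivity the paper computes $\varphi'(s)$, locates the minimizer $s_0=-\beta+\sqrt{(\beta\gamma-1)\beta/\gamma}$ of $\varphi$ on $[-1/\gamma,0]$, and closes with $\ln(1-t)<-t$; your substitution $t=-\gamma u$, $r=\tfrac{\beta\gamma-1}{1-t}$ collapses the claim to $\ln(1+r)>\tfrac{2r}{2+r}$, avoiding the minimizer entirely (and the step where you bound $\tfrac{2rt}{1+r+t}\le\tfrac{2r}{2+r}$ with $r$ depending on $t$ is legitimate, since it is a pointwise inequality in $(r,t)$ with $t\le 1$ and the final bound $q(r)>0$ holds for every $r>0$). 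Both routes prove the lemma; yours trades the paper's explicit curve-sketching of $\Psi$ and $\varphi$ for two elementary inequalities, which makes the argument shorter and arguably more robust, while the paper's version yields the extra structural information (unimodality of $H$ in $x$ and the exact worst-case point $s_0$) that it reuses in the neighbouring cases.
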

\begin{proof}
    Recalling the definition of $H(x, \lambda)$, for $\beta\gamma > 1$ and $\lambda \le 0$, we have
    \begin{align*}
        H(x, \lambda) = x_0 - \ln \left(\frac{\gamma \lambda e^{d x} + 1}{\beta + \lambda  e^{d x}}\right) + \frac{d \lambda  (\beta  \gamma -1) e^{d x} (x_0-x)}{(\gamma \lambda e^{d x} + 1)(\beta + \lambda  e^{d x})}.
    \end{align*}
    It is clearly that $H(x, 0) = x_0 + \ln(\beta) \ge \ln(\beta \gamma) > 0$. Now we assume that $\lambda < 0$. 
    Note that 
    \begin{align*}
        \frac{\partial H(x, \lambda)}{\partial x} 
        = \frac{d \lambda  (\beta  \gamma -1) e^{d x} \left(d(x_0 - x) (\beta - \gamma \lambda^2 e^{2 d x}) -2(\gamma \lambda e^{d x} + 1)(\lambda e^{d x} + \beta) \right)}{\left(\beta +\lambda  e^{d x}\right)^2 \left(\gamma  \lambda  e^{d x}+1\right)^2}.
    \end{align*}
    
    Consider the function 
    \begin{align*}
        \Psi(x) = d(x_0 - x) (\beta - \gamma \lambda^2 e^{2 d x}) - 2(\gamma \lambda e^{d x} + 1)(\lambda e^{d x} + \beta).
    \end{align*}
    It is easily to check that  
    \begin{align*}
        \Psi'(x) &= - d \left(\beta +2 \lambda  (\beta  \gamma +1) e^{d x}+\gamma  \lambda ^2 e^{2 d x} (2 d (x_0 - x)+3)\right), \\
        \Psi''(x) &= -2 d^2 \lambda  e^{d x} \left((\beta  \gamma +1) +2 \gamma  (d (x_0 - x)+1) \lambda e^{d x} \right), \\
        \psi(x) &\triangleq (\beta  \gamma +1) +2 \gamma  (d (x_0 - x)+1) \lambda e^{d x}, \\
        \psi'(x) &= 2 \gamma  d^2 \lambda  e^{d x} (x_0 - x) \le 0 ~~~ \text{ for } x \in (-\infty, x_0].
    \end{align*}
    Therefore, we can conclude that 
    \begin{align*}
        \psi(x) &\ge \psi(x_0), \\
        \Psi''(x) &\ge 2 d^2 |\lambda|  e^{d x} \psi(x_0) = 2 d^2 |\lambda|  e^{d x} \left((\beta  \gamma +1) +2 \gamma \lambda e^{d x_0} \right) > 0, 
    \end{align*}
    where the last inequality is according to $|\lambda_0| e^{d x_0} < \frac{1}{\gamma}$ and $\beta\gamma > 1$. 
    Hence, $\Psi(x)$ is strictly convex on $(-\infty, x_0]$. 
    
    Next based on $|\lambda_0| e^{d x_0} < \frac{1}{\gamma}$, 
    % we know that $\lambda e^{d x} \in (-1/\gamma, 0]$. Together with
    % \begin{align*}
    %     \frac{(1 + \gamma s)(\beta + s)}{\beta - \gamma s^2} \le 1, ~~~\text{ for } s \in (-1/\gamma, 0],
    % \end{align*}
    we have
    \begin{align*}
        \Psi(x_0) &= -2 (\beta + \lambda e^{d x_0})(1 + \gamma \lambda e^{d x_0}) < 0, \\
        % \Psi(x_1) &\ge (\beta - \gamma \lambda^2 e^{2 d x_1}) (d (x_0 - x_1) - 2) > 0.
        \Psi(x) &\to +\infty, ~~~\text{ for } x \to -\infty.
    \end{align*}
    Together with strictly convexity of $\Psi$ on $(-\infty, x_0]$, there exists unique $\tilde{x}_{\lambda} \in (-\infty, x_0)$ such that $\Psi(\tilde{x}_{\lambda}) = 0$, i.e.,
    \begin{align*}
        d(x_0 - \tilde{x}_{\lambda}) (\beta - \gamma \lambda^2 e^{2 d \tilde{x}_{\lambda}}) = 2(\gamma \lambda e^{d \tilde{x}_{\lambda}} + 1)(\lambda e^{d \tilde{x}_{\lambda}} + \beta)
    \end{align*}
    
    Furthermore, there holds $\frac{\partial H(x, \lambda)}{\partial x} \le 0$ for $x \in (-\infty, \tilde{x}_{\lambda}]$ and $\frac{\partial H(x, \lambda)}{\partial x} \ge 0$ for $x \in [\tilde{x}_{\lambda}, x_0]$, which implies that
    \begin{align*}
        H(x, \lambda) &\ge H(\tilde{x}_{\lambda}, \lambda) \\
        &= x_0 - \ln \left(\frac{\gamma \lambda e^{d \tilde{x}_{\lambda}} + 1}{\beta + \lambda  e^{d \tilde{x}_{\lambda}}}\right) + \frac{d \lambda  (\beta  \gamma -1) e^{d \tilde{x}_{\lambda}} (x_0-\tilde{x}_{\lambda})}{(\gamma \lambda e^{d \tilde{x}_{\lambda}} + 1)(\beta + \lambda  e^{d \tilde{x}_{\lambda}})} \\
        &= x_0 - \ln \left(\frac{\gamma \lambda e^{d \tilde{x}_{\lambda}} + 1}{\beta + \lambda  e^{d \tilde{x}_{\lambda}}}\right) + \frac{2 \lambda  (\beta  \gamma -1) e^{d \tilde{x}_{\lambda}}}{\beta - \gamma \lambda^2 e^{2 d \tilde{x}_{\lambda}}}.
    \end{align*}
    
    % Now we claim that 
    % \begin{align}
    %     \lambda e^{d \tilde{x}_{\lambda}} > s_0 \triangleq -\beta + \sqrt{(\beta\gamma - 1) \beta/\gamma}.
    % \end{align}
    % Denoting $\tilde{s} = \lambda e^{d \tilde{x}_{\lambda}}$, we have
    % \begin{align*}
    %     \frac{2(\beta + \tilde{s})(1 + \gamma \tilde{s})}{\beta - \gamma \tilde{s}^2} = d(x_0 - \tilde{x}_{\lambda}), \\
    %     \tilde{s} \exp\left( \frac{2(\beta + \tilde{s})(1 + \gamma \tilde{s})}{\beta - \gamma \tilde{s}^2} \right)  = \lambda e^{d x_0}.
    % \end{align*}
    % Observe that
    % \begin{align*}
    %     \Phi(s) &= s \exp\left( \frac{2(\beta + s)(1 + \gamma s)}{\beta - \gamma s^2} \right), \\
    %     \Phi'(s) &= \exp\left(\frac{2 (\beta +s) (\gamma  s+1)}{\beta -\gamma  s^2}\right) \frac{ (\beta +2 \beta\gamma s + \gamma s^2) (\beta + 2 s + \gamma s^2)}{\left(\beta -\gamma  s^2\right)^2}.
    % \end{align*}
    % By $\beta + 2 s + \gamma s^2 > 0$ for $s \in \sR$ and $\beta + 2 \beta\gamma s + \gamma s^2 \le 0$ for $s \in [-1/\gamma, s_0]$, we conclude that $\Phi'(s) \le 0$ for $s \in [-1/\gamma, s_0]$. Hence for $s \in [-1/\gamma, s_0]$, we have
    % \begin{align*}
    %     \Phi(s) \le \Phi(-1/\gamma) = -1/\gamma < \lambda e^{d x_0}.
    % \end{align*}
    % Together with $\Phi(\tilde{s}) = \lambda e^{d x_0}$ and $\tilde{s} \in (-1/\gamma, 0]$, we know that $\tilde{s} > s_0$.
    
    Here let 
    \[\varphi(s) = \frac{2(\beta\gamma - 1) s}{\beta - \gamma s^2} - \ln \left(\frac{\gamma s + 1}{\beta + s}\right).\]
    Then there holds
    \begin{align*}
        \varphi'(s) = \frac{(\beta  \gamma -1) (\beta + 2\beta \gamma s + \gamma s^2) (\beta + 2 s + \gamma s^2)}{(\beta +s) (\gamma  s+1) \left(\beta -\gamma  s^2\right)^2}.
    \end{align*}
    Letting $s_0 \triangleq -\beta + \sqrt{(\beta\gamma - 1) \beta/\gamma}$, it is easily to check that $\beta + 2 s + \gamma s^2 > 0$ for $s \in \sR$ and $\beta + 2 \beta\gamma s + \gamma s^2 \le 0$ for $s \in [-1/\gamma, s_0]$, $\beta + 2 \beta\gamma s + \gamma s^2 \ge 0$ for $s \in [ s_0, 0]$. 
    Hence $\varphi$ is decreasing on $[-1/\gamma, s_0]$ and is increasing on $[s_0, 0]$.
    Together with $\lambda e^{d \tilde{x}_{\lambda}} \in (-1/\gamma, 0]$, thus we have
    \begin{align*}
        H(x, \lambda) \ge x_0 + \varphi(\lambda e^{d \tilde{x}_{\lambda}}) \ge x_0 + \varphi(s_0) = x_0 -\sqrt{\frac{\beta\gamma - 1}{\beta\gamma}} - \ln\left( \gamma\left(1 - \sqrt{\frac{\beta\gamma - 1}{\beta\gamma}}\right) \right) > x_0 - \ln(\gamma) \ge 0,
    \end{align*}
    where we have applied that $\ln(1 - t) < -t$ for $t > 0$.
    
\end{proof}

\begin{corollary}\label{coro:ferro-negative}
    For $\beta\gamma > 1$ and $-\max\{1, \gamma\}^{-(d + 1)} < \lambda_0 < 0$, with setting $x_0 = \max\{0, \ln(\gamma)\}$ and $x_1 \le 0$ such that 
    \begin{align*}
        x_1 < \ln\left(\frac{1 + \gamma \lambda_0 e^{d x_0}}{\beta + \lambda_0 e^{d x_0}}\right) ~~~\text{ and } ~~~ x_1 > \ln\left(\frac{1 - \gamma}{\beta - 1}\right) ~\text{ if }~ \gamma < 1,
    \end{align*} there exists $k > 0$ such that 
    \begin{align*}
        \phi(\lambda e^{d w}) \in \open{U(x_0, x_1, k)}, ~~~\forall w \in U(x_0, x_1, k).
    \end{align*}
    
    Furthermore, $U = U(x_0, x_1, k)$ satisfies all conditions in Lemma \ref{lem:bounded-2-spin:U}.
\end{corollary}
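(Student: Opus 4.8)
The plan is to reuse verbatim the template of Corollary~\ref{coro:ferro-positive}: I will reduce the assertion ``$\phi(\lambda e^{dw})\in\open{U(x_0,x_1,k)}$ for every $w\in U(x_0,x_1,k)$'' to the five–item checklist displayed right after Lemma~\ref{lem:bounded-2-spin:lambda-e-d-x} (taking $x_2=x_1$, $x_3=x_0$, so that $U=U(x_0,x_1,k)$), and then discharge each item by invoking the two estimates already established in this subsection, namely Lemma~\ref{lem:ferro-negative:range-r} and Lemma~\ref{lem:ferro-negative:H}. The recurring elementary tool is that, since $\beta\gamma>1$, the M\"obius function $\psi(s)=\frac{1+\gamma s}{\beta+s}$ has derivative $\frac{\beta\gamma-1}{(\beta+s)^2}>0$ and is therefore strictly increasing on $(-\beta,+\infty)$; and that $e^{x_0}=\max\{1,\gamma\}$, so the hypothesis $|\lambda_0|<\max\{1,\gamma\}^{-(d+1)}$ gives $|\lambda_0|e^{dx_0}<\max\{1,\gamma\}^{-1}\le\min\{\beta,1/\gamma\}$ and $|\lambda_0|e^{\Delta x_0}<1$. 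The last displayed inequalities are exactly checklist item~3 ($1+\gamma\lambda_0e^{dx_0}>0$, $\beta+\lambda_0e^{dx_0}>0$, $1+\lambda_0e^{\Delta x_0}>0$).

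Next I would settle items~1 and~2 and the nondegeneracy of the triangle. Since $\psi$ is increasing and $\lambda_0e^{dx_0}<0$, we have $\ln\psi(\lambda_0e^{dx_0})<\ln\psi(0)=-\ln\beta$, and from $\beta\gamma>1$ also $-\ln\beta\le\ln\gamma\le x_0$; combined with the imposed bound $x_1<\ln\psi(\lambda_0e^{dx_0})$ and $x_1\le0\le x_0$ this shows $x_1<x_0$ (so $U(x_0,x_1,k)$ is a genuine isosceles triangle) and puts the three points $0,\,-\ln\beta,\,\ln\gamma$ all inside $[x_1,x_0]$, which is item~1. For item~2 I would split as in Lemma~\ref{lem:bounded-2-spin:-1}: when $\beta>1>\gamma$ the imposed constraint $x_1>\ln\!\left(\frac{1-\gamma}{\beta-1}\right)$ places $\ln\!\left(\frac{1-\gamma}{\beta-1}\right)$ strictly below $x_1$, hence outside $U$; when $\beta<1<\gamma$ one has $\ln\!\left(\frac{1-\gamma}{\beta-1}\right)=\ln\!\left(\frac{\gamma-1}{1-\beta}\right)>\ln\gamma=x_0$ because $\beta\gamma>1$, again outside $U$; and in the remaining cases $(1-\gamma)(\beta-1)\le0$ makes item~2 vacuous. (One also checks a valid $x_1$ exists: for $\gamma<1$ one needs $\ln\!\left(\frac{1-\gamma}{\beta-1}\right)<\ln\psi(\lambda_0e^{dx_0})$, which follows since $\psi$ is increasing and $\lambda_0e^{dx_0}\in(-1,0)$, while $\psi(-1)=\frac{1-\gamma}{\beta-1}$.)

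Item~4 is Lemma~\ref{lem:ferro-negative:range-r}: for every sufficiently small $k$ it confines $r(dx,dy,\lambda)$ to $\left[\ln\psi(\lambda_0e^{dx_0}),\,\ln\gamma\right)$, which is contained in $(x_1,x_0)$ since $x_1<\ln\psi(\lambda_0e^{dx_0})$ and $\ln\gamma<x_0$ whenever $\gamma<1$ and $\ln\gamma=x_0$ otherwise. Item~5 is exactly Lemma~\ref{lem:ferro-negative:H}. Once the checklist holds, the rest is the mechanism already assembled in the subsection: item~3 licenses the closed forms of Lemma~\ref{lem:bounded-2-spin:r-h}; item~5 together with Lemma~\ref{lem:bounded-2-spin:H} produces a $k>0$ with $G(x,k,\lambda)>0$ on $[x_1,x_0]\times[\lambda_0,0]$; after shrinking $k$ so that $k<\min\{\frac{\pi}{4\Delta(x_0-x_1)},1\}$, Lemma~\ref{lem:bounded-2-spin:tilde-G} promotes this to $\tilde G>0$ on the whole boundary of $U(x_0,x_1,k)$ (the extra hypothesis $k<\frac{\sqrt{2(\beta^2\gamma^2+1)}-(\beta\gamma+1)}{1-\beta\gamma}$ there is vacuous because $\beta\gamma>1$); then injectivity of $w\mapsto\phi(\lambda e^{dw})$ for small $k$ plus invariance of domain (Lemma~\ref{lem:boundary}) gives $\phi(\lambda e^{dw})\in\open{U(x_0,x_1,k)}$ for all $w$ in the triangle and all $\lambda\in[\lambda_0,0]$, after reducing the degree $d$ to $\Delta-1$ by convexity through the origin as in observation~(1). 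Lemma~\ref{lem:bounded-2-spin:U} then converts this into the $\Delta$-complex-contraction property for all $\lambda$ in a complex $\delta$-neighbourhood of $[\lambda_0,0]$, and the ``furthermore'' clause is immediate since the whole checklist was verified for $U=U(x_0,x_1,k)$ itself.

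The main obstacle is bookkeeping rather than a new idea: one must make sure the single $k$ obtained from Lemma~\ref{lem:bounded-2-spin:H} simultaneously meets all the smallness requirements of Lemma~\ref{lem:bounded-2-spin:tilde-G}, Lemma~\ref{lem:bounded-2-spin:r-h}, Lemma~\ref{lem:ferro-negative:range-r}, and the definition of $U(x_0,x_1,k)$ (simply take the minimum of finitely many positive bounds), and that the interval $\left[\ln\psi(\lambda_0e^{dx_0}),\,\ln\gamma\right)$ coming from Lemma~\ref{lem:ferro-negative:range-r} genuinely nests inside $(x_1,x_0)$ for the prescribed $x_0,x_1$ — both of which are controlled entirely by the monotonicity of $\psi$ and the strict inequality $|\lambda_0|<\max\{1,\gamma\}^{-(d+1)}$. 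All the genuinely analytic content (the sign of $H$ for negative activity) has already been carried out in Lemma~\ref{lem:ferro-negative:H}.
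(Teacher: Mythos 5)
Your proposal is correct and follows essentially the same route as the paper's proof: verify the positivity conditions from $|\lambda_0|e^{dx_0}<\min\{\beta,1/\gamma\}$, exclude $\ln\bigl(\tfrac{1-\gamma}{\beta-1}\bigr)$ via the two sign cases (including existence of a valid $x_1$ when $\gamma<1$ using $\psi(-1)=\tfrac{1-\gamma}{\beta-1}$ and monotonicity of $\psi$), and then invoke Lemma \ref{lem:ferro-negative:range-r} and Lemma \ref{lem:ferro-negative:H} together with the general machinery (Lemmas \ref{lem:bounded-2-spin:r-h}, \ref{lem:bounded-2-spin:tilde-G}, \ref{lem:bounded-2-spin:H}, \ref{lem:bounded-2-spin:U}). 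Your write-up only spells out in more detail the bookkeeping that the paper leaves implicit.
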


\begin{proof}
    If $\gamma < 1$, then we have $\beta > 1$ according to $\beta\gamma > 1$.
    Moreover there holds $\frac{1 - \gamma}{\beta - 1} < 1$,
    $\lambda_0 e^{d x_0} = \lambda_0 > -1$ and
    $
        \frac{1 - \gamma}{\beta - 1} < \frac{1 - \gamma |\lambda_0|}{\beta - |\lambda_0|}.
    $
    Hence there exists $x_1 \le 0$ satisfies 
    \begin{align*}
         \ln\left(\frac{1 - \gamma}{\beta - 1}\right) < x_1 < \ln\left(\frac{1 + \gamma \lambda_0 e^{d x_0}}{\beta + \lambda_0 e^{d x_0}}\right).
    \end{align*}
    And it is clearly that $\ln\left(\frac{1 - \gamma}{\beta - 1}\right)\notin U$. 
    
    If $\beta < 1 < \gamma$, then there holds
    $
        \ln\left(\frac{\gamma - 1}{1 - \beta}\right) > \ln(\gamma),
    $
    and hence $\ln\left(\frac{1 - \gamma}{\beta - 1}\right) \notin U$. 
    
    Further, by $|\lambda_0| e^{d x_0} < \frac{1}{\gamma}$, we have $1 + \gamma \lambda_0 e^{d x_0} > 0, \beta + \lambda_0 e^{d x_0} > 0, 1 + \lambda_0 e^{(d + 1) x_0} > 0$.
    
    Then with Lemma \ref{lem:ferro-negative:range-r} and Lemma \ref{lem:ferro-negative:H}, we know that $U$ satisfies all conditions in Lemma \ref{lem:bounded-2-spin:U}. 
    
\end{proof}

\begin{figure}[H]
\centering
\includegraphics[width=\textwidth]{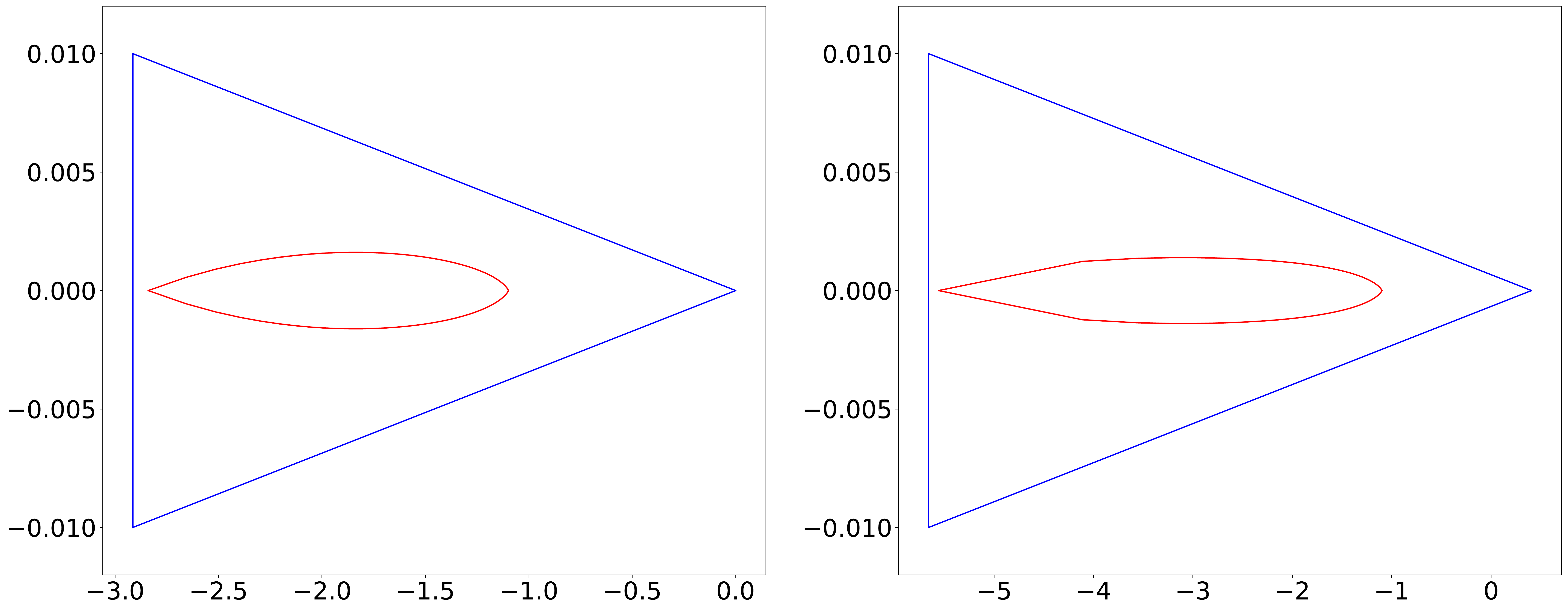}
\caption{Blue lines: the boundary of $U(x_0, x_1, 0.01)$. Red curve: the boundary of $\{\phi(\lambda e^{d w}): w \in U(x_0, x_1, 0.01)\}$. Left: $\beta = 3, \gamma = 0.9, d = 10$ and $\lambda = -0.98 > \lambda_c = -1$. Right: $\beta = 3, \gamma = 1.5, d = 5$ and $\lambda = -0.087 > \lambda_c \approx -0.0878$.}
\end{figure}

\subsection{The case $\beta \gamma < 1$ and $\lambda_0 > 0$}
See Appendix \ref{app:hat-x} for properties of $\hat{x}_d$ and $\lambda_c(d)$. 

\begin{lemma}\label{lem:anti-ferro-positive:range-r}
    For $\beta\gamma < 1$ and $\lambda_0 > 0$, then for $x + \i y \in U(x_0, x_1, k)$ and $\lambda \in [0, \lambda_0]$, there holds 
    \begin{align*}
        r(d x, d y, \lambda) \in \begin{cases} 
            (\ln(\gamma), -\ln(\beta)), ~~~&\text{ for } \gamma > 0, \\
            \left(-\ln(\beta + \lambda_0 e^{d x_0}), -\ln(\beta)\right), ~~~&\text{ for } \gamma = 0.
        \end{cases}
    \end{align*}
\end{lemma}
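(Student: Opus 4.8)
The argument follows the same template as the proof of Lemma \ref{lem:ferro-positive:range-r}; the only structural difference is that $\beta\gamma-1<0$ here, which reverses the two sign comparisons used there. First I would rewrite $r(dx,dy,\lambda)$ in closed form. Because $\lambda\ge 0$ and $\gamma\ge 0$, the numbers $1+\gamma\lambda e^{dw}$ and $\beta+\lambda e^{dw}$ have strictly positive real parts as soon as $\cos(dy)>0$, which is the only ingredient needed from Lemma \ref{lem:bounded-2-spin:r-h}; and $\cos(dy)>0$ since $k<\tfrac{\pi}{2\Delta(x_0-x_1)}$ together with $d=\Delta-1<\Delta$ force $|dy|\le dk(x_0-x_1)<\tfrac{\pi}{2}$. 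Hence $r(dx,dy,\lambda)=\tfrac12\ln Q$ with
\begin{align*}
  Q \triangleq \frac{1+\gamma^2\lambda^2 e^{2dx}+2\gamma\lambda e^{dx}\cos(dy)}{\beta^2+\lambda^2 e^{2dx}+2\beta\lambda e^{dx}\cos(dy)} > 0,
\end{align*}
so the whole statement reduces to sandwiching $Q$.

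For $\gamma>0$ I would reuse verbatim the two factorizations appearing in the proof of Lemma \ref{lem:ferro-positive:range-r}, namely the closed forms of $Q-\gamma^2$ and of $\tfrac1{\beta^2}-Q$, each of which carries the factor $1-\beta\gamma$. Since $\beta\gamma<1$, $\lambda\ge 0$ and $\cos(dy)>0$, the expression for $Q-\gamma^2$ is strictly positive (it already equals $\tfrac{1-\beta^2\gamma^2}{\beta^2}>0$ at $\lambda=0$), giving $Q>\gamma^2$, i.e.\ $r>\ln\gamma$; and the expression for $\tfrac1{\beta^2}-Q$ is nonnegative and vanishes only when $\lambda=0$, giving $Q\le 1/\beta^2$, i.e.\ $r\le-\ln\beta$, strictly so for $\lambda>0$. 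Thus $r(dx,dy,\lambda)$ lies in $(\ln\gamma,-\ln\beta)$ for every $\lambda\in(0,\lambda_0]$, the limiting value $r=-\ln\beta=\phi(0)$ being reached only in the degenerate case $\lambda=0$.

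For $\gamma=0$ the numerator of $Q$ equals $1$, so $Q=(\beta^2+\lambda^2 e^{2dx}+2\beta\lambda e^{dx}\cos(dy))^{-1}$. For the upper bound I would discard the two nonnegative terms in the denominator to get $Q\le 1/\beta^2$, i.e.\ $r\le-\ln\beta$. For the lower bound I would use $\cos(dy)\le 1$ and $\lambda e^{dx}\le\lambda_0 e^{dx_0}$ (valid since $\lambda\le\lambda_0$ and $x\le x_0$) to bound the denominator by $(\beta+\lambda e^{dx})^2\le(\beta+\lambda_0 e^{dx_0})^2$, whence $Q\ge(\beta+\lambda_0 e^{dx_0})^{-2}$ and $r\ge-\ln(\beta+\lambda_0 e^{dx_0})$.

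There is no genuine obstacle in this lemma: once the closed form of $r$ and the two factorizations of Lemma \ref{lem:ferro-positive:range-r} are available, everything is a chain of sign checks driven by $\beta\gamma<1$ and $\lambda\ge 0$. The only points that need attention are verifying $\cos(dy)>0$ (so that the closed form of Lemma \ref{lem:bounded-2-spin:r-h} applies, without needing the negative-$\lambda$ part of its hypotheses) and noting that the extreme values $-\ln\beta$ and $-\ln(\beta+\lambda_0 e^{dx_0})$ are attained only at the corner cases $\lambda=0$ and $(\lambda,x,y)=(\lambda_0,x_0,0)$; this is compatible with how $U$ is later taken to be a slightly widened version of $U(x_0,x_1,k)$ in Lemma \ref{lem:bounded-2-spin:x0-x1}.
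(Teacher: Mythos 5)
Your proposal matches the paper's proof essentially verbatim: the same observation that $k<\frac{\pi}{2\Delta(x_0-x_1)}$ forces $|dy|<\frac{\pi}{2}$ and $\cos(dy)>0$, the same two factorizations of $Q-\gamma^2$ and $Q-\frac{1}{\beta^2}$ carrying the factor $1-\beta\gamma$ (with the signs reversed relative to Lemma \ref{lem:ferro-positive:range-r}), and for $\gamma=0$ the same bound $Q\ge(\beta+\lambda_0 e^{d x_0})^{-2}$ obtained from $\cos(dy)\le 1$ and $\lambda e^{dx}\le\lambda_0 e^{dx_0}$. Your explicit remark about the endpoint values being attained only at the corner cases $\lambda=0$ and $(\lambda,x,y)=(\lambda_0,x_0,0)$ is, if anything, slightly more careful than the paper, whose stated strict inequalities also degenerate exactly at those points.
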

\begin{proof}
    By $k < \frac{\pi}{2 d (x_0 - x_1)}$, we know that $|d y| \le d k (x_0 - x_1) < \frac{\pi}{2}, \cos(d y) > 0$. Hence, there holds
    \begin{align*}
        \frac{1 + \gamma^2 \lambda^2 e^{2 d x} + 2 \gamma \lambda e^{d x} \cos(d y)}{\beta^2 + \lambda^2 e^{2 d x} + 2 \beta \lambda e^{d x} \cos(d y)} - \gamma^2 
        &= \frac{(1 - \beta  \gamma) \left(\beta  \gamma + 1 +2 \gamma  \lambda  e^{d x} \cos (d y)\right)}{\beta ^2+\lambda ^2 e^{2 d x}+2 \beta  \lambda  e^{d x} \cos (d y)} > 0, \\
        \frac{1 + \gamma^2 \lambda^2 e^{2 d x} + 2 \gamma \lambda e^{d x} \cos(d y)}{\beta^2 + \lambda^2 e^{2 d x} + 2 \beta \lambda e^{d x} \cos(d y)} - \frac{1}{\beta^2}
        &= -\frac{ (1 - \beta  \gamma) \lambda e^{d x} \left( (\beta  \gamma +1) \lambda e^{d x}+2 \beta  \cos (d y)\right)}{\beta ^2 \left(\beta ^2+\lambda ^2 e^{2 d x}+2 \beta  \lambda  e^{d x} \cos (d y)\right)} < 0.
    \end{align*}
    Further for $\gamma = 0$, we have
    \begin{align*}
        \frac{1 + \gamma^2 \lambda^2 e^{2 d x} + 2 \gamma \lambda e^{d x} \cos(d y)}{\beta^2 + \lambda^2 e^{2 d x} + 2 \beta \lambda e^{d x} \cos(d y)} = \frac{1}{\beta^2 + \lambda^2 e^{2 d x} + 2 \beta \lambda e^{d x} \cos(d y)} \ge \frac{1}{(\beta + \lambda_0 e^{d x_0})^2}.
    \end{align*}
\end{proof}

\begin{lemma}\label{lem:anti-ferro-positive:H}
    Suppose that $\sqrt{\beta\gamma} \le \frac{d-1}{d+1}$ and $\lambda_0 > 0$ such that
    \begin{align*}
        \lambda_0 < \begin{cases}
            \lambda_c(d_c), ~~~&\text{ for } \beta > 1, d > d_c, \\
            \lambda_c(d), ~~~&\text{ otherwise. }
        \end{cases}
    \end{align*}
    With setting
    \begin{align*}
        x_0 = \max\left\{\frac{2(1 - \beta\gamma)\bar{x}}{\beta - \gamma \bar{x}^2} + \ln\left(\frac{\gamma \bar{x} + 1}{\bar{x} + \beta}\right), 0\right\},
    \end{align*}
    where $\bar{x}$ satisfies $\lambda_0 \left(\frac{\gamma \bar{x} + 1}{\bar{x} + \beta}\right)^d = \bar{x}$,
    % and $x_1 \le \ln(\gamma)$ for $\gamma > 0$, and $x_1 < 0$ such that $(d(x_0 - x_1) - 2) \beta > 2 e^{d x_1}$ for $\gamma = 0$,
    there holds $x_0 \ge -\ln(\beta)$ and $H(x, \lambda) > 0$ for $x \in (-\infty, x_0]$ and $\lambda \in [0, \lambda_0]$. 
\end{lemma}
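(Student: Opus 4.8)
The plan is to follow the template of Lemmas~\ref{lem:ferro-positive:H} and \ref{lem:ferro-negative:H}: put $H$ in closed form, locate the unique minimizer of $H(\cdot,\lambda)$ on $(-\infty,x_0]$, evaluate $H$ there, and reduce everything to a scalar inequality fed by Lemma~\ref{lem:hat-x:bar-x}. Writing $d=\Delta-1$ and $s=\lambda e^{dx}$, Lemma~\ref{lem:bounded-2-spin:r-h} (evaluated at $y=0$ and differentiated once in $k$, using $\beta\gamma<1,\lambda\ge 0$) gives
\[
    H(x,\lambda)=x_0-\ln\!\frac{1+\gamma s}{\beta+s}-\frac{d(1-\beta\gamma)\,s\,(x_0-x)}{(\beta+s)(1+\gamma s)}.
\]
A direct computation shows that $\partial_x H$ has the same sign as $\Phi(x):=2(\beta+s)(1+\gamma s)-d(x_0-x)(\beta-\gamma s^2)$, that $\Phi'(x)=d(\beta+2s+3\gamma s^2+2\gamma\beta s)+2d^2\gamma s^2(x_0-x)>0$ for $x\le x_0$, that $\Phi(x_0)=2(\beta+s_0)(1+\gamma s_0)>0$ with $s_0=\lambda e^{dx_0}$, and that $\Phi(x)\to-\infty$ as $x\to-\infty$; hence for $\lambda>0$ there is a unique critical point $\tilde x_\lambda<x_0$ and $\min_{x\le x_0}H(x,\lambda)=H(\tilde x_\lambda,\lambda)$ (the endpoint $\lambda=0$ gives the constant $H\equiv x_0+\ln\beta$). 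From $\Phi(\tilde x_\lambda)=0$ one reads off $\tilde s:=\lambda e^{d\tilde x_\lambda}\in(0,\sqrt{\beta/\gamma})$ (interpreting $\sqrt{\beta/\gamma}$ as $+\infty$ when $\gamma=0$) and, substituting, the identity
\[
    H(\tilde x_\lambda,\lambda)=x_0+\varphi(\tilde s),\qquad \varphi(s):=\frac{2(\beta\gamma-1)s}{\beta-\gamma s^2}-\ln\frac{\gamma s+1}{\beta+s},
\]
the same auxiliary function as in Lemma~\ref{lem:ferro-negative:H}; by the factorization of $\varphi'$ recorded there, $\varphi$ is strictly decreasing on $(0,\sqrt{\beta/\gamma})$.

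Next I would record that $\tilde s$ is increasing in $\lambda$: rearranging $\Phi(\tilde x_\lambda)=0$ gives $\ln\lambda=Q_{x_0}(\tilde s)$ with $Q_{x_0}(s):=\frac{2(\beta+s)(1+\gamma s)}{\beta-\gamma s^2}+\ln s-dx_0$ and $Q_{x_0}'>0$ on $(0,\sqrt{\beta/\gamma})$. Since $\varphi$ is decreasing, $\min_{\lambda\in[0,\lambda_0]}H(\tilde x_\lambda,\lambda)=x_0+\varphi(\tilde s_{\lambda_0})$, so it remains to prove $x_0\ge-\ln\beta$ and $x_0+\varphi(\tilde s_{\lambda_0})>0$. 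The first is immediate: $-\varphi$ is increasing with $-\varphi(0^+)=-\ln\beta$, while the fixed point $\bar x$ of $x\mapsto\lambda_0\big(\tfrac{\gamma x+1}{x+\beta}\big)^d$ satisfies $0<\bar x<\hat x_d\le\sqrt{\beta/\gamma}$ by Lemma~\ref{lem:hat-x:bar-x} (whose hypothesis $\lambda_0<\lambda_c(d)$ is supplied by the assumptions, using $\lambda_c(d_c)\le\lambda_c(d)$ in the case $\beta>1,d>d_c$ via Lemma~\ref{lem:lambda-c}), so $-\varphi(\bar x)>-\ln\beta$ and $x_0=\max\{-\varphi(\bar x),0\}\ge\max\{-\ln\beta,0\}\ge-\ln\beta$.

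For $x_0+\varphi(\tilde s_{\lambda_0})>0$ I would split on whether the $\max$ defining $x_0$ is clamped. If $x_0=-\varphi(\bar x)$ (automatic when $\beta\le 1$, since then $\varphi(\bar x)<\varphi(0)=\ln\beta\le 0$), it suffices to show $\tilde s_{\lambda_0}<\bar x$, for then $\varphi(\tilde s_{\lambda_0})>\varphi(\bar x)=-x_0$; by the monotonicity of $Q_{x_0}$ and the fixed-point relation $\ln\lambda_0=\ln\bar x-d\ln\tfrac{\gamma\bar x+1}{\bar x+\beta}$ one computes $Q_{x_0}(\bar x)-\ln\lambda_0=\tfrac{2}{\beta-\gamma\bar x^2}\big[(\beta+\bar x)(1+\gamma\bar x)-d(1-\beta\gamma)\bar x\big]$, so $\tilde s_{\lambda_0}<\bar x$ is equivalent to $(\beta+\bar x)(1+\gamma\bar x)>d(1-\beta\gamma)\bar x$ — exactly the second assertion of Lemma~\ref{lem:hat-x:bar-x}. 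If instead $x_0=0$ (which forces $\beta>1$, hence $\varphi(0)=\ln\beta>0$ and $\varphi$ has a unique zero $s^*\in(0,\sqrt{\beta/\gamma})$, and forces $\bar x\le s^*$), it suffices to show $\tilde s_{\lambda_0}<s^*$, i.e.\ $\ln\lambda_0<Q_0(s^*)$. Here the two algebraic facts to establish are that $s^*=\hat x_{d_c}$ (the zero $x_c$ of $\tfrac{2(1-\beta\gamma)x}{\beta-\gamma x^2}+\ln\tfrac{\gamma x+1}{x+\beta}=-\varphi(x)$ used in the proof of Lemma~\ref{lem:lambda-c} to define $d_c$ is precisely $s^*$), and, combining $\varphi(s^*)=0$ with the defining equation $d_c(1-\beta\gamma)s^*=(\gamma s^*+1)(s^*+\beta)$ of $\hat x_{d_c}$, that $Q_0(s^*)=\ln s^*+d_c\ln\tfrac{s^*+\beta}{\gamma s^*+1}=\ln\lambda_c(d_c)$; thus $\tilde s_{\lambda_0}<s^*\iff\lambda_0<\lambda_c(d_c)$. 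When $\beta>1,\,d>d_c$ this is the hypothesis; when $d\le d_c$, being in the clamped case means $\bar x\le s^*$, which via the fixed-point relation forces $\lambda_0\le s^*\big(\tfrac{s^*+\beta}{\gamma s^*+1}\big)^d\le s^*\big(\tfrac{s^*+\beta}{\gamma s^*+1}\big)^{d_c}=\lambda_c(d_c)$ (using $\tfrac{s^*+\beta}{\gamma s^*+1}>1$, valid since $\beta>1\Rightarrow\gamma<1$, and $d\le d_c$), with strictness coming from the hypothesis $\lambda_0<\lambda_c(d)=\lambda_c(d_c)$ at $d=d_c$ and from $d<d_c$ otherwise.

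I expect this clamped case to be the main obstacle: it is the only place where the precise threshold $\lambda_c(d_c)$ (rather than $\lambda_c(d)$) is used, and making it work hinges on the two non-obvious identities $s^*=\hat x_{d_c}$ and $Q_0(s^*)=\ln\lambda_c(d_c)$, together with the observation that the clamping regime is itself confined to $\lambda_0<\lambda_c(d_c)$. The remainder is routine differentiation and sign-chasing, and the construction of the full region $U$ (so as to meet all of Lemma~\ref{lem:bounded-2-spin:U}) then proceeds exactly as in the already-treated cases.
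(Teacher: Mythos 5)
Your proposal is correct and follows essentially the same route as the paper's proof: locate the unique interior critical point $\tilde x_\lambda$ via the sign of $\Phi=-\Psi$, write $H(\tilde x_\lambda,\lambda)=x_0-\varphi_{\mathrm{paper}}(\tilde s)$, compare $\tilde s$ with $\bar x$ in the unclamped case using Lemma~\ref{lem:hat-x:bar-x}, and in the clamped case show $\lambda_0<\lambda_c(d_c)$ and use the identity $\lambda_c(d_c)=\Phi(\hat x_{d_c})$ (your $Q_{x_0}$ is just $\ln$ of the paper's auxiliary function $\Phi(s)=s\exp\bigl(\tfrac{2(\beta+s)(1+\gamma s)}{\beta-\gamma s^2}\bigr)$ shifted by $dx_0$, and your $s^*=\hat x_{d_c}$ identification is exactly the paper's appeal to Lemma~\ref{lem:lambda-c}). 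The only differences are cosmetic: sign convention on $\varphi$, working with logarithms, and making the monotonicity of $\tilde s$ in $\lambda$ explicit rather than absorbing $\lambda\le\lambda_0$ into the inequality chain.
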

\begin{proof}
    Recalling the definition of $H(x, \lambda)$, for $\sqrt{\beta\gamma} \le \frac{d+1}{d-1}$ and $\lambda \ge 0$, we have
    \begin{align*}
        H(x, \lambda) &= x_0 - \ln \left(\frac{\gamma \lambda e^{d x} + 1}{\beta + \lambda  e^{d x}}\right) - \frac{d \lambda  (1 - \beta  \gamma) e^{d x} (x_0-x)}{(\gamma \lambda e^{d x} + 1)(\beta + \lambda  e^{d x})}, \\
        \frac{\partial H(x, \lambda)}{\partial x} 
        &= -\frac{d \lambda  (1 - \beta  \gamma) e^{d x} \left(d(x_0 - x) (\beta - \gamma \lambda^2 e^{2 d x}) -2(\gamma \lambda e^{d x} + 1)(\lambda e^{d x} + \beta) \right)}{\left(\beta +\lambda  e^{d x}\right)^2 \left(\gamma  \lambda  e^{d x}+1\right)^2}.
    \end{align*}
    
    With letting
    \begin{align*}
        \Psi(x) = d(x_0 - x) (\beta - \gamma \lambda^2 e^{2 d x}) - 2(\gamma \lambda e^{d x} + 1)(\lambda e^{d x} + \beta),
    \end{align*}
    it is easily to check that 
    \begin{align*}
        \Psi'(x) &= - d \left(\beta +2 \lambda  (\beta  \gamma +1) e^{d x}+\gamma  \lambda ^2 e^{2 d x} (2 d (x_0 - x)+3)\right) < 0, 
    \end{align*}
    which implies that $\Psi(x)$ is a decreasing function on $(-\infty, x_0]$.
    
    % For $\gamma > 0$, note that 
    % \begin{align*}
    %     \lambda \gamma^d \le \lambda_0 \gamma^d < \lambda_c(d) \left( \frac{\gamma \hat{x}_d + 1}{\hat{x}_d + \beta} \right)^d = \hat{x}_d.
    % \end{align*}
    % Thus we have
    % \begin{align*}
    %     \Psi(x_1) \ge \Psi(\ln(\gamma)) &> d(x_0 - \ln(\gamma))(\beta - \gamma \hat{x}_d^2) - 2(\gamma \hat{x}_d + 1)(\hat{x}_d + \beta) \\
    %     &= d(x_0 - \ln(\gamma)) (\beta - \gamma \hat{x}_d^2) - 2 d (1 - \beta\gamma) \hat{x}_d \\
    %     &= d (\beta - \gamma \hat{x}_d^2) \left( x_0 - \ln(\gamma) - \frac{2(1 - \beta\gamma)\hat{x}_d}{\beta - \gamma \hat{x}_d^2}\right) \\
    %     &\ge d (\beta - \gamma \hat{x}_d^2) \left(\ln\left(\frac{\gamma \hat{x}_d + 1}{\hat{x}_d + \beta}\right) - \ln(\gamma)\right) > 0.
    % \end{align*}
    
    % For $\gamma = 0$, note that
    % \begin{align*}
    %     \Psi(x_1) = d(x_0 - x_1)\beta - 2(\beta + \lambda e^{d x_1}) > 0
    % \end{align*}
    Note that $\Psi(x_0) < 0$ and $\Psi(x) \to +\infty$ as long as $x \to -\infty$.
    Hence there exists $\tilde{x}_{\lambda} \in (-\infty, x_0)$ such that $\Psi(\tilde{x}_{\lambda}) = 0$. Moreover, $\frac{\partial H(x, \lambda)}{\partial x} \le 0$ holds for $x \in (-\infty, \tilde{x}_{\lambda}]$ and $\frac{\partial H(x, \lambda)}{\partial x} \ge 0$ holds for $x \in [\tilde{x}_{\lambda}, x_0]$.

    % We claim that
    % \begin{align*}
    %     \lambda e^{d \tilde{x}_{\lambda}} \le \begin{cases}
    %     \hat{x}_d, \\
    %     \hat{x}_{d_c}.
    %     \end{cases}
    % \end{align*}
    
    % Denoting $\tilde{s} = \lambda e^{d \tilde{x}_{\lambda}}$, we have
    % \begin{align*}
    %     \frac{2(\beta + \tilde{s})(1 + \gamma \tilde{s})}{\beta - \gamma \tilde{s}^2} = d(x_0 - \tilde{x}_{\lambda}), \\
    %     \tilde{s} \exp\left( \frac{2(\beta + \tilde{s})(1 + \gamma \tilde{s})}{\beta - \gamma \tilde{s}^2} \right)  = \lambda e^{d x_0} \le \lambda_0 e^{d x_0}.
    % \end{align*}
    % Observe that
    % \begin{align*}
    %     \Phi(s) &= s \exp\left( \frac{2(\beta + s)(1 + \gamma s)}{\beta - \gamma s^2} \right), \\
    %     \Phi'(s) &= \exp\left(\frac{2 (\beta +s) (\gamma  s+1)}{\beta -\gamma  s^2}\right) \frac{ (\beta +2 \beta\gamma s + \gamma s^2) (\beta + 2 s + \gamma s^2)}{\left(\beta -\gamma  s^2\right)^2} > 0.
    % \end{align*}
    % Together with $\lambda_0 e^{d x_0} = \Phi(\hat{x}_d)$, we get that $\tilde{s} \le \hat{x}_d$. 
    
    Hence, we have
    \begin{align*}
        H(x, \lambda) &\ge H(\tilde{x}_{\lambda}, \lambda) \\
        &= x_0 - \ln\left(\frac{\gamma \lambda e^{d \tilde{x}_{\lambda}} + 1}{\beta + \lambda  e^{d \tilde{x}_{\lambda}}}\right) - \frac{d \lambda  (1 - \beta  \gamma) e^{d \tilde{x}_{\lambda}} (x_0-\tilde{x}_{\lambda})}{(\gamma \lambda e^{d \tilde{x}_{\lambda}} + 1)(\beta + \lambda  e^{d \tilde{x}_{\lambda}})} \\
        &= x_0 - \ln\left(\frac{\gamma \lambda e^{d \tilde{x}_{\lambda}} + 1}{\beta + \lambda  e^{d \tilde{x}_{\lambda}}}\right) - \frac{2 (1 - \beta\gamma) \lambda e^{d \tilde{x}_{\lambda}}}{\beta - \gamma \lambda^2 e^{2 d \tilde{x}_{\lambda}}} \\
        &= \max\{\varphi(\bar{x}), 0\} - \varphi(\lambda e^{d \tilde{x}_{\lambda}}),
    \end{align*}
    where 
    \begin{align*}
        \varphi(s) \triangleq \frac{2(1 - \beta\gamma)s}{\beta - \gamma s^2} + \ln\left(\frac{\gamma s + 1}{s + \beta}\right).
    \end{align*}
    
    Observe that 
    \begin{align*}
        \varphi'(s) = \frac{(1 - \beta  \gamma) (\beta + 2\beta\gamma s + \gamma s^2) (\beta + 2 s + \gamma s^2)}{(\beta + s) (\gamma s + 1) \left(\beta -\gamma  s^2\right)^2} > 0.
    \end{align*}
    And for $\gamma > 0$, we have $\bar{x} < \hat{x}_d \le \sqrt{\beta/\gamma}$ according to Lemma \ref{lem:hat-x:bar-x}. 
    Hence there holds $\varphi(\bar{x}) \ge \varphi(0) = -\ln(\beta)$. 
    
    For $\beta \le 1$, we always have $\varphi(\bar{x}) \ge 0$. And in order to verify that $H(x, \lambda) > 0$, it is enough to check that
    $
        \lambda e^{d \tilde{x}_{\lambda}} < \bar{x}.
    $
    Denoting $\tilde{s} = \lambda e^{d \tilde{x}_{\lambda}}$, we have
    \begin{align*}
        \frac{2(\beta + \tilde{s})(1 + \gamma \tilde{s})}{\beta - \gamma \tilde{s}^2} &= d(x_0 - \tilde{x}_{\lambda}), \\
        \tilde{s} \exp\left( \frac{2(\beta + \tilde{s})(1 + \gamma \tilde{s})}{\beta - \gamma \tilde{s}^2} \right) &= \lambda e^{d x_0} \le \lambda_0 e^{d \varphi(\bar{x})} \\
        &= \lambda_0 \exp\left(\frac{2d(1-\beta\gamma)\bar{x}}{\beta - \gamma \bar{x}^2}\right) \left(\frac{\gamma \bar{x} + 1}{\bar{x} + \beta}\right)^d \\
        &= \bar{x} \exp\left(\frac{2d(1-\beta\gamma)\bar{x}}{\beta - \gamma \bar{x}^2}\right) \\
        &< \bar{x} \exp\left(\frac{2(\beta + \bar{x})(1 + \gamma \bar{x})}{\beta - \gamma \bar{x}^2}\right),
    \end{align*}
    where we have recalled that $\frac{d(1 - \beta\gamma)\bar{x}}{(\beta + \bar{x})(1 + \gamma \bar{x})} < 1$ due to Lemma \ref{lem:hat-x:bar-x}. 
    Note that
    \begin{align*}
        \Phi(s) &= s \exp\left( \frac{2(\beta + s)(1 + \gamma s)}{\beta - \gamma s^2} \right), \\
        \Phi'(s) &= \exp\left(\frac{2 (\beta +s) (\gamma  s+1)}{\beta -\gamma  s^2}\right) \frac{ (\beta +2 \beta\gamma s + \gamma s^2) (\beta + 2 s + \gamma s^2)}{\left(\beta -\gamma  s^2\right)^2} > 0.
    \end{align*}
    Therefore, we conclude that $\tilde{s} < \bar{x}$. 
    Similarly, if $\beta > 1$ and $\varphi(\bar{x}) \ge 0$, $\lambda e^{d \tilde{x}_{\lambda}} < \bar{x}$ holds for the same reason.
    
    On the other hand, for $\beta > 1$ and $\varphi(\bar{x}) < 0$, we claim that $\lambda_0 < \lambda_c(d_c)$. In fact, for $d \le d_c$, we have
    \begin{align*}
        \lambda_0 = \bar{x} \left(\frac{\beta + \bar{x}}{1 + \gamma \bar{x}}\right)^d \le \bar{x} \left(\frac{\beta + \bar{x}}{1 + \gamma \bar{x}}\right)^{d_c} < \hat{x}_{d_c} \left(\frac{\beta + \hat{x}_{d_c}}{1 + \gamma \hat{x}_{d_c}}\right)^{d_c} = \lambda_c(d_c),
    \end{align*}
    where we have used that $\frac{\beta + \bar{x}}{1 + \gamma \bar{x}} \ge \beta > 1$ and $\bar{x} < \hat{x}_{d_c}$ according to $\varphi(\hat{x}_{d_c}) = 0$ (cf. Lemma \ref{lem:lambda-c}).
    
    Then together with $\lambda e^{d x_0} < \lambda_c(d_c) = \Phi(\hat{x}_{d_c})$, we know that $\tilde{s} < \hat{x}_{d_c}$ which implies that $\varphi(\tilde{s}) < \varphi(\hat{x}_{d_c}) = 0$. 
    
\end{proof}

\begin{corollary}\label{coro:anti-ferro-positive}
    Suppose that $\sqrt{\beta\gamma} \le \frac{d-1}{d+1}$ and $\lambda_0 > 0$ such that
    \begin{align*}
        \lambda_0 < \begin{cases}
            \lambda_c(d_c), ~~~&\text{ for } \beta > 1, d > d_c, \\
            \lambda_c(d), ~~~&\text{ otherwise. }
        \end{cases}
    \end{align*}
    With setting
    \begin{align*}
        x_0 = \max\left\{\frac{2(1 - \beta\gamma)\bar{x}}{\beta - \gamma \bar{x}^2} + \ln\left(\frac{\gamma \bar{x} + 1}{\bar{x} + \beta}\right), 0\right\},
    \end{align*}
    where $\bar{x}$ satisfies $\lambda_0 \left(\frac{\gamma \bar{x} + 1}{\bar{x} + \beta}\right)^d = \bar{x}$,
    and $x_1 = \min\{0, \ln(\gamma)\}$ for $\gamma > 0$, and $x_1 = \min\{0, -\ln(\beta + \lambda_0 e^{d x_0}) - 1\}$ for $\gamma = 0$,
    there exists $k > 0$ such that 
    \begin{align*}
        \phi(\lambda e^{d w}) \in \open{U(x_0, x_1, k)}, ~~~\forall w \in U(x_0, x_1, k).
    \end{align*}
    
    Furthermore, $U = \{z \in U(x_0, x_1, k): \re(z) \in [x_2, -\ln(\beta)]\}$, where $x_2 = \ln(\gamma)$ for $\gamma > 0$ and $x_2 = -\ln(\beta + \lambda_0 e^{d x_0}) - 1$ for $\gamma = 0$, satisfies all conditions in Lemma \ref{lem:bounded-2-spin:U}.
\end{corollary}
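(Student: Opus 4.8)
The plan is to fit this case into the general scheme recorded just after Lemma~\ref{lem:bounded-2-spin:lambda-e-d-x}. With $d=\Delta-1$ and $x_3=-\ln\beta$, it suffices to verify there the five itemized conditions for the trapezoid $U=\{z\in U(x_0,x_1,k):\re(z)\in[x_2,-\ln\beta]\}$; then Lemmas~\ref{lem:bounded-2-spin:H}, \ref{lem:bounded-2-spin:tilde-G}, and~\ref{lem:boundary}, applied via the three observations preceding Lemma~\ref{lem:bounded-2-spin:H}, will furnish a $k>0$ with $\phi(\lambda e^{dw})\in\open{U(x_0,x_1,k)}$ on all of $U(x_0,x_1,k)$, and $U$ will meet every hypothesis of Lemma~\ref{lem:bounded-2-spin:U}. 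Two of the five items are already on hand: item~4 (the range of $r(dx,dy,\lambda)$) is precisely Lemma~\ref{lem:anti-ferro-positive:range-r}, which gives $r\in(\ln\gamma,-\ln\beta)$ for $\gamma>0$ and $r\in(-\ln(\beta+\lambda_0 e^{dx_0}),-\ln\beta)$ for $\gamma=0$, so that $r$ lands strictly inside $(x_2,x_3)$ — the $-1$ built into the definition of $x_2$ when $\gamma=0$ being exactly the slack that makes this strict — while item~5 ($H(x,\lambda)>0$ on $[x_1,x_0]\times[0,\lambda_0]$) is Lemma~\ref{lem:anti-ferro-positive:H}, which additionally delivers $x_0>-\ln\beta$, since $x_0\ge\varphi(\bar{x})>\varphi(0)=-\ln\beta$ by strict monotonicity of $\varphi$ and $\bar{x}>0$.

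First I would dispose of the elementary points. Item~3 — $1+\gamma\lambda_0 e^{dx_0}>0$, $\beta+\lambda_0 e^{dx_0}>0$, $1+\lambda_0 e^{\Delta x_0}>0$ — holds at once since $\lambda_0>0$, $\beta>0$, $\gamma\ge0$. That $U$ is a convex and compact subset of $\{w:|\im w|<\pi\}$ is clear: $U$ is an isosceles trapezoid and $|\im w|<k(x_0-x_1)<\pi/(2\Delta)$ throughout $U(x_0,x_1,k)$. For item~1, the hypothesis $\beta\gamma<1$ gives $\ln\gamma<-\ln\beta$, and combined with $x_0>-\ln\beta$ and the explicit value of $x_1$ (which is $\le\ln\gamma$ when $\gamma>0$ and $\le x_2$ when $\gamma=0$) this shows $[x_2,-\ln\beta]\subseteq[x_1,x_0]$, so the real numbers $-\ln\beta$ and, when $\gamma>0$, $\ln\gamma$ both lie in $U$.

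Item~2 ($-1\notin\phi^{-1}(U)$) I would handle via Lemma~\ref{lem:bounded-2-spin:-1}: since $\beta\gamma<1$ rules out $\min\{\beta,\gamma\}\ge1$, either $\max\{\beta,\gamma\}\le1$ and nothing is required, or exactly one of $\beta,\gamma$ exceeds $1$, in which case one must check $\ln\!\bigl(\tfrac{1-\gamma}{\beta-1}\bigr)\notin[x_2,-\ln\beta]$ (for $\gamma=0$ the relevant point is $-\ln(\beta-1)$, and only when $\beta>1$). From $\beta\gamma<1$ one reads off $\tfrac{1-\gamma}{\beta-1}>\tfrac{1}{\beta}$ when $\beta>1$ and $\tfrac{1-\gamma}{\beta-1}<\gamma$ when $\gamma>1$, so this point falls strictly above $x_3=-\ln\beta$ in the first situation and strictly below $x_2=\ln\gamma$ in the second; the $\gamma=0$ subcase reduces to $-\ln(\beta-1)>-\ln\beta$. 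With all five items secured, I would fix $k>0$ below every standing smallness threshold — notably $k<\pi/(4\Delta(x_0-x_1))$, the bound $k<\bigl(\sqrt{2(\beta^2\gamma^2+1)}-(\beta\gamma+1)\bigr)/(1-\beta\gamma)$ required by Lemma~\ref{lem:bounded-2-spin:tilde-G} (which is positive because $\beta\gamma\ne1$), and small enough for Lemma~\ref{lem:bounded-2-spin:H} to promote $H>0$ into $G(\cdot,k,\cdot)>0$. Lemma~\ref{lem:bounded-2-spin:tilde-G} then gives $\tilde{G}\ge G>0$ throughout $U(x_0,x_1,k)$, i.e.\ $\phi(\lambda e^{dw})\in\open{U(x_0,x_1,k)}$ there, and intersecting with the strict containment $r\in(x_2,x_3)$ from Lemma~\ref{lem:anti-ferro-positive:range-r} yields $\phi(\lambda e^{dw})\in\open U$ for $w\in U$, which is item~4 for the trapezoid $U$.

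The genuinely delicate work sits upstream, in Lemma~\ref{lem:anti-ferro-positive:H}: locating the unique critical point $\tilde{x}_\lambda$ of $H(\cdot,\lambda)$ as the zero of a monotone (indeed strictly convex) auxiliary function, and then estimating $\varphi(\lambda e^{d\tilde{x}_\lambda})$ against $\varphi(\bar{x})$ — the comparison that pins the threshold at $\lambda_c(d)$, or at $\lambda_c(d_c)$ in the $\beta>1$, $d>d_c$ regime — by way of $\bar{x}<\hat{x}_d$ (Lemma~\ref{lem:hat-x:bar-x}) and the monotonicity of $\Phi(s)=s\exp\!\bigl(\tfrac{2(\beta+s)(1+\gamma s)}{\beta-\gamma s^2}\bigr)$. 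Granting that lemma together with Lemma~\ref{lem:anti-ferro-positive:range-r}, the corollary proper is essentially bookkeeping: aligning the trapezoid $[x_2,-\ln\beta]$ with the range of $r$ and discharging the ``$-1$'' and membership conditions, all of which collapse onto the single hypothesis $\beta\gamma<1$.
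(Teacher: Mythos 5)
Your proposal is correct and follows essentially the same route as the paper: it reduces the corollary to the five-item checklist stated after Lemma \ref{lem:bounded-2-spin:lambda-e-d-x}, cites Lemma \ref{lem:anti-ferro-positive:range-r} and Lemma \ref{lem:anti-ferro-positive:H} for the two substantive items, and disposes of the $-1$-exclusion via the same inequalities $\frac{1-\gamma}{\beta-1}>\frac1\beta$ (for $\beta>1>\gamma$) and $\frac{\gamma-1}{1-\beta}<\gamma$ (for $\gamma>1>\beta$) together with the trivial positivity checks. The only difference is that you spell out the containment $[x_2,-\ln\beta]\subseteq[x_1,x_0]$ and the smallness constraints on $k$ explicitly, which the paper leaves implicit.
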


\begin{proof}
    If $\beta > 1 > \gamma$, then we have
    $
        \frac{1 - \gamma}{\beta - 1} > \frac{1}{\beta}, 
    $
    and thus $\ln\left(\frac{1 - \gamma}{\beta - 1}\right) \notin U$. 
    
    If $\beta < 1 < \gamma$, then there holds
    $
        \frac{\gamma - 1}{1 - \beta} < \gamma,
    $
    and hence $\ln\left(\frac{1 - \gamma}{\beta - 1}\right) \notin U$. 
    
    And it is trivial that $1 + \gamma \lambda_0 e^{d x_0} > 0, \beta + \lambda_0 e^{d x_0} > 0, 1 + \lambda_0 e^{(d + 1) x_0} > 0$.
    
    Then with Lemma \ref{lem:anti-ferro-positive:range-r} and Lemma \ref{lem:anti-ferro-positive:H}, we know that $U$ satisfies all conditions in Lemma \ref{lem:bounded-2-spin:U}. 
    
\end{proof}

\begin{figure}[H]
\centering
\includegraphics[width=\textwidth]{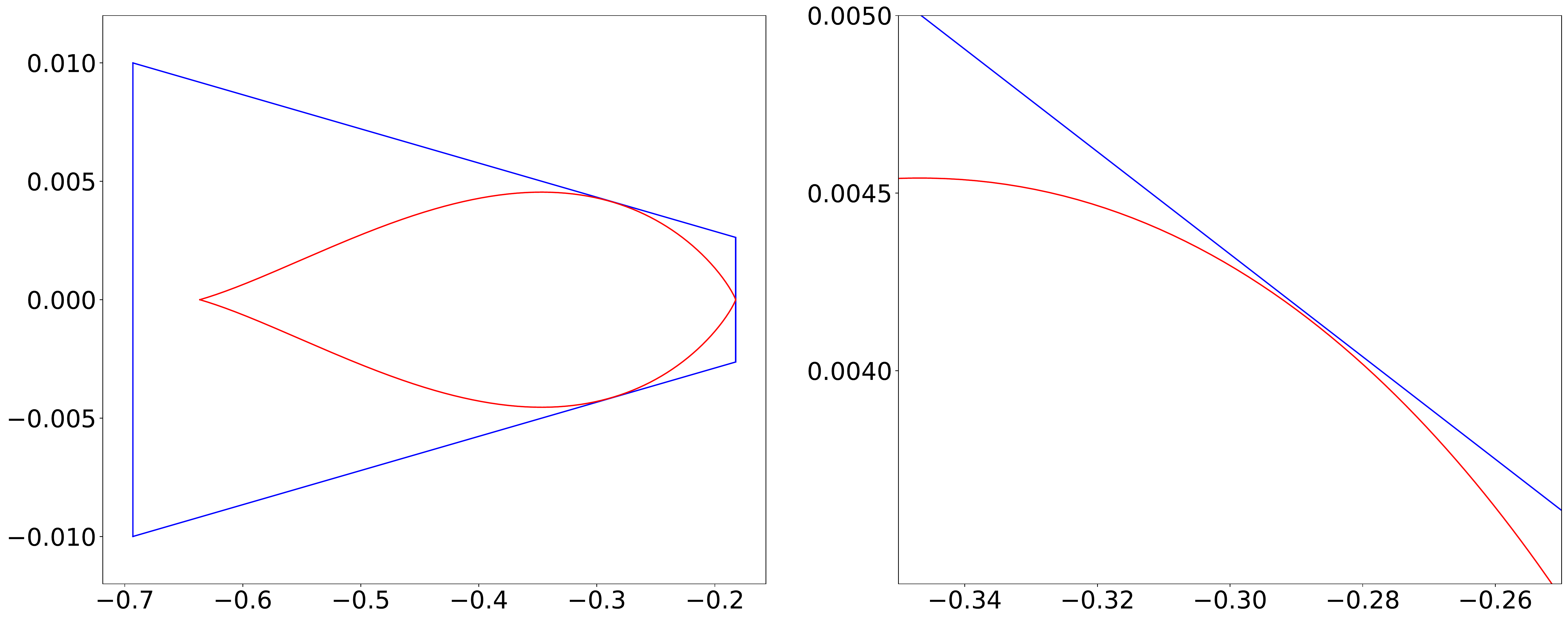}
\caption{Blue lines: the boundary of $U = \{z \in U(x_0, \ln(\gamma), 0.01): \re(z) \le -\ln(\beta)\}$. Red curve: the boundary of $\{\phi(\lambda e^{d w}): w \in U(\ln(\gamma), x_0, 0.01)\}$ where $\beta = 1.2, \gamma = 0.5, d = 15$ and $\lambda = 12.5 < \lambda_c \approx 12.6$. The figure on the right enlarges part of the figure on the left to clarify the containment relationship between the two regions clearly. }
\end{figure}

\subsection{The case $\beta\gamma < 1$ and $\lambda_0 < 0$}

\begin{lemma}\label{lem:anti-ferro-negative:range-r}
    For $\beta\gamma < 1$ and $\lambda_0 < 0$ such that 
    \begin{align*}
        \lambda_0 > \begin{cases}
        -\min\left\{1, \frac{\beta - 1}{1 - \gamma}\right\}, ~~~&\text{ for } \beta > 1, d > \frac{1 - \beta\gamma}{(\beta - 1)(1 - \gamma)}, \\
        -\frac{\beta \check{x}_d - 1}{(\check{x}_d - \gamma) \check{x}_d^{d}}, ~~~&\text{ otherwise, }
        \end{cases}
    \end{align*} 
    with setting $x_0 = \max\{\ln(\check{x}_d), 0\}$, $x_1 = \ln(\gamma)$ if $\gamma > 0$ and $x_1 = -\ln(\beta) - 1$ if $\gamma = 0$,
    there holds 
    \begin{align}
        r(d x, d y, \lambda) \in \begin{cases}
            \left(x_1, \frac{1 - \gamma |\lambda_0|}{\beta - |\lambda_0|}\right], ~~~&\text{ for } \beta > 1, \beta + \gamma > 2, \\
            (x_1, x_0), ~~~&\text{ otherwise, } 
        \end{cases}
    \end{align} 
    for $x + \i y \in U(x_0, x_1, k)$ and $\lambda \in [\lambda_0, 0]$.
\end{lemma}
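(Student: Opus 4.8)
The plan is to rewrite $r(dx,dy,\lambda)$ through the Möbius map $\psi(s)=\frac{\gamma s+1}{s+\beta}$ and reduce the two‑variable estimate to a one–variable monotonicity argument. Setting $z=\lambda e^{dw}$ with $w=x+\i y\in U(x_0,x_1,k)$, Lemma~\ref{lem:bounded-2-spin:r-h} gives $r(dx,dy,\lambda)=\re\bigl(\ln\psi(z)\bigr)=\ln\bigl|\tfrac{\gamma z+1}{z+\beta}\bigr|$; the constraint $0<k<\frac{\pi}{2\Delta(x_0-x_1)}$ forces $|dy|<\pi/2$, so $\cos(dy)>0$, and together with $\lambda\in[\lambda_0,0]$ and $x\le x_0$ this yields $\re(z)=\lambda e^{dx}\cos(dy)\in[-M,0]$ and $|z|\le M$, where $M\triangleq|\lambda_0|e^{dx_0}$. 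None of the bounds will depend on $k$. The organizing observation I would record first is that, by Lemma~\ref{lem:check-x-1}, $\check x_d<1$ holds \emph{exactly} when the hypothesis on $\lambda_0$ is $|\lambda_0|<\min\{1,\tfrac{\beta-1}{1-\gamma}\}$ (equivalently $x_0=0$), and that in the case $\beta+\gamma>2$ one has $\frac{1-\beta\gamma}{(\beta-1)(1-\gamma)}<1$, so $\check x_d<1$ automatically for every $d\ge1$; from the hypotheses one then gets $M<\beta$ in every case (using $\check x_d>1/\beta$ from Lemma~\ref{lem:check-x-beta} and $\tfrac{\beta\check x_d-1}{\check x_d-\gamma}\le\beta\iff\beta\gamma\le1$ when $x_0=\ln\check x_d$, and $M=|\lambda_0|<\tfrac{\beta-1}{1-\gamma}\le\beta$ when $x_0=0$). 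Since $\beta\gamma<1$ also gives $\beta\le\sqrt{\beta/\gamma}$ and $\beta<1/\gamma$, we get $|z|\le M<\sqrt{\beta/\gamma}$ and $M<1/\gamma$, which are the inequalities the analytic part needs.

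For the \textbf{upper bound} write $r=\tfrac12\ln g(t,c)$ with $t=\lambda e^{dx}\in[-M,0]$, $c=\cos(dy)\in(0,1]$ and $g(t,c)=\frac{1+\gamma^2t^2+2\gamma tc}{\beta^2+t^2+2\beta tc}$. A direct computation gives $\partial_c g=\frac{2t(1-\beta\gamma)(\gamma t^2-\beta)}{(\beta^2+t^2+2\beta tc)^2}\ge0$ (as $t\le0$, $1-\beta\gamma>0$, $\gamma t^2\le\gamma M^2<\beta$), so $g(t,c)\le g(t,1)=\psi(t)^2$; since $\psi$ is strictly decreasing on $(-\beta,\infty)$ and stays positive on $[-M,0]$ (because $M<1/\gamma$), $\max_{[-M,0]}\psi(t)^2=\psi(-M)^2$, whence $r\le\ln\psi(-M)=\ln\frac{1-\gamma M}{\beta-M}$, with equality attained at $w=x_0$, $\lambda=\lambda_0$. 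When $\beta+\gamma>2$ we showed $x_0=0$, so $M=|\lambda_0|$ and this is exactly the claimed endpoint $\ln\!\bigl(\tfrac{1-\gamma|\lambda_0|}{\beta-|\lambda_0|}\bigr)$ (I read the statement as carrying this $\ln$). In the remaining case I would check $\ln\psi(-M)<x_0$: if $x_0=\ln\check x_d$ this is equivalent to $M(\check x_d-\gamma)<\beta\check x_d-1$, i.e.\ $|\lambda_0|<\frac{\beta\check x_d-1}{(\check x_d-\gamma)\check x_d^{d}}$, precisely the hypothesis; if $x_0=0$ it reduces to $M=|\lambda_0|<\frac{\beta-1}{1-\gamma}$, covered by $|\lambda_0|<\min\{1,\tfrac{\beta-1}{1-\gamma}\}$.

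For the \textbf{lower bound} $r>x_1$: if $\gamma>0$, so $x_1=\ln\gamma$, the inequality $|\psi(z)|>\gamma$ is equivalent to $(1-\beta\gamma)\bigl(2\gamma\re(z)+1+\beta\gamma\bigr)>0$, hence to $\re(z)>-\frac{1+\beta\gamma}{2\gamma}$; since $\re(z)\ge-M$ it suffices that $M<\frac{1+\beta\gamma}{2\gamma}$, which I would derive from the hypotheses ($x_0=\ln\check x_d$: $M<\tfrac{\beta\check x_d-1}{\check x_d-\gamma}\le\tfrac{1+\beta\gamma}{2\gamma}$ because the difference of the cross‑products equals $(1-\beta\gamma)(\check x_d+\gamma)>0$; $x_0=0$: $M=|\lambda_0|<1\le\tfrac{1+\beta\gamma}{2\gamma}$ since $\beta>1$ forces $\gamma(2-\beta)<1$). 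If $\gamma=0$, so $x_1=-\ln\beta-1$, the inequality $r>x_1$ is $|z+\beta|<\beta e$, which holds since $|z+\beta|\le M+\beta<\beta e$ using $M<\beta<\beta(e-1)$.

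The analytic core above is short; the main obstacle is the bookkeeping of the two different case splits — $x_0=\ln\check x_d$ versus $x_0=0$ (governed by $\check x_d\gtrless1$, i.e.\ by which branch of the $\lambda_0$‑hypothesis applies) crossed with $\gamma>0$ versus $\gamma=0$ and with the conclusion's $\beta+\gamma>2$ versus otherwise — and verifying that in each cell the hypothesis on $\lambda_0$ is exactly strong enough to force simultaneously $\ln\psi(-M)\le$ the stated upper endpoint and $M<\frac{1+\beta\gamma}{2\gamma}$ (resp.\ $M<\beta e$ when $\gamma=0$); this is where the properties of $\check x_d$ collected in Appendix~\ref{app:check-x}, especially Lemmas~\ref{lem:check-x-beta} and~\ref{lem:check-x-1}, are essential. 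A minor point to handle cleanly is that the maximum of $r$ is attained (giving a closed right endpoint in the $\beta+\gamma>2$ case) whereas the lower endpoint $x_1$ is strict.
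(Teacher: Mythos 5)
Your proof is correct and takes essentially the same route as the paper's: both arguments bound $r$ by first using monotonicity in $\cos(dy)$ (your $\partial_c g\ge 0$ is the paper's replacement of $\cos(dy)$ by $1$) and then monotonicity of $s\mapsto\frac{\gamma s+1}{s+\beta}$ on $(-\beta,0]$ to reduce everything to $s=\lambda_0 e^{dx_0}$, and both close with the same case analysis via the $\check{x}_d$ lemmas (your lower-bound bookkeeping through $M<\frac{1+\beta\gamma}{2\gamma}$ is only cosmetically different from the paper's use of $M<\beta$). Your reading that the right endpoint in the $\beta>1,\ \beta+\gamma>2$ case should carry a $\ln$ is consistent with what the paper's own proof actually establishes.
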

\begin{proof}
    Following from Lemma \ref{lem:check-x-1}, we know that $\ln(\check{x}_d) < 0$ if and only if $\beta > 1$ and $d > \frac{1 - \beta\gamma}{(\beta - 1)(1 - \gamma)}$. 
    Next, We point out that 
    \begin{align*}
        |\lambda| e^{d x} \le |\lambda_0| e^{d x_0} < \begin{cases}
        \frac{\beta - 1}{1 - \gamma} < \beta, ~~~&\text{ for } \beta > 1, d > \frac{1 - \beta\gamma}{(\beta - 1)(1 - \gamma)}, \\
        \frac{\beta \check{x}_d - 1}{\check{x}_d - \gamma} < \beta, ~~~&\text{ otherwise, }
        \end{cases}
    \end{align*}
    which implies that $\beta + \lambda e^{d x} > 0$ and $1 + \gamma \lambda e^{d x} > 1 - \beta \gamma > 0$. 
    Further, by $k < \frac{\pi}{2 d (x_0 - x_1)}$, we know that $|d y| \le d k (x_0 - x_1) < \frac{\pi}{2}, \cos(d y) > 0$. Hence, there holds $\beta\gamma + 1 + 2 \gamma \lambda e^{d x} \cos(d y) > \beta \gamma + 1- 2 \beta\gamma > 0$ and
    \begin{align*}
        \frac{1 + \gamma^2 \lambda^2 e^{2 d x} + 2 \gamma \lambda e^{d x} \cos(d y)}{\beta^2 + \lambda^2 e^{2 d x} + 2 \beta \lambda e^{d x} \cos(d y)} - \gamma^2 
        &= \frac{(1 - \beta  \gamma) \left(\beta  \gamma + 1 +2 \gamma  \lambda  e^{d x} \cos (d y)\right)}{\beta ^2+\lambda ^2 e^{2 d x}+2 \beta  \lambda  e^{d x} \cos (d y)} > 0,
    \end{align*}
    which implies that $r(x, y, k, \lambda) > \ln(\gamma)$. 
    
    If $\gamma = 0$, then there holds
    \begin{align*}
        \frac{1 + \gamma^2 \lambda^2 e^{2 d x} + 2 \gamma \lambda e^{d x} \cos(d y)}{\beta^2 + \lambda^2 e^{2 d x} + 2 \beta \lambda e^{d x} \cos(d y)} = \frac{1}{\beta^2 + \lambda^2 e^{2 d x} + 2 \beta \lambda e^{d x} \cos(d y)} \ge \frac{1}{\beta^2 + \lambda^2 e^{2d x}} \ge \frac{1}{2\beta^2}.
    \end{align*}
    
    On the other hand, we have $\beta - \gamma \lambda^2 e^{2 d x} > \beta - \gamma \beta^2 > 0$ and
    \begin{align*}
        &\quad \frac{1 + \gamma^2 \lambda^2 e^{2 d x} + 2 \gamma \lambda e^{d x} \cos(d y)}{\beta^2 + \lambda^2 e^{2 d x} + 2 \beta \lambda e^{d x} \cos(d y)} \\
        &= \frac{\beta}{\gamma} +\frac{(1 - \beta \gamma) \left(\beta -\gamma  \lambda ^2 e^{2 d x}\right)}{\beta  \left(\beta ^2+\lambda ^2 e^{2 d x}+2 \beta  \lambda  e^{d x} \cos (d y)\right)} \\
        &\le \frac{\beta}{\gamma} +\frac{(1 - \beta \gamma) \left(\beta -\gamma  \lambda ^2 e^{2 d x}\right)}{\beta  \left(\beta ^2+\lambda ^2 e^{2 d x} + 2 \beta  \lambda  e^{d x} \right)} \\
        &= \left(\frac{1 + \gamma \lambda e^{d x}}{\beta + \lambda e^{d x}}\right)^2
        \le \left(\frac{1 + \gamma \lambda_0 e^{d x_0}}{\beta + \lambda_0 e^{d x_0}}\right)^2,
        % < \left(\frac{1 - \gamma \frac{\beta \check{x}_d - 1}{\check{x}_d - \gamma}}{\beta - \frac{\beta \check{x}_d - 1}{\check{x}_d - \gamma}}\right)^2 
        % = \check{x}_d^2,
    \end{align*}
    where the last inequality is according to the function $\psi(s) \triangleq \frac{1 + \gamma s}{\beta + s}$ is strictly decreasing on $(- \beta, 0]$.
    
    Then for $\check{x}_d \ge 1$, we have $\lambda_0 e^{d x_0} > - \frac{\beta \check{x}_d - 1}{\check{x}_d - \gamma}$ and 
    \begin{align*}
        \frac{1 + \gamma \lambda_0 e^{d x_0}}{\beta + \lambda_0 e^{d x_0}} < \frac{1 - \gamma \frac{\beta \check{x}_d - 1}{\check{x}_d - \gamma}}{\beta - \frac{\beta \check{x}_d - 1}{\check{x}_d - \gamma}} = \check{x}_d.
    \end{align*}
    
    And if $\beta > 1$ and $\beta + \gamma > 2$, then we have $\frac{1 - \beta\gamma}{(\beta - 1)(1 - \gamma)} < 1$, which means that $\check{x}_d < 1$ for all $d \ge 1$. With $\frac{\beta - 1}{1 - \gamma} > 1$, we know that $|\lambda_0| < 1$ and
    \begin{align*}
        \frac{1 + \gamma \lambda_0 e^{d x_0}}{\beta + \lambda_0 e^{d x_0}} = \frac{1 - \gamma |\lambda_0|}{\beta - |\lambda_0|}.
    \end{align*}
    
    At last, for $\beta > 1$ and $\beta + \gamma \le 2$, with $d > \frac{1 - \beta\gamma}{(\beta - 1)(1 - \gamma)}$ there holds $\lambda_0 e^{d x_0} = \lambda_0 > - \frac{\beta - 1}{1 - \gamma}$ and
    \begin{align*}
        \frac{1 + \gamma \lambda_0 e^{d x_0}}{\beta + \lambda_0 e^{d x_0}} < \frac{1 - \gamma \frac{\beta  - 1}{1 - \gamma}}{\beta - \frac{\beta  - 1}{1 - \gamma}} = 1.
    \end{align*}
\end{proof}

\begin{lemma}\label{lem:anti-ferro-negative:H}
    For $\beta\gamma < 1$ and $\lambda_0 < 0$ such that 
    \begin{align*}
        \lambda_0 > \begin{cases}
        -\min\left\{1, \frac{\beta - 1}{1 - \gamma}\right\}, ~~~&\text{ for } \beta > 1, d > \frac{1 - \beta\gamma}{(\beta - 1)(1 - \gamma)}, \\
        -\frac{\beta \check{x}_d - 1}{(\check{x}_d - \gamma) \check{x}_d^{d}} , ~~~&\text{ otherwise, }
        \end{cases}
    \end{align*} 
    with setting $x_0 = \max\{\ln(\check{x}_d), 0\}$, we have $H(x, \lambda) > 0$ for $x \in (-\infty, x_0]$ and $\lambda \in [\lambda_0, 0]$.
\end{lemma}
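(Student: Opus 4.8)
The plan is to fix $\lambda \in [\lambda_0, 0]$ and prove that $x \mapsto H(x,\lambda)$ is nonincreasing on $(-\infty, x_0]$, which reduces the whole lemma to the single endpoint inequality $H(x_0,\lambda) > 0$. This follows the template of the proofs of Lemma~\ref{lem:ferro-negative:H} and Lemma~\ref{lem:anti-ferro-positive:H}, but the combination of signs here ($\beta\gamma < 1$ with $\lambda \le 0$) makes $\partial H/\partial x$ noticeably cleaner. First I would record the explicit form: differentiating $G(x,k,\lambda) = k\bigl(x_0 - r(dx,\, dk(x_0-x),\lambda)\bigr) - h(dx,\, dk(x_0-x),\lambda)$ at $k = 0^+$ and invoking Lemma~\ref{lem:bounded-2-spin:r-h} (so $r(dx,0,\lambda) = \ln\frac{\gamma\lambda e^{dx}+1}{\beta+\lambda e^{dx}}$ once the two linear factors are positive, and $\partial_y h(dx,0,\lambda) = \frac{|(\beta\gamma-1)\lambda|e^{dx}}{(\gamma\lambda e^{dx}+1)(\beta+\lambda e^{dx})}$, with the absolute value causing no trouble since $y = dk(x_0-x) > 0$ keeps $\sin y > 0$), one gets for $\beta\gamma < 1$, $\lambda \le 0$,
\begin{align*}
    H(x,\lambda) = x_0 - \ln\left(\frac{\gamma\lambda e^{dx}+1}{\beta+\lambda e^{dx}}\right) + \frac{d(1-\beta\gamma)\lambda e^{dx}(x_0-x)}{(\gamma\lambda e^{dx}+1)(\beta+\lambda e^{dx})}.
\end{align*}
The case $\lambda = 0$ is then immediate, $H(x,0) = x_0 + \ln\beta > 0$, because $x_0 = \max\{\ln\check{x}_d, 0\} > -\ln\beta$: indeed $\check{x}_d > 1/\beta$ by Lemma~\ref{lem:check-x-beta}, and if $\check{x}_d < 1$ then $\beta > 1$ by Lemma~\ref{lem:check-x-1}, so $-\ln\beta < 0 = x_0$.

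Next I would fix the signs of all the denominators. The hypotheses on $\lambda_0$ yield $|\lambda|e^{dx} \le |\lambda_0|e^{dx_0} < \beta$ for every $x \le x_0$ and $\lambda \in [\lambda_0,0]$: when $\check{x}_d \ge 1$ we have $e^{dx_0} = \check{x}_d^{\,d}$ and $|\lambda_0|\check{x}_d^{\,d} < \frac{\beta\check{x}_d - 1}{\check{x}_d - \gamma} < \beta$ (using $\check{x}_d > \gamma$ and $\beta\gamma < 1$), while when $\check{x}_d < 1$ we have $x_0 = 0$ and $|\lambda_0| < \min\{1,\frac{\beta-1}{1-\gamma}\} < \beta$. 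Hence $\beta + \lambda e^{dx} > 0$, $\gamma\lambda e^{dx}+1 > 1-\beta\gamma > 0$ and $\beta - \gamma\lambda^2 e^{2dx} > \beta - \gamma\beta^2 > 0$ throughout. Differentiating the displayed formula for $H$ (the cancellation $P - u\bigl((\beta\gamma+1)+2\gamma u\bigr) = \beta - \gamma u^2$, with $u = \lambda e^{dx}$ and $P = (\beta+u)(1+\gamma u)$, does all the work) then collapses to
\begin{align*}
    \frac{\partial H(x,\lambda)}{\partial x} = \frac{d^2(1-\beta\gamma)\,\lambda e^{dx}\,(x_0-x)\,(\beta - \gamma\lambda^2 e^{2dx})}{(\beta+\lambda e^{dx})^2(\gamma\lambda e^{dx}+1)^2}.
\end{align*}
For $\lambda < 0$ and $x \le x_0$ this is the negative quantity $\lambda e^{dx}$ times the nonnegative quantities $(1-\beta\gamma)$, $(x_0-x)$, $(\beta - \gamma\lambda^2 e^{2dx})$, so $\partial H/\partial x \le 0$ and $H(x,\lambda) \ge H(x_0,\lambda)$ on $(-\infty,x_0]$.

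It remains to prove $H(x_0,\lambda) > 0$. At $x = x_0$ the last term of $H$ vanishes, so $H(x_0,\lambda) = x_0 - \ln\psi(\lambda e^{dx_0})$ with $\psi(s) = \frac{1+\gamma s}{\beta+s}$, and $\psi'(s) = \frac{\beta\gamma-1}{(\beta+s)^2} < 0$ on $(-\beta, 0]$; since $\psi$ is decreasing and $\lambda e^{dx_0} \ge \lambda_0 e^{dx_0}$, it is enough to check $e^{x_0} > \psi(\lambda_0 e^{dx_0})$. If $\check{x}_d < 1$ (so $x_0 = 0$), this is $1 > \psi(\lambda_0)$, which rearranges to $\lambda_0 > -\frac{\beta-1}{1-\gamma}$, implied by the hypothesis $\lambda_0 > -\min\{1,\frac{\beta-1}{1-\gamma}\}$. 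If $\check{x}_d \ge 1$ (so $x_0 = \ln\check{x}_d$, $e^{dx_0} = \check{x}_d^{\,d}$), the hypothesis reads $\lambda_0\check{x}_d^{\,d} > -\frac{\beta\check{x}_d-1}{\check{x}_d-\gamma}$, and a short computation using $\beta\gamma < 1$ gives the identity $\psi\!\bigl(-\frac{\beta\check{x}_d-1}{\check{x}_d-\gamma}\bigr) = \check{x}_d$; monotonicity of $\psi$ then gives $\psi(\lambda_0\check{x}_d^{\,d}) < \check{x}_d = e^{x_0}$, completing the proof.

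The crux — and essentially the only non-mechanical point — is this endpoint estimate: one must recognize that $\check{x}_d$, the larger root of $\beta d x^2 - (1-\beta\gamma+d(1+\beta\gamma))x + \gamma d = 0$, is precisely the value at which $\psi\!\bigl(-\frac{\beta\check{x}_d-1}{\check{x}_d-\gamma}\bigr) = \check{x}_d$, which is exactly what pins down the stated threshold for $\lambda_0$. The derivative computation for $\partial H/\partial x$ is routine once the sign information is in hand; the only thing to be careful about is that the defining $k\to 0^+$ derivative handles the absolute value inside $h$ correctly, which it does because along the boundary ray $y = dk(x_0-x)$ one has $y > 0$.
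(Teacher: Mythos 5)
Your proposal is correct and follows essentially the same route as the paper: the same closed form for $H$, the same factored expression for $\partial H/\partial x$ showing $H$ is nonincreasing in $x$ for $\lambda\le 0$, and then positivity at the endpoint $x_0$. The only difference is cosmetic — the paper cites the endpoint inequality $x_0 > \ln\frac{\gamma\lambda e^{dx_0}+1}{\beta+\lambda e^{dx_0}}$ from the proof of Lemma \ref{lem:anti-ferro-negative:range-r}, whereas you re-derive it directly via the identity $\psi\bigl(-\tfrac{\beta\check{x}_d-1}{\check{x}_d-\gamma}\bigr)=\check{x}_d$ and monotonicity of $\psi$, which is the same computation carried out there.
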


\begin{proof}
    Recalling the definition of $H(x, \lambda)$, for $\beta\gamma < 1$ and $\lambda \le 0$, we have
    \begin{align*}
        H(x, \lambda) &= x_0 - \ln \left(\frac{\gamma \lambda e^{d x} + 1}{\beta + \lambda  e^{d x}}\right) + \frac{d \lambda  (1 - \beta  \gamma) e^{d x} (x_0-x)}{(\gamma \lambda e^{d x} + 1)(\beta + \lambda  e^{d x})}, \\
        \frac{\partial H(x, \lambda)}{\partial x} &= \frac{d^2 (1 - \beta \gamma) \lambda e^{d x} (x_0 - x) \left(\beta -\gamma  \lambda ^2 e^{2 d x} \right)}{\left(\gamma \lambda e^{d x} + 1\right)^2 \left(\beta +\lambda e^{d x} \right)^2 } \le 0.
    \end{align*}
    
    Thus it holds that
    \begin{align*}
        H(x, \lambda) \ge H(x_0, \lambda) = x_0 - \ln\left( \frac{\gamma \lambda e^{d x_0} + 1}{\beta + \lambda e^{d x_0}} \right) > 0,
    \end{align*}
    which have verified in proof of Lemma \ref{lem:anti-ferro-negative:range-r}.
\end{proof}

\begin{corollary}\label{coro:anti-ferro-negative}
    For $\beta\gamma < 1$ and $\lambda_0 < 0$ such that 
    \begin{align*}
        \lambda_0 > \begin{cases}
        -\min\left\{1, \frac{\beta - 1}{1 - \gamma}\right\}, ~~~&\text{ for } \beta > 1, d > \frac{1 - \beta\gamma}{(\beta - 1)(1 - \gamma)}, \\
        -\frac{\beta \check{x}_d - 1}{(\check{x}_d - \gamma) \check{x}_d^{d}} , ~~~&\text{ otherwise, }
        \end{cases}
    \end{align*} 
    with setting $x_0 = \max\{\ln(\check{x}_d), 0\}$ and $x_1 = \min\{0, \ln(\gamma)\}$ for $\gamma > 0$, and $x_1 = \min\{0, -\ln(\beta) - 1\}$ for $\gamma = 0$,
    there exists $k > 0$ such that 
    \begin{align*}
        \phi(\lambda e^{d w}) \in \open{U(x_0, x_1, k)}, ~~~\forall w \in U(x_0, x_1, k).
    \end{align*}
    
    Furthermore, 
    $U = \{z \in U(x_0, x_1, k): \re(z) \in [x_2, x_3]\}$, where 
    \begin{align*}
        x_2 &= \begin{cases}
            \ln(\gamma), ~~&\text{ for } \gamma > 0, \\
            -\ln(\beta) -1, ~~&\text{ for } \gamma = 0,
        \end{cases} \\
        x_3 &= 
        \begin{cases}
            \ln\left(\frac{1 - \gamma \tilde{\lambda}}{\beta - \tilde{\lambda}}\right) \ (|\lambda_0| < \tilde{\lambda} < 1), ~~&\text{ for } \beta > 1, \beta + \gamma > 2, \\
            x_0, ~~&\text{ otherwise, }
        \end{cases}
    \end{align*}
    satisfies all conditions in Lemma \ref{lem:bounded-2-spin:U}.
\end{corollary}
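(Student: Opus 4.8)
The plan is to follow the template of the three preceding corollaries: the genuine analytic content is already packaged in Lemma~\ref{lem:anti-ferro-negative:range-r} and Lemma~\ref{lem:anti-ferro-negative:H}, so what remains is to combine these with Lemma~\ref{lem:bounded-2-spin:H}, Lemma~\ref{lem:bounded-2-spin:tilde-G} and the invariance of domain (Lemma~\ref{lem:boundary}) to produce the contraction $\phi(\lambda e^{dw})\in\open{U(x_0,x_1,k)}$, and then to verify, case by case, the five conditions listed at the end of Appendix~\ref{app:2-spin:x0-x1} that make the truncated region $U$ a legitimate choice in Lemma~\ref{lem:bounded-2-spin:U}. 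Throughout, $d=\Delta-1$.

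First I would record the elementary consequences of the choice $x_0=\max\{\ln(\check{x}_d),0\}$ and of the hypothesis on $\lambda_0$. By Lemma~\ref{lem:check-x-1}, $x_0=0$ exactly when $\beta>1$ and $d>\frac{1-\beta\gamma}{(\beta-1)(1-\gamma)}$, in which case the bound $|\lambda_0|<\min\{1,\frac{\beta-1}{1-\gamma}\}$ already gives $|\lambda_0|e^{dx_0}<\beta$ and $|\lambda_0|e^{(d+1)x_0}=|\lambda_0|<1$; otherwise $x_0=\ln(\check{x}_d)$, and from $|\lambda_0|<\frac{\beta\check{x}_d-1}{(\check{x}_d-\gamma)\check{x}_d^{d}}$ and $\check{x}_d>1/\beta$ (Lemma~\ref{lem:check-x-beta}) one gets $|\lambda_0|e^{dx_0}<\frac{\beta\check{x}_d-1}{\check{x}_d-\gamma}<\beta$ and $|\lambda_0|e^{(d+1)x_0}<\frac{(\beta\check{x}_d-1)\check{x}_d}{\check{x}_d-\gamma}<1$, the last inequality because $\beta\check{x}_d^2-2\check{x}_d+\gamma=(1-\beta\gamma)\check{x}_d\cdot\frac{1-d}{d}\le0$ (a consequence of the quadratic defining $\check{x}_d$). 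Hence in all cases $|\lambda_0|e^{dx_0}<\beta$, so $\beta+\lambda_0e^{dx_0}>0$, $1+\gamma\lambda_0e^{dx_0}>1-\beta\gamma>0$ and $1+\lambda_0e^{(d+1)x_0}>0$, which is condition~(3). For condition~(1), $\check{x}_d>1/\beta$ and $\beta\gamma<1$ give $x_0\ge-\ln\beta>\ln\gamma=x_2$ when $\gamma>0$, while each prescribed value of $x_3$ is $\ge-\ln\beta$ (immediate for $x_3=x_0$; for $x_3=\ln\frac{1-\gamma\tilde\lambda}{\beta-\tilde\lambda}$ one checks $\frac{1-\gamma\tilde\lambda}{\beta-\tilde\lambda}\ge\frac1\beta\iff\beta\gamma\le1$), so $-\ln\beta$, and when $\gamma>0$ also $\ln\gamma$, lie in $[x_2,x_3]$ and hence in $U$.

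Next I would dispose of condition~(2), $-1\notin\phi^{-1}(U)$, using Lemma~\ref{lem:bounded-2-spin:-1}. Since $\beta\gamma<1$ rules out $\min\{\beta,\gamma\}\ge1$, there are three cases. If $\max\{\beta,\gamma\}\le1$ the claim is automatic. If $\beta<1<\gamma$, then $\frac{1-\gamma}{\beta-1}=\frac{\gamma-1}{1-\beta}<\gamma$ (equivalent to $\beta\gamma<1$), so $\ln\frac{1-\gamma}{\beta-1}<\ln\gamma=x_2$ lies outside $U$. If $\beta>1>\gamma$, I would show $\ln\frac{1-\gamma}{\beta-1}>x_3$: when $x_3=\ln\frac{1-\gamma\tilde\lambda}{\beta-\tilde\lambda}$ (the sub-case $\beta+\gamma>2$) this follows from $(1-\gamma)(\beta-\tilde\lambda)-(1-\gamma\tilde\lambda)(\beta-1)=(1-\tilde\lambda)(1-\beta\gamma)>0$; when $x_3=x_0$ (the sub-case $\beta+\gamma\le2$) one notes $\frac{1-\gamma}{\beta-1}>1$ and $\frac{(\beta-1)(1-\gamma)}{1-\beta\gamma}<1\le d$, so Lemma~\ref{lem:check-x-2} gives $\check{x}_d<\frac{1-\gamma}{\beta-1}$ and therefore $x_0=\max\{\ln\check{x}_d,0\}<\ln\frac{1-\gamma}{\beta-1}$. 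The degenerate case $\gamma=0$ is handled analogously with the stated $x_2=-\ln\beta-1$ and the lower bound $r>-\ln\beta-1$ from Lemma~\ref{lem:anti-ferro-negative:range-r}.

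Finally, Lemma~\ref{lem:anti-ferro-negative:range-r} delivers condition~(4), $r(dx,dy,\lambda)\in(x_2,x_3)$ for all $w=x+\i y\in U(x_0,x_1,k)$ and $\lambda\in[\lambda_0,0]$ — the strict inequality against $x_3=\ln\frac{1-\gamma\tilde\lambda}{\beta-\tilde\lambda}$ holding because $|\lambda_0|<\tilde\lambda$ and $s\mapsto\frac{1-\gamma s}{\beta-s}$ is increasing on $(-\beta,0]$ — and Lemma~\ref{lem:anti-ferro-negative:H} delivers condition~(5), $H(x,\lambda)>0$ on $[x_1,x_0]\times[\lambda_0,0]$. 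From condition~(5), Lemma~\ref{lem:bounded-2-spin:H} produces $k>0$ (shrunk if necessary to meet the hypotheses of Lemma~\ref{lem:bounded-2-spin:tilde-G}) with $G(x,k,\lambda)>0$; Lemma~\ref{lem:bounded-2-spin:tilde-G} upgrades this to $\tilde{G}\ge G>0$ on the whole slanted part of $\partial U(x_0,x_1,k)$; and invariance of domain (Lemma~\ref{lem:boundary}), together with condition~(4), promotes the boundary inequality to $\phi(\lambda e^{dw})\in\open{U(x_0,x_1,k)}$ for every $w\in U(x_0,x_1,k)$ — the first assertion. Restricting to $U=\{z\in U(x_0,x_1,k):\re(z)\in[x_2,x_3]\}$, conditions~(1)--(5) are exactly the hypotheses of Lemma~\ref{lem:bounded-2-spin:U}, which gives the second assertion. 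I expect the main obstacle to be purely organizational: keeping straight which of the two formulas for $x_3$ is in force in each of the $\beta>1>\gamma$, $\beta<1<\gamma$ and $\gamma=0$ sub-cases, and invoking the correct one of Lemma~\ref{lem:check-x-2} or the direct comparison in the $-1\notin\phi^{-1}(U)$ step; no new analytic input beyond the two cited lemmas is needed.
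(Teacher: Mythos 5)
Your proposal is correct and follows essentially the same route as the paper's own proof: it verifies the same checklist (excluding $\ln\bigl(\tfrac{1-\gamma}{\beta-1}\bigr)$ from $U$ via Lemma \ref{lem:bounded-2-spin:-1} and Lemma \ref{lem:check-x-2}, and establishing $1+\gamma\lambda_0 e^{d x_0}>0$, $\beta+\lambda_0 e^{d x_0}>0$, $1+\lambda_0 e^{(d+1)x_0}>0$ from $|\lambda_0|e^{d x_0}<\beta$ together with the quadratic identity for $\check{x}_d$ giving $\tfrac{(\beta\check{x}_d-1)\check{x}_d}{\check{x}_d-\gamma}\le 1$), and then cites Lemma \ref{lem:anti-ferro-negative:range-r} and Lemma \ref{lem:anti-ferro-negative:H} combined with Lemmas \ref{lem:bounded-2-spin:H}, \ref{lem:bounded-2-spin:tilde-G} and \ref{lem:bounded-2-spin:U}, exactly as the paper does. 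The only differences are that you make explicit a few checks the paper leaves implicit (condition 1 and the strictness of $r$ against $x_3$), while in the borderline sub-case $\beta+\gamma=2$ (where $\tfrac{(\beta-1)(1-\gamma)}{1-\beta\gamma}=1$) your strict inequalities degrade to non-strict ones — a blemish the paper's own proof shares.
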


\begin{proof}
    For $\beta > 1 > \gamma$ and $\beta + \gamma \le 2$, we have 
    $\frac{(\beta - 1)(1 - \gamma)}{1 - \beta\gamma} \le 1$ which implies that
    $
        \frac{1 - \gamma}{\beta - 1} \ge \check{x}_d
    $
    according to Lemma \ref{lem:check-x-2}.
    And if $\beta + \gamma > 2$, then $\frac{1 - \gamma}{\beta - 1} > \frac{1 - \gamma \tilde{\lambda}}{\beta - \tilde{\lambda}}$. Therefore we have $\ln\left(\frac{1 - \gamma}{\beta - 1}\right) \notin U$. 
    
    If $\beta < 1 < \gamma$, then there holds
    $
        \frac{\gamma - 1}{1 - \beta} < \gamma,
    $
    and hence $\ln\left(\frac{1 - \gamma}{\beta - 1}\right) \notin U$. 
    
    According to $|\lambda_0 e^{d x_0}| < \beta$, we know that that $1 + \gamma \lambda_0 e^{d x_0} > 0, \beta + \lambda_0 e^{d x_0} > 0$.
    And if $\check{x}_d < 1$, then $1 + \lambda_0 e^{(d + 1) x_0} = 1 + \lambda_0 > 0$. Further, for $\check{x}_d \ge 1$, we have
    \begin{align*}
        |\lambda_0| e^{(d + 1) x_0} < \frac{(\beta \check{x}_d - 1)\check{x}_d}{\check{x}_d - \gamma}.
    \end{align*}
    Note that 
    \[\beta \check{x}_d^2 + \gamma = \frac{1 - \beta\gamma + d(1 + \beta\gamma)}{d} \check{x}_d \le 2 \check{x}_d,\] 
    which implies that $\frac{(\beta \check{x}_d - 1)\check{x}_d}{\check{x}_d - \gamma} < 1$. 
    
    Then with Lemma \ref{lem:anti-ferro-positive:range-r} and Lemma \ref{lem:anti-ferro-positive:H}, we know that $U$ satisfies all conditions in Lemma \ref{lem:bounded-2-spin:U}. 
    
\end{proof}

\begin{figure}[H]
\centering
\includegraphics[width=\textwidth]{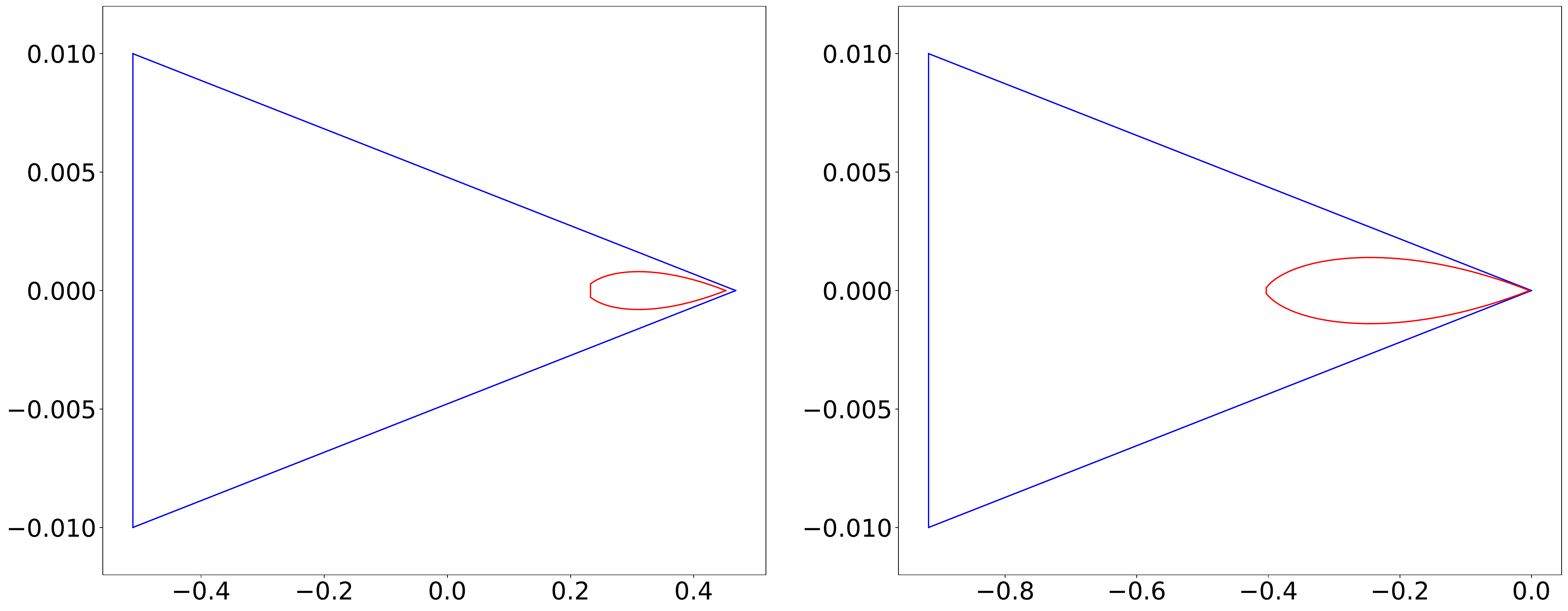}
\caption{Blue lines: the boundary of $U(x_0, \ln(\gamma), 0.01)$. Red curve: the boundary of $\{\phi(\lambda e^{d w}): w \in U(x_0, \ln(\gamma), 0.01)\}$. Left: $\beta = 0.8, \gamma = 0.6, d = 3$ and $\lambda = -0.065 > \lambda_c \approx -0.068$. Right: $\beta = 1.5, \gamma = 0.4, d = 5$ and $\lambda = -0.83 > \lambda_c \approx -0.833$.}
\end{figure}

\section{Proof of Theorem \ref{thm:bounded-2-spin:supp}}
\begin{proof}
    It is easily to check that we also have
    \begin{align*}
        \re\left(\phi(\lambda e^{d w})\right) = r(d x, d y, \lambda), ~~~ |\re\left(\phi(\lambda e^{d w})\right)| = h(d x, d y, \lambda).
    \end{align*}
    
    Following from Lemma \ref{lem:ferro-positive:range-r} and Lemma \ref{lem:anti-ferro-positive:range-r}, we know that 
    \begin{align*}
        r(x, y, \lambda, d) \in \begin{cases}
            (-\ln(\beta), \ln(\gamma)), ~~~&\text{ for } \beta\gamma > 1, \\
            (\ln(\gamma), -\ln(\beta)), ~~~&\text{ for } \beta\gamma < 1.
        \end{cases}
    \end{align*}
    
    For $\beta\gamma > 1$, set $x_0 = \max\{0, \ln(\gamma)\}$ and $x_1 = \min\{0, -\ln(\beta)\}$. 
    Then if $\beta > 1 > \gamma$, then $\frac{1 - \gamma}{\beta - 1} < \frac{1}{\beta}$ and $x_1 = -\ln(\beta)$ which implies that $\ln\left(\frac{1 - \gamma}{\beta - 1}\right) \notin R(x_0, x_1, y_0)$. 
    And if $\beta < 1 < \gamma$, then $\frac{1 - \gamma}{\beta - 1} > \gamma$. With $x_0 = \ln(\gamma)$, we have $\ln\left(\frac{1 - \gamma}{\beta - 1}\right) \notin R(x_0, x_1, y_0)$. 
    
    And for $\beta\gamma < 1$, set $x_0 = \max\{0, -\ln(\beta)\}$ and $x_1 = \min\{0, \ln(\gamma)\}$. Similarly, we also have $\ln\left(\frac{1 - \gamma}{\beta - 1}\right) \notin R(x_0, x_1, y_0)$.
    
    And it is trivial that $1 + \gamma \lambda_0 e^{d x_0} > 0, \beta + \lambda_0 e^{d x_0} > 0, 1 + \lambda_0 e^{(d + 1) x_0} > 0$.
    
    Then in order to prove that $\phi(\lambda e^{d w}) \in R(x_0, x_1, y_0)$ for $w \in R(x_0, x_1, y_0)$ and $\lambda \in [0, \lambda_0]$, we just need to show that
    \begin{align*}
        h(x, y, \lambda, d) \le h(x, y_0, \lambda, d) < y_0,
    \end{align*}
    where $d = \Delta - 1$, $x \in [x_1, x_0], y \in [0, y_0]$ and $\lambda \in [0, \lambda_0]$.
    
    Denote
    \begin{align*}
        \hat{G}(x, y_0, \lambda) \triangleq y_0 - h(x, y_0, \lambda, d), ~~~~
        \hat{H}(x, \lambda) \triangleq \frac{\partial \hat{G}(x, y_0, \lambda)}{\partial y_0} \bigg\vert_{y_0 = 0}.
    \end{align*}
    Based on Lemma \ref{lem:bounded-2-spin:H}, we just need to prove that $\hat{H}(x, \lambda) > 0$ for $x \in [x_1, x_0]$ and $\lambda \in [0, \lambda_0]$.

    Note that
    \begin{align*}
        \hat{H}(x, \lambda) = 1-\frac{d |\beta  \gamma -1|  \lambda e^{d x}}{\left(\beta +\lambda  e^{d x}\right) \left(\gamma  \lambda  e^{d x}+1\right)}.
    \end{align*}
    
    Consider a function
    \begin{align*}
        \psi(s) &\triangleq \frac{s}{(\beta + s)(1 + \gamma s)}, \\
        \psi'(s) &= \frac{\beta -\gamma  s^2}{(\beta +s)^2 (\gamma  s+1)^2}.
    \end{align*}
    It is clearly that $\psi(s)$ is strictly increasing on $[0, \sqrt{\beta/\gamma}]$ and strictly decreasing on $[\sqrt{\beta/\gamma}, +\infty)$.
    Therefore, for $\sqrt{\beta\gamma} \in \left( \frac{d - 1}{d + 1}, \frac{d + 1}{d - 1} \right)$, we have 
    \begin{align*}
        \hat{H}(x, \lambda) \ge 1 - \frac{d |\beta\gamma - 1|}{(1 + \sqrt{\beta\gamma})^2} = 1 - \frac{d |\sqrt{\beta\gamma} - 1|}{(1 + \sqrt{\beta\gamma})} > 0. 
    \end{align*}
    
\end{proof}

\section{Proof of Lemma \ref{lem:unbounded-2-spin:contraction-zero-free}}
\begin{proof}
    Let $F_{\lambda}$ be the region satisfies the conditions in Definition \ref{def:unbounded-2-spin:complex-contraction}. Similar to the proof of Lemma \ref{lem:bounded-2-spin:contraction-zero-free}, we can apply induction on $t = |V| - |S|$ to prove following stronger results:
    \begin{enumerate}
        \item $R^{\lambda}_{\tau_S}(G, v) \in F_{\lambda}$ holds for all free vertex $v$, 
        \item $Z_{\tau_S}(G, \lambda) \neq 0$.
    \end{enumerate}
    
\end{proof}

\section{Proof of Lemma \ref{lem:unbounded-2-spin:U}}
\begin{proof}
    Similar to the proof of Lemma \ref{lem:bounded-2-spin:U}, with setting $F = \phi^{-1}(U)$, we have $-1 \notin F$, $0 \in F$ and $\infty \in F$ if $\gamma > 0$. 
    
    It is clearly that
    \begin{align*}
        \hat{U} \triangleq \{\lambda e^{d w}: \lambda \in [0, \lambda_0], w \in U, d \ge 0\} \subseteq \{\lambda e^{w}: \lambda \in [0, \lambda_0], w \in \tilde{U}\}. 
    \end{align*}
    And note that 
    \begin{align*}
        |\lambda e^{w}| \le |\lambda_0| e^{\re(w)} \le |\lambda_0| e^{a}, ~~\text{ for } w \in \tilde{U},
    \end{align*}
    which implies that $\hat{U}$ is compact.
    
    Similar to the proof of Lemma \ref{lem:bounded-2-spin:U}, together with 
    \begin{align*}
        |\lambda_1 e^{d w} - \lambda_2 e^{d w}| \le |\lambda_1 - \lambda_2| e^{a},
    \end{align*}
    we know that there exists $\delta$ such that $\lambda e^{d w} \notin \gI$ and $\phi(\lambda e^{d w}) \in U$ for all $\lambda \in \gU([0, \lambda_0], \delta)$ and $w \in U$.
    Moreover, based on the same reason in the proof of Lemma \ref{lem:bounded-2-spin:U}, all $\lambda \in \gU([0, \lambda_0], \delta)$ satisfies complex-contraction property with region $F_{\lambda} = F$. 
    
\end{proof}

\section{Proof of Lemma \ref{lem:unbounded-2-spin:x1-k}}
\begin{proof}
    Set $x_0, x_1$ as in Lemma \ref{lem:ferro-positive:H}, Lemma \ref{lem:ferro-negative:H}, Lemma \ref{lem:anti-ferro-positive:H} and Lemma \ref{lem:anti-ferro-negative:H}. We point out that $x_0 = 0$ in each cases. 
    
    We first prove that there exists $\tilde{x} < 0$ and $k_1 > 0$ such that 
    \begin{align*}
        \phi(\lambda e^w) \in \open{U(0, x_1, k)}
    \end{align*}
    for all $w \in \tilde{U}(k)$ with $\re(w) < \tilde{x}$ and $k < k_1$. 
    
    Let $w = x + \i y$. Set $\tilde{x}$ small enough to ensure that $1 - \gamma |\lambda| e^{\tilde{x}} > 0$ and $\beta - |\lambda| e^{\tilde{x}} > 0$. 
    % \begin{align*}
    %     |\lambda| e^{\tilde{x}} < \frac{\beta - 1}{1 - \gamma}, ~~~\text{ if }~~~ \beta\gamma < 1.
    % \end{align*}
    Then it is easily to check that $\beta - \gamma \lambda^2 e^{\tilde{x}} > 0$, $\re(\phi(\lambda e^w)) = r(x, y, \lambda)$ and $|\im(\phi(\lambda e^w))| = h(x, y, \lambda)$ for $x < \tilde{x}$.
    
    Note that for $\beta\gamma < 1$, we have
    \begin{align*}
        &\quad \frac{1 + \gamma^2 \lambda^2 e^{2 d x} + 2 \gamma \lambda e^{d x} \cos(d y)}{\beta^2 + \lambda^2 e^{2 d x} + 2 \beta \lambda e^{d x} \cos(d y)} \\
        &= \frac{\gamma}{\beta} +\frac{(1 - \beta \gamma) \left(\beta -\gamma  \lambda ^2 e^{2 d x}\right)}{\beta  \left(\beta ^2+\lambda ^2 e^{2 d x}+2 \beta  \lambda  e^{d x} \cos (d y)\right)} \\
        &\le \frac{\gamma}{\beta} +\frac{(1 - \beta \gamma) \left(\beta -\gamma  \lambda ^2 e^{2 d x}\right)}{\beta  \left(\beta ^2+\lambda ^2 e^{2 d x} - 2 \beta  |\lambda|  e^{d x} \right)} \\
        &= \left(\frac{1 - \gamma |\lambda| e^{d \tilde{x}}}{\beta - |\lambda| e^{d \tilde{x}}}\right)^2
        < \left(\frac{1 - \gamma/\beta e^{d \tilde{x}}}{\beta - 1/\beta}\right)^2
    \end{align*}
    % and
    % \begin{align*}
    %     &\quad \frac{1 + \gamma^2 \lambda^2 e^{2 d x} + 2 \gamma \lambda e^{d x} \cos(d y)}{\beta^2 + \lambda^2 e^{2 d x} + 2 \beta \lambda e^{d x} \cos(d y)} \\
    %     &\ge \frac{\gamma}{\beta} +\frac{(1 - \beta \gamma) \left(\beta -\gamma  \lambda ^2 e^{2 d x}\right)}{\beta  \left(\beta ^2+\lambda ^2 e^{2 d x} + 2 \beta  |\lambda|  e^{d x} \right)} \\
    %     &= \left(\frac{1 + \gamma |\lambda| e^{d \tilde{x}}}{\beta + |\lambda| e^{d \tilde{x}}}\right)^2 \\
    %     &> \left(\frac{1 + \gamma/\beta}{\beta + 1/\beta}\right)^2
    % \end{align*}
    which means $\re(\phi(\lambda e^w)) < \ln\left(\frac{1 + \gamma/\beta}{\beta + 1/\beta}\right)$. 
    
    Furthermore, for $x < - \frac{\pi}{2 k}$, we have
    \begin{align*}
        h(x, y, \lambda) &\le \frac{(1 - \beta \gamma) |\lambda| e^{x}}{\beta + \gamma |\lambda|^2 e^{2x} - (\beta \gamma + 1)|\lambda| e^{x}} \\
        &= \frac{(1 - \beta \gamma) |\lambda| e^{x}}{(\beta - |\lambda| e^{x})(1 - \gamma |\lambda| e^{x})} \\
        &< \frac{(1 - \beta \gamma) |\lambda| e^{-\frac{\pi}{2 k}}}{(\beta - 1/\beta)(1 - \gamma /\beta)}. 
    \end{align*}
    Then $h(x, y, \lambda) < k |r(x, y, \lambda)|$ holds for sufficiently small $k$ according to $\lim_{k \to 0^{+}} k e^{\frac{\pi}{2 k}} = + \infty$. 
    
    And for $x \in [-\frac{\pi}{2 k}, \tilde{x}]$, we have $|y| \le k |x|$
    \begin{align*}
        h(x, y, \lambda) &\le \frac{(1 - \beta \gamma) |\lambda| e^{x} \sin(k |x|) }{\beta + \gamma |\lambda|^2 e^{2x} - (\beta \gamma + 1)|\lambda| e^{x}} \\
        &< \frac{k (1 - \beta \gamma) |\lambda| e^{\tilde{x}} |\tilde{x}|}{(\beta - 1/\beta)(1 - \gamma /\beta)}. 
    \end{align*}
    Then $h(x, y, \lambda) < k |r(x, y, \lambda)|$ can be guaranteed by
    \begin{align*}
        \frac{(1 - \beta \gamma) |\lambda| e^{\tilde{x}} |\tilde{x}|}{(\beta - 1/\beta)(1 - \gamma /\beta)} < \ln\left(\frac{\beta + 1/\beta}{1 + \gamma/\beta}\right).
    \end{align*}
    
    Furthermore, a same result also holds for $\beta \gamma > 1$.
    
    At last, $\phi(\lambda e^w) \in \open{U(0, x_1, k)}$ for all $w \in \tilde{U}(k)$ with $\re(w) \ge \tilde{x}$ can be proved by the same method for proving bounded degree 2-spin systems, especially Lemma \ref{lem:bounded-2-spin:x0-x1}. 
    
\end{proof}

\section{Examples of weighted Set Covers}
\label{app:set-cover-problem}

In this section, we list some problems which can be transformed into set cover problems. 

\paragraph{Counting the independent sets on a hypergraph.}
Partition function for counting hypergraph independent sets on a hypergraph $G = (V, E)$ is defined as
\begin{align*}
    \mathrm{HIS}(G, \eta) \triangleq \sum_{\sigma \in \{0, 1\}^V} \eta^{|\{\sigma(v) = 1\}|} \prod_{e \in E} \left(1 - \prod_{v \in e} \sigma(v) \right).
\end{align*}
In fact, there holds $\mathrm{HIS}(G, \eta) = Z(G, -1, \eta)$.

\paragraph{Edge Cover}
Partition function of a generalized version of edge cover introduced in \cite{liu2014fptas-edge} for a graph $\tilde{G} = (\tilde{V}, \tilde{E})$ is defined to be
\begin{align*}
    \mathrm{EC}(\tilde{G}, \mu, \eta) \triangleq \sum_{\sigma \in \{0, 1\}^{\tilde{E}}} \prod_{\tilde{e} \in \tilde{E}} \eta^{\sigma(\tilde{e})} \prod_{\tilde{v} \in \tilde{V}} \left(1 - (1 - \mu) \prod_{\tilde{e}: \tilde{v} \in \tilde{e}} (1 - \sigma(\tilde{e})) \right).
\end{align*}

Consider a hypergraph $G = (V, E)$ such that
\begin{align*}
    V = \tilde{E}, ~~ E = \{ e_{\tilde{v}} \triangleq \{\tilde{e} \in \tilde{E}: \tilde{v} \in \tilde{e}\}: \tilde{v} \in \tilde{V} \}.
\end{align*}
The maximum degree of $G$ is 2 following from $\tilde{e} = (\tilde{v}_1, \tilde{v}_2)$ can only appears in two hyperedges $e_{\tilde{v}_1}$ and $e_{\tilde{v}_2}$.
Furthermore, it is easily to check that 
\begin{align*}
    \mathrm{EC}(\tilde{G}; \mu, \eta) = \eta^{|\tilde{E}|} Z(G, \mu - 1, 1/\eta). 
\end{align*}

\paragraph{Counting the independent sets on a bipartite graph.}
% \subsubsection{Bipartite Independent Sets}
Consider a bipartite graph $\tilde{G} = (L \cup R, \tilde{E})$. Define
\begin{align*}
    \mathrm{BIS}(\tilde{G}, \eta, \mu) = \sum_{I \text{ independent}} \eta^{|I \cap L|} \mu^{|I \cap R|}.
\end{align*}

Construct a hypergraph $G = \{ L, E \}$ where $E = \{\{u: u \sim v\}: v \in R\}$.

\begin{lemma}\label{lemma:bis}
    There holds
    \begin{align*}
        \mathrm{BIS}(\tilde{G}, \eta, \mu) = \eta^{|L|} Z(G, \mu, 1/\eta).
    \end{align*}
\end{lemma}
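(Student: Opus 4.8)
The plan is to expand both sides by summing over the configuration of the left vertices and to match the summands term by term. First I would parametrize each independent set $I$ of $\tilde{G}$ by the pair $(S,T)$ with $S=I\cap L$ and $T=I\cap R$. Writing $e_v\triangleq\{u\in L:u\sim v\}$ for the hyperedge of $G$ attached to $v\in R$ (with the product $\prod_{e\in E}$ in the definition of $Z$ read as a product over $v\in R$), independence of $I$ is exactly the condition that $T\subseteq R_S\triangleq\{v\in R:e_v\cap S=\emptyset\}$. Performing the inner sum over all admissible $T$ first, and using that the indicator below takes only the values $0$ and $1$, gives
\begin{align*}
    \mathrm{BIS}(\tilde{G},\eta,\mu)=\sum_{S\subseteq L}\eta^{|S|}\prod_{v\in R_S}(1+\mu)=\sum_{S\subseteq L}\eta^{|S|}\prod_{v\in R}\bigl(1+\mu\,\mathbf{1}[e_v\cap S=\emptyset]\bigr).
\end{align*}

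Next I would make the indicator multiplicative. Encoding $S$ by its characteristic vector $\sigma\in\{0,1\}^L$ (so that $\sigma(u)=1$ iff $u\in S$), one has $\mathbf{1}[e_v\cap S=\emptyset]=\prod_{u\in e_v}(1-\sigma(u))$, hence $1+\mu\,\mathbf{1}[e_v\cap S=\emptyset]=\varphi^{\mu}_{e_v}(\sigma\vert_{e_v})$. Therefore
\begin{align*}
    \mathrm{BIS}(\tilde{G},\eta,\mu)=\sum_{\sigma\in\{0,1\}^L}\eta^{|\{u:\sigma(u)=1\}|}\prod_{v\in R}\varphi^{\mu}_{e_v}(\sigma\vert_{e_v})=\eta^{|L|}\sum_{\sigma\in\{0,1\}^L}\Bigl(\tfrac1\eta\Bigr)^{|\{u:\sigma(u)=0\}|}\prod_{v\in R}\varphi^{\mu}_{e_v}(\sigma\vert_{e_v}),
\end{align*}
where the last step uses $|\{u:\sigma(u)=1\}|=|L|-|\{u:\sigma(u)=0\}|$ to pull out the factor $\eta^{|L|}$. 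The remaining sum is exactly $Z(G,\vvarphi^{\mu},1/\eta)=Z(G,\mu,1/\eta)$ by definition of the generalized set cover partition function with vertex activity $1/\eta$, which proves the claim.

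There is essentially no analytic obstacle: the whole argument is a term-by-term identity with no estimates involved. The only point I would flag explicitly at the outset is the bookkeeping of hyperedges — the assignment $v\mapsto e_v$ need not be injective, so $E$ (and the product $\prod_{e\in E}$) should be understood as indexed by $R$, one hyperedge per right vertex; with this convention the correspondence between independent sets of $\tilde{G}$ and pairs $(\sigma,T)$ with $T\subseteq R_S$ is a clean bijection. The degenerate case $\eta=0$ is excluded since $1/\eta$ appears, and isolated vertices on either side are harmless.
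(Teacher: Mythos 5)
Your proof is correct and follows essentially the same route as the paper's: sum out the right-hand (or $J\subseteq R$) part to get the factor $\prod_{v\in R}\bigl(1+\mu\prod_{u\sim v}(1-\sigma(u))\bigr)$, then pull out $\eta^{|L|}$ to recognize $Z(G,\mu,1/\eta)$. Your remark that the hyperedges should be indexed by $R$ (since $v\mapsto e_v$ need not be injective) is a fair bookkeeping point the paper leaves implicit, but it does not change the argument.
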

\begin{proof}
    For $I \subseteq L$ and $J \subseteq R$, $I \cup J$ is an independent set if and only if $\gN(I) \cap J = \emptyset$. 
    Hence we have
    \begin{align*}
        \mathrm{BIS}(\tilde{G}, \eta, \mu) &= \sum_{I \subseteq L} \eta^{|I|} \sum_{\substack{J : J \subseteq R \\ I \cup J \text{ independent}} } \mu^{|J|} \\
        &= \sum_{I \subseteq L} \eta^{|I|} (1 + \mu)^{|R| - |\gN(I)|} \\
        &= \sum_{I \subseteq L} \eta^{|I|} \prod_{v \in R} \left(1 + \mu \prod_{u \in L, u \sim v} (1 - I(u)) \right) \\
        &= \eta^{|L|} \sum_{I \subseteq L} \eta^{-|\{v: I(v) = 0\}|} \prod_{e \in E} \varphi_e^{\mu} (I\vert_{e}) \\
        &= \eta^{|L|} Z(G, \mu, 1/\eta).
    \end{align*}
\end{proof}
\section{Computation Tree for Weighted Set Covers}
\label{app:set-cover:tree}

For a hypergraph $G = (V, E)$ and a vertex $v \in V$, we denote $G - v = \{V \backslash \{v\}, \{ e \backslash \{v\}: e \in E \}\}$.
And for $e \in E$, we define $G - e = \{V, E \backslash \{e\}\}$.

Consider a vertex $v \in V$, if either $Z_{\tau_S, \sigma(v) = 0}(G, \mu, \lambda)\neq 0$ or $Z_{\tau_S, \sigma(v) = 1}(G, \mu, \lambda) \neq 0$, define 
\begin{align}
    R^{\mu, \lambda}_{\tau_S}(G, v) = \frac{Z_{\tau_S, \sigma(v) = 0}(G, \mu, \lambda)}{Z_{\tau_S, \sigma(v) = 1}(G, \mu, \lambda)} \in \hat{C}.
\end{align}

If $v$ is pinned by $\tau_S$ such that $\tau_S(v) = 1$, then we have
$Z_{\tau_S, \sigma(v) = 0}(G, \mu, \lambda) = 0, Z_{\tau_S, \sigma(v) = 1}(G, \mu, \lambda) = Z_{\tau_S}(G, \mu, \lambda)$,
that is $R^{\mu, \lambda}_{\tau_S}(G, v) = 0$. On the other hand, if $\tau_S(v) = 0$, then $Z_{\tau_S, \sigma(v) = 0}(G, \mu, \lambda) = Z_{\tau_S}(G, \mu, \lambda), Z_{\tau_S, \sigma(v) = 1}(G, \mu, \lambda) = 0$, which implies that $R^{\mu, \lambda}_{\tau_S}(G, v) = \infty$. 
Moreover, if $v$ is free and does not belong to any edge in $G$. Then it is trivial to check that $R^{\mu, \lambda}_{\tau_S}(G, v) = \lambda$.

Now consider some vertices $S \subseteq V$ pinned by $\tau_S$, and a free (unpinned) vertex $v \notin S$. Assume that $v$ occurs in the edges $e_1, \cdots, e_D$. We replace $v$ in $e_i$ with a independent duplicate $\tilde{v}_i$ for $i = 1, \cdots, D$, and denote this new hypergraph by $\tilde{G}$. We have
\begin{align}\label{eq:tree:v}
    \notag R^{\mu, \lambda}_{\tau_S}(G, v) &= \frac{Z_{\tau_S, \sigma(v) = 0}(G, \mu, \lambda)}{Z_{\tau_S, \sigma(v) = 1}(G, \mu, \lambda)}
    = \frac{ Z_{\tau_S, \sigma(\tilde{v}_i) = 0, i = 1, \cdots, D}(\tilde{G}, \mu, \lambda)}{\lambda^{D-1} Z_{\tau_S, \sigma(\tilde{v}_i) = 1, i = 1, \cdots, D}(\tilde{G}, \mu, \lambda)} \\
    &= \frac{1}{\lambda^{D-1}} \prod_{i=1}^d \frac{Z_{\tau_S, \sigma_i, \sigma(\tilde{v}_i) = 0}(\tilde{G}, \mu, \lambda)}{Z_{\tau_S, \sigma_{i}, \sigma(\tilde{v}_i) = 1}(\tilde{G}, \mu, \lambda)}.
\end{align}
where $\sigma_i$ satisfies $\sigma_i(\tilde{v}_s) = 1$ for $s = 1, \cdots, i - 1$ and $\sigma_i(\tilde{v}_s) = 0$ for $s = i + 1, \cdots, D$.

Suppose that $e_i = \{\tilde{v}_i\} \cup \{v_{i, j}\}_{j=1}^{d_i}$. 
Note that $\tilde{v}_i$ only appears in $e_i$. If $\sigma(\tilde{v}_i) = 1$, we have $\varphi_{e_i}(\sigma\vert_{e_i}) = 1$, that is
\begin{align*}
    Z_{\tau_S, \sigma_{i}, \sigma(\tilde{v}_i) = 1}(\tilde{G}, \mu, \lambda) = Z_{\tau_S, \sigma_{i}}(\tilde{G} - \tilde{v}_i - e_i, \mu, \lambda), 
\end{align*}

If $\tau_S(v_{i, j}) = 1$ for some $j \in [d_i]$, then we also have $\varphi_{e_i}(\sigma\vert_{e_i}) = 1$, that is 
\begin{align*}
     Z_{\tau_S, \sigma_{i}, \sigma(\tilde{v}_i) = 0}(\tilde{G}, \mu, \lambda) = \lambda Z_{\tau_S, \sigma_{i}}(\tilde{G} - \tilde{v}_i - e_i, \mu, \lambda).
\end{align*}
Consequently, we have
\begin{align}\label{eq:tree:e:=1}
     \frac{Z_{\tau_S, \sigma_i, \sigma(\tilde{v}_i) = 0}(\tilde{G}, \mu, \lambda)}{Z_{\tau_S, \sigma_{i}, \sigma(\tilde{v}_i) = 1}(\tilde{G}, \mu, \lambda)} &= \lambda.
\end{align}

Next we assume that $\tau_S(v_{i, j}) = 0$ if $v_{i, j} \in S$. 
% Rearrange $\{v_{i, j}\}_{j=1}^{d_i}$ such that $\tau_S(v_{i, j}) = 0$ for $j > \tilde{d}_i$ and 
Observe that for $\sigma(\tilde{v}_i) = 0$, we have
\begin{align*}
    \varphi_{e_i}(\sigma\vert_{e_i}) = \begin{cases}
    1 + \mu, ~~&\text{if } \sigma(v_{i, j}) = 0, \text{ for } j = 1, \cdots, d_i, \\
    1, ~~&\text{otherwise.}
    \end{cases}
\end{align*}
Therefore, we have
\begin{align*}
     Z_{\tau_S, \sigma_{i}, \sigma(\tilde{v}_i) = 0}(\tilde{G}, \mu, \lambda) &= \lambda\left((1 + \mu) Z_{\tau_S, \sigma_{i}, \hat\sigma_0}(\tilde{G} - \tilde{v}_i - e_i, \mu, \lambda) + \sum_{\hat{\sigma} \in \{0, 1\}^{d_i} \backslash \vzero } Z_{\tau_S, \sigma_{i}, \hat{\sigma}}(\tilde{G} - \tilde{v}_i - e_i, \mu, \lambda)\right) \\
     &= \lambda \left( Z_{\tau_S, \sigma_{i}}(\tilde{G} - \tilde{v}_i - e_i, \mu, \lambda) + \mu Z_{\tau_S, \sigma_{i}, \hat\sigma_0}(\tilde{G} - \tilde{v}_i - e_i, \mu, \lambda)\right),
\end{align*}
where $\hat\sigma_0(v_{i, j}) = 0$ for $j = 1, \cdots, d_i$. Hence with $G_i = \tilde{G} - \tilde{v}_i - e_i$, it holds that
\begin{align}\label{eq:tree:e}
    \notag \frac{Z_{\tau_S, \sigma_i, \sigma(\tilde{v}_i) = 0}(\tilde{G}, \mu, \lambda)}{Z_{\tau_S, \sigma_{i}, \sigma(\tilde{v}_i) = 1}(\tilde{G}, \mu, \lambda)} &= \lambda \frac{Z_{\tau_S, \sigma_{i}}(G_i, \mu, \lambda) + \mu Z_{\tau_S, \sigma_{i}, \hat\sigma_0}(G_i, \mu, \lambda)}{ Z_{\tau_S, \sigma_{i}}(G_i, \mu, \lambda)} \\
    &= \lambda \left(1 + \mu \prod_{j=1}^{d_i} \frac{Z_{\tau_S, \sigma_{i}, \hat\sigma_{j+1}}(G_i, \mu, \lambda)}{Z_{\tau_S, \sigma_{i}, \hat{\sigma}_{j}}(G_i, \mu, \lambda)}\right),
\end{align}
where $\hat{\sigma}_j$ satisfies $\hat{\sigma}_j(v_{i, s}) = 0$ for $s = 1, \cdots, j - 1$.

Next, note that
\begin{align}\label{eq:tree:e2}
    \notag \frac{Z_{\tau_S, \sigma_{i}, \hat\sigma_{j-1}}(G_i, \mu, \lambda)}{Z_{\tau_S, \sigma_{i}, \hat{\sigma}_{j}}(G_i, \mu, \lambda)}
    &= \frac{Z_{\tau_S, \sigma_{i}, \hat\sigma_{j}, \sigma(v_{i, j}) = 0}(G_i, \mu, \lambda)}{Z_{\tau_S, \sigma_{i}, \hat{\sigma}_{j}, \sigma(v_{i, j}) = 0}(G_i, \mu, \lambda) + Z_{\tau_S, \sigma_{i}, \hat{\sigma}_{j}, \sigma(v_{i, j}) = 1}(G_i, \mu, \lambda)} \\
    &= \frac{R^{\mu, \lambda}_{\tau_S, \sigma_i, \hat{\sigma}_j}(G_i, v_{i, j})}{1 + R^{\mu, \lambda}_{\tau_S, \sigma_i, \hat{\sigma}_j}(G_i, v_{i, j})}.
\end{align}

Put Equation (\ref{eq:tree:e}) and (\ref{eq:tree:e2}) together, we obtain that
\begin{align}\label{eq:tree:e3}
     \frac{Z_{\tau_S, \sigma_i, \sigma(\tilde{v}_i) = 0}(\tilde{G}, \mu, \lambda)}{Z_{\tau_S, \sigma_{i}, \sigma(\tilde{v}_i) = 1}(\tilde{G}, \mu, \lambda)} = \lambda \left(1 + \mu \prod_{j=1}^{d_i} \frac{R^{\mu, \lambda}_{\tau_S, \sigma_i, \hat{\sigma}_{j}}(G_i, v_{i, j})}{1 + R^{\mu, \lambda}_{\tau_S, \sigma_i, \hat{\sigma}_j}(G_i, v_{i, j})}\right).
\end{align}
% Note that in the case of $\tau_S(v_{i, j}) = 1$ for some $j \in [d_i]$, we have
% \begin{align*}
%     \frac{R^{\mu, \lambda}_{\tau_S, \sigma_i, \hat{\sigma}_{j}}(G_i, v_{i, j})}{1 + R^{\mu, \lambda}_{\tau_S, \sigma_i, \hat{\sigma}_j}(G_i, v_{i, j})} = 0,
% \end{align*}
% which means the equation (\ref{eq:tree:e3}) also holds in this case. 

Following from Equation (\ref{eq:tree:v}), (\ref{eq:tree:e:=1}) and (\ref{eq:tree:e3}), we obtain that
\begin{align}\label{eq:recursion}
    R^{\mu, \lambda}_{\tau_S}(G, v) = \lambda \prod_{i=1}^{D'} \left(1 + \mu \prod_{j=1}^{d_i} \frac{R^{\mu, \lambda}_{\tau_S, \sigma_i, \hat{\sigma}_{j}}(G_i, v_{i, j})}{1 + R^{\mu, \lambda}_{\tau_S, \sigma_i, \hat{\sigma}_j}(G_i, v_{i, j})}\right),
\end{align}
where $D' = D - |\{ i: \exists j, \text{ s.t. } \tau_S(v_{i, j}) = 1 \}|$.

\section{Proof of Lemma \ref{lem:complex-contraction-zero-free}}

\begin{definition}
    We say a free vertex $v$ is blocked in a configuration $\tau_S \in \{0, 1\}^{S}$ if $\hat{\tau} \in \{0, 1\}^{S \cup \{v\}}$ is infeasible, where $\hat{\tau}(v) = 0$ and $\hat{\tau}(u) = \tau_S(u)$ for $u \in S$.
\end{definition}

\begin{remark}
    We point out that when $\mu \neq -1$, then any configuration is feasible, and there is no blocked vertex.
\end{remark}

Following lemma implies that any blocked vertex can be pinned by $1$ without changing any value of the partition functions. 
\begin{lemma}\label{lemma:block}
    For a blocked vertex $v$ with a feasible configuration $\tau_S$, 
    we have
    \begin{align*}
        Z_{\tau_S}(G, \mu, \eta) = Z_{\tau_S, \sigma(v)=1}(G, \mu, \eta), \\
        R_{\tau_S}^{\mu, \eta}(G, \tilde{v}) = R_{\tau_S, \sigma(v)=1}^{\mu, \eta}(G, \tilde{v}) ~~\text{for}~~ \forall \tilde{v} \in V.
    \end{align*}
\end{lemma}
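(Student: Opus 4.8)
The plan is to reduce both identities to one elementary fact about the partition function and then do a short case analysis. First I would dispose of the trivial case: by Definition~\ref{def:set-cover:feasible} the feasibility constraint is nonvacuous only when $\mu=-1$, and the remark preceding the statement records that for $\mu\neq-1$ no vertex is ever blocked, so there is nothing to prove. Assume therefore $\mu=-1$. Since $v$ is blocked, the configuration $\hat\tau:=\tau_S\cup\{v\mapsto0\}$ on $S\cup\{v\}$ is infeasible while $\tau_S$ is feasible; hence the violated hyperedge $e$ must satisfy $v\in e$, $e\setminus\{v\}\subseteq S$, and $\tau_S(u)=0$ for every $u\in e\setminus\{v\}$.

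The one fact I would establish is that $Z_{\tau_S,\sigma(v)=0}(G,-1,\eta)=0$. Indeed, every $\sigma\in\{0,1\}^V$ extending $\hat\tau$ has $\sigma\vert_e\equiv0$, so the hyperedge activity $\varphi^{-1}_e(\sigma\vert_e)=1+(-1)\prod_{u\in e}(1-\sigma(u))=1-1=0$ and the corresponding summand vanishes; equivalently this sum is $Z_{\hat\tau}(G,-1,\eta)$, which is $0$ because $\hat\tau$ is infeasible (as already noted in the text). Splitting the defining sum according to the value of $\sigma(v)$ then gives $Z_{\tau_S}(G,-1,\eta)=Z_{\tau_S,\sigma(v)=0}(G,-1,\eta)+Z_{\tau_S,\sigma(v)=1}(G,-1,\eta)=Z_{\tau_S,\sigma(v)=1}(G,-1,\eta)$, the first identity.

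For the marginal-ratio identity I would argue by cases on $\tilde v\in V$. If $\tilde v=v$, both sides equal $0$ (the left side because its numerator $Z_{\tau_S,\sigma(v)=0}$ is $0$, the right side because $v$ is pinned to $1$). If $\tilde v\in S$, both ratios are $0$ or $\infty$ according to $\tau_S(\tilde v)$, which is unchanged by adding the pin $\sigma(v)=1$. If $\tilde v$ is free and $\tilde v\neq v$, then for each $b\in\{0,1\}$ I would apply the fact above to $\tau^b_S:=\tau_S\cup\{\tilde v\mapsto b\}$: since $\tilde v\notin S$ and $e\setminus\{v\}\subseteq S$ we have $\tilde v\notin e$, so the blocking edge $e$ is untouched; thus if $\tau^b_S$ is feasible then $v$ is still blocked in it and $Z_{\tau^b_S,\sigma(v)=0}(G,-1,\eta)=0$, giving $Z_{\tau^b_S}(G,-1,\eta)=Z_{\tau^b_S,\sigma(v)=1}(G,-1,\eta)$, whereas if $\tau^b_S$ is infeasible then both partition functions on the right vanish (the same violated hyperedge survives after pinning $v$). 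In either case $Z_{\tau_S,\sigma(\tilde v)=b}(G,-1,\eta)=Z_{\tau_S,\sigma(v)=1,\sigma(\tilde v)=b}(G,-1,\eta)$, and dividing the $b=0$ identity by the $b=1$ identity yields $R^{-1,\eta}_{\tau_S}(G,\tilde v)=R^{-1,\eta}_{\tau_S,\sigma(v)=1}(G,\tilde v)$.

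There is essentially no computation here, so the only step I would be careful about is the bookkeeping around feasibility: ``blocked'' is defined only relative to a feasible configuration, so when a second vertex $\tilde v$ is pinned one must cover separately the sub-case in which $\tau^b_S$ is infeasible for an unrelated reason (both partition functions are then $0$, consistently with the claim), and the $R$-identities should be read as equalities of the two ratios wherever both are defined — equivalently, as the numerator/denominator identities established above.
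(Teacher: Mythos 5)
Your proof is correct and follows essentially the same route as the paper: the single key fact is that the blocking hyperedge forces every configuration with $\sigma(v)=0$ to carry weight $\varphi^{-1}_e(\sigma\vert_e)=0$, so $Z_{\tau_S,\sigma(v)=0}(G,\mu,\eta)=0$ (and likewise with any further pinning of $\tilde v$), from which both identities follow by splitting the sum over the value of $\sigma(v)$. The paper's proof is just a terser version of this, omitting your explicit case analysis on $\tilde v$, which the summand-level vanishing already makes unnecessary.
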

\begin{proof}
    It is clearly that
    \begin{align*}
        &~~~~~~~~~Z_{\tau_S, \sigma(v_1)=0}(G, \mu, \eta) = 0, \\
        &Z_{\tau_S}(G, \mu, \eta) = Z_{\tau_S, \sigma(v_1)=1}(G, \mu, \eta). 
    \end{align*}
    Consequently, we have 
    \begin{align*}
        R_{\tau_S}^{\mu, \eta}(G, \tilde{v}) = \frac{Z_{\tau_S, \sigma(\tilde{v})=0}(G, \mu, \eta)}{Z_{\tau_S, \sigma(\tilde{v})=1}(G, \mu, \eta)} = \frac{Z_{\tau_S, \sigma(v)=1, \sigma(\tilde{v})=0}(G, \mu, \eta)}{Z_{\tau_S, \sigma(v)=1, \sigma(\tilde{v})=1}(G, \mu, \eta)} = R_{\tau_S, \sigma(v)=1}^{\mu, \eta}(G, \tilde{v}).
    \end{align*}
\end{proof}

Now we can assume that there is no blocked vertex in $G$. 
We prove Lemma \ref{lem:complex-contraction-zero-free} by following stronger result.
\begin{lemma}\label{lemma:complex-contraction-zero-free-feasible}
    For $\eta \neq 0$, if $\eta$ satisfies $\Delta$-complex-contraction property for weighted set covers with parameter $\mu \ge -1$, then for any hypergraph $G$ with max degree $\Delta$ and a feasible configuration $\tau_S$, we have 
    \begin{enumerate}
        \item $R^{\mu, \eta}_{\tau_S}(G, v) \in F_{\eta}$ for free vertex with degree at most $\Delta - 1$, 
        \item $R^{\mu, \eta}_{\tau_S}(G, v) \neq -1$ for free vertex with degree $\Delta$, 
        \item $Z_{\tau_S}(G, \mu, \eta) \neq 0$.
    \end{enumerate}
\end{lemma}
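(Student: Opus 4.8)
The plan is to prove the three assertions of Lemma~\ref{lemma:complex-contraction-zero-free-feasible} simultaneously by induction on the number $t=|V|-|S|$ of free vertices, mirroring the proofs of Lemma~\ref{lem:bounded-2-spin:contraction-zero-free} and Lemma~\ref{lem:unbounded-2-spin:contraction-zero-free} but using the set-cover recursion~(\ref{eq:recursion}) as the driving identity. Fix a closed region $F_\eta\subseteq\hat\sC$ witnessing the $\Delta$-complex-contraction property for weighted set covers with parameter $\mu$. In the base case $t=0$ (so $S=V$) we have $Z_{\tau_S}(G,\mu,\eta)=\eta^{|\{v:\tau_S(v)=0\}|}\prod_{e\in E}\varphi^{\mu}_e(\tau_S\vert_{e})$, and each factor $\varphi^{\mu}_e(\tau_S\vert_{e})$ equals $1$ unless all of $e$ is assigned $0$, in which case it equals $1+\mu$; feasibility (Definition~\ref{def:set-cover:feasible}) forbids the all-zero case when $\mu=-1$, while $1+\mu\neq 0$ when $\mu>-1$. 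Since $\eta\neq 0$ this gives $Z_{\tau_S}(G,\mu,\eta)\neq 0$, and (1)--(2) are vacuous.

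For the inductive step I first reduce to the case that $G$ has no blocked vertex: if it has one, Lemma~\ref{lemma:block} pins it to $1$ without changing any partition function or ratio, and the pinned instance has $t-1$ free vertices, so the induction hypothesis already delivers all three conclusions. Assuming then that $G$ has no blocked vertex, pick a free vertex $v$. The configuration $\tau_S\cup\{v\mapsto 1\}$ is feasible with $t-1$ free vertices, so by the induction hypothesis $Z_{\tau_S,\sigma(v)=1}(G,\mu,\eta)\neq 0$; hence $R^{\mu,\eta}_{\tau_S}(G,v)$ is well defined and, by~(\ref{eq:recursion}), equals $f(\vz;\vm,D',\mu,\eta)$, where $D'\le\deg_G(v)\le\Delta$ and the entries are the ratios $z_{i,j}=R^{\mu,\eta}_{\tau_S\cup\sigma_i\cup\hat\sigma_j}(G_i,v_{i,j})$ in the subinstances $G_i=\tilde G-\tilde v_i-e_i$.

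Next I show $z_{i,j}\in F_\eta$ for each $i,j$. If $v_{i,j}$ is pinned, it must be pinned to $0$, because the edges retained in the product (indices $1,\dots,D'$) are exactly those containing no vertex pinned to $1$; hence $z_{i,j}=\infty\in F_\eta$. If $v_{i,j}$ is free in $G_i$, then $\deg_{G_i}(v_{i,j})=\deg_G(v_{i,j})-1\le\Delta-1$ and $G_i$ has fewer than $t$ free vertices, so the induction hypothesis~(1) gives $z_{i,j}\in F_\eta$ — after checking that $\tau_S\cup\sigma_i\cup\hat\sigma_j$ remains feasible on $G_i$, which follows from the no-blocked-vertex assumption together with the fact that every retained edge still contains a free vertex (otherwise $v$ would have been blocked). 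With all $z_{i,j}\in F_\eta$ in hand, I invoke the $\Delta$-complex-contraction property: if $D'=\Delta$ it gives $f(\vz;\vm,\Delta,\mu,\eta)\neq -1$ directly; if $D'\le\Delta-1$, I pad the product with $\Delta-1-D'$ extra size-one edges each carrying ratio $0\in F_\eta$ — each such factor equals $1+\mu\cdot 0=1$ and does not change the value — so $f(\vz;\vm,D',\mu,\eta)=f(\vz';\vm',\Delta-1,\mu,\eta)\in F_\eta$, and $-1\notin F_\eta$ then gives $\neq -1$ as well. This proves~(1) (when $\deg_G(v)\le\Delta-1$, so $D'\le\Delta-1$) and~(2). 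Finally,~(3) follows from $Z_{\tau_S}(G,\mu,\eta)=Z_{\tau_S,\sigma(v)=1}(G,\mu,\eta)\,\bigl(1+R^{\mu,\eta}_{\tau_S}(G,v)\bigr)$, since the first factor is nonzero by the induction hypothesis and the second by $R^{\mu,\eta}_{\tau_S}(G,v)\neq -1$.

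The step I expect to be the main obstacle is the bookkeeping in the subinstances $G_i$: verifying that feasibility of $\tau_S\cup\sigma_i\cup\hat\sigma_j$ is preserved (which is precisely where the no-blocked-vertex reduction and the trimming of edges from $D$ to $D'$ both enter), and that the degree and free-vertex counts strictly decrease so that the induction hypothesis genuinely applies to $G_i$ and its pinnings. The padding argument needed for $D'<\Delta-1$ is essential but routine once $0\in F_\eta$ is on the table.
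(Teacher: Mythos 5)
Your proof follows essentially the same route as the paper's: induction on the number of free vertices, the base case settled by feasibility of $\tau_S$, the blocked-vertex reduction via Lemma~\ref{lemma:block}, the recursion~(\ref{eq:recursion}) combined with the induction hypothesis to place the sub-ratios in $F_{\eta}$ (with degrees dropping to at most $\Delta-1$ in $G_i$), padding with $0\in F_{\eta}$ when $D'<\Delta-1$ exactly as in the paper's remark, and finally deducing $Z_{\tau_S}(G,\mu,\eta)\neq 0$ from $R^{\mu,\eta}_{\tau_S}(G,v)\neq -1$. The only point worth adding is that in your blocked-vertex reduction the induction hypothesis on the pinned instance gives conclusions (1)--(2) only for the free vertices other than the blocked vertex $u$; for $u$ itself one should note $R^{\mu,\eta}_{\tau_S}(G,u)=0\in F_{\eta}$ (the numerator vanishes by infeasibility, the denominator is nonzero by the induction hypothesis), after which the argument matches the paper's.
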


\begin{proof}
    We use induction on $t = |V| - |S|$ to prove the this lemma. 
    If $|V| = |S|$, that is $S = V$. Note that
    \begin{align*}
        Z_{\tau_S}(G, \mu, \eta) = \eta^{|\{v: \tau_S(v) = 1\}|} \prod_{e \in E} \left( 1 + \mu \prod_{v \in e} (1 - \tau_S(v)) \right) = \eta^{m_1} (1 + \mu)^{m_2},
    \end{align*}
    where $m_1 = |\{v: \tau_V(v) = 1\}|$ and $m_2 = |\{e: \forall v \in e, \tau_V(v) = 0 \}|$.
    If $\mu > -1$, then $Z_{\tau_S}(G, \mu, \eta) \neq 0$ apparently. And if $\mu = -1$, $Z_{\tau_S}(G, \mu, \eta) \neq 0$ follows from $m_2 = 0$ by feasibility of $\tau_S$. \\
    
    Now assume that the desired result holds for $|V| - |S| \le t$, let consider the case of $|V| - |S| = t + 1$. 
    
    Choose a free vertex $v \in V$. By induction hypothesis, we know that $Z_{\tau_S, \sigma(v) = 1}(G, \mu, \eta) \neq 0$ and $R_{\tau_S}^{\mu, \eta}(G, v)$ is well-defined. \\[0.1cm]
    Recall the definition of $\tilde{G}$ and $\sigma_i$.
    By Lemma \ref{lemma:block}, we can assume that $v$ is unblocked in $G$, and consequently the configuration $\tau_S \cup \sigma_i$ is feasible in $\tilde{G}$. Then we fix the blocked vertex in $\tilde{G}$ with configuration $\tau_S \cup \sigma_i$ to $1$.
    We still have 
    \begin{align*}
        R^{\mu, \eta}_{\tau_S}(G, v) = \eta \prod_{i=1}^{D'} \left(1 + \mu \prod_{j=1}^{d_i} \frac{R^{\mu, \eta}_{\tau_S, \sigma_i, \hat{\sigma}_{j}}(G_i, v_{i, j})}{1 + R^{\mu, \eta}_{\tau_S, \sigma_i, \hat{\sigma}_j}(G_i, v_{i, j})}\right),
    \end{align*}
    where $D' = D - |\{ i: \exists j, \text{ s.t. } \tau_S(v_{i, j}) = 1, \text {or } v_{i, j} \text{ is blocked in } \tau_S \cup \sigma_i \}|$, each $\tau_S \cup \sigma_i \cup \hat{\sigma}_j$ is feasible in $G_i$ with at most $t$ free vertex, and the degree of $v_{i,j}$ in $G_i$ is at most $\Delta - 1$.
    
    By induction hypothesis, $R^{\mu, \eta}_{\tau_S, \sigma_i, \hat{\sigma}_j}(G_i, v_{i, j}) \in F_{\eta}$. 
    Then based on $\Delta$-complex-contraction for $\eta$, if $D' \le \Delta - 1$ we have $R^{\mu, \eta}_{\tau_S}(G, v) \in F_{\eta}$, and if $D' = \Delta$ there holds $R^{\mu, \eta}_{\tau_S}(G, v) \neq -1$.
    Finally, note that $R^{\mu, \eta}_{\tau_S}(G, v) \neq -1$ is equivalent to $Z_{\tau_S}(G, \mu, \eta) \neq 0$.
    
\end{proof}

\begin{remark}
    In fact, there holds
    \begin{align*}
        f(\vz; \vm, d, \mu, \eta) = f(\vz; \vm, \Delta - 1, \mu, \eta),
    \end{align*}
    for $d \le \Delta - 1$ with setting $z_{\tilde{d}, 1} = 0$ for $\tilde{d} > d$.
\end{remark}

\section{Proof of Lemma \ref{lem:set-cover:W-k}}
\begin{proof}
    Recall that $W(k) = \{w: |w| \le e^{-|\arg(w)|/k}\}$ and $\phi(w) = \frac{w}{w + 1}$. 
    We remark here that $\phi$ is a bijection from $\hat{\sC}$ to $\hat{\sC}$. 
    
    Let $F = \phi^{-1}(W(k))$. 
    Then $\{0, \infty\} \subseteq F$ according to $\{0, 1\} \subseteq W(k)$. 
    Moreover due to $W(k)$ is bounded, thus $-1 \notin F$. 
    
    Let $\tilde{W} = \{\prod_{i=1}^{\Delta-1}(1 + \mu w_i): \vw \in W(k)^{\Delta-1}\}$. We know that $\frac{\eta \tilde{W}}{1 + \eta \tilde{W}} \subseteq \open{W(k)}$ for $\eta \in [0, \eta_0]$.
    
    Then by the approach we used in the proof of Lemma \ref{lem:bounded-2-spin:U} and compactness of $W(k)$, we know that there exists $\delta > 0$ such that for all $\eta \in \gU([0, \eta_0], \delta)$ we have
    \begin{align*}
        \frac{\eta \tilde{W}}{1 + \eta \tilde{W}} \subseteq W(k), 
    \end{align*}
    which implies that $\eta \tilde{W} \in F$, i.e., 
    \begin{align*}
        f(\vz; \vm, \Delta-1, \mu, \eta) \in F.
    \end{align*}
\end{proof}

\section{Proof of Lemma \ref{lemma:region1}}
Recall that $\tilde{W} = W(k)^{\circ} \backslash (-e^{-\pi/k} , 0]$. Let $g(z) = \ln(\phi^{-1}(z)) = \ln\left(\frac{z}{1 - z}\right)$. 

\begin{lemma}\label{lemma:boundary2}
    We have
    \begin{align*}
        \partial(g(\tilde{W})) = \left\{ g\bigg(\partial W(k) \backslash \{1, -e^{-\pi/k}\}\bigg) \right\} \cup \left\{ z: \re(z) \le - \ln (1 + e^{\pi/k}), |\im(z)| = \pi \right\}.
    \end{align*}
\end{lemma}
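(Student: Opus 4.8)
The plan is to describe $\partial\big(g(\tilde W)\big)$ directly as the collection of finite limits of $g$ along sequences inside $\tilde W$ that tend to $\partial\tilde W$, rather than invoking Lemma~\ref{lem:boundary}: $g$ does not extend holomorphically across the slit $(-e^{-\pi/k},0]$ (which is exactly where the branch cut of the logarithm is pulled back) nor across the point $1$, so that lemma is not directly applicable. First I would record the geometry. From the definition, $W(k)=\{w:|w|\le e^{-|\arg w|/k}\}$ is a compact Jordan region whose boundary $\partial W(k)$ equals the spiral $\Gamma=\{e^{-|\theta|/k}e^{\i\theta}:\theta\in[-\pi,\pi]\}$, a Jordan curve meeting the nonpositive real axis only at $-e^{-\pi/k}$ (at $\theta=\pm\pi$) and the positive real axis only at $1$ (at $\theta=0$); moreover $\open{W(k)}$ contains the disc of radius $e^{-\pi/k}$ about $0$, and $\open{W(k)}\cap(-\infty,0]=(-e^{-\pi/k},0]$. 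Hence $\partial\tilde W=\Gamma\cup[-e^{-\pi/k},0]$ with $\Gamma\cap[-e^{-\pi/k},0]=\{-e^{-\pi/k}\}$. Since $z\mapsto z/(1-z)$ is injective and maps $\tilde W$ into $\sC\setminus(-\infty,0]$ while mapping the closed slit $[-e^{-\pi/k},0]$ into the nonpositive reals, the map $g(z)=\ln\!\big(z/(1-z)\big)$ is holomorphic and injective on the domain $\tilde W$, extends holomorphically to a neighbourhood of each point of $\Gamma\setminus\{1,-e^{-\pi/k}\}$, and satisfies $|g(z_n)|\to\infty$ whenever $z_n\in\tilde W$ with $z_n\to0$ or $z_n\to1$.

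By the open mapping theorem $g(\tilde W)$ is open and $g\colon\tilde W\to g(\tilde W)$ is a homeomorphism, so $g^{-1}$ is continuous. I would then establish the characterisation: $\zeta\in\sC$ lies in $\partial\big(g(\tilde W)\big)$ iff $\zeta=\lim_n g(z_n)$ for some $z_n\in\tilde W$ with $z_n\to z_\ast\in\partial\tilde W$. ``Only if'': $\tilde W$ is bounded, so a subsequence of $(z_n)$ converges to some $z_\ast\in\overline{\tilde W}$; if $z_\ast\in\tilde W$ then $\zeta=g(z_\ast)\in g(\tilde W)$, contradicting that $\zeta$ is a boundary point of an open set. ``If'': if such a $\zeta$ equalled $g(z')$ with $z'\in\tilde W$, continuity of $g^{-1}$ would force $z_n\to z'$, hence $z'=z_\ast\notin\tilde W$, a contradiction; therefore $\zeta\in\overline{g(\tilde W)}\setminus g(\tilde W)=\partial\big(g(\tilde W)\big)$. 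It then remains to compute, case by case in $z_\ast\in\partial\tilde W$, the finite limits of $g$: for $z_\ast\in\Gamma\setminus\{1,-e^{-\pi/k}\}$ continuity gives only $g(z_\ast)$, and each such value is realised by nudging $z_\ast$ into $\open{W(k)}$ off the slit, contributing exactly $g(\partial W(k)\setminus\{1,-e^{-\pi/k}\})$; for $z_\ast\in\{0,1\}$ no finite limit occurs; for $z_\ast=-t\in(-e^{-\pi/k},0)$ the point $z_\ast/(1-z_\ast)=-t/(1+t)$ is negative real of modulus in $\big(0,1/(1+e^{\pi/k})\big)$, so approaching from $\im>0$ resp.\ $\im<0$ gives $\re g\to\ln\!\big(t/(1+t)\big)\in(-\infty,-\ln(1+e^{\pi/k}))$ and $\im g\to\pm\pi$, which as $t$ sweeps $(0,e^{-\pi/k})$ traces the two open rays $\{x\pm\i\pi:x<-\ln(1+e^{\pi/k})\}$; and for $z_\ast=-e^{-\pi/k}$ the same computation, using $\ln\!\big(e^{-\pi/k}/(1+e^{-\pi/k})\big)=-\ln(1+e^{\pi/k})$, yields the endpoints $-\ln(1+e^{\pi/k})\pm\i\pi$, which lie outside $g(\tilde W)$ because the unique solution of $g(z)=-\ln(1+e^{\pi/k})\pm\i\pi$ is $z=-e^{-\pi/k}\notin\tilde W$. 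Assembling the cases gives the claimed identity.

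The delicate part is $z_\ast=-e^{-\pi/k}$, where $\partial\tilde W$ is pinched: both the spiral $\Gamma$ and the slit terminate there, so $\tilde W$ is approached from two sides and the $g$-image breaks into the two ray-endpoints. This is exactly why $-e^{-\pi/k}$ must be deleted from the first piece of the right-hand side while its image is recovered as the closed endpoints of the second piece. One has to verify both that $-e^{-\pi/k}$ can genuinely be reached from inside $\tilde W$ with controlled argument — e.g.\ $z_n=-e^{-\pi/k}+\i/n\in\open{W(k)}$ for large $n$, since the first-order growth of $e^{-\arg z/k}$ above $e^{-\pi/k}$ outruns that of $|z|$ — and that the fixed branch of $\ln$ picks up $\pm\i\pi$ there. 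Everything else is routine bookkeeping with the increasing function $t\mapsto t/(1+t)$ on $[0,e^{-\pi/k}]$ and with the injectivity of $z\mapsto z/(1-z)$ and of $\ln$ on $\sC\setminus(-\infty,0]$.
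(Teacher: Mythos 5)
Your proposal is correct and follows essentially the same route as the paper's proof: both characterize $\partial(g(\tilde W))$ via limits of $g$ along sequences in $\tilde W$ converging to $\partial\tilde W$, use openness of $g(\tilde W)$ (open mapping theorem, plus injectivity) to rule out boundary values landing in the image, exclude $z_\ast\in\{0,1\}$ via $|g|\to\infty$, and compute the two-sided limits $\ln\!\big(t/(1+t)\big)\pm\i\pi$ along the slit $[-e^{-\pi/k},0)$. Your explicit verification that $-e^{-\pi/k}$ is approachable from inside $\tilde W$ and that the ray endpoints are not in $g(\tilde W)$ is a slightly more careful treatment of the pinch point than the paper gives, but it is the same argument in substance.
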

\begin{proof}

Define 
\begin{align*}
    \gA_1 =  \left\{ g\bigg(\partial W(k) \backslash \{1, -e^{-\pi/k}\}\bigg) \right\}, ~~~ \gA_2 = \left\{ \ln\left( \frac{\alpha}{1 + \alpha} \right) \pm i \pi: \alpha \in (0, e^{-\pi/k}]  \right\}.
\end{align*}

Consider $w_0 \in \partial(W(k)) \backslash \{1, -e^{-\pi/k}\}$ and a sequence $\{w_k \in \tilde{W}\}_{k \ge 1}$ such that $\lim_{k \to \infty} w_k = w_0$. By continuity of $g$ among the set $W(k) \backslash ((-\infty, 0] \cup \{1\})$, we know that $\lim_{k \to \infty} g(w_k) = g(w_0)$, i.e. $g(w_0) \in \overline{g(\tilde{W})}$.  According to $g$ is holomorphic on $\tilde{W}$ and open mapping theorem, we know that $g(w_0)$ cannot be an interior of $g(\tilde{W})$ and thus $g(w_0) \in \partial(g(\tilde{W}))$, that is $\gA_1 \subseteq \partial(g(\tilde{W}))$. 

Furthermore, for $w_0 \in [-e^{-\pi/k}, 0)$ and a sequence $\{w_k = r_k e^{i \theta_k} \in \tilde{W}\}_{k \ge 1}$ such that $\arg(w_k) \in (0, \pi)$ and $\lim_{k \to \infty} w_k = w_0$, we know that 
\begin{align*}
    \lim_{k \to \infty} r_k = |w_0|, ~~ \lim_{k \to \infty} \theta_k = \pi.
\end{align*}
Note that
\begin{align*}
    g(w_k) = \frac{r_k (\cos(\theta_k) - r_k + i \sin(\theta_k)) }{1 + r_k^2 - 2 r_k \cos(\theta_k)}
\end{align*}
and
\begin{align*}
    \lim_{k \to \infty} \re{g(w_k)} &= \lim_{k \to \infty} \frac{1}{2} \ln \left( \frac{r_k^2}{1 + r_k^2 - 2 r_k \cos(\theta_k)} \right) = \ln \left( \frac{-w_0}{1 - w_0} \right), \\
    \lim_{k \to \infty} \im{g(w_k)} &= \pi - \lim_{k \to \infty} \arctan \left( \frac{\sin(\theta_k)}{r_k - \cos(\theta_k)} \right) = \pi.
\end{align*}

Similarly, if the sequence $\{w_k = r_k e^{i \theta_k} \in \tilde{W}\}_{k \ge 1}$ satisfies $\arg(w_k) \in (-\pi, 0)$ and $\lim_{k \to \infty} w_k = w_0$, then it holds that
\begin{align*}
    \lim_{k \to \infty} r_k = - w_0, ~~ \lim_{k \to \infty} \theta_k = \pi,
\end{align*}
and 
\begin{align*}
    \lim_{k \to \infty} \re{g(w_k)} &= \ln \left( \frac{-w_0}{1 - w_0} \right), \\
    \lim_{k \to \infty} \im{g(w_k)} &= - \pi,
\end{align*}
which implies that $\gA_2 \subseteq \partial(g(\tilde{W}))$. \\[0.1cm]

On the other hand, for $z_0 \in \partial(g(\tilde{W}))$, we know that there exists $\{w_k \in \tilde{W}\}_{k \ge 1}$ such that $\lim_{k \to \infty} g(w_k) = z_0$. Recalling that $\tilde{W}$ is bounded, there exists $\{w_{k_m}\}_{m \ge 1}$ and $w_0$ such that $\lim_{m \to \infty} w_{k_m} = w_0$. 

Following from open mapping theorem, we know that $w_0 \notin \tilde{W}$, otherwise $\lim_{k \to \infty} g(w_k) = g(w_0)$ and $z_0 = g(w_0)$ is also an interior point of $g(\tilde{W})$.
That is $w_0 \in \partial (\tilde{W})$.

Note that 
\begin{align*}
    |g(w)| \ge |\re{g(w)}| = |\log(|\phi^{-1}(w)|)| = \left| \log\left( \frac{|w|}{|1 - w|} \right) \right|.
\end{align*}
Hence, if $w_0 = 0$ or $w_0 = 1$, then the sequence $\left\{\left| g(w_{k_m}) \right|\right\}$ is unbounded, which is impossible because of $\lim_{m \to \infty} |g(w_{k_m})| = |z_0|$. 
Therefore $w_0 \in \partial(W(k))\backslash\{1, -e^{-\pi/k}\}$ or $w_0 \in [-e^{-\pi/k}, 0)$, 
which means that
\begin{align*}
    \partial(g(\tilde{W})) \subseteq \gA_1 \cup \gA_2.
\end{align*}

\end{proof}

Recall that for $\theta \in (0, \pi)$ 
\begin{align*}
    r_{k, 1}(\theta) &\triangleq \re\left(\ln \left(\frac{e^{-\theta/k + i\theta}}{1 - e^{-\theta/k + i\theta}}\right) \right) = -\frac{1}{2} \ln \left( e^{2\theta/k} + 1 - 2 e^{\theta/k} \cos(\theta) \right), \\
    h_{k, 1}(\theta) &\triangleq \im\left(\ln \left(\frac{e^{-\theta/k + i\theta}}{1 - e^{-\theta/k + i\theta}}\right) \right) = \begin{cases}
    \arctan \left( \frac{\sin(\theta)}{\cos(\theta) - e^{-\theta/k}} \right), ~~~&\text{if } \cos(\theta) > e^{-\theta/k}, \\
    \frac{\pi}{2}, ~~~&\text{if } \cos(\theta) = e^{-\theta/k}, \\
    \pi - \arctan \left( \frac{\sin(\theta)}{e^{-\theta/k} - \cos(\theta)} \right), ~~~&\text{if } \cos(\theta) < e^{-\theta/k}.
    \end{cases}
\end{align*}

\begin{lemma}\label{lemma:ab-monotone}
    $r_{k,1}(\theta)$ is strictly decreasing on $(0, \pi)$, $h_{k,1}(\theta)$ is strictly increasing on $(0, \pi)$.
\end{lemma}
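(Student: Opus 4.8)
The plan is to route both quantities through the single holomorphic auxiliary function $v(\theta) := e^{\theta/k - \i\theta}$ and then differentiate once. With $w(\theta) = e^{-\theta/k + \i\theta}$ as in the statement one has $v = 1/w$, hence the algebraic identity $\frac{w}{1-w} = \frac{1}{v-1}$, so I would record
\[
    r_{k,1}(\theta) = -\re\!\big(\ln(v(\theta)-1)\big), \qquad h_{k,1}(\theta) = -\im\!\big(\ln(v(\theta)-1)\big),
\]
where $\ln$ is the principal branch. The first step is to check this branch is legitimate on $(0,\pi)$: since $\im(v(\theta)-1) = -e^{\theta/k}\sin\theta < 0$ there, $v(\theta)-1$ stays in the open lower half-plane and avoids $(-\infty,0]$; I would also note that $-\im(\ln(v-1))$ is precisely the continuous determination of $\arg\frac{w}{1-w}$ with values in $(0,\pi)$, i.e. it agrees with the piecewise $\arctan$-formula defining $h_{k,1}$. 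Writing $D(\theta) := |v(\theta)-1|^2 = e^{2\theta/k} - 2e^{\theta/k}\cos\theta + 1 = (e^{\theta/k}-\cos\theta)^2 + \sin^2\theta$, this is strictly positive on $(0,\pi)$, so $L(\theta):=\ln(v(\theta)-1)$ is real-analytic there and $r_{k,1}=-\re L$, $h_{k,1}=-\im L$ are differentiable.

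For $r_{k,1}$ I would use $r_{k,1}(\theta) = -\tfrac12\ln D(\theta)$, giving $r_{k,1}'(\theta) = -D'(\theta)/(2D(\theta))$ with
\[
    D'(\theta) = \tfrac{2}{k}e^{\theta/k}\big(e^{\theta/k}-\cos\theta\big) + 2e^{\theta/k}\sin\theta .
\]
On $(0,\pi)$ one has $e^{\theta/k} > 1 \ge \cos\theta$ and $\sin\theta > 0$, so $D'(\theta) > 0$, hence $r_{k,1}'(\theta) < 0$, which is the first claim.

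For $h_{k,1}$ I would differentiate $L$: $L'(\theta) = v'(\theta)/(v(\theta)-1) = (\tfrac1k-\i)v(\theta)/(v(\theta)-1)$, so $h_{k,1}'(\theta) = -\im L'(\theta)$. Multiplying numerator and denominator by $\overline{v(\theta)}-1$ and using $v\bar v = e^{2\theta/k}$ and $v = e^{\theta/k}(\cos\theta - \i\sin\theta)$, a one-line computation of the imaginary part gives
\[
    h_{k,1}'(\theta) = \frac{e^{\theta/k}\big(e^{\theta/k} - \cos\theta - \tfrac1k\sin\theta\big)}{D(\theta)} ,
\]
so the whole claim reduces to the inequality $e^{\theta/k} > \cos\theta + \tfrac1k\sin\theta$ for $\theta\in(0,\pi)$. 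I expect this last inequality to be the only genuine obstacle, and the key trick I would use is to multiply through by $e^{-\theta/k}>0$ and analyse $g(\theta) := e^{-\theta/k}\big(\cos\theta + \tfrac1k\sin\theta\big)$ instead: $g(0)=1$ and $g'(\theta) = -\big(1+\tfrac1{k^2}\big)e^{-\theta/k}\sin\theta < 0$ on $(0,\pi)$, so $g$ is strictly decreasing on $[0,\pi)$ and $g(\theta)<1$ for $\theta\in(0,\pi)$; this yields $h_{k,1}'>0$. (Differentiating $e^{\theta/k}-\cos\theta-\tfrac1k\sin\theta$ directly is less clean, since its second derivative changes sign for large $k$.) Everything else is routine calculus; the points needing care are choosing the substitution that collapses both formulas, verifying that the principal branch is the correct determination of the argument, and the sign bookkeeping in the two derivative computations.
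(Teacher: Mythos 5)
Your proof is correct and takes essentially the same route as the paper: you arrive at the same closed forms for $r_{k,1}'(\theta)$ and $h_{k,1}'(\theta)$ (the paper differentiates the real formulas directly rather than passing through $\ln(v(\theta)-1)$), and your auxiliary function $g(\theta)=e^{-\theta/k}\left(\cos\theta+\tfrac{1}{k}\sin\theta\right)$ is just a rescaling of the paper's $\tilde\psi(\theta)=k e^{\theta/k}-\sin\theta-k\cos\theta$, since $1-g(\theta)=e^{-\theta/k}\tilde\psi(\theta)/k$, so your claim $g<1$ is exactly the paper's $\tilde\psi>0$. The remaining differences are cosmetic; if anything, your use of $e^{\theta/k}>1\ge\cos\theta$ streamlines the paper's two-case verification that $\psi(\theta)=e^{\theta/k}+k\sin\theta-\cos\theta>0$.
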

\begin{proof}
    Observe that
    \begin{align*}
        r'_{k, 1}(\theta) = -\frac{e^{\theta/k} \left(e^{\theta/k}+k \sin (\theta)-\cos (\theta)\right)}{k \left(e^{2\theta/k}-2 e^{\theta/k} \cos (\theta)+1\right)}.
    \end{align*}
    
    Let $\psi(\theta) = e^{\theta/k}+k \sin (\theta)-\cos (\theta)$. 
    For $\theta \in [\pi/2, \pi)$, $\psi(\theta) > 0$ holds apparently. For $\theta \in (0, \pi/2]$, we have
    \begin{align*}
        \psi'(\theta) = \frac{e^{\theta/k}}{k} + k \cos(\theta) + \sin(\theta) > 0, \\
        \psi(\theta) > \psi(0) = 0.
    \end{align*}
    Thus $r'_{k, 1}(\theta) < 0$ for $\theta \in (0, \pi)$.
    
    Next, observe that
    \begin{align*}
        h'_{k, 1}(\theta) = \frac{e^{\theta /k} \left(k e^{\theta /k} -\sin (\theta )-k \cos (\theta )\right)}{k \left(e^{\frac{2 \theta }{k}}-2 \cos (\theta ) e^{\theta /k}+1\right)}.
    \end{align*}
    Let $\tilde\psi(\theta) = k e^{\theta /k} -\sin (\theta )-k \cos (\theta )$. With $\tilde{\psi}'(\theta) = \psi(\theta) > 0$, we have $\tilde\psi(\theta) > \tilde\psi(0) = 0$, which implies that $h'_{k, 1}(\theta) > 0$ for $\theta \in (0, \pi)$.
    
\end{proof}

\begin{lemma}
    There holds
    \begin{align*}
        g(\tilde{W}) = \{z: \re(z) > - \ln (1 + e^{\pi/k}), |\im(z)| < h_{k,1}(r_{k,1}^{-1}(\re(z))) \} \cup \{z: \re(z) \le - \ln (1 + e^{\pi/k}), |\im(z)| < \pi \}.
    \end{align*}
\end{lemma}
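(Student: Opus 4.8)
The goal is to describe $g(\tilde W)$ explicitly, where $\tilde W = W(k)^{\circ}\setminus(-e^{-\pi/k},0]$ and $g(z) = \ln(z/(1-z))$. The strategy is to combine three facts already in hand: (i) the boundary description from Lemma \ref{lemma:boundary2}, which says $\partial(g(\tilde W))$ consists of the image of $\partial W(k)\setminus\{1,-e^{-\pi/k}\}$ under $g$, together with the two horizontal rays $\{\,\re(z)\le -\ln(1+e^{\pi/k}),\ |\im(z)|=\pi\,\}$; (ii) the monotonicity of $r_{k,1}$ and $h_{k,1}$ from Lemma \ref{lemma:ab-monotone}, which lets us invert $r_{k,1}$ and write the curved part of the boundary as a graph $|\im(z)| = h_{k,1}(r_{k,1}^{-1}(\re(z)))$; and (iii) connectedness of $g(\tilde W)$ (since $\tilde W$ is a domain and $g$ is holomorphic, hence an open connected map), which forces $g(\tilde W)$ to be exactly the region \emph{enclosed} by that boundary.

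\textbf{Key steps.} First I would parametrize $\partial W(k)\setminus\{1,-e^{-\pi/k}\}$. The set $W(k) = \{w:|w|\le e^{-|\arg(w)|/k}\}$ has boundary given by $w = e^{-\theta/k+\i\theta}$ for $\theta\in(-\pi,\pi)$ (using $\arg$ symmetry, it suffices to take $\theta\in[0,\pi)$ and reflect). Plugging into $g$ gives exactly $\re(g(e^{-\theta/k+\i\theta})) = r_{k,1}(\theta)$ and $\im(g(e^{-\theta/k+\i\theta})) = h_{k,1}(\theta)$, by the displayed formulas preceding Lemma \ref{lemma:ab-monotone}. Second, as $\theta$ ranges over $(0,\pi)$, $r_{k,1}$ is strictly decreasing from $r_{k,1}(0^+)=+\infty$ down to $r_{k,1}(\pi^-) = -\ln(1+e^{\pi/k})$, so $r_{k,1}^{-1}$ is well-defined on $(-\ln(1+e^{\pi/k}),+\infty)$; simultaneously $h_{k,1}$ is strictly increasing from $h_{k,1}(0^+)=0$ to $h_{k,1}(\pi^-)=\pi$. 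Hence the curved boundary piece is precisely the graph $\{\,|\im(z)| = h_{k,1}(r_{k,1}^{-1}(\re(z))) : \re(z) > -\ln(1+e^{\pi/k})\,\}$, and it meets the two horizontal rays $|\im(z)|=\pi$ at the common point $\re(z) = -\ln(1+e^{\pi/k})$. Third, by Lemma \ref{lemma:boundary2} this curve plus the two rays is exactly $\partial(g(\tilde W))$; the complement of this boundary in $\sC$ has two components (a bounded-width ``inside'' and the rest), and since $g(\tilde W)$ is open, connected, nonempty, and has this as its boundary, it must be the ``inside'' component — which is exactly the claimed union.

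\textbf{Main obstacle.} The delicate point is pinning down the boundary \emph{values} — the limits $h_{k,1}(\pi^-) = \pi$ and $r_{k,1}(\pi^-) = -\ln(1+e^{\pi/k})$ — and verifying that the $\gA_1$ curve and the $\gA_2$ rays glue together continuously at $\re(z)=-\ln(1+e^{\pi/k})$, so that the union in Lemma \ref{lemma:boundary2} really is a single Jordan-type boundary of a connected region (rather than, say, a boundary with extra pieces cut out). One has to check that near $\theta=\pi$ the curve $g(e^{-\theta/k+\i\theta})$ approaches $-\ln(1+e^{\pi/k})\pm\i\pi$, matching the endpoints of the rays $\gA_2 = \{\ln(\alpha/(1+\alpha))\pm\i\pi:\alpha\in(0,e^{-\pi/k}]\}$ at $\alpha = e^{-\pi/k}$, where indeed $\ln(e^{-\pi/k}/(1+e^{-\pi/k})) = -\ln(1+e^{\pi/k})$. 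Once this gluing and the monotone-graph description are in place, identifying $g(\tilde W)$ with the enclosed region is a routine connectedness argument; I would not expect it to require more than invoking that $g$ is an open map on the domain $\tilde W$ together with the fact that a connected open set whose boundary is the given curve-plus-rays is determined by which side of that boundary any one of its points lies on (e.g. the point $g(1/2) = 0$ lies strictly inside).
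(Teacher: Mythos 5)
Your proposal is correct and follows essentially the same route as the paper: it invokes Lemma \ref{lemma:boundary2} for the boundary, uses the monotonicity of $r_{k,1}$ and $h_{k,1}$ (Lemma \ref{lemma:ab-monotone}) together with the limits $r_{k,1}(\pi^-)=-\ln(1+e^{\pi/k})$, $h_{k,1}(\pi^-)=\pi$ to write the curved boundary as a graph glued continuously to the rays, and then identifies $g(\tilde{W})$ as the enclosed component via openness/connectedness and the interior point $g(1/2)=0$ — exactly the paper's argument (phrased there via the continuous decreasing function $p_k$).
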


\begin{proof}
    
    Note that $r_{k,1}(\pi) = - \ln (1 + e^{\pi/k})$ and $\lim_{\theta \to 0^{+}} r_{k, 1}(\theta) = +\infty$.
    Hence the function $r_{k, 1}$ has an inverse function $r_{k, 1}^{-1}(x)$ defined on $(- \ln (1 + e^{\pi/k}), +\infty)$.
    
    Moreover, we have
    \begin{align*}
        \{ g(\partial W(k) \backslash \{1, -e^{-\pi/k}\}) \} = \{z: \re(z) > - \ln (1 + e^{\pi/k}), |\im(z)| = h_{k, 1}(r_{k, 1}^{-1}(\re(z))) \}.
    \end{align*}
    
    Then following from Lemma \ref{lemma:boundary2}, we know that 
    \begin{align*}
         \partial(g(\tilde{W})) = &\{z: \re(z) > - \ln (1 + e^{\pi/k}), |\im(z)| = h_{k, 1}(r_{k, 1}^{-1}(\re(z))) \} \\
        & \cup \left\{ z: \re(z) \le - \ln (1 + e^{\pi/k}), |\im(z)| = \pi \right\}.
    \end{align*}
    
    By $\lim_{\theta \to \pi} h_{k, 1}(\theta) = \pi$, we know that the function 
    \begin{align*}
        p_k(x) = \begin{cases}
            h_{k, 1}(r_{k, 1}^{-1}(x)), ~~~&\text{if } x > - \ln (1 + e^{\pi/k}), \\
            \pi, ~~~&\text{if } x \le - \ln (1 + e^{\pi/k}),
        \end{cases}
    \end{align*}
    is continuous on $\sR$. And by Lemma \ref{lemma:ab-monotone}, $p_k(x)$ is a decreasing function. 
    Together with $1/2 \in W(k)$, we have $0 \in g(\tilde{W})$ and
    \begin{align*}
        g(\tilde{W}) = \{z: |\im(z)| < p_k(\re(z)) \}.
    \end{align*}
    
\end{proof}

\section{Proof of Lemma \ref{lemma:region2}}

% \begin{lemma}
%     For $\mu \ge -1$ and $k < \frac{\pi}{\ln(\mu + 4)}$, $\tilde{g}(z) \triangleq \ln(1 + \mu z)$ is holomorphic on $\open{W(k)}$. 
% \end{lemma}
% \begin{proof}
%     Denote $W$ to be the region $1 + \mu W(k)$. 
%     For $\mu > 0$ and $k < \frac{\pi}{\ln(\mu + 4)}$, we know that $e^{-\pi/k} < \frac{1}{\mu + 4}$, thus $1 - \mu e^{-\pi/k} > 1 - \frac{\mu}{\mu + 4} > 0$. 
%     Following from $W \cap \sR = [1 - \mu e^{-\pi/k}, 1+ \mu]$, we have $(-\infty, 0] \cap W = \emptyset$.
    
%     Moreover, for $\mu \in [-1, 0)$, we have $W \cap \sR = [1 + \mu, 1 - \mu e^{-\pi/k}]$ and $(-\infty, 0] \cap \open{W} = \emptyset$.
    
%     Therefore, the function $\tilde{g}(z) \triangleq \ln(1 + \mu z)$ is holomorphic on $\open{W(k)}$.
% \end{proof}

\begin{lemma}\label{lemma:boundary}
    Assume that $\mu \neq 0, \mu \ge -1$ and $k < \frac{\pi}{2\ln(\mu + 4)}$. 
    Then the region $\ln(1 + \mu \open{W(k)})$ is open and
    \begin{align*}
        \partial(\ln(1 + \mu \open{W(k)})) = \ln(\partial(1 + \mu W(k)) \backslash \{0\}).
    \end{align*}
\end{lemma}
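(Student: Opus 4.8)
The plan is to reduce the whole statement to the fact that the chosen branch of $\ln$ restricts to a homeomorphism (indeed a biholomorphism) from $\Omega:=\sC\setminus(-\infty,0]$ onto the strip $\gS:=\{z:|\im(z)|<\pi\}$, with inverse $\exp|_{\gS}$. Such a map sends interiors to interiors, closures to closures and boundaries to boundaries, so — once the relevant translated copies of $W(k)$ are known to sit inside $\Omega$ — the only remaining issue is the bookkeeping between ``boundary relative to $\Omega$ (equivalently to $\gS$)'' and ``boundary taken in all of $\sC$''.

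\textbf{Step 1: locating $1+\mu W(k)$.} First I would collect the elementary geometry of $W(k)=\{w:|w|\le e^{-|\arg(w)|/k}\}$: it is compact, it equals the closure of its interior (so $\partial(\open{W(k)})=\partial W(k)$), $W(k)\cap\sR=[-e^{-\pi/k},1]$, and $\max_{W(k)}|w|=1$, attained only at $w=1$. The one quantitative input is the leftmost extent: maximizing $|w|$ along each ray reduces $\inf_{W(k)}\re(w)$ to $\min_{\theta\in[\pi/2,\pi]}e^{-\theta/k}\cos\theta$, and a short estimate (using $\arctan(1/k)+\arctan(k)=\pi/2$) shows this quantity is $>-e^{-\pi/(2k)}$. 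For $\mu>0$ the hypothesis $k<\tfrac{\pi}{2\ln(\mu+4)}$ forces $\mu+4<e^{\pi/(2k)}$, hence $\mu e^{-\pi/(2k)}<1$, so $\re(1+\mu w)=1+\mu\re(w)>1-\mu e^{-\pi/(2k)}>0$ for every $w\in W(k)$. For $\mu\in[-1,0)$ one has directly $\re(1+\mu w)=1-|\mu|\re(w)\ge 1-|\mu|\ge 0$, with equality only when $w=1$ and $|\mu|=1$, i.e.\ at the single point $1+\mu w=0$. Consequently $1+\mu\open{W(k)}\subseteq\{z:\re(z)>0\}\subseteq\Omega$; the set $(1+\mu W(k))\cap(-\infty,0]$ is $\{0\}$ when $\mu=-1$ and empty otherwise (since $\im(1+\mu w)=0$ forces $w\in W(k)\cap\sR$); and $\ln(1+\mu\open{W(k)})\subseteq\{z:|\im(z)|<\pi/2\}$.

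\textbf{Step 2: openness and the boundary identity.} Since $1+\mu\open{W(k)}$ is open in $\sC$ and contained in $\Omega$, and $\ln|_{\Omega}$ is an open map, $\ln(1+\mu\open{W(k)})$ is open, which is the first assertion. Set $B:=1+\mu\open{W(k)}$. Affine maps are homeomorphisms of $\sC$ and $W(k)$ is the closure of its interior, so $\overline{B}=1+\mu W(k)$ and $1+\mu\partial W(k)=\partial(1+\mu W(k))$. As $B$ is open, its interior relative to $\Omega$ is $B$, while its closure relative to $\Omega$ is $\overline{B}\cap\Omega=(1+\mu W(k))\setminus\{0\}$ by Step 1; hence $\partial_{\Omega}B=\partial(1+\mu W(k))\setminus\{0\}$ (using that $0\notin B$). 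Since $\ln|_{\Omega}$ is a homeomorphism onto $\gS$, this gives $\partial_{\gS}(\ln B)=\ln(\partial_{\Omega}B)=\ln\big(\partial(1+\mu W(k))\setminus\{0\}\big)$. Finally, by Step 1 the closure of $\ln B$ in $\sC$ lies in $\{z:|\im(z)|\le\pi/2\}\subseteq\gS$; for a set whose $\sC$-closure is contained in the open set $\gS$, the boundary taken in $\sC$ equals the boundary taken relative to $\gS$. Combining the last two sentences, $\partial\big(\ln(1+\mu\open{W(k)})\big)=\ln\big(\partial(1+\mu W(k))\setminus\{0\}\big)$, which is the asserted identity.

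\textbf{Where the difficulty is.} The only step with genuine content is Step 1 — estimating $\inf_{W(k)}\re(w)$ and invoking the hypothesis on $k$ so that $1+\mu\open{W(k)}$ stays strictly in the right half-plane, hence off the branch cut; everything in Step 2 is point-set topology that becomes routine once $\ln$ is identified as a homeomorphism $\Omega\to\gS$. Two minor traps must be watched: the degenerate case $\mu=-1$, in which $0$ genuinely lies on $\partial(1-W(k))$ and must be excised before $\ln$ is applied (this is precisely the ``$\setminus\{0\}$'' in the statement); and the distinction between boundary in $\sC$, in $\Omega$ and in $\gS$, which is neutralized by the remark that $\ln(1+\mu\open{W(k)})$ sits well inside the strip and so its $\sC$-boundary never meets the lines $\im(z)=\pm\pi$.
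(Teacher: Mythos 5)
Your proposal is correct, and it reaches the conclusion by a genuinely different route than the paper. The paper argues by two set inclusions via sequences: it first checks that $1+\mu W(k)$ meets $(-\infty,0]$ in at most $\{0\}$ (by computing the real slice), so the chosen branch of $\ln$ is holomorphic on $\open{(1+\mu W(k))}$ and on $(1+\mu W(k))\setminus\{0\}$; openness of the image then comes from the open mapping theorem, the inclusion $\ln\bigl(\partial(1+\mu W(k))\setminus\{0\}\bigr)\subseteq\partial\bigl(\ln(1+\mu\open{W(k)})\bigr)$ from continuity plus injectivity, and the reverse inclusion from compactness of $1+\mu W(k)$ (extracting a convergent subsequence of preimages) together with the bound $|\ln(w)|\ge|\ln|w||$, which rules out the subsequential limit being $0$. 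You instead localize quantitatively: the estimate $\inf_{W(k)}\re(w)>-e^{-\pi/(2k)}$ combined with the hypothesis on $k$ puts $1+\mu\open{W(k)}$ strictly in the right half-plane, so its $\ln$-image sits in $\{z:|\im(z)|<\pi/2\}$; after that, everything is point-set topology for the homeomorphism $\ln\colon\sC\setminus(-\infty,0]\to\{z:|\im(z)|<\pi\}$, with the excision of $0$ handled by intersecting the closure with the domain of $\ln$ rather than by the modulus blow-up argument. What your approach buys is a cleaner, sequence-free bookkeeping (relative versus absolute boundaries) and a uniform strip bound $|\im|<\pi/2$ that makes the reconciliation of boundaries immediate; its cost is the extra half-plane estimate and the explicit use of the regularity fact $W(k)=\overline{\open{W(k)}}$ (needed for $\overline{1+\mu\open{W(k)}}=1+\mu W(k)$) — a fact which, it should be said, the paper's sequence argument also uses implicitly when it assumes every point of $\partial(1+\mu W(k))$ is a limit of interior points, so you are not assuming more than the paper does, only stating it. Both proofs correctly isolate the two delicate points: keeping the region off the branch cut via the hypothesis on $k$, and removing $0$ in the case $\mu=-1$.
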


\begin{proof}
    Denote $W = 1 + \mu W(k)$.

    For $\mu > 0$, by $k < \frac{\pi}{\ln(\mu + 4)}$, we know that $e^{-\pi/k} < \frac{1}{\mu + 4}$, thus $1 - \mu e^{-\pi/k} > 1 - \frac{\mu}{\mu + 4} > 0$. 
    Following from $W \cap \sR = [1 - \mu e^{-\pi/k}, 1+ \mu]$, we have $(-\infty, 0] \cap W = \emptyset$.
    
    And, for $\mu \in [-1, 0)$, we have $W \cap \sR = [1 + \mu, 1 - \mu e^{-\pi/k}]$ and $(-\infty, 0] \cap \open{W} = \emptyset$.
    
    Therefore, the function $\ln(w)$ is holomorphic on $\open{W}$. By open mapping theorem, we know that $\ln(\open{W})$ is an open set. 
    % in other words, for $w \in \open{W}$, $F(w)$ is an interior point in $F(\open{W})$.
    
    Additionally, we point out that $\ln(w)$ is holomorphic on $W \backslash \{0\}$. 
    
    For $w_0 \in \partial W \backslash \{0\}$ and any sequence $\{w_k \in \open{W}\}_{k \ge 1}$ such that $\lim_{k \to \infty} w_k = w_0$, by continuity of $\ln(w)$ at $w_0$, there holds $\lim_{k \to \infty} \ln(w_k) = \ln(w_0)$. By bijection of $\ln(w)$, we know that $\ln(w_0) \in \partial(\ln(\open{W}))$, which implies that
    \begin{align*}
        \ln(\partial W \backslash \{0\}) \subseteq \partial(\ln(\open{W})).
    \end{align*}
    
    On the other hand, for $z_0 \in \partial(\ln(\open{W}))$, there exists a sequence $\{w_k \in \open{W}\}_{k \ge 1}$ such that $\lim_{k \to \infty} \ln(w_k) = z_0$. Recalling that $W$ is compact, there exists $\{w_{k_m}\}_{m \ge 1}$ and $w_0 \in W$ such that $\lim_{m \to \infty} w_{k_m} = w_0$. 
    
    Following from bjiection of $\ln(w)$, we know that $w_0 \in \partial (W)$.
    
    Note that 
    \begin{align*}
        |\ln(w)| \ge |\re{\ln(w)}| = |\ln(|w|)|.
    \end{align*}
    Hence, if $w_0 = 0$, then the sequence $\left\{\left| \ln(w_{k_m}) \right|\right\}$ is unbounded, which is impossible because of $\lim_{m \to \infty} |\ln(w_{k_m})| = |z_0|$. Then by continuity of $\ln(w)$, we know that $\ln(w_0) = z_0$, that is 
    \begin{align*}
        \partial(\ln(\open{W})) \subseteq \ln(\partial W \backslash \{0\}).
    \end{align*}
\end{proof}

Assume that $\mu \neq 0, \mu \ge -1$ and $k < \frac{\pi}{2\ln(\mu + 4)}$. 
By Lemma \ref{lemma:boundary}, we have
\[\partial(\ln(1 + \mu \open{W(k)})) = \ln((1 + \mu \partial(W(k))) \backslash \{0\}).\]

Note that $\partial (W(k)) = \{z = e^{-\theta/k + i \theta}: \theta \in [0, \pi]\} \cup \{z = e^{-\theta/k - i \theta}: \theta \in [0, \pi]\}$. 

Only when $\mu = -1$, there holds $0 \in 1 + \mu \partial(W(k))$, and we need to take $\theta \in (0, \pi]$. 

Recall the definition of $h_{k, 2}$ and $r_{k, 2}$:
\begin{align*}
    r_{k, 2}(\theta) &\triangleq \frac{1}{2} \ln \left( 1 + \mu^2 e^{-2\theta / k} + 2\mu e^{-\theta / k} \cos(\theta) \right), \\
    h_{k, 2}(\theta) &\triangleq \arctan \left( \frac{\mu e^{-\theta/k} \sin(\theta)}{1 + \mu e^{-\theta/k} \cos(\theta)} \right).
\end{align*}    
For simplicity, we write $r, h$ for $r_{k, 2}, h_{k, 2}$ in this section.

\begin{lemma}\label{lemma:r:monotone}
    For $k < \frac{\pi}{2\ln(\mu + 4)}$, the equation $k e^{\theta/k} \sin (\theta)+e^{\theta/k} \cos (\theta)+\mu = 0$ has an unique root $\theta_0$ in $(0, \pi]$.
    
    Moreover, for $\mu > 0$, $r(\theta)$ is strictly decreasing on the interval $(0, \theta_0)$ and strictly increasing on the interval $(\theta_0, \pi)$. And for $\mu \in [-1, 0)$, $r(\theta)$ is strictly increasing on the interval $(0, \theta_0)$ and strictly decreasing on the interval $(\theta_0, \pi)$.
\end{lemma}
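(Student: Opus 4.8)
The plan is to funnel the whole statement through the single auxiliary function
\[
 \Phi(\theta) \triangleq \mu + e^{\theta/k}\bigl(\cos\theta + k\sin\theta\bigr),
\]
which is exactly the left-hand side of the equation in the statement and which has the same zeros in $(0,\pi]$ as $\mu e^{-\theta/k}+\cos\theta+k\sin\theta = e^{-\theta/k}\Phi(\theta)$. The first step is to differentiate: a one-line computation yields the clean identity
\[
 \Phi'(\theta) = \Bigl(k+\tfrac1k\Bigr)e^{\theta/k}\cos\theta ,
\]
so $\Phi$ is strictly increasing on $(0,\pi/2)$ and strictly decreasing on $(\pi/2,\pi)$.

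Second, I would evaluate $\Phi$ at the endpoints $0,\ \pi/2,\ \pi$. We have $\Phi(0)=\mu+1\ge 0$; $\Phi(\pi/2)=\mu+k e^{\pi/(2k)}\ge \mu+\tfrac{\pi e}{2}>0$, using that $k\mapsto k e^{\pi/(2k)}$ attains its minimum $\tfrac{\pi e}{2}$ on $(0,\infty)$ at $k=\pi/2$; and $\Phi(\pi)=\mu-e^{\pi/k}<0$, since $k<\tfrac{\pi}{2\ln(\mu+4)}$ forces $e^{\pi/k}>(\mu+4)^2>\mu$ (the inequality $(\mu+4)^2>\mu$ holding for all $\mu\ge -1$). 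Combined with Step~1, $\Phi>\Phi(0)\ge 0$ on $(0,\pi/2]$, while on $[\pi/2,\pi]$ the function $\Phi$ decreases strictly from a positive to a negative value; hence $\Phi$ has a unique zero $\theta_0$, lying in $(\pi/2,\pi)\subset(0,\pi]$, with $\Phi>0$ on $(0,\theta_0)$ and $\Phi<0$ on $(\theta_0,\pi]$. This establishes the first assertion of the lemma; the case $\mu=-1$, where $\Phi(0)=0$, is covered automatically because only the half-open interval $(0,\pi]$ is in play.

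Third, I would differentiate $r=r_{k,2}$. Writing $A(\theta)\triangleq 1+\mu^2 e^{-2\theta/k}+2\mu e^{-\theta/k}\cos\theta = \lvert 1+\mu e^{-\theta/k+\i\theta}\rvert^2$, a direct calculation gives $A'(\theta)=-\tfrac{2\mu}{k}e^{-2\theta/k}\Phi(\theta)$ and hence
\[
 r'(\theta) = \frac{A'(\theta)}{2A(\theta)} = \frac{-\mu\, e^{-2\theta/k}}{k\,A(\theta)}\,\Phi(\theta).
\]
One checks that $A(\theta)>0$ on $(0,\pi)$: $A=0$ would require $\mu e^{-\theta/k+\i\theta}=-1$, i.e. $\theta=\pi$ and $\mu e^{-\pi/k}=1$, impossible by the bound on $k$; and the only other degenerate point, $\theta=0$ with $\mu=-1$, is excluded. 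Therefore $\operatorname{sign}r'(\theta)=-\operatorname{sign}(\mu)\,\operatorname{sign}\Phi(\theta)$ on $(0,\pi)$, and inserting the sign pattern of $\Phi$ from Step~2 gives precisely the monotonicity claims: for $\mu>0$, $r'<0$ on $(0,\theta_0)$ and $r'>0$ on $(\theta_0,\pi)$; for $\mu\in[-1,0)$, the signs are reversed.

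I do not anticipate a genuine obstacle: the argument is driven entirely by the fortunate factorization $\Phi'(\theta)=(k+\tfrac1k)e^{\theta/k}\cos\theta$, after which everything is sign-bookkeeping. The only points requiring a little care are the edge cases in verifying $A>0$ (namely $\mu=-1$ at $\theta=0$, and $\theta=\pi$ for large $\mu$), both dispatched by the hypothesis $k<\tfrac{\pi}{2\ln(\mu+4)}$, and the harmless implicit assumption $\mu\neq 0$ (for $\mu=0$ one has $r\equiv 0$ and no strict monotonicity holds), which is inherited from the standing assumptions of this subsection.
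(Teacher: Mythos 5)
Your proposal is correct and follows essentially the same route as the paper: your $\Phi$ is exactly the paper's $\psi(\theta)=k e^{\theta/k}\sin\theta+e^{\theta/k}\cos\theta+\mu$, with the same derivative identity $\psi'(\theta)=\frac{(1+k^2)}{k}e^{\theta/k}\cos\theta$, the same endpoint evaluations to locate the unique zero $\theta_0\in(\pi/2,\pi)$, and the same factorization of $r'$ through $\psi$ to read off the sign pattern. Your explicit check that the modulus $A(\theta)>0$ on $(0,\pi)$ and the spelled-out direction of monotonicity are minor additions the paper leaves implicit, not a different argument.
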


\begin{proof}
    Observe that
    \begin{align*}
        r'(\theta) = -\frac{\mu  \left(k e^{\theta/k} \sin (\theta)+e^{\theta/k} \cos (\theta)+\mu\right)}{k \left(2 \mu  e^{\theta/k} \cos (\theta)+e^{2 \theta/k}+\mu^2 \right)}.
    \end{align*}
    
    Let $\psi(\theta) = k e^{\theta/k} \sin (\theta)+e^{\theta/k} \cos (\theta)+\mu$.
    Note that
    \[\psi'(\theta) = \frac{e^{\theta / k} \cos(\theta) (1 + k^2)}{k}.\]
    Thus $\psi$ is increasing over the interval $[0, \pi/2]$ and decreasing over the interval $[\pi/2, \pi]$. 
    Together with $\psi(0) = 1 + \mu \ge 0$, 
    \[\psi(\pi) = - e^{\pi/k} + \mu < -(\mu + 4) + \mu < 0,\] we know that $\psi(\theta)$ has an unique zero $\theta_0$ in the interval $(\pi/2, \pi)$. 
    Then we have $\psi(\theta) > 0$ for $\theta \in (0, \theta_0)$, and $\psi(\theta) < 0$ for $\theta \in (\theta_0, \pi)$. 
    % By Lemma \ref{lemma:positive} (ii), there holds $2 \mu  e^{\theta/k} \cos (\theta)+e^{2 \theta/k}+\mu^2 > 0$ for all $\theta \in (0, \pi]$.
    Consequently, $r(\theta)$ is strictly monotone over $(0, \theta_0)$ and $(\theta_0, \pi)$ respectively.
    % which implies that there exists $F_1, F_2$ such that $F_1(x(\theta)) = \theta$ for $\theta \in (0, \theta_0]$ and $F_2(x(\theta)) = \theta$ for $\theta \in [\theta_0, \pi]$.
\end{proof}

By monotonicity of $r$, there exists $\tilde{r}_1(x)$ defined on $[r(\theta_0), r(0)]$ such that $\tilde{r}_1(r(\theta)) = \theta$ for $\theta \in [0, \theta_0]$. And there also exists $\tilde{r}_2(x)$ defined on $[r(\theta_0), r(\pi)]$ such that $\tilde{r}_2(r(\theta)) = \theta$ for $\theta \in [\theta_0, \pi]$. 

\begin{lemma}
    We have
    \begin{align}\label{eq:d2}
        \frac{d^2 h(\tilde{r}_i(x))}{d x^2} = -\frac{\left(k^2+1\right) e^{\theta/k} \left(\mu ^2+2 \mu  e^{\theta/k} \cos (\theta)+e^{\frac{2 \theta}{k}}\right) \left(\mu  k \cos (\theta)+k e^{\theta/k}+\mu  \sin (\theta)\right)}{\mu  \left(k e^{\theta/k} \sin (\theta)+e^{\theta/k} \cos (\theta)+\mu \right)^3},
    \end{align}
    where $\theta = \tilde{r}_i(x)$ for $i = 1, 2$.
\end{lemma}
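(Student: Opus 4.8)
The plan is to reduce the statement to a single complex-variable identity and then expand.

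First I would carry out the change of variables. Write $\theta=\tilde r_i(x)$, i.e.\ $x=r(\theta)$, with $\theta$ ranging over one of the open monotonicity intervals $(0,\theta_0)$ or $(\theta_0,\pi)$ of $r$ supplied by Lemma \ref{lemma:r:monotone}; there $r'(\theta)\neq 0$, so $\tilde r_i$ is smooth with $\tilde r_i'(x)=1/r'(\theta)$. Differentiating $h(\tilde r_i(x))$ twice by the chain and quotient rules yields
\begin{align*}
    \frac{d^2 h(\tilde r_i(x))}{dx^2}=\frac{h''(\theta)\,r'(\theta)-h'(\theta)\,r''(\theta)}{\big(r'(\theta)\big)^3},
\end{align*}
so it only remains to compute these four derivatives and to simplify the combination in the numerator.

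Next I would exploit that $r$ and $h$ are the real and imaginary parts of one holomorphic function. Set $w(\theta)=1+\mu e^{(-1/k+\i)\theta}$; by the definitions of $r,h$ and the chosen branch of $\ln$ one has $r(\theta)+\i\,h(\theta)=\ln w(\theta)$, the branch being legitimate because $w(\theta)$ avoids $(-\infty,0]$ on the relevant interval — indeed, as in the proof of Lemma \ref{lemma:boundary}, $\im(w(\theta))=\mu e^{-\theta/k}\sin\theta\neq 0$ and $w(\theta)\neq 0$ for $\theta\in(0,\pi)$. Since $w'(\theta)=(-1/k+\i)\big(w(\theta)-1\big)$ and hence $w''(\theta)=(-1/k+\i)^2\big(w(\theta)-1\big)$, I obtain
\begin{align*}
    r'(\theta)+\i\,h'(\theta)=\frac{w'}{w}=(-1/k+\i)\frac{w-1}{w},\qquad
    r''(\theta)+\i\,h''(\theta)=\frac{w''w-(w')^2}{w^2}=(-1/k+\i)^2\frac{w-1}{w^2}.
\end{align*}
Then, using the elementary identity $h''r'-h'r''=\im\!\big[(r''+\i h'')\,\overline{(r'+\i h')}\big]$, the relation $(-1/k+\i)^2\,\overline{(-1/k+\i)}=(1+1/k^2)(-1/k+\i)$, and $w^2\bar w=w|w|^2$, this combination collapses to
\begin{align*}
    h''(\theta)\,r'(\theta)-h'(\theta)\,r''(\theta)=\Big(1+\tfrac1{k^2}\Big)\frac{|w-1|^2}{|w|^4}\Big(\re(w)+\tfrac1k\im(w)\Big).
\end{align*}

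Finally I would substitute the explicit data $\re(w)=1+\mu e^{-\theta/k}\cos\theta$, $\im(w)=\mu e^{-\theta/k}\sin\theta$, $|w-1|^2=\mu^2 e^{-2\theta/k}$, $|w|^2=e^{-2\theta/k}\big(e^{2\theta/k}+\mu^2+2\mu e^{\theta/k}\cos\theta\big)$, and the formula $r'(\theta)=-\dfrac{\mu(ke^{\theta/k}\sin\theta+e^{\theta/k}\cos\theta+\mu)}{k(e^{2\theta/k}+\mu^2+2\mu e^{\theta/k}\cos\theta)}$ already derived in the proof of Lemma \ref{lemma:r:monotone}, and clear the powers of $e^{\theta/k}$. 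Writing $D=e^{2\theta/k}+\mu^2+2\mu e^{\theta/k}\cos\theta$, $N=ke^{\theta/k}\sin\theta+e^{\theta/k}\cos\theta+\mu$, $M=\mu k\cos\theta+ke^{\theta/k}+\mu\sin\theta$, one finds $h''r'-h'r''=\dfrac{(1+k^2)\mu^2 e^{\theta/k}M}{k^3 D^2}$ and $(r')^3=-\dfrac{\mu^3 N^3}{k^3 D^3}$, so the quotient $\dfrac{h''r'-h'r''}{(r')^3}$ equals $-\dfrac{(1+k^2)e^{\theta/k}DM}{\mu N^3}$, which is exactly (\ref{eq:d2}). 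The only genuinely delicate point is the bookkeeping in this last expansion — keeping track of the $e^{\pm\theta/k}$ factors so that the answer matches (\ref{eq:d2}) verbatim; the holomorphic reformulation is what prevents the four separate derivatives from blowing up. Alternatively one could differentiate $r$ and $h$ directly and combine, but that is considerably more laborious.
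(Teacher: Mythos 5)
Your proposal is correct, and it actually does more work than the paper does. The paper's proof of this lemma is only the chain-rule bookkeeping: it writes $\frac{d}{dx}h(\tilde r_i(x))=h'(\theta)/r'(\theta)$ and then $\frac{d^2}{dx^2}h(\tilde r_i(x))=\bigl(h'/r'\bigr)'\cdot\frac{1}{r'(\theta)}$, leaving the evaluation of this expression (equivalently of $\frac{h''r'-h'r''}{(r')^3}$) as an implicit direct computation; the closed form (\ref{eq:d2}) is simply asserted. You perform the same reduction but then genuinely verify the algebra by viewing $r+\i h=\ln w$ with $w(\theta)=1+\mu e^{(-1/k+\i)\theta}$, using $w'=(-1/k+\i)(w-1)$, the identity $h''r'-h'r''=\im\bigl[(r''+\i h'')\,\overline{(r'+\i h')}\bigr]$, and $(-1/k+\i)^2\overline{(-1/k+\i)}=(1+1/k^2)(-1/k+\i)$ to collapse the numerator to $\bigl(1+\tfrac1{k^2}\bigr)\frac{|w-1|^2}{|w|^4}\bigl(\re w+\tfrac1k\im w\bigr)$; substituting $\re w=1+\mu e^{-\theta/k}\cos\theta$, $\im w=\mu e^{-\theta/k}\sin\theta$, $|w-1|^2=\mu^2e^{-2\theta/k}$ and the expression for $r'$ from Lemma \ref{lemma:r:monotone} indeed reproduces (\ref{eq:d2}) exactly (I checked the powers of $e^{\theta/k}$: numerator $\frac{(1+k^2)\mu^2e^{\theta/k}M}{k^3D^2}$, $(r')^3=-\frac{\mu^3N^3}{k^3D^3}$, quotient $-\frac{(1+k^2)e^{\theta/k}DM}{\mu N^3}$). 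Your restriction to the open monotonicity intervals where $r'\neq0$, and the remark that $w$ avoids $(-\infty,0]$ for $\theta\in(0,\pi)$, $\mu\neq0$, so the logarithmic derivative is legitimate, are the right side conditions. What your route buys is an actual verifiable derivation of the displayed formula, organized so the four trigonometric derivatives never have to be expanded separately; what the paper's terser route buys is brevity, at the cost of leaving the decisive simplification unchecked on the page.
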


\begin{proof}
    Note that
    \begin{align*}
        \frac{d h(\tilde{r}_i(x))}{d x} = h'(\theta) \frac{d \tilde{h}_i(x)}{ d x} = \frac{h'(\theta)}{r'(\theta)}.
    \end{align*}
    
    Thus
    \begin{align*}
        \frac{d^2 h(\tilde{r}_i(x))}{d x^2} = \frac{d \left(\frac{h'(\tilde{r}_i(x))}{r'(\tilde{r}_i(x))}\right)}{d x} =  \left(\frac{h'(\theta)}{r'(\theta)}\right)' \frac{d \tilde{r}_i(x)}{ d x} = \left(\frac{h'(\theta)}{r'(\theta)}\right)' \frac{1}{r'(\theta)}.
    \end{align*}
\end{proof}

\begin{lemma}
    For $\mu > 0$, $h(\theta)$ is increasing on $(\theta_0, \pi)$. And for $\mu \in [-1, 0)$, $h(\theta)$ is decreasing on $(\theta_0, \pi)$.
\end{lemma}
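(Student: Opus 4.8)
The plan is to differentiate $h = h_{k,2}$ directly, reduce the sign of $h'$ on $(\theta_0,\pi)$ to the sign of a single elementary function, and then locate the sign change of that function relative to $\theta_0$.

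First I would note that on $(\theta_0,\pi)$ we may write $h(\theta) = \im\bigl(\ln(1 + \mu e^{(-1/k+\i)\theta})\bigr)$ with the principal branch: since $\theta_0 > \pi/2$ we have $\cos\theta < 0$ there, and the standing hypothesis $k < \frac{\pi}{2\ln(\mu+4)}$ (equivalently $\mu e^{-\pi/(2k)} < 1$) keeps $1 + \mu e^{-\theta/k}\cos\theta$ positive when $\mu>0$ and makes it exceed $1$ when $\mu\in[-1,0)$; in particular $1 + \mu e^{(-1/k+\i)\theta}\neq 0$, so $h$ is differentiable on $(\theta_0,\pi)$ (this also disposes of the boundary case $\mu=-1$). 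Setting $z(\theta) = 1 + \mu e^{(-1/k+\i)\theta}$, so that $z'(\theta) = (-1/k+\i)(z(\theta)-1)$ and $h'(\theta) = \im\bigl(z'(\theta)/z(\theta)\bigr)$, multiplying numerator and denominator by $\bar z(\theta)$ gives
\[
    h'(\theta) = \frac{\mu e^{-\theta/k}\bigl(\cos\theta - \tfrac1k\sin\theta + \mu e^{-\theta/k}\bigr)}{1 + \mu^2 e^{-2\theta/k} + 2\mu e^{-\theta/k}\cos\theta},
\]
with denominator $|z(\theta)|^2 > 0$. (The same expression drops out of differentiating the $\arctan$ formula for $h_{k,2}$ and collapsing the numerator via $\sin^2\theta+\cos^2\theta=1$.) Clearing the positive factor $e^{\theta/k}$, the sign of $h'(\theta)$ is $\sign(\mu)\cdot\sign\chi(\theta)$, where $\chi(\theta) := e^{\theta/k}\bigl(\cos\theta - \tfrac1k\sin\theta\bigr) + \mu$.

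Next I would analyze $\chi$ on $(0,\pi)$: one computes $\chi'(\theta) = -\tfrac{1+k^2}{k^2}e^{\theta/k}\sin\theta < 0$, so $\chi$ is strictly decreasing; since $\chi(0) = 1+\mu \ge 0$ and $\chi(\pi) = \mu - e^{\pi/k} < 0$ (using $e^{\pi/k} > (\mu+4)^2 > \mu$), the function $\chi$ has a unique zero $\theta_1\in(0,\pi)$, positive before it and negative after. The crucial comparison is $\theta_1 < \theta_0$: recalling from Lemma~\ref{lemma:r:monotone} that $\theta_0$ is the unique zero of $\psi(\theta) = e^{\theta/k}(\cos\theta+k\sin\theta)+\mu$ in $(\pi/2,\pi)$, with $\psi>0$ on $(0,\theta_0)$ and $\psi<0$ on $(\theta_0,\pi)$, I would use the identity $\psi(\theta) - \chi(\theta) = \bigl(k+\tfrac1k\bigr)e^{\theta/k}\sin\theta > 0$ on $(0,\pi)$; hence $\psi(\theta_1) > \chi(\theta_1) = 0$, which forces $\theta_1 < \theta_0$. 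Consequently $\chi < 0$ on all of $(\theta_0,\pi)$, so $h'(\theta)$ has the constant sign $-\sign(\mu)$ there, which pins down the monotonicity of $h$ on $(\theta_0,\pi)$ in each of the two cases.

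I expect the sign bookkeeping to be the only real obstacle: one must carry the $\arg$ (or $\arctan$) differentiation through without losing a sign, verify that no branch issue arises on $(\theta_0,\pi)$ — which is exactly where $\theta_0 > \pi/2$ together with the bound on $k$ earn their keep — and, above all, establish $\theta_1 < \theta_0$ cleanly. The identity $\psi - \chi = (k+\tfrac1k)e^{\theta/k}\sin\theta$ is the one fact that makes that last comparison immediate, so most of the care goes into presenting $\chi$ and $\psi$ in a form that exposes this cancellation.
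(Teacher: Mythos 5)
Your proposal is correct and is essentially the paper's own argument: you arrive at the same formula for $h'$ (the paper's expression is yours multiplied through by $k e^{2\theta/k}$), reduce the question to the sign of the bracket $\chi(\theta)=e^{\theta/k}\bigl(\cos\theta-\tfrac{1}{k}\sin\theta\bigr)+\mu$, and deduce $\chi<0$ on $(\theta_0,\pi)$ from the sign of $\psi$ established in Lemma \ref{lemma:r:monotone}. Your detour through the zero $\theta_1$ of $\chi$ is only a repackaging of the paper's one-line estimate: on $(\theta_0,\pi)$ one has $\sin\theta>0$ and $\psi<0$, so your identity gives $\chi=\psi-\bigl(k+\tfrac{1}{k}\bigr)e^{\theta/k}\sin\theta<0$ directly, with no need to locate $\theta_1$ at all. (Also note that for $\mu=-1$ your claim of a unique zero of $\chi$ in $(0,\pi)$ fails, since $\chi(0)=1+\mu=0$ and $\chi$ is strictly decreasing, so $\chi<0$ on all of $(0,\pi)$; the conclusion you need is unaffected, but the wording should be adjusted.)

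One point you should not leave implicit: since $\operatorname{sign} h'=\operatorname{sign}(\mu)\,\operatorname{sign}(\chi)=-\operatorname{sign}(\mu)$ on $(\theta_0,\pi)$, what your computation (and the paper's own computation, which establishes the same negativity of the bracket) actually proves is that $h$ is \emph{decreasing} there for $\mu>0$ and \emph{increasing} for $\mu\in[-1,0)$ --- the reverse of the lemma's wording, whose two labels appear to be swapped. A sanity check confirms your sign: for $\mu>0$ one has $h>0$ on $(0,\pi)$ by Lemma \ref{lemma:positive} and $h(\pi)=0$, so $h$ cannot be increasing up to $\pi$. Your closing sentence, ``which pins down the monotonicity \dots in each of the two cases,'' glosses over this mismatch; state the directions explicitly and flag that they disagree with the statement as printed.
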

\begin{proof}
    Note that
    \begin{align*}
        h'(\theta) = \frac{\mu  \left(-\sin (\theta ) e^{\theta /k}+k \cos (\theta ) e^{\theta /k}+k \mu \right)}{k \left(2 \mu  \cos (\theta ) e^{\theta /k}+e^{\frac{2 \theta }{k}}+\mu ^2\right)}.
    \end{align*}
    For $\theta \in (\theta_0, \pi)$, according to Lemma \ref{lemma:r:monotone}, we know that $\mu e^{-\theta/k} + k \sin(\theta) + \cos(\theta) < 0$. Hence there holds
    \begin{align*}
        k\mu e^{-\theta/k} < -k \cos(\theta) -k^2 \sin(\theta) < -k \cos(\theta) + \sin(\theta),
    \end{align*}
    which implies that $-\sin (\theta ) e^{\theta /k}+k \cos (\theta ) e^{\theta /k}+k \mu  < 0$. 
\end{proof}

\begin{lemma}\label{lemma:>0}
    For $\mu > 0$, with $\theta_0$ defined in Lemma \ref{lemma:r:monotone} and $\tilde{r}_1$ such that $\tilde{r}_1(r(\theta)) = \theta$ for $\theta \in [0, \theta_0]$, there holds
    \begin{align*}
        \ln(\open{(1 + \mu W(k))}) \subseteq \{z: \re(z) \in [r(\theta_0), r(0)], |\im(z)| < h(\tilde{r}_1(\re(z))) \}
    \end{align*}
    where the last set is convex.
\end{lemma}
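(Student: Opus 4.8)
First, I would fix the ambient geometry and identify the set with a Jordan interior. For $\mu>0$ and $k<\frac{\pi}{2\ln(\mu+4)}$ every $w\in W(k)$ satisfies $\re(1+\mu w)>0$: this is clear when $\cos(\arg w)\ge 0$, and when $\cos(\arg w)<0$ one has $|\arg w|>\pi/2$, hence $|w|\le e^{-\pi/(2k)}$ and $\mu|w|\le\mu e^{-\ln(\mu+4)}<1$. So $1+\mu W(k)$ is a compact subset of the right half–plane, $\ln$ is holomorphic and injective on a neighbourhood of it, and since $W(k)$ is star–shaped about $0$ (so $W(k)=\overline{\open{W(k)}}$ and $\partial W(k)=\{e^{-\theta/k+\i\theta}:\theta\in[0,\pi]\}\cup\{e^{-\theta/k-\i\theta}:\theta\in[0,\pi]\}$), Lemma~\ref{lemma:boundary} gives that $\ln(\open{(1+\mu W(k))})$ is open with boundary $\Gamma\triangleq\{\,r(\theta)+\i h(\theta):\theta\in[0,\pi]\,\}\cup\{\,r(\theta)-\i h(\theta):\theta\in[0,\pi]\,\}$ (I continue to write $r,h$ for $r_{k,2},h_{k,2}$). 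As $\theta\mapsto e^{-\theta/k+\i\theta}$ is injective on $[0,\pi]$ and $h>0$ on $(0,\pi)$ with $h(0)=h(\pi)=0$, $\Gamma$ is a simple closed curve, so $\mathbb{C}\setminus\Gamma$ has a bounded component $\Omega$ and an unbounded one $V_\infty$; being open, connected, bounded, disjoint from its boundary and having boundary exactly $\Gamma$, the set $\ln(\open{(1+\mu W(k))})$ is clopen in $\Omega$ and hence equals $\Omega$.

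Next I would dispose of convexity. Write $D=\{z:\re(z)\in[r(\theta_0),r(0)],\ |\im(z)|<h(\tilde r_1(\re(z)))\}$. In the $i=1$ case of Equation~(\ref{eq:d2}) we have $\theta=\tilde r_1(x)\in[0,\theta_0]$, on which $\psi(\theta)\triangleq ke^{\theta/k}\sin\theta+e^{\theta/k}\cos\theta+\mu>0$ (proof of Lemma~\ref{lemma:r:monotone}), $\mu^2+2\mu e^{\theta/k}\cos\theta+e^{2\theta/k}=|e^{\theta/k}+\mu e^{\i\theta}|^2>0$, and $\mu k\cos\theta+ke^{\theta/k}+\mu\sin\theta>0$ (termwise for $\theta\le\pi/2$; for $\theta>\pi/2$ because $ke^{\theta/k}\ge ke^{\pi/(2k)}>\mu k$ by the hypothesis on $k$). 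Hence the right–hand side of Equation~(\ref{eq:d2}) is negative, so $x\mapsto h(\tilde r_1(x))$ is concave; being also nonnegative, it exhibits $D$ as the intersection of $\{y<h(\tilde r_1(x))\}$, $\{-y<h(\tilde r_1(x))\}$ and a vertical strip, hence convex. In particular $\overline D$ is a bounded convex set, so $\mathbb{C}\setminus\overline D$ is connected and unbounded.

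It then remains to prove $\Omega\subseteq D$, which I would obtain from $\Gamma\subseteq\overline D$. By conjugation symmetry it suffices to treat the upper arc $\theta\mapsto(r(\theta),h(\theta))$, whose real part runs over $[r(\theta_0),r(0)]$ because $r$ attains its minimum at $\theta_0$, $r(0)=\ln(1+\mu)$ is its maximum, and $r(\pi)=\ln(1-\mu e^{-\pi/k})<r(0)$. For $\theta\in[0,\theta_0]$ one has $\tilde r_1(r(\theta))=\theta$, so $(r(\theta),h(\theta))$ lies on the upper boundary graph of $D$. For $\theta\in[\theta_0,\pi]$ put $x\triangleq r(\theta)\in[r(\theta_0),r(\pi)]$, so $\tilde r_2(x)=\theta$ and the needed inequality is $h(\tilde r_2(x))\le h(\tilde r_1(x))$. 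Here the $i=2$ case of Equation~(\ref{eq:d2}) has $\theta=\tilde r_2(x)\in[\theta_0,\pi]$, where $\psi(\theta)<0$ while the other factors keep the signs above, so its right–hand side is positive and $x\mapsto h(\tilde r_2(x))$ is \emph{convex} on $[r(\theta_0),r(\pi)]$. Thus $x\mapsto h(\tilde r_1(x))-h(\tilde r_2(x))$ is concave on $[r(\theta_0),r(\pi)]$, vanishes at $x=r(\theta_0)$, and equals $h(\tilde r_1(r(\pi)))>0$ at $x=r(\pi)$; a concave function nonnegative at both endpoints of an interval is nonnegative throughout it, which gives $h(\tilde r_2(x))\le h(\tilde r_1(x))$ and hence $\Gamma\subseteq\overline D$. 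Now $\mathbb{C}\setminus\overline D$ is connected, unbounded and disjoint from $\Gamma$, so it lies in $V_\infty$, whence $\overline D\supseteq\Gamma\cup\Omega$ and $\Omega\subseteq\overline D$. Finally $r(\theta)=r(\theta_0)$ only at $\theta_0$ and $r(\theta)=r(0)$ only at $\theta=0$, so $\Omega$ meets neither of the two bounding vertical lines of $D$; on the open strip strictly between them, $\overline D\setminus D$ is exactly the graph $|\im z|=h(\tilde r_1(\re z))\subseteq\Gamma$, which misses $\Omega$. Therefore $\Omega\subseteq D$, which is the assertion of the lemma.

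I expect the only genuinely delicate point to be the comparison $h(\theta)\le h(\tilde r_1(r(\theta)))$ for $\theta\in[\theta_0,\pi]$; the ``concave minus convex'' device reduces it to the two convexity statements read off from Equation~(\ref{eq:d2}), so the remaining effort is the sign bookkeeping for $\psi$ and for the factor $\mu k\cos\theta+ke^{\theta/k}+\mu\sin\theta$, all of which is forced by the standing hypothesis $k<\frac{\pi}{2\ln(\mu+4)}$.
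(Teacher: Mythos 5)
Your proof is correct, and its core coincides with the paper's: the boundary of $\ln(\open{(1+\mu W(k))})$ is identified with the two parametrized graphs via Lemma \ref{lemma:boundary}, and the concavity of $h\circ\tilde r_1$ on $[r(\theta_0),r(0)]$ together with the convexity of $h\circ\tilde r_2$ on $[r(\theta_0),r(\pi)]$ is read off from Equation (\ref{eq:d2}) with exactly the sign bookkeeping of Lemmas \ref{lemma:r:monotone} and \ref{lemma:positive}. You differ in two places, both legitimate. For the branch comparison $h(\tilde r_2(x))\le h(\tilde r_1(x))$ on $[r(\theta_0),r(\pi)]$, the paper sandwiches the two branches between their chords through the common point $(r(\theta_0),h(\theta_0))$ (the convex branch below its chord to $(r(\pi),0)$, the concave branch above its chord to $(r(0),0)$) and compares the two lines, whereas you note that the concave difference is nonnegative at both endpoints and hence throughout; these are equivalent uses of the same convexity data. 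More substantively, the paper finishes by asserting the explicit description of the image region (a full slab over $[r(\pi),r(0)]$ plus a two-lobed piece over $[r(\theta_0),r(\pi)]$) and reading off the inclusion, a step it does not justify beyond knowing the boundary; you instead identify the image with the Jordan interior $\Omega$ of $\Gamma$ via the clopen argument and trap $\Omega$ inside the compact convex set $\overline{D}$ because $\mathbb{C}\setminus\overline{D}$ is connected, unbounded and misses $\Gamma$, then discard $\overline{D}\setminus D\subseteq\Gamma$. Your finish is longer but makes explicit the topological step the paper leaves implicit; note also that your closing remark about the two vertical lines is superfluous, since $\overline{D}\setminus D$ is exactly the graph $|\im(z)|=h(\tilde r_1(\re(z)))$ over $[r(\theta_0),r(0)]$, which already lies in $\Gamma$ and is therefore disjoint from $\Omega$.
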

\begin{proof}
    % \begin{figure}[h]
    % \centering
    % \includegraphics[scale=0.25]{}
    % \caption{The region of $\ln(1 + W(1))$ and its convex hull}
    % \end{figure}
    
    For $\mu > 0$, by lemma \ref{lemma:positive}, we know that $h(\theta) > 0$ for $\theta \in (0, \pi)$. 
    Note that $h(0) = h(\pi) = 0$, and $r(0) = \ln(1 + \mu) > 0$, $r(\pi) = \ln \left(1 - \mu e^{-\pi/k}\right) < 0$.
    
    Following from Lemma \ref{lemma:region2}, we know that
    \begin{align*}
        \partial(\ln(1 + \mu \open{W(k)})) &= \{z : \re(z) \in [r(\theta_0), r(0)], |\im(z)| = h(\tilde{r}_1(x))  \} \\
        &~~~ \cup \{z : \re(z) \in [r(\theta_0), r(\pi)], |\im(z)| = h(\tilde{r}_2(x)) \}.
    \end{align*}
    
    Let $\varphi_1(x) \triangleq h(\tilde{r}_1(x))$ and $\varphi_2(x) \triangleq h(\tilde{r}_2(x))$.
    % Recalling that $h(\theta)$ is decreasing on $(\theta_0, \pi)$, $r(\theta)$ is increasing on $(\theta_0, \pi)$, thus $h(\tilde{r}_2(x))$ is decreasing on $(r(\theta_0), r(\pi))$.
    Recalling the definition of $\theta_0$, we have $k e^{\theta/k} \sin (\theta)+e^{\theta/k} \cos (\theta)+\mu > 0$ for $\theta \in (0, \theta_0)$, and $k e^{\theta/k} \sin (\theta)+e^{\theta/k} \cos (\theta)+\mu > 0$ for $\theta \in (\theta_0, \pi)$.
    
    Then by Equation (\ref{eq:d2}), and Lemma \ref{lemma:positive} (ii), we have 
    \begin{align*}
        \varphi_1''(x) < 0, ~~\text{ for } x \in [r(\theta_0), r(0)], \\
        \varphi_2''(x) > 0, ~~\text{ for } x \in [r(\theta_0), r(\pi)],
    \end{align*}
    
    Hence, by convexity of $\varphi_2(x)$ on $[r(\theta_0), r(\pi)]$, we know that
    \begin{align*}
        \varphi_2(x) \le \frac{x - r(\pi)}{r(\theta_0) - r(\pi)} (\varphi_2(r(\theta_0)) - \varphi_2(r(\pi))) = \frac{x - r(\pi)}{r(\theta_0) - r(\pi)} h(\theta_0).
    \end{align*}
    
    Similarly, by concavity of $\varphi_1(x)$ for $x \in [r(\theta_0), r(0)]$, we have
    \begin{align*}
        \varphi_1(x) \ge \frac{x - r(0)}{r(\theta_0) - r(0)} h(\theta_0).
    \end{align*}
    
    According to $r(\theta_0) < r(\pi) < r(0) $, for $x \in [r(\theta_0), r(\pi)]$, there holds
    \begin{align*}
        \frac{x - r(\pi)}{r(\theta_0) - r(\pi)} < \frac{x - r(0)}{r(\theta_0) - r(\pi)},
    \end{align*}
    which implies that $\varphi_2(x) < \varphi_1(x)$ for $x \in [r(\theta_0), r(\pi)]$. 
    
    Consequently, we can conclude that
    \begin{align*}
        \ln(1 + \mu \open{W(k)}) &= \{z: \re(z) \in [r(\pi), r(0)], |\im(z)| < \varphi_1(\re(z)) \} \\
        &~~~ \cup \{z: \re(z) \in [r(\theta_0), r(\pi)], \varphi_2(\re(z)) < |\im(z)| < \varphi_1(\re(z)) \} \\
        &\subseteq \{z: \re(z) \in [r(\theta_0), r(0)], |\im(z)| < \varphi_1(\re(z)) \}. 
    \end{align*}
    
    At last, by concavity of $\varphi_1(x)$, the set
    \[\{ (x, y): x \in [r(\theta_0), r(0)], y < \varphi_1(x) \}\] 
    is convex. Then 
    by symmetry and convexity of intersection of convex sets, we know that the set $\{z: \re(z) \in [r(\theta_0), r(0)], |\im(z)| < \varphi_1(\re(z)) \}$ is convex.
    
\end{proof}

\begin{lemma}\label{lemma:<0}
    For $\mu \in (-1, 0)$, with $\theta_0$ defined in Lemma \ref{lemma:r:monotone} and $\tilde{r}_1$ such that $\tilde{r}_1(r(\theta)) = \theta$ for $\theta \in [0, \theta_0]$, there holds
    \begin{align*}
        \ln(1 + \mu \open{W(k)}) \subseteq \{z: \re(z) \in [r(0), r(\theta_0)], |\im(z)| < -h(\tilde{r}_1(\re(z))) \}
    \end{align*}
    where the last set is convex. 
    Moreover, for $\mu = -1$, we have
    \begin{align*}
        \ln(\open{(1 - W(k))}) \subseteq \{z: \re(z) \in (-\infty, r(\theta_0)], |\im(z)| < -h(\tilde{r}_1(\re(z))) \}
    \end{align*}
    where the last set is also convex.
\end{lemma}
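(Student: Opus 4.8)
The plan is to transcribe the argument of Lemma~\ref{lemma:>0}, reversing the inequalities that change sign when $\mu$ becomes negative, and then to handle the limiting case $\mu=-1$ separately. First take $\mu\in(-1,0)$. Since $(1+\mu W(k))\cap\sR=[1+\mu,\,1-\mu e^{-\pi/k}]$ with $1+\mu>0$, we have $0\notin 1+\mu W(k)$, so Lemma~\ref{lemma:boundary} gives $\partial\big(\ln(1+\mu\open{W(k)})\big)=\ln(1+\mu\,\partial W(k))=\{\,r(\theta)\pm\i h(\theta):\theta\in[0,\pi]\,\}$. For $\theta\in(0,\pi)$ the numerator $\mu e^{-\theta/k}\sin\theta$ of $h$ is negative while $e^{-\theta/k}<1$ forces $1+\mu e^{-\theta/k}\cos\theta>1-|\mu|>0$, so $h(\theta)<0$ on $(0,\pi)$ with $h(0)=h(\pi)=0$; hence the boundary is $C_+\cup\overline{C_+}$, where $C_+=\{r(\theta)+\i h(\theta):\theta\in[0,\pi]\}$ lies in the closed lower half plane. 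By Lemma~\ref{lemma:r:monotone}, $r$ increases strictly from $r(0)=\ln(1+\mu)$ to $r(\theta_0)$ on $[0,\theta_0]$ and decreases strictly to $r(\pi)=\ln(1-\mu e^{-\pi/k})$ on $[\theta_0,\pi]$, so $r(0)<0<r(\pi)<r(\theta_0)$, and we obtain inverses $\tilde r_1\colon[r(0),r(\theta_0)]\to[0,\theta_0]$ and $\tilde r_2\colon[r(\pi),r(\theta_0)]\to[\theta_0,\pi]$. Setting $\psi_j:=h\circ\tilde r_j\le 0$, the curve $C_+$ is the union of the graphs of $\psi_1$ and $\psi_2$.

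The core is the same convexity plus chord comparison as in Lemma~\ref{lemma:>0}, with reversed conclusions. Plugging $\mu<0$ and $\theta\in(0,\theta_0)$ --- where $k e^{\theta/k}\sin\theta+e^{\theta/k}\cos\theta+\mu>0$ by the definition of $\theta_0$ --- into the second-derivative formula \eqref{eq:d2} shows that the sign of $\psi_1''$ equals the sign of $S(\theta):=\mu k\cos\theta+k e^{\theta/k}+\mu\sin\theta$, and likewise the sign of $\psi_2''$ on $[r(\pi),r(\theta_0)]$ is the opposite of the sign of $S$ on $(\theta_0,\pi)$. The positivity of $S$ on $(0,\pi)$ for $\mu\in[-1,0)$ (the negative-$\mu$ counterpart of the sign bound used for $\mu>0$; note $S(\theta_0)=-(k^2+1)e^{\theta_0/k}\sin\theta_0\cos\theta_0>0$ and $S(0^+)=k(1+\mu)\ge 0$) then yields $\psi_1''\ge0$ and $\psi_2''\le0$, i.e.\ $\psi_1$ convex and $\psi_2$ concave. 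Since $\psi_1(r(0))=h(0)=0$, $\psi_2(r(\pi))=h(\pi)=0$ and $\psi_1(r(\theta_0))=\psi_2(r(\theta_0))=h(\theta_0)<0$, each function is bounded by the chord through its two endpoints, and because $r(0)<r(\pi)$ the chord of $\psi_1$ lies strictly below that of $\psi_2$ on $[r(\pi),r(\theta_0))$; hence $\psi_1(x)<\psi_2(x)\le0$, i.e.\ $|\psi_1(x)|>|\psi_2(x)|$, there. Consequently $\ln(1+\mu\open{W(k)})$ --- which over $[r(0),r(\pi)]$ is $\{|\im z|<-\psi_1(\re z)\}$ and over $[r(\pi),r(\theta_0)]$ is the sliver $\{-\psi_2(\re z)<|\im z|<-\psi_1(\re z)\}$ --- is contained in $\{z:\re z\in[r(0),r(\theta_0)],\ |\im z|<-h(\tilde r_1(\re z))\}$. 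That last set is convex: $-\psi_1$ is concave (as $\psi_1$ is convex), so $\{(x,y):x\in[r(0),r(\theta_0)],\ y<-\psi_1(x)\}$ is convex, and the set in question is its intersection with its reflection across $\sR$.

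For $\mu=-1$ the same steps go through with three cosmetic changes: now $0=1+\mu e^{0}\in 1+\mu\,\partial W(k)$, so I invoke the ``$\setminus\{0\}$'' form of Lemma~\ref{lemma:boundary} and parametrize by $\theta\in(0,\pi]$; accordingly $r(0^+)=-\infty$, the real-part range becomes $(-\infty,r(\theta_0)]$, and $\tilde r_1$ is defined on $(-\infty,r(\theta_0)]$ with $\psi_1(x)=h(\tilde r_1(x))\to-\arctan k$ as $x\to-\infty$ (the limit recorded just before Lemma~\ref{lemma:compare}). The convexity of $\psi_1$, the chord comparison and the passage to the inclusion are unaffected, and since the hypograph of a concave function on a half-line is convex, $\{(x,y):x\le r(\theta_0),\ |y|<-\psi_1(x)\}$ is again convex.

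The step I expect to be the main obstacle is establishing $S(\theta)=\mu k\cos\theta+k e^{\theta/k}+\mu\sin\theta>0$ on $(0,\pi)$ for $\mu\in[-1,0)$ --- equivalently, fixing the direction of convexity of $\psi_1$ near $\theta=0$. At the corner $(\theta,\mu)=(0,-1)$ one has $S\to k(1+\mu)\to0$ while the factor $\big(k e^{\theta/k}\sin\theta+e^{\theta/k}\cos\theta+\mu\big)^3$ in the denominator of \eqref{eq:d2} also tends to $0$, so the convexity there is a $0/0$ assertion that must be read off from the limiting profile $r(\theta)\to-\infty$, $h(\theta)\to-\arctan k$ rather than evaluated at the endpoint; the remainder is a sign-by-sign transcription of the $\mu>0$ argument.
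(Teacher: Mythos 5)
Your proposal is correct and follows exactly the route the paper intends: its own proof of Lemma~\ref{lemma:<0} is simply declared ``similar to the proof of Lemma~\ref{lemma:>0}'', and your sign-reversed transcription (via Equation~(\ref{eq:d2}), the monotonicity of $r$ from Lemma~\ref{lemma:r:monotone}, the chord comparison, and the separate treatment of $\mu=-1$ with real part range $(-\infty,r(\theta_0)]$) is that argument. The step you flag as the main obstacle, $k e^{\theta/k}+\mu k\cos\theta+\mu\sin\theta>0$ on $(0,\pi]$ for $\mu\in[-1,0)$, is already supplied by Lemma~\ref{lemma:positive}(ii), which is stated and proved for all $\mu\ge -1$, so nothing further needs to be established there.
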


The proof of this Lemma is similar to the proof of the Lemma \ref{lemma:>0}.

\section{Some Technical Lemma}
\begin{lemma}\label{lemma:positive}
    For $\mu \ge -1$, $k < \frac{\pi}{2\ln(\mu + 4)}$ and $\theta \in (0, \pi]$, there holds 
    \begin{enumerate}[label=(\roman*)]
        \item $1 + \mu e^{-\theta / k}  \cos (\theta) > 0$, 
        \item $k e^{\theta/k} + \mu  k \cos (\theta) + \mu  \sin (\theta) > 0$.
    \end{enumerate}
\end{lemma}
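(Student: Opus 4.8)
The plan is to reduce everything to two elementary facts about $\theta\mapsto k\cos\theta+\sin\theta$ on $[0,\pi]$ together with the single inequality that the hypothesis on $k$ really encodes, namely $e^{\pi/(2k)}>\mu+4$ (equivalently $\tfrac{\pi}{2k}>\ln(\mu+4)$; note $\mu+4\ge3$, so this makes sense). First I would record that on $[0,\pi]$ the map $\theta\mapsto k\cos\theta+\sin\theta$ has its only interior critical point (at $\tan\theta=1/k$) a maximum, hence its minimum on $[0,\pi]$ is the endpoint value $-k$ at $\theta=\pi$; and that $k\cos\theta+\sin\theta\le0$ forces $\theta\ge\pi-\arctan k>\pi/2$. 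Both are one-line computations.

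For part (i) I would split on the sign of $\cos\theta$. If $\cos\theta\ge0$ (so $\theta\in(0,\pi/2]$): when $\mu\ge0$, $1+\mu e^{-\theta/k}\cos\theta\ge1$; when $\mu<0$, since $\theta>0$ gives $e^{-\theta/k}\cos\theta\in[0,1)$, we get $\mu e^{-\theta/k}\cos\theta>\mu\ge-1$. If $\cos\theta<0$ (so $\theta\in(\pi/2,\pi]$): when $\mu<0$, $\mu e^{-\theta/k}\cos\theta>0$; when $\mu\ge0$, $|\mu e^{-\theta/k}\cos\theta|\le\mu e^{-\pi/(2k)}<\mu/(\mu+4)<1$ by the hypothesis on $k$. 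In every case the quantity is $>0$.

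For part (ii), write $\psi(\theta)=ke^{\theta/k}+\mu(k\cos\theta+\sin\theta)$ and split on the sign of $k\cos\theta+\sin\theta$. If $k\cos\theta+\sin\theta\le0$: then $|k\cos\theta+\sin\theta|\le k$ and $\theta>\pi/2$; for $\mu\in[-1,0)$ the cross term $\mu(k\cos\theta+\sin\theta)$ is $\ge0$, so $\psi\ge ke^{\theta/k}>0$, while for $\mu\ge0$ we have $\psi\ge ke^{\theta/k}-\mu k>k(\mu+4)-\mu k=4k>0$, using $e^{\theta/k}>e^{\pi/(2k)}>\mu+4$. If $k\cos\theta+\sin\theta>0$: for $\mu\ge0$ the cross term is nonnegative and $\psi\ge ke^{\theta/k}>0$; for $\mu\in[-1,0)$ the cross term is $\ge-(k\cos\theta+\sin\theta)$, and $e^{\theta/k}\ge1+\theta/k$ gives $\psi\ge(k+\theta)-k\cos\theta-\sin\theta=k(1-\cos\theta)+(\theta-\sin\theta)>0$ on $(0,\pi]$.

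The only mildly delicate point is the large-$\mu$ regime of (ii) when $k\cos\theta+\sin\theta<0$, where the naive estimate $\psi\ge ke^{\theta/k}-\mu k$ is worthless unless one notices that the sign change of $k\cos\theta+\sin\theta$ can occur only for $\theta>\pi/2$, so that $e^{\theta/k}$ is already large enough to swallow $\mu$ via $e^{\pi/(2k)}>\mu+4$. Pinning down that this is precisely where the bound on $k$ enters is the main obstacle; the rest is routine bookkeeping with $e^x\ge1+x$ and the sign of the cross term.
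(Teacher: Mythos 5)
Your proof is correct, and for part (ii) it takes a genuinely different route from the paper. Part (i) is essentially the paper's argument in a different parametrization: the paper multiplies through by $e^{\theta/k}$ and splits on $|\mu|\le 1$ versus $\mu>1$, while you split on the sign of $\cos\theta$; both cases come down to $1+\mu\ge 0$ and the hypothesis-derived bound $e^{\pi/(2k)}>\mu+4$. For part (ii), however, the paper proceeds by calculus on $g(\theta)=k e^{\theta/k}+\mu k\cos\theta+\mu\sin\theta$: it computes $g'$, $g''$, $g'''$, and for $\mu>0$ on $[\pi/2,\pi]$ runs the chain $g'''>0\Rightarrow g''\ge g''(\pi/2)>0\Rightarrow g'\ge g'(\pi/2)>0\Rightarrow g\ge g(\pi/2)=k e^{\pi/(2k)}+\mu>0$, while for $\mu\in[-1,0)$ it uses monotonicity on each half-interval together with the separate minimization $\min_{k>0}k e^{\pi/(2k)}=\tfrac{\pi e}{2}$. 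You avoid all derivatives of $g$ by splitting on the sign of the cross term $k\cos\theta+\sin\theta$, observing that it can be nonpositive only for $\theta\ge\pi-\arctan k>\pi/2$ and is then bounded by $k$ in absolute value, so that $e^{\theta/k}>e^{\pi/(2k)}>\mu+4$ absorbs the worst case for $\mu\ge 0$, while the negative-$\mu$ case reduces to $k(1-\cos\theta)+(\theta-\sin\theta)>0$ via $e^{x}\ge 1+x$. Both proofs hinge on the same use of the hypothesis ($e^{\pi/(2k)}>\mu+4$); your version is shorter and purely elementary, at the cost of the explicit monotonicity/convexity information about $g$ that the paper's derivative chain produces (information which, as it happens, is not needed elsewhere in the paper). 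I checked the small points that could have gone wrong — the strictness of $1+\mu t>0$ when $\mu=-1$ and $t=e^{-\theta/k}\cos\theta<1$, the claim that the minimum of $k\cos\theta+\sin\theta$ on $[0,\pi]$ is the endpoint value $-k$, and the strict positivity of $k(1-\cos\theta)+(\theta-\sin\theta)$ on $(0,\pi]$ — and they all hold.
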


\begin{proof}
    Firstly by $k < \frac{\pi}{2\ln(\mu + 4)}$, we have
    \begin{align*}
        e^{\frac{\pi}{k}} > e^{\frac{\pi}{2k}} > \mu + 4.
    \end{align*}
    (i) If $|\mu| \le 1$, we have
    \begin{align*}
        e^{\theta / k} > 1 \ge |\mu \cos(\theta)|.
    \end{align*}
    And if $\mu > 1$, we have
    \begin{align*}
        e^{\theta / k} + \mu  \cos (\theta) \ge \begin{cases}
        1 , ~~&\text{if }  \theta \in [0, \pi/2], \\
        e^{\frac{\pi}{2k}} - \mu > 0, ~~&\text{if } \theta \in [\pi/2, \pi].
        \end{cases}
    \end{align*} \\[0.1cm]
    
    (ii) Let $g(\theta) = k e^{\theta/k} + \mu  k \cos (\theta) + \mu  \sin (\theta)$. 
    
    Note that
    \begin{align*}
        g'(\theta) &= e^{\theta/k} + \mu (\cos(\theta) - k \sin(\theta)), \\
        g''(\theta) &= e^{\theta/k}/k + \mu (-\sin(\theta) - k \cos(\theta)), \\
        g'''(\theta) &= e^{\theta/k}/k^2 + \mu (-\cos(\theta) + k \sin(\theta)).
    \end{align*}
    
    For $\ > 0$, if $\theta \in [0, \pi/2]$, then $g(\theta) > 0$ holds apparently. 
    And if $\theta \in [\pi/2, \pi]$, then we have
    \begin{align*}
        g'''(\theta) &> 0, \\
        g''(\theta) &\ge g''(\pi/2) = \frac{1}{k} e^{\frac{\pi}{2 k}} - \mu > (\mu + 4)/k - \mu > 0, \\
        g'(\theta) &\ge g'(\pi/2) = e^{\frac{\pi}{2 k}} - k \mu > (\mu + 4) - k \mu > 0, \\
        g(\theta) &\ge g(\pi/2) = k e^{\frac{\pi}{2 k}} + \mu > 0.
    \end{align*}
    
    On the other hand, for $\mu \in [-1, 0)$ and $\theta \in (0, \pi/2]$, we have
    \begin{align*}
        g''(\theta) &> 0, \\
        g'(\theta) &> g'(0) = 1 + \mu \ge 0, \\
        g(\theta) &> g(0) = k(1 + \mu) \ge 0.
    \end{align*}
    
    And for $\mu \in [-1, 0)$ and $\theta \in [\pi/2, \pi]$, it holds that
    \begin{align*}
        g'(\theta) &> 0, \\
        g(\theta) &\ge g(\pi/2) = k e^{\frac{\pi}{2 k}} + \mu.
    \end{align*}
    Let $\psi(k) \triangleq k e^{\frac{\pi}{2k}}$. By $\psi'(k) = e^{\frac{\pi}{2k}}(1 - \frac{\pi}{2 k})$, $\psi(k)$ is decreasing over $k \in (0, \pi/2)$ and increasing over $k \in (\pi/2, + \infty)$. Hence
    \begin{align*}
        g(\theta) \ge  k e^{\frac{\pi}{2 k}} + \mu \ge \frac{\pi e}{2} + \mu > 0.
    \end{align*}
\end{proof}

\begin{lemma}\label{lem:h-k-2}
    For $\mu = -1$, we have $|h_{k, 2}(\theta)| < \arctan(k)$.
\end{lemma}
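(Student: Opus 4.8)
The plan is to reduce the claim to the positivity of an auxiliary function already established earlier in the excerpt. First I would specialize the formula for $h_{k,2}$ from Lemma~\ref{lemma:region2} to $\mu=-1$, giving, for $\theta\in(0,\pi]$,
\[
h_{k,2}(\theta)=\arctan\!\left(\frac{-e^{-\theta/k}\sin\theta}{1-e^{-\theta/k}\cos\theta}\right).
\]
The denominator $1-e^{-\theta/k}\cos\theta$ is strictly positive on $(0,\pi]$: for $\theta\in(0,\pi/2]$ we have $0\le e^{-\theta/k}\cos\theta<1$, while for $\theta\in(\pi/2,\pi]$ we have $e^{-\theta/k}\cos\theta\le 0$, so in either case $1-e^{-\theta/k}\cos\theta\in(0,1]$; in particular no restriction on $k$ is needed. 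Since $\sin\theta\ge 0$ on $(0,\pi]$, the argument of $\arctan$ is $\le 0$, hence
\[
|h_{k,2}(\theta)|=\arctan\!\left(\frac{e^{-\theta/k}\sin\theta}{1-e^{-\theta/k}\cos\theta}\right).
\]

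Next, because $\arctan$ is strictly increasing, the inequality $|h_{k,2}(\theta)|<\arctan(k)$ is equivalent to $\dfrac{e^{-\theta/k}\sin\theta}{1-e^{-\theta/k}\cos\theta}<k$; clearing the positive denominator and multiplying by $e^{\theta/k}>0$ and rearranging, this is equivalent to
\[
\tilde\psi(\theta)\triangleq k\,e^{\theta/k}-\sin\theta-k\cos\theta>0,\qquad\theta\in(0,\pi].
\]
But $\tilde\psi$ is precisely the function analyzed in the proof of Lemma~\ref{lemma:ab-monotone}, where it is shown that $\tilde\psi'(\theta)=e^{\theta/k}+k\sin\theta-\cos\theta>0$ on $(0,\pi)$, whence $\tilde\psi(\theta)>\tilde\psi(0)=0$ for $\theta\in(0,\pi)$; the endpoint $\theta=\pi$ is immediate from $\tilde\psi(\pi)=k\,e^{\pi/k}+k>0$. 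This gives $|h_{k,2}(\theta)|<\arctan(k)$ on $(0,\pi]$ and completes the proof.

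I do not expect any real obstacle: the entire content is the reduction to $\tilde\psi>0$, which is already in hand. The two minor points to keep in view are the positivity of $1-e^{-\theta/k}\cos\theta$ (handled above by splitting at $\theta=\pi/2$, and requiring no upper bound on $k$) and the strictness of the inequality, which is consistent with the limiting behaviour $|h_{k,2}(\theta)|\to\arctan(k)$ as $\theta\to 0^+$, so that $\arctan(k)$ is in fact the optimal constant.
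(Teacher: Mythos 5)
Your proof is correct, and it boils down to the same scalar inequality as the paper's argument, namely $\sin\theta < k\bigl(e^{\theta/k}-\cos\theta\bigr)$ on $(0,\pi]$. The only difference is in how this inequality is verified: the paper does it in one line via $\sin\theta\le\theta$ and $e^{\theta/k}\ge 1+\theta/k$, whereas you reuse the auxiliary function $\tilde\psi(\theta)=k e^{\theta/k}-\sin\theta-k\cos\theta$ and its positivity already established inside the proof of Lemma \ref{lemma:ab-monotone}; both routes are sound, and your handling of the positive denominator $1-e^{-\theta/k}\cos\theta$ and of the endpoint $\theta=\pi$ is fine.
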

\begin{proof}
    For $\mu = -1$, we have
    \begin{align*}
        h_{k, 2}(\theta) = -\arctan \left( \frac{e^{-\theta/k} \sin(\theta)}{1 - e^{-\theta/k} \cos(\theta)} \right) = -\arctan \left( \frac{ \sin(\theta)}{e^{\theta/k} - \cos(\theta)} \right).
    \end{align*}
    
    Just note that
    \begin{align*}
        \frac{ \sin(\theta)}{e^{\theta/k} - \cos(\theta)} \le \frac{\theta}{1 + \theta/k - \cos(\theta)} \le k.
    \end{align*}
\end{proof}

\begin{lemma}\label{lem:p-k-small}
    For $x = -\ln (e^{2\Delta} + 1)$ and sufficiently small $k > 0$, we have $p_k(x) > \Delta \arctan k$.
\end{lemma}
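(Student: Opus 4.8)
The plan is to bound $p_k(x)$ from below by exhibiting one explicit argument $\theta_\ast\in(0,\pi)$ with $r_{k,1}(\theta_\ast)\ge x$ at which $h_{k,1}$ is already comfortably larger than $\Delta\arctan k$, and then to transfer this to $p_k(x)$ using the monotonicity of $r_{k,1}$ and $h_{k,1}$ recorded in Lemma \ref{lemma:region1} (and proved in Lemma \ref{lemma:ab-monotone}).

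\textbf{Step 1: reduce to the $h_{k,1}$ branch.} For $k<\pi/(2\Delta)$ we have $e^{2\Delta}<e^{\pi/k}$, hence $x=-\ln(e^{2\Delta}+1)>-\ln(1+e^{\pi/k})$. By the definition of $p_k$ in Lemma \ref{lemma:region1}, $x$ then lies in the range $(r_{k,1}(\pi),+\infty)$ of $r_{k,1}|_{(0,\pi)}$, so $r_{k,1}^{-1}(x)$ is well defined and $p_k(x)=h_{k,1}\bigl(r_{k,1}^{-1}(x)\bigr)$.

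\textbf{Step 2: pick $\theta_\ast$.} Set $M=e^{2\Delta}+1$ and $\theta_\ast=k\ln M$, so $e^{\theta_\ast/k}=M$ and $\theta_\ast\to0$ as $k\to0$; in particular $\theta_\ast\in(0,\pi/2)$ for $k$ small. Substituting into the closed form of $r_{k,1}$ from Lemma \ref{lemma:region1},
\[ r_{k,1}(\theta_\ast)=-\tfrac12\ln\!\bigl(M^2+1-2M\cos\theta_\ast\bigr), \]
so $r_{k,1}(\theta_\ast)\ge x=-\ln M$ is equivalent to $\cos\theta_\ast\ge\tfrac1{2M}$, which holds for all sufficiently small $k$. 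Since $r_{k,1}$ is strictly decreasing on $(0,\pi)$ and $r_{k,1}(\theta_\ast)\ge x=r_{k,1}\bigl(r_{k,1}^{-1}(x)\bigr)$, we conclude $\theta_\ast\le r_{k,1}^{-1}(x)$; since $h_{k,1}$ is strictly increasing on $(0,\pi)$, this gives $p_k(x)=h_{k,1}\bigl(r_{k,1}^{-1}(x)\bigr)\ge h_{k,1}(\theta_\ast)$.

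\textbf{Step 3: bound $h_{k,1}(\theta_\ast)$ and finish.} Using $e^{-\theta_\ast/k}=M^{-1}$, the inequality $\cos\theta_\ast\ge\tfrac1{2M}$ rewrites as $e^{-2\theta_\ast/k}+1-2e^{-\theta_\ast/k}\cos\theta_\ast\le1$. Feeding this into the closed form $h_{k,1}(\theta)=\arcsin\!\bigl(\sin\theta\big/\sqrt{e^{-2\theta/k}+1-2e^{-\theta/k}\cos\theta}\bigr)$ of Lemma \ref{lemma:region1} and using that the radicand is positive and at most $1$, together with $\sin\theta_\ast\ge0$, yields
\[ h_{k,1}(\theta_\ast)\ge\arcsin(\sin\theta_\ast)=\theta_\ast, \]
the last equality because $\theta_\ast\le\pi/2$. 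Finally $\theta_\ast=k\ln(e^{2\Delta}+1)>2\Delta k>\Delta k>\Delta\arctan k$, so chaining the inequalities gives $p_k(x)\ge h_{k,1}(\theta_\ast)\ge\theta_\ast>\Delta\arctan k$, as claimed.

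There is no deep obstacle here: the only care needed is to collect the finitely many smallness conditions on $k$ (namely $k<\pi/(2\Delta)$ for Step 1, and $\theta_\ast<\pi/2$ and $\cos\theta_\ast\ge1/(2M)$ for Steps 2--3), which are all compatible for $k$ near $0$, plus noting that the radicand being $\le1$ is precisely what lets us drop from $h_{k,1}(\theta_\ast)$ down to $\arcsin(\sin\theta_\ast)$. The final comparison uses only the crude estimates $\ln(e^{2\Delta}+1)>2\Delta$ and $\arctan k<k$, so there is slack by roughly a factor of two.
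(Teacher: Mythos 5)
Your proof is correct and follows essentially the same route as the paper: evaluate $r_{k,1}$ and $h_{k,1}$ at an explicit test angle proportional to $k$, use the monotonicity of $r_{k,1}$ (decreasing) and $h_{k,1}$ (increasing) to compare with $p_k(x)$, and finish with $\arctan k\le k$. The only difference is the choice of test angle ($\theta_\ast=k\ln(e^{2\Delta}+1)$ instead of the paper's $k\Delta$), which is an inessential variation.
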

\begin{proof}
    % If $x < - \ln(1 + e^{\pi/k})$, then $p_k(x) = \pi > \Delta \arctan k$. 
    For sufficiently small $k > 0$, we have
    \begin{align*}
        h_{k,1}(k \Delta) = \arcsin\left( \frac{\sin(k \Delta)}{ \sqrt{e^{-2\Delta} + 1 - 2 e^{-\Delta} \cos( k \Delta)}}\right) \ge k \Delta.
    \end{align*}
    And 
    \begin{align*}
        r_{k, 1}(k \Delta) = -\frac{1}{2} \ln \left( e^{2 \Delta} + 1 - 2 e^{\Delta} \cos(k \Delta) \right) > -\ln (1 + e^{2\Delta}).
    \end{align*}
    
    Note that $r_{k, 1}$ is decreasing on $(0, \pi)$, hence for $x = -\ln (e^{2\Delta} + 1)$, we have $r_{k, 1}^{-1}(x) > k \Delta$ and
    \begin{align*}
        p_k(x) = h_{k, 1}(r_{k, 1}^{-1}(x)) > h_{k,1}(k \Delta) \ge \Delta \arctan k.
    \end{align*}
\end{proof}

\begin{lemma}\label{lem:h-k-1}
    For $\theta \in (0, \pi)$, we have $h_{k, 1}(\theta) > \arctan(k)$.
\end{lemma}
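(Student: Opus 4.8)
The plan is to reduce the inequality to a boundary (limiting) computation, exploiting the monotonicity already recorded in Lemma~\ref{lemma:ab-monotone}. That lemma states that $h_{k,1}(\theta)$ is strictly increasing on $(0,\pi)$. Hence for every $\theta\in(0,\pi)$ we have $h_{k,1}(\theta)>\lim_{\theta'\to 0^+}h_{k,1}(\theta')$, so it suffices to show that this one-sided limit equals $\arctan(k)$ (or is at least $\arctan(k)$).

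To evaluate the limit I would use the $\arcsin$ form
\[
h_{k,1}(\theta)=\arcsin\!\left(\frac{\sin\theta}{\sqrt{e^{-2\theta/k}+1-2e^{-\theta/k}\cos\theta}}\right),
\]
which is the form that behaves well near $\theta=0$ (the piecewise $\arctan$ form has a removable blow-up there). First I would Taylor-expand the radicand: with $e^{-\theta/k}=1-\theta/k+\theta^2/(2k^2)+O(\theta^3)$, $e^{-2\theta/k}=1-2\theta/k+2\theta^2/k^2+O(\theta^3)$, and $\cos\theta=1-\theta^2/2+O(\theta^3)$, the constant and $O(\theta)$ terms cancel and one is left with $e^{-2\theta/k}+1-2e^{-\theta/k}\cos\theta=\theta^2(1+1/k^2)+O(\theta^3)$. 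Combined with $\sin\theta=\theta+O(\theta^3)$, the argument of $\arcsin$ tends to $k/\sqrt{1+k^2}$. Since $\sin\alpha=k/\sqrt{1+k^2}$ forces $\cos\alpha=1/\sqrt{1+k^2}$ and hence $\tan\alpha=k$, we get $\lim_{\theta\to 0^+}h_{k,1}(\theta)=\arctan(k)$, which closes the argument.

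The only step requiring care is the second-order Taylor expansion of the radicand: the point is that both the $O(1)$ and $O(\theta)$ contributions vanish identically, so the leading behaviour is genuinely quadratic, and the quadratic coefficient must come out to be exactly $1+1/k^2$ for the limit to match $\arctan(k)$; I would double-check this cancellation explicitly. A fully self-contained alternative, avoiding any limit, would be to work on the branch $\cos\theta>e^{-\theta/k}$ (which holds for small $\theta$) and prove directly that $\sin\theta>k(\cos\theta-e^{-\theta/k})$, i.e. that $\psi(\theta)\triangleq\sin\theta-k\cos\theta+ke^{-\theta/k}$ is positive on that range, via $\psi(0)=\psi'(0)=0$ and $\psi''>0$; but since $h_{k,1}$ is already known to be monotone, the limit computation is the shortest route and I expect no genuine obstacle beyond the bookkeeping in the expansion.

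\begin{proof}
    By Lemma~\ref{lemma:ab-monotone}, $h_{k,1}(\theta)$ is strictly increasing on $(0,\pi)$, so it suffices to show $\lim_{\theta\to 0^+}h_{k,1}(\theta)=\arctan(k)$. Using
    \[
        h_{k,1}(\theta)=\arcsin\!\left(\frac{\sin\theta}{\sqrt{e^{-2\theta/k}+1-2e^{-\theta/k}\cos\theta}}\right),
    \]
    a Taylor expansion at $\theta=0$ gives $\sin\theta=\theta+O(\theta^3)$ and
    \[
        e^{-2\theta/k}+1-2e^{-\theta/k}\cos\theta=\theta^2\left(1+\tfrac{1}{k^2}\right)+O(\theta^3),
    \]
    since the $O(1)$ and $O(\theta)$ terms cancel. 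Hence the argument of $\arcsin$ tends to $k/\sqrt{1+k^2}$, and $\arcsin\!\big(k/\sqrt{1+k^2}\big)=\arctan(k)$. Therefore $h_{k,1}(\theta)>\arctan(k)$ for all $\theta\in(0,\pi)$.
\end{proof}
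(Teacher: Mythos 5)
Your proof is correct and follows essentially the same route as the paper: both arguments invoke the strict monotonicity of $h_{k,1}$ from Lemma~\ref{lemma:ab-monotone} and then identify $\lim_{\theta\to 0^+}h_{k,1}(\theta)=\arcsin\bigl(k/\sqrt{1+k^2}\bigr)=\arctan(k)$. The only difference is cosmetic — you evaluate the limit by a second-order Taylor expansion of the radicand (whose cancellation you verify correctly, giving the coefficient $1+1/k^2$), whereas the paper computes the same limit via L'H\^opital's rule.
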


\begin{proof}
    Note that
    \begin{align*}
        \lim_{\theta \to 0} \frac{\sin^2(\theta)}{1 + e^{-2\theta/k} - 2 e^{-\theta/k} \cos(\theta)} &= \lim_{\theta \to 0} \frac{2 k \sin(\theta) \cos(\theta)}{2 e^{-\theta/k} k \sin(\theta) + 2 e^{-\theta/k} \cos(\theta) - 2 e^{-2\theta/k}} \\
        &= \lim_{\theta \to 0} \frac{k \theta e^{\theta/k}}{k \sin(\theta) + \cos(\theta) - e^{-\theta/k}} \\
        &= \lim_{\theta \to 0} \frac{k e^{\theta/k} + \theta e^{\theta/k}}{k \cos(\theta) - \sin(\theta) + e^{-\theta/k}/k} \\
        &= \frac{k^2}{1 + k^2}.
    \end{align*}
    Then by lemma \ref{lemma:ab-monotone}, we know that 
    \begin{align*}
        h_{k, 1}(\theta) > \lim_{\theta \to 0} h_1(\theta) = \arcsin \left( \frac{k}{\sqrt{1 + k^2}} \right) = \arctan k.
    \end{align*}
\end{proof}

\begin{lemma}
    For $k < 1$ and $\theta \in (0, \pi/4]$, we have $\cos(\theta) > e^{-\theta/k}$.
\end{lemma}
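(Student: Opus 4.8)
The plan is to reduce everything to an inequality not involving $k$, then dispatch that inequality by a one-line monotonicity argument after taking logarithms. First I would use the hypothesis $0<k<1$: for $\theta\in(0,\pi/4]$ we have $\theta/k>\theta$, hence $e^{-\theta/k}<e^{-\theta}$. So it suffices to prove the stronger, $k$-free statement
\begin{align*}
    \cos\theta > e^{-\theta}, \qquad \theta\in(0,\pi/4].
\end{align*}
Note $\cos\theta>0$ on this interval, so taking logarithms is legitimate and the claim is equivalent to $g(\theta)\triangleq \theta+\ln\cos\theta>0$ on $(0,\pi/4]$.

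Next I would analyze $g$. We have $g(0)=0$ and $g'(\theta)=1-\tan\theta$, which is nonnegative on $[0,\pi/4]$ and strictly positive on $[0,\pi/4)$, vanishing only at the right endpoint $\theta=\pi/4$. Consequently $g$ is strictly increasing on $[0,\pi/4]$, so $g(\theta)>g(0)=0$ for every $\theta\in(0,\pi/4]$. This gives $\ln\cos\theta>-\theta>-\theta/k$, i.e.\ $\cos\theta>e^{-\theta/k}$, which is exactly the assertion.

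There is essentially no hard part here; the statement is elementary. The one point worth flagging is that one should \emph{not} try to prove it by differentiating $f(\theta)=\cos\theta-e^{-\theta}$ directly, since $f'(\theta)=e^{-\theta}-\sin\theta$ changes sign inside $(0,\pi/4]$ (it is positive near $0$ but negative at $\pi/4$), so a naive derivative-sign argument fails; passing to $g=\theta+\ln\cos\theta$ is what makes the derivative monotone in sign. The only other thing to keep track of is the strictness at the endpoint $\theta=\pi/4$, which is handled by the observation that $g'>0$ on the open interval $[0,\pi/4)$ even though $g'(\pi/4)=0$.
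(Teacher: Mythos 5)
Your proof is correct, but it follows a different route from the paper. You first eliminate $k$ by noting $e^{-\theta/k}<e^{-\theta}$ for $k<1$, $\theta>0$, and then prove the $k$-free inequality $\cos\theta>e^{-\theta}$ on $(0,\pi/4]$ by showing $g(\theta)=\theta+\ln\cos\theta$ is strictly increasing (since $g'=1-\tan\theta\ge 0$ there, vanishing only at $\pi/4$). The paper instead keeps $k$ and argues by concavity: with $\psi(\theta)=\cos\theta-e^{-\theta/k}$ one has $\psi''(\theta)=-\cos\theta-e^{-\theta/k}/k^2<0$, so $\psi(\theta)>\min\{\psi(0),\psi(\pi/4)\}=0$, the endpoint value $\psi(\pi/4)\ge 0$ being where $k<1$ enters. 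Your decomposition has the virtue of isolating the $k$-dependence into one trivial monotonicity step; the paper's argument is a two-line endpoint check that needs no logarithm. One small caveat about your closing remark: the sign change of $f'(\theta)=e^{-\theta}-\sin\theta$ only defeats the naive ``show $f'\ge 0$'' argument, not derivative methods altogether — exactly as in the paper, $f''=-\cos\theta-e^{-\theta}<0$ makes $f$ concave, so $f\ge\min\{f(0),f(\pi/4)\}$ settles it without passing to $g$.
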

\begin{proof}
    Let $\psi(\theta) = \cos(\theta) - e^{-\theta/k}$. 
    There holds
    \begin{align*}
        \psi'(\theta) &= -\sin(\theta) + e^{-\theta/k} / k, \\
        \psi''(\theta) &= -\cos(\theta) - e^{-\theta/k} / k^2 < 0.
    \end{align*}
    Thus $\psi(\theta) > \min\{\psi(0), \psi(\pi/4)\} = 0$.
\end{proof}

\section{Proof of Lemma \ref{lemma:compare}}

We prove Lemma \ref{lemma:compare} by following steps. Recall that we need to ensure 
\begin{align}\label{eq:compare:app}
    \forall \theta \in [0, \theta_0], ~\text{s.t.}~ 
    (\Delta - 1) |h_{k, 2} (\theta)| < p_k\left(\ln(\eta) + (\Delta - 1) r_{k, 2}(\theta)\right).
\end{align}

% \begin{lemma}
%     For $\mu = -1$, we have $|h_{k, 2}(\theta)| < \arctan(k)$.
% \end{lemma}

% \begin{lemma}
%     For $x = -\ln (e^{2\Delta} + 1)$ and sufficiently small $k > 0$, we have $p_k(x) > \Delta \arctan k$.
% \end{lemma}
\vskip 8pt
\textbf{(1) The case where $\mu = -1$ and $\theta/k$ is sufficiently small } 

For $\mu = -1$, if $\theta / k \le \eps$, then
\begin{align*}
    r_{k, 2}(\theta) &= \frac{1}{2} \ln \left( 1 + e^{-2\theta / k} - 2 e^{-\theta / k} \cos(\theta) \right) \le \frac{1}{2} \ln \left( 1 + e^{-2\theta / k} - 2 e^{-\theta / k} (1 - \theta^2/2) \right) \\
    &= \frac{1}{2} \ln \left( (1 - e^{-\theta/k})^2 + e^{-\theta/k} \theta^2 \right) \le \frac{1}{2} \ln \left( \eps^2 + k^2 \eps^2 \right)
    \le \ln(2 \eps).
\end{align*}

Setting $\eta (2\eps)^{\Delta} = \frac{1}{e^{2\Delta} + 1}$, then we have
\begin{align*}
    p_k\left(\ln(\eta) + (\Delta - 1) r_{k, 2}(\theta)\right) > p_k(-\ln(e^{2\Delta} + 1)) > \Delta \arctan k > (\Delta - 1) |h_{k, 2}(\theta)|,
\end{align*}
where we have recalled Lemma \ref{lem:p-k-small} and Lemma \ref{lem:h-k-2}. 
Hence Equation (\ref{eq:compare:app}) holds in this case.

\vskip 8pt
\textbf{(2) The lower bounds of $r_{k, 2}(\theta)$ } 

If $\mu > 0$ and $k < \frac{\pi}{2 \ln(2 \mu)}$, then by Lemma \ref{lemma:r:monotone}, we have $\theta_0 \in (\pi/2, \pi)$, and $r_{k ,2}$ achieves its minimal value at $\theta_0$. Hence
\begin{align*}
    r_{k, 2}(\theta) \ge r_{k, 2}(\theta_0) \ge \frac{1}{2} \ln (1 + \mu^2 e^{-2\theta_0/k} - 2\mu e^{-\theta_0 / k}) = \ln(1 - \mu e^{-\frac{\theta_0}{ k}}) \ge \ln(1 - \mu e^{-\frac{\pi}{2 k}}) \ge -\ln(2).
\end{align*}

For $\mu \in (-1, 0)$, by Lemma \ref{lemma:r:monotone}, $r_{k ,2}$ achieves its minimal value at $0$, which means that
\begin{align*}
    r_{k, 2}(\theta) \ge r_{k, 2}(0) =  \ln(1 + \mu).
\end{align*}

And for $\mu = -1, \theta/k > \eps$, we have
\begin{align*}
    r_{k, 2}(\theta) &= \frac{1}{2} \ln \left( 1 + e^{-2\theta / k} - 2 e^{-\theta / k} \cos(\theta) \right) \ge \ln \left( 1 - e^{-\theta / k}\right) \ge \ln \left( 1 - e^{-\eps}\right).
\end{align*}

Therefore, there exists $A$ such that $r_{k, 2}(\theta) \ge A$.
And hence, for sufficiently small $k > 0$, it holds that
\begin{align*}
    \ln(\eta) + (\Delta - 1) r_{k, 2}(\theta) > -\ln(1 + e^{\pi/k}).
\end{align*}
 
Therefore, we just need to ensure that
\begin{align}\label{eq:compare:app:2}
    \forall \theta \in [0, \theta_0], ~\text{s.t.}~ 
    (\Delta - 1) |h_{k, 2} (\theta)| < h_{k,1}\left(r_{k, 1}^{-1}\left(\ln(\eta) + (\Delta - 1) r_{k, 2}(\theta)\right)\right),
\end{align}
according to $p_k(x) = h_{k,1}(g^{-1}_{k,1}(x))$ for $x > -\ln(1 + e^{\pi/k})$.

\vskip 8pt
\textbf{(3) The case where $\theta/k$ is sufficiently large}

For $\theta > \frac{\pi}{2}$ and sufficiently small $k > 0$, we have
\begin{align*}
    |h_{k, 2}(\theta)| &= \arctan \left( \frac{|\mu| e^{-\theta/k} \sin(\theta)}{1 + \mu e^{-\theta/k} \cos(\theta)} \right) 
    \le \arctan \left( \frac{|\mu| e^{-\theta/k}}{1 - |\mu| e^{-\theta/k}} \right) \\
    &\le \arctan \left( \frac{|\mu| e^{-\pi/(2k)}}{1 - |\mu| e^{-\pi/(2k)}} \right) 
    \le \frac{\arctan k}{\Delta -1}.
\end{align*}

And for $\mu > 0$, choose $M$ such that $\arctan(\mu M e^{-M} k) < \frac{\arctan k}{\Delta -1}$. Then for $ \theta \in (M k, \pi/2)$ and $ k < \frac{\pi}{4 M}$, we have
\begin{align*}
    |h_{k, 2}(\theta)| &= \arctan \left( \frac{\mu e^{-\theta/k} \sin(\theta)}{1 + \mu e^{-\theta/k} \cos(\theta)} \right) \\
    &\le \arctan \left( \mu e^{-\theta/k} \sin(\theta) \right).
\end{align*}
Let $\psi(\theta) = e^{-\theta/k} \sin(\theta)$. We claim that $\psi(\theta)$ is decreasing on $(M k, \pi/2)$. 
In fact, note that
\begin{align*}
    \psi'(\theta) = e^{-\theta/k} (\cos(\theta) - \sin(\theta)/k), 
\end{align*}
and 
\begin{align*}
    \cos(\theta) - \sin(\theta)/k < \cos(M k) - \sin(M k) / k = \frac{\cos(M k)}{k} (k - \tan(M k)) < \cos(M k)(1 - M) < 0,
\end{align*} 
where we have used that $\tan x \ge x$ for $x \in (0, \pi/2)$. Therefore, we have $\psi(\theta) < e^{-M} \sin(M k) < M e^{-M} k$, that is
\begin{align*}
    |h_{k, 2}(\theta)| < \arctan(\mu M e^{-M} k) < \frac{\arctan k}{\Delta -1}.
\end{align*}

For $\mu \in [-1, 0)$, choose $M$ such that $\arctan(\frac{M k}{e^{M} - 1}) < \frac{\arctan k}{\Delta -1}$. Then for $ \theta \in (M k, \pi/2)$ and $ k < \frac{\pi}{4 M}$, we have
    \begin{align*}
        |h_{k, 2}(\theta)| &= \arctan \left( \frac{-\mu e^{-\theta/k} \sin(\theta)}{1 + \mu e^{-\theta/k} \cos(\theta)} \right) \\
        &\le \arctan \left( \frac{ e^{-\theta/k} \sin(\theta)}{1 - e^{-\theta/k}} \right) \\
        &\le \arctan \left( \frac{ M k}{e^{M} - 1} \right) 
        < \frac{\arctan k}{\Delta -1}.
    \end{align*}

Together with $h_{k, 1}(\theta) > \arctan k$ by Lemma \ref{lem:h-k-1}, Equation (\ref{eq:compare:app:2}) holds for $\theta > M k$ and sufficiently small $k > 0$.

Now we just need to ensure that for sufficiently small $k > 0$, there holds
\begin{align}\label{eq:compare:app:3}
    \forall x \in [0, M], ~\text{s.t.}~ 
    (\Delta - 1) |h_{k, 2} (k x)| < h_{k,1}\left(r_{k, 1}^{-1}\left(\ln(\eta) + (\Delta - 1) r_{k, 2}(k x)\right)\right).
\end{align}
We remark that for $\mu = -1$, the interval $[0, M]$ should be $[\eps, M]$. 

\vskip 8pt
\textbf{(4) Eliminate $k$ through a limit operation}

Define a function $S(k, x)$ to be
\begin{align*}
    S(k, x) = h_{k,1}\left(r_{k, 1}^{-1}\left(\ln(\eta) + (\Delta - 1) r_{k, 2}(k x)\right)\right) - (\Delta - 1) |h_{k, 2} (k x)|.
\end{align*}

% Note that $\ln(\eta) + (\Delta - 1)r_{k, 2}(k x) \ge \ln(\eta) + (\Delta - 1)A$

Letting $x_1(k, x) = r_{k, 1}^{-1}\left(\ln(\eta) + (\Delta - 1) r_{k, 2}(k x)\right)/k$, we have
\begin{align*}
\eta^2 (\mu^2 e^{-2x} + 2 \mu e^{-x} \cos(k x) + 1)^{\Delta - 1} = \frac{1}{e^{2 x_1(k, x)} - 2e^{x_1(k, x)} \cos(k x_1(k, x)) + 1}.
\end{align*}

It is easily to check that
\begin{align*}
    \lim_{k \to 0^{+}} x_1(k, x) &= \ln \left( \frac{1}{\eta(\mu e^{-x} + 1)^{\Delta - 1}} + 1 \right) \triangleq \hat{x}_1(x), \\
    \lim_{k \to 0^{+}} \frac{\partial x_1(k, x)}{\partial k} &= 0.
\end{align*}

Hence for $x \in [0, M]$ (or $x \in [\eps, M]$ if $\mu = -1$), we have 
\begin{align*}
    \lim_{k \to 0^{+}} \cos(k x_1(k, x)) - e^{-x_1(k, x)} &= 1 - e^{-\hat{x}_1(x)} > 0, \\
    \lim_{k \to 0^{+}} 1 + \mu e^{-x} \cos(k x) &= 1 + \mu e^{-x} \ge \begin{cases}
        1 + \mu > 0, ~~~&\text{ for } \mu > -1, \\
        1 - e^{-\eps} > 0, ~~~&\text{ for } \mu = -1.
    \end{cases}
\end{align*}

Consequently there holds
\begin{align*}
    \lim_{k \to 0^{+}} h_{k, 1}(k x_1(k, x)) &= \lim_{k \to 0^{+}} \arctan\left( \frac{\sin(k x_1(k, x))}{\cos(k x_1(k, x)) - e^{-x_1(k, x)}} \right) = 0, \\
    \lim_{k \to 0^{+}} h_{k, 2}(k x) &= \lim_{k \to 0^{+}} \arctan\left(\frac{\mu e^{-x} \sin(k x)}{1 + \mu e^{-x} \cos(k x)}\right) = 0,
\end{align*}
which implies that $\lim_{k \to 0^{+}} S(k, x) = 0$ for $x \in [0, M]$ (or $x \in [\eps, M]$ if $\mu = -1$). 

Similar to Lemma \ref{lem:bounded-2-spin:H}, Condition (\ref{eq:compare:app:3}) can be deduced from
\begin{align}\label{eq:compare:app:4}
    \forall x \in [0, M], H(x) \triangleq \lim_{k \to 0^{+}} \frac{\partial S(k, x)}{\partial k} > 0.
\end{align}

Note that
\begin{align*}
    \lim_{k \to 0^{+}} \frac{\partial h_{k, 1}(k x_1(k, x))}{\partial k} &= \lim_{k \to 0^{+}} \frac{\partial}{\partial k} \arctan\left( \frac{\sin(k x_1(k, x))}{\cos(k x_1(k, x)) - e^{-x_1(k, x)}} \right) \\
    &= \lim_{k \to 0^{+}} \frac{e^{x_1(k, x)} x_1(k, x) (e^{x_1(k, x)} - \cos(k x_1(k, x)))}{1 + e^{2 x_1(k, x)} - 2 e^{x_1(k, x)} \cos(k x_1(k, x))} \\
    &= \frac{\hat{x}_1(x) e^{\hat{x}_1(x)} }{e^{\hat{x}_1(x)} - 1},
\end{align*}
and
\begin{align*}
    \lim_{k \to 0^{+}} \frac{\partial h_{k, 2}(k x)}{\partial k} &= \lim_{k \to 0^{+}} \frac{\partial}{\partial k} \arctan\left( \frac{ \mu e^{-x} \sin (k x)}{1 + \mu e^{-x} \cos (k x)} \right) \\
    &= \frac{\mu x}{ e^{x} + \mu}.
\end{align*}

Therefore, we have
\begin{align*}
    H(x) = \frac{\hat{x}_1(x) e^{\hat{x}_1(x)} }{e^{\hat{x}_1(x)} - 1} - \frac{(\Delta - 1) |\mu| x}{e^x + \mu}. 
\end{align*}

\vskip 8pt
\textbf{(5) The case $\mu > 0$}

Let $D = \Delta - 1$.
For $\mu > $, we need to verify that
\begin{align*}
    H(x) &= \frac{\hat{x}_1(x) e^{\hat{x}_1(x)} }{e^{\hat{x}_1(x)} - 1} - \frac{D \mu x}{e^x + \mu} \\
    &= \frac{\ln \left( \frac{1}{\eta(\mu e^{-x} + 1)^D} + 1 \right) \left( \frac{1}{\eta(\mu e^{-x} + 1)^D} + 1 \right)}{\frac{1}{\eta(\mu e^{-x} + 1)^D}} - \frac{D \mu x}{e^x + \mu} \\
    &= (1 + \eta(\mu e^{-x} + 1)^D) \ln \left( \frac{1}{\eta(\mu e^{-x} + 1)^D} + 1 \right)  - \frac{D \mu x}{e^x + \mu} > 0.
\end{align*}

Let
\begin{align*}
    \tilde{G}_{\mu}(x) = \frac{\mu x}{\mu + e^x}.
\end{align*}
Then $\tilde{G}_{\mu}'(x) = \mu \frac{(1 - x) e^x + \mu}{(e^x + \mu)^2}$. It is clearly that there exists a unique solution $\hat{x} \in [0, +\infty)$ to the equation $\mu e^{-x} = x - 1$.
    
For $\mu \le \frac{e^{1 + \frac{1}{D}}}{D}$, we have $e^{\hat{x}}(\hat{x} - 1) = \mu \le \frac{e^{1 + \frac{1}{D}}}{D}$ which means that $\hat{x} \le 1 + \frac{1}{D}$. And hence
\begin{align*}
    \frac{D \mu x}{e^x + \mu} \le \frac{D \mu \hat{x}}{e^{\hat{x}} + \mu} = D(\hat{x} - 1) \le 1.
\end{align*}
Together with $\ln(1 + y) > \frac{y}{1 + y}$ for $y > 0$, we know that $H(x) > 0$.\\[0.1cm]
    
    Now we suppose that $\mu > \frac{e^{1 + \frac{1}{D}}}{D}$.
    Then we have $\hat{x} > 1 + \frac{1}{D}$. 
    Consider
    \begin{align*}
        H'(x) = \frac{D \mu \left(e^x x-\left(\mu+e^x\right) \eta \left(\mu e^{-x}+1\right)^D \ln \left(\eta^{-1} \left(\mu e^{-x}+1\right)^{-D}+1\right)\right)}{\left(\mu+e^x\right){}^2}.
    \end{align*}
    
    Let $H'(x) = 0$, that is
    \begin{align*}
        \frac{x}{\eta (\mu e^{-x} + 1)^{D+1}} = \ln \left(\frac{1} {\eta \left(\mu e^{-x}+1\right){}^{D}} + 1\right), \\
        \frac{x}{\mu e^{-x} + 1} = \eta (\mu e^{-x} + 1)^{D} \ln \left(\frac{1} {\eta\left(\mu e^{-x}+1\right){}^{D}} + 1\right).
    \end{align*}
    
    According to $\psi(y) = \frac{\ln(1 + y)}{y}$ is deceasing in $y$, RHS is decreasing in $x$. Together with LHS is increasing in $x$, we know that 
    \begin{align*}
        F(x) \triangleq \psi\left( \frac{1} {\eta \left(\mu e^{-x}+1\right){}^{D}} \right) - \frac{x}{\mu e^{-x} + 1}
    \end{align*}
    is decreasing in $x$.
    
    Note that
    \begin{align*}
        F(0) &= \psi\left( \frac{1} {\eta \left(\mu +1\right){}^{D}} \right) > 0, \\
        F(\hat{x}) &= \psi\left( \frac{1} {\eta\left(\mu e^{-\hat{x}} +1\right){}^{D}} \right) - \frac{\hat{x}}{\mu e^{-\hat{x}} + 1} \\
        &= \psi\left( \frac{1} {\eta \hat{x}^{D}} \right) - 1 < 0,
    \end{align*}
    where we have recalled that $\mu e^{-\hat{x}} = \hat{x} - 1$. 
    
    Therefore, there exists $\check{x} \in (0, \hat{x})$ such that $F(\check{x}) = 0$. 
    And we need to check that $H(x) \ge H(\check{x}) > 0$.
    
    \begin{claim}
        For  $\mu > \frac{e^{1 + \frac{1}{D}}}{D}$, we have $D \mu e^{-\check{x}} - 1 > 0$.
    \end{claim}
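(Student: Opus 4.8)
The plan is to reduce the claim to the single scalar inequality $\check{x} < \ln(D\mu)$. Indeed, the hypothesis $\mu > e^{1+1/D}/D$ forces $D\mu > e^{1+1/D} > 1$, so $D\mu$, $\mu$, $D$ are all positive, and hence $D\mu e^{-\check{x}} - 1 > 0$ is equivalent to $e^{-\check{x}} > 1/(D\mu)$, i.e.\ to $\check{x} < \ln(D\mu)$. So the whole claim comes down to locating $\check{x}$ below $\ln(D\mu)$.

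To prove $\check{x} < \ln(D\mu)$ I would use the location of $\check{x}$ already established in the construction above, namely $\check{x} \in (0, \hat{x})$, where $\hat{x}$ is the unique nonnegative root of $\mu e^{-x} = x-1$. Thus it suffices to show $\hat{x} < \ln(D\mu)$. Set $g(x) \triangleq x - 1 - \mu e^{-x}$ on $[0,\infty)$; then $g'(x) = 1 + \mu e^{-x} > 0$, so $g$ is strictly increasing, and $g(\hat{x}) = 0$ by definition of $\hat{x}$. Evaluating at $x = \ln(D\mu)$ uses $\mu e^{-\ln(D\mu)} = 1/D$, giving $g(\ln(D\mu)) = \ln(D\mu) - 1 - 1/D$, which is strictly positive precisely because the hypothesis $\mu > e^{1+1/D}/D$ is equivalent to $\ln(D\mu) > 1 + 1/D$. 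Hence $g(\ln(D\mu)) > 0 = g(\hat{x})$, and strict monotonicity of $g$ forces $\ln(D\mu) > \hat{x}$. Chaining, $\check{x} < \hat{x} < \ln(D\mu)$, which is exactly what we wanted to show, so $D\mu e^{-\check{x}} > 1$.

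The argument is only a short chain of elementary inequalities, so there is no real obstacle; the one point that takes a moment is to notice that the previously-established bound $\check{x} < \hat{x}$ already carries almost all the weight, and that the natural comparison point is $\ln(D\mu)$ — the value at which $\mu e^{-x}$ equals exactly $1/D$ — at which $g$ is trivially evaluated, the hypothesis on $\mu$ being tailored precisely so that $g$ is positive there.
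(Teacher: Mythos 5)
Your proof is correct and follows essentially the same route as the paper: both arguments reduce the claim to $\check{x} < \hat{x} < \ln(D\mu)$, using the already-established bound $\check{x} < \hat{x}$ and the hypothesis $\ln(D\mu) > 1 + \tfrac{1}{D}$ to place $\ln(D\mu)$ above $\hat{x}$ (the paper does this via the identity $\mu = (\hat{x}-1)e^{\hat{x}}$ inside a contradiction argument, you via monotonicity of $x - 1 - \mu e^{-x}$, a purely cosmetic difference).
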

    \begin{proof}
    If $D \mu e^{-\check{x}} - 1 \le 0$, then we have 
    \begin{align*}
        \mu e^{-\check{x}} \le \frac{1}{D}, ~~~\text{ and }~ \\
        \check{x} \ge \ln(D \mu) = \ln(D (\hat{x} - 1) e^{\hat{x}} ) = \hat{x} + \ln(D (\hat{x} - 1)) > \hat{x},
    \end{align*}
    which is impossible.
    \end{proof}
    
    \begin{claim}
        For  $\mu > \frac{e^{1 + \frac{1}{D}}}{D}$, we have $\frac{\ln(D \mu)}{D+1} < \hat{x} - 1$.
    \end{claim}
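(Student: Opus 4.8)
The plan is to eliminate $\mu$ in favour of $\hat{x}$ by using the equation that defines $\hat{x}$, namely $\mu e^{-\hat{x}} = \hat{x}-1$, equivalently $\mu = (\hat{x}-1)e^{\hat{x}}$. Taking logarithms gives $\ln(D\mu) = \ln D + \ln(\hat{x}-1) + \hat{x}$, so, writing $t \triangleq \hat{x}-1$, the desired bound $\frac{\ln(D\mu)}{D+1} < \hat{x}-1$ becomes, after clearing the denominator,
\[
\ln D + \ln t + (t+1) < (D+1)t,
\]
i.e.\ $\ln D + \ln t + 1 < Dt$, i.e.\ $\ln(Dt) + 1 < Dt$. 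Thus the whole claim reduces to the single scalar inequality $\ln s < s-1$ evaluated at $s = Dt$.

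Next I would record that $\ln s < s-1$ holds for every $s>1$: the function $s\mapsto s-1-\ln s$ vanishes at $s=1$ and has derivative $1-1/s>0$ on $(1,\infty)$, so it is strictly positive there. It therefore suffices to verify $Dt > 1$, i.e.\ $\hat{x} > 1+\tfrac1D$. But this strict inequality is exactly what the surrounding argument already extracted from the standing hypothesis: the map $g(x)=(x-1)e^{x}$ is strictly increasing on $(0,\infty)$ (its derivative is $x e^{x}$), and $g(1+\tfrac1D)=\tfrac1D e^{1+1/D}=\frac{e^{1+1/D}}{D}<\mu=g(\hat{x})$, whence $\hat{x}>1+\tfrac1D$. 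Substituting $s=Dt=D(\hat{x}-1)>1$ into $\ln s<s-1$ and unwinding gives $\ln(D\mu)=\ln D+\ln t+t+1<Dt+t=(D+1)(\hat{x}-1)$, which is the claim.

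There is essentially no obstacle here: the argument is a two-line algebraic reduction of the claim to the textbook inequality $\ln s\le s-1$, and the only additional input — the strict bound $\hat{x}>1+\tfrac1D$ — has already been established in the proof of Lemma~\ref{lemma:compare}. The single point to be careful about is keeping every inequality strict, which is automatic because $\mu>\frac{e^{1+1/D}}{D}$ is a strict hypothesis and $g(x)=(x-1)e^{x}$ is strictly monotone on the relevant range.
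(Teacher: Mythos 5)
Your proof is correct and follows essentially the same route as the paper: substitute $\mu=(\hat{x}-1)e^{\hat{x}}$ to reduce the claim to $\ln\left(D(\hat{x}-1)\right) < D(\hat{x}-1)-1$, which holds since $\hat{x}>1+\tfrac{1}{D}$ gives $D(\hat{x}-1)>1$ and $\ln s<s-1$ for $s>1$. In fact you spell out the final elementary inequality and the strictness check that the paper leaves implicit, so nothing further is needed.
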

    \begin{proof}
    Note that
    \begin{align*}
        \frac{\ln(D \mu)}{D+1} &< \hat{x} - 1, \\
        \iff \ln(D e^{\hat{x}}(\hat{x} - 1)) &< (D + 1)(\hat{x} - 1), \\
        \iff \ln(D(\hat{x} - 1)) &< D(\hat{x} - 1) - 1.
    \end{align*}
    \end{proof}

    Note that
    \begin{align*}
        & H(\check{x}) = \frac{\check{x} (1 + \eta (\mu e^{-\check{x}} + 1)^D)}{\eta (\mu e^{-\check{x}} + 1)^{D + 1}} - \frac{D \check{x} \mu e^{-\check{x}}}{\mu e^{-\check{x}} + 1} > 0, \\
        \iff & 1 > \eta (D \mu e^{-\check{x}} - 1) (\mu e^{-\check{x}} + 1)^D, \\
        \iff & \psi\left( \frac{1}{\eta (\mu e^{-\check{x}} + 1)^D} \right) <  \psi(D \mu e^{-\check{x}} - 1) , \\
        \iff & \frac{\check{x}}{\mu e^{-\check{x}} + 1} < \psi(D\mu e^{-\check{x}} - 1) 
        = \frac{\ln(D\mu e^{-\check{x}})}{D\mu e^{-\check{x}} - 1}, \\
        \iff & \frac{\ln(D \mu)}{D + 1} > \frac{ \mu \check{x} }{\mu + e^{\check{x}} }.
    \end{align*}
    
    Note that the function $\tilde{G}_{\mu}(x) = \frac{\mu x}{\mu + e^x}$ is increasing over $x \in [0, \hat{x}]$, and
    \begin{align*}
        \tilde{G}(\hat{x}) = \hat{x} - 1.
    \end{align*}
    
    Thus $H(\check{x}) > 0$ is equivalent to 
    \begin{align*}
        \check{x} < \tilde{G}^{-1}_{\mu} \left( \frac{\ln(D \mu)}{D + 1} \right).
    \end{align*}

    At last, we obtain our condition as
    \begin{align*}
        F\left(\tilde{G}^{-1}_{\mu} \left( \frac{\ln(D \mu)}{D + 1} \right)\right) < 0,
    \end{align*}
    or equivalently, 
    \begin{align*}
        1/\eta > G^{-1}\left( \frac{x_0}{\mu e^{-x_0} + 1}  \right) (\mu e^{-x_0} + 1)^D,
    \end{align*}
    where $x_0$ is the smallest solution to the equation 
    \begin{align*}
        \frac{\mu x}{\mu + e^x} = \frac{\ln(D \mu)}{D + 1},
    \end{align*}
    and $G(y) = \frac{\ln(1 + y)}{y}$.
    
\vskip 8pt
\textbf{(6) The case $\mu < 0$}

    % In this case, $h_{k, 2}(\theta) < 0$ for $\theta \in (0, \theta)$. 
    Let $D = \Delta - 1$.
    We need to ensure that
    \begin{align*}
        H(x) &\triangleq \frac{\hat{x}_1(x) e^{\hat{x}_1(x)} }{e^{\hat{x}_1(x)} - 1} + \frac{D \mu x}{e^x + \mu} \\
        &= \frac{\ln \left( \frac{1}{\eta(\mu e^{-x} + 1)^D} + 1 \right) \left( \frac{1}{\eta(\mu e^{-x} + 1)^D} + 1 \right)}{\frac{1}{\eta(\mu e^{-x} + 1)^D}} + \frac{D \mu x}{e^x + \mu} \\
        &= (1 + \eta(\mu e^{-x} + 1)^D) \ln \left( \frac{1}{\eta(\mu e^{-x} + 1)^D} + 1 \right)  + \frac{D \mu x}{e^x + \mu} > 0.
    \end{align*}
    
    Let
    $
        \tilde{G}_{\mu}(x) = \frac{\mu x}{\mu + e^x}.
    $
    Then we have $\tilde{G}_{\mu}'(x) = \mu \frac{(1 - x) e^x + \mu}{(e^x + \mu)^2}$. 
    
    Note that 
    \begin{align*}
        \tilde\psi(x) &= (1 - x) e^x + \mu, \\
        \tilde\psi'(x) &= -x e^{x} < 0, \\ 
        \tilde\psi(0) &= 1 + \mu \ge 0, ~\tilde\psi(1) = \mu < 0. 
    \end{align*}
    Denote the unique solution to equation $\mu e^{-x} = x - 1$ to be $\hat{x}$. 
    For $-\mu \le \frac{e^{1 - \frac{1}{D}}}{D}$, we have 
    \begin{align*}
        e^{\hat{x}} (1 - \hat{x}) \le \frac{e^{1 - \frac{1}{D}}}{D}, \\
        \hat{x} \ge 1 - \frac{1}{D}.
    \end{align*}
    
    Consequently, there holds
    \begin{align*}
        \frac{D \mu x}{e^x + \mu} \ge \frac{D \mu \hat{x}}{e^{\hat{x}} + \mu} = -D(1 - \hat{x}) \ge -1.
    \end{align*}
    
    Together with $\ln(1 + y) > \frac{y}{y + 1}$ for $y > 0$, we know that $H(x) > 0$.\\[0.1cm]
    
    Now we suppose that $-\mu > \frac{e^{1 - \frac{1}{D}}}{D}$.
    Consider
    \begin{footnotesize}\begin{align*}
        H'(x) = -\frac{D \mu \left(\mu \left(\eta \left(e^{-x} \mu+1\right){}^D \log \left(\frac{\left(e^{-x} \mu+1\right){}^{-D}}{\eta}+1\right)-2\right)+e^x \left(\eta \left(e^{-x} \mu+1\right){}^D \log \left(\frac{\left(e^{-x} \mu+1\right){}^{-D}}{\eta}+1\right)+x-2\right)\right)}{\left(\mu+e^x\right){}^2}.
    \end{align*}\end{footnotesize}
    
    Let $H'(x) = 0$, that is 
    \begin{align*}
        \eta \left( \mu e^{-x} +1\right){}^D \log \left(\frac{\left(\mu e^{-x} +1\right){}^{-D}}{\eta}+1\right) = 2 - \frac{x}{1 + \mu e^{-x}}.
    \end{align*}
    
    Let $\psi(y) = \frac{\ln(1 + y)}{y}$, and
    \begin{align*}
        F(x) \triangleq \psi\left( \frac{1} {\eta \left(\mu e^{-x}+1\right){}^{D}} \right) + \frac{x}{\mu e^{-x} + 1} - 2.
    \end{align*}
    Then $\psi(y) \le 1$ for $y \ge 0$ and $F$ is an increasing function.
    
    Note that
    \begin{align*}
        F(2) &\ge \frac{2}{1 + \mu e^{-2}} - 2 > 0, \\
        F(\hat{x}) &= \psi\left( \frac{1} {\eta\left(\mu e^{-\hat{x}} +1\right){}^{D}} \right) + \frac{\hat{x}}{\mu e^{-\hat{x}} + 1} - 2 \\
        &= \psi\left( \frac{1} {\eta \hat{x}^{D}} \right) - 1 < 0,
    \end{align*}
    where we have recalled that $\mu e^{-\hat{x}} = \hat{x} - 1$. 
    
    Therefore, there exists $\check{x} \in (\hat{x}, 2)$ such that $F(\check{x}) = 0$. 
    
    % \begin{claim}
    %     For $\mu < - \frac{e^{1 - \frac{1}{D}}}{D}$, we have
    %     \begin{align*}
    %         \check{x} - 2 - 2 \mu e^{-\check{x}} - D \check{x} \mu e^{-\check{x}} > 0
    %     \end{align*}
    % \end{claim}

    Note that
    \begin{align*}
        & H(\check{x}) = \frac{\left(2 - \frac{\check{x}}{1 + \mu e^{-\check{x}}}\right) (1 + \eta (\mu e^{-\check{x}} + 1)^D)}{\eta (\mu e^{-\check{x}} + 1)^{D}} + \frac{D \check{x} \mu e^{-\check{x}}}{\mu e^{-\check{x}} + 1} > 0, \\
        \iff & \frac{2 - \frac{\check{x}}{1 + \mu e^{-\check{x}}}}{\eta (\mu e^{-\check{x}} + 1)^{D}} > -\frac{D \check{x} \mu e^{-\check{x}}}{\mu e^{-\check{x}} + 1} -2 + \frac{\check{x}}{1 + \mu e^{-\check{x}}}, \\
        \iff & \frac{1}{\eta (\mu e^{-\check{x}} + 1)^{D}} > \frac{\check{x} - 2 - 2 \mu e^{-\check{x}} - D \check{x} \mu e^{-\check{x}}}{2 - \check{x} + 2 \mu e^{-\check{x}}}, \\
        \iff & 2 - \frac{\check{x}}{1 + \mu e^{-\check{x}}} = \psi\left(\frac{1}{\eta (\mu e^{-\check{x}} + 1)^{D}}\right) < \psi\left(\frac{\check{x} - 2 - 2 \mu e^{-\check{x}} - D \check{x} \mu e^{-\check{x}}}{2 - \check{x} + 2 \mu e^{-\check{x}}}\right), \\
        \iff & 2 - \frac{\check{x}}{1 + \mu e^{-\check{x}}} < \ln\left(\frac{- D \check{x} \mu e^{-\check{x}}}{2 - \check{x} + 2 \mu e^{-\check{x}}}\right) \frac{2 - \check{x} + 2 \mu e^{-\check{x}}}{\check{x} - 2 - 2 \mu e^{-\check{x}} - D \check{x} \mu e^{-\check{x}}}, \\
        \iff & 1 < \ln\left(\frac{- D \check{x} \mu e^{-\check{x}}}{2 - \check{x} + 2 \mu e^{-\check{x}}}\right) \frac{1 + \mu e^{-\check{x}}}{\check{x} - 2 - 2 \mu e^{-\check{x}} - D \check{x} \mu e^{-\check{x}}}, \\ 
        \iff & \check{x} > x_0, \\
        \iff & F(x_0) < 0, \\
        \iff & \psi\left(\frac{1}{\eta (\mu e^{-x_0} + 1)^D} \right) < 2 - \frac{x_0}{\mu e^{-x_0} + 1}, \\
        \iff & \frac{1}{\eta} > \psi^{-1}\left( 2 - \frac{x_0}{\mu e^{-x_0} + 1} \right) (\mu e^{-x_0} + 1)^D,
    \end{align*}
    where $x_0$ is the biggest solution to the solution 
    \begin{align*}
        \ln\left(\frac{- (\Delta - 1) x \mu e^{-x}}{2 - x + 2 \mu e^{-x}}\right) \frac{1 + \mu e^{-x}}{x - 2 - 2 \mu e^{-x} - (\Delta - 1) x \mu e^{-x}} = 1.
    \end{align*}

\newpage
\bibliographystyle{plainnat}
\bibliography{reference}

\end{document}